\theoremstyle{plain}
\newtheorem{thm}{Theorem}[section]
\newtheorem{cor}[thm]{Corollary}
\newtheorem{prop}[thm]{Proposition}
\newtheorem{fact}[thm]{Fact}
\newtheorem{lem}[thm]{Lemma}
\newtheorem{Def}[thm]{Definition}
\newtheorem{obs}[thm]{Observation}
\newtheorem{cla}[thm]{Claim}
\let\poly\relax
\DeclareMathOperator*{\poly}{poly}
\newcommand{\eps}{\epsilon}%
\newcommand{\E}{\mathbb{E}}%
\newcommand{\Var}{\mathrm{Var}}
\newcommand{\midd}{\;\middle|\;}
\newcommand{\palette}{P}%
\newcommand{\calH}{\mathcal{H}}
\newcommand{\B}{\mathcal{B}}
\newcommand{\e}{\mathcal{E}}
\newcommand{\C}{\mathcal{C}}
\newenvironment{wrapper}[1]
{
	%	\smallskip
	\begin{center}
		\begin{minipage}{\linewidth}
			\begin{mdframed}[hidealllines=true, backgroundcolor=gray!20, leftmargin=0cm,innerleftmargin=0.4cm,innerrightmargin=0.4cm,innertopmargin=0.4cm,innerbottommargin=0.4cm,roundcorner=10pt]
				#1}
			{\end{mdframed}
		\end{minipage}
	\end{center}
	%	\smallskip
}
\title{Online Edge Coloring  Algorithms via the Nibble Method}
\author[1]{Sayan Bhattacharya}
\author[2]{Fabrizio Grandoni}
\author[3]{David Wajc\footnote{Work done while the author was at Carnegie Mellon University.}}
\affil[1]{University of Warwick}
\affil[2]{IDSIA, USI-SUPSI}
\affil[3]{Stanford University}
\date{}
\begin{document}

% !TeX root = main

\begin{titlepage}
\maketitle
\pagenumbering{gobble}

\begin{abstract}
	Nearly thirty years ago, Bar-Noy, Motwani and Naor [IPL'92] conjectured that an online $(1+o(1))\Delta$-edge-coloring algorithm exists for $n$-node graphs of maximum degree $\Delta=\omega(\log n)$. This conjecture remains open in general, though it was recently proven for bipartite graphs under \emph{one-sided vertex arrivals} by Cohen et al.~[FOCS'19]. In a similar vein, we study edge coloring under widely-studied relaxations of the online model.
	
	\smallskip
	
	Our main result is in the \emph{random-order} online model. For this model, known results fall short of the Bar-Noy et al.~conjecture, either in the degree bound [Aggarwal et al.~FOCS'03], or number of colors used [Bahmani et al.~SODA'10]. We achieve the best of both worlds, thus resolving the Bar-Noy et al.~conjecture in the affirmative for this model. 
	
	\smallskip
	
	Our second result is in the adversarial online (and dynamic) model with \emph{recourse}. A recent algorithm of Duan et al.~[SODA'19] yields a $(1+\epsilon)\Delta$-edge-coloring with poly$(\log n/\epsilon)$ recourse. We achieve the same with poly$(1/\epsilon)$ recourse, thus removing all dependence on $n$.
	
	\smallskip
	
	Underlying our results is one common offline algorithm, which we show how to implement in these two online models. Our algorithm, based on the R\"odl Nibble Method, is an adaptation of the distributed algorithm of Dubhashi et al.~[TCS'98]. The Nibble Method has proven successful for distributed edge coloring. We display its usefulness in the context of online algorithms.
\end{abstract}

\newpage

\end{titlepage}
\pagenumbering{arabic}
\section{Introduction}

Edge coloring is the problem of assigning one of $k$ colors to all edges of a simple graph, so that no two incident edges have the same color. The objective is to minimize the number of colors, $k$.
The edge coloring problem goes back to the 19th century and studies of the four-color theorem \cite{tait1880remarks,petersen1898theoreme}. 
In 1916, K\"onig \cite{konig1916graphen}, in what many consider to be the birth of matching theory, proved that any bipartite graph of maximum degree $\Delta$ is colorable using $\Delta$ colors. (Clearly, no fewer colors suffice.) Nearly half a century later, Vizing \cite{vizing1964estimate} proved that any general graph is $(\Delta+1)$-edge-colorable. Vizing's proof is algorithmic, yielding such a coloring in polynomial time. This is likely optimal, as it is NP-hard to determine if a general graph is $\Delta$-edge-colorable  \cite{holyer1981np}. 
Algorithms for the edge coloring problem have been studied in several different  models of computation, including offline, online, distributed, parallel, and dynamic models (see, e.g.,  \cite{cole2001edge,cohen2019tight,su2019towards,chang2018complexity,karloff1987efficient,motwani1994probabilistic,duan2019dynamic,charikar2021improved} and references therein.)
In this work, we study the edge coloring problem in online settings.

\medskip
\noindent {\bf Online edge coloring:} Here, an adversary picks an $n$-node graph $G$ of maximum degree $\Delta$ (the algorithm knows $n$ and $\Delta$, but not $G$),
and then reveals the edges of $G$ one at a time. 
Immediately after the arrival of an edge, the algorithm must irrevocably assign a color to it, with the objective of minimizing the final number of colors used.
This problem was first studied nearly thirty years ago, by \citet*{bar1992greedy}. They showed that the greedy algorithm, which returns a proper $(2\Delta-1)$-edge coloring, is worst-case optimal among online algorithms. This might seem to be the end of the story for this line of research. However, as pointed out by \citet{bar1992greedy}, their lower bound only holds for bounded-degree graphs, with some $\Delta=O(\log n)$. This then led them to conjecture that online $(1+o(1))\Delta$-edge-coloring is possible for graphs with  $\Delta=\omega(\log n)$. This conjecture remains wide open.

Recently, an important progress was made towards proving the Bar-Noy et al.~conjecture: \citet{cohen2019tight} showed how to obtain a $(1+o(1))\Delta$-edge coloring for bipartite graphs in an online setting under \emph{node arrivals} (together with their edges). This is a relaxation of the online edge-arrival model. Thus, this latter result can be seen as an intermediate step towards the ultimate goal of proving the Bar-Noy et al.~conjecture. In a similar spirit, we consider edge coloring in two well-studied relaxations of the online model, that act as intermediate steps towards the Bar-Noy et al.~conjecture, and make substantial progress on the state-of-the-art results in both these settings.

\medskip
\noindent\textbf{(I) Random-order online edge coloring:} 
Here, an adversarially-chosen graph has its edges revealed to the algorithm in {\em uniformly random order}. 
Such random-order arrivals, which capture numerous stochastic arrival models, have been widely studied for many online problems. (See, e.g., \cite{karande2011online,mahdian2011online,korula2018online,meyerson2001online,kesselheim2014primal} and the survey by \citet{gupta2020random} and references therein.) In the context of edge coloring, this model was studied by \cite{aggarwal2003switch,bahmani2012online}.
\citet{aggarwal2003switch} were the first to show that high $\Delta$ suffices for near-ideal coloring in this model, giving  a $(1+o(1))\Delta$-edge-coloring algorithm for \emph{multigraphs} with $\Delta=\omega(n^2)$.
\citet{bahmani2012online} then breached the greedy $2\Delta-1$ barrier for simple graphs with polylogarithmic  $\Delta$,
giving a $1.43\Delta$-edge coloring algorithm for $\Delta=\omega(\log n)$ (improved to $1.26\Delta$ in their journal version). This leads to the following natural open question: 
can one obtain ``the best of both worlds'' w.r.t.~\cite{aggarwal2003switch,bahmani2012online}? That is, can one obtain a  $(1+o(1))\Delta$-edge-coloring for graphs of maximum degree $\Delta=\omega(\log n)$ whose edges are presented in random order? 
Put another way, is the Bar-Noy et al.~conjecture true for random-order edge arrivals? We answer this question in the affirmative.

\begin{wrapper}
\begin{restatable}{thm}{thmrandorder}\label{thm:ro}
	For some absolute constant $\gamma \in (0, 1)$, there exists an online algorithm that, when given a graph $G$ of maximum degree $\Delta=\omega(\log n)$, whose edges are presented in random order, computes a proper  $\left(\Delta+O\left(\Delta^{\gamma }\cdot \log^{1-\gamma} n\right)\right) = (1+o(1))\Delta$-edge-coloring of $G$ w.h.p.
\end{restatable}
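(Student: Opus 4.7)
The plan is to adapt the offline R\"odl Nibble algorithm of Dubhashi et al.\ to the random-order online setting. In the offline Nibble one processes $T$ rounds: in round $t$ a fresh random $\delta$-fraction of the still-uncolored edges is drawn, each drawn edge independently picks a uniform color from the palettes $\palette_u^{(t)} \cap \palette_v^{(t)}$ still available at its endpoints, and keeps that color iff no incident edge in the same batch picked the same color. A standard analysis shows that, starting from a palette of size $(1+\epsilon)\Delta$, after $T = \Theta(\log(1/\epsilon))$ rounds the pseudo-degrees $d_v^{(t)}$ and palette sizes $p_v^{(t)}$ shrink at essentially matching geometric rates, leaving a residual graph of maximum degree $O(\epsilon\Delta)$ that a greedy pass in $O(\epsilon\Delta)$ fresh colors finishes.

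The key observation enabling an online simulation in random-order is that any prefix of the stream is distributed as a uniformly random subset of $E(G)$ of the corresponding size. I would partition the stream into $T$ consecutive buckets of sizes corresponding to the Nibble $\delta$-fractions of the expected remaining edges, treating bucket $t$ as the $t$-th Nibble batch. On each arriving edge in bucket $t$, the algorithm samples a tentative color uniformly from the intersection of its two endpoints' current available palettes, and irrevocably commits it iff no already-committed incident edge carries the same color; otherwise the edge is deferred to the final greedy pass. Palettes and pseudo-degrees are updated on each commitment, so the procedure is genuinely online, and the final pass uses only $O(\epsilon\Delta)$ extra colors.

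The main technical step is to show that throughout the execution the pseudo-degrees and palette sizes concentrate around their offline Nibble trajectory, so that the ratio $d_v^{(t)}/p_v^{(t)}$ stays bounded above by $\Delta/((1+\epsilon)\Delta) \leq 1/(1+\epsilon)$ throughout. I would track, for each vertex $v$ and each round, a Doob-style martingale in the random bucket contents and the random tentative colors, and apply a bounded-differences inequality to show that $d_v^{(t)}$ and $p_v^{(t)}$ deviate from their expectation by at most $O(\sqrt{\Delta \log n})$, union-bounded over the $O(n\Delta)$ relevant events. Accumulating errors across $T$ rounds gives a stochastic slack of $O(\sqrt{T\Delta\log n})$, while the deterministic Nibble residual contributes $O(\epsilon\Delta)$. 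Balancing these two by choosing $T = \Theta(\log(1/\epsilon))$ and $\epsilon = \Theta((\log n/\Delta)^{1-\gamma})$ for a suitable $\gamma \in (0,1)$ yields an overall color budget of $\Delta + O(\Delta^\gamma\log^{1-\gamma} n)$, matching the theorem.

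The hard part will be the concentration argument. In the offline Nibble, a round is a product measure over independently marked edges and independent uniform color draws, so the standard method of bounded differences is clean. In the random-order online execution the edges of a bucket arrive \emph{sequentially, without replacement}, and the palette seen by a later arriving edge depends on the commitments made by earlier arrivals in the same bucket; this introduces subtle correlations that are absent from the offline analysis. I would address this by coupling the bucket to an i.i.d.\ sample (using a Poissonization or a sampling-without-replacement to with-replacement reduction) and then exposing the tentative colors in an order that keeps the martingale increments $O(1)$-bounded, so that Azuma--Hoeffding still delivers the needed $O(\sqrt{\Delta\log n})$ concentration. Making this coupling tight enough that errors do not compound across the $T = \Theta(\log(1/\epsilon))$ rounds is where the bulk of the work will lie.
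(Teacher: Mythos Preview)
Your high-level plan---simulate the Nibble rounds by consecutive segments of the random-order stream, then greedily finish the residual---is the paper's approach as well. However, there are two concrete gaps between your sketch and a working proof.

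\paragraph{The ``hard part'' you isolate is avoided, not confronted.} You correctly notice that within a bucket the palette seen by a later edge depends on earlier arrivals, and you propose to repair this via a sampling-without/with-replacement coupling plus a careful martingale exposure. The paper sidesteps this entirely by two design choices you are missing. First, the tentative color $c(e)$ picked by a sampled edge is removed from neighbors' palettes \emph{even when $e$ fails}; consequently the palette $P_i(e)$ at the start of round $i$ is a function only of the tentative colors from rounds $j<i$, not of which edges succeeded. Second, in the online execution an edge $e\in S_i$ is declared failed iff its tentative color collides with a \emph{previously arrived} edge of the same round (rather than any edge of the round). With these two choices, the set of online failures is a subset of the offline failures, the palette evolution is \emph{identical} to the offline algorithm, and the entire offline concentration analysis (for $|P_i(e)|$, $|N_{i,c}(v)|$, and the failed-degree bound) transfers verbatim---no new martingale argument is needed. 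Your description instead updates palettes ``on each commitment,'' which both breaks this coupling and, read literally, eliminates within-round conflicts altogether (an edge sampling from an already-updated palette can never collide with a committed neighbor), collapsing the algorithm toward greedy. Relatedly, to make the round-$i$ sample distributed exactly as an independent $\epsilon$-sample of the remaining edges, the bucket sizes must be drawn as $X_i\sim \mathrm{Bin}(m-\sum_{j<i}X_j,\epsilon)$ rather than fixed deterministically; this is the precise form of the ``Poissonization'' you allude to, and it is what makes the offline and online sample distributions match.

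\paragraph{You assume knowledge of $m$ and near-regularity.} Partitioning the stream into buckets of prescribed (expected) sizes requires knowing $m$, and the Nibble analysis requires all degrees to be $(1\pm\epsilon^2)\Delta$. Neither is given. The paper handles this with a separate three-step reduction: (i) greedily color an initial prefix until some vertex reaches degree $\epsilon\Delta$, which by concentration yields $(1\pm\epsilon^2)$-accurate estimates of $m$ and of every $d(v)$; (ii) augment the remaining graph with per-vertex dummy cliques and dummy edges so that every vertex has degree $(1\pm O(\epsilon^2))\Delta$, and interleave the dummy edges into the stream uniformly at random so that the warm-up algorithm sees a random-order near-regular instance; (iii) greedily color the few edges past the estimated $m'$. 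This reduction costs only $O(\epsilon\Delta)$ extra colors and is where the general-graph statement actually comes from; your sketch does not address it.
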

\end{wrapper}

We complement this upper bound with a lower bound showing that, for some $\Delta = O(\log n)$, not only is it impossible to guarantee a $(1+o(1))\Delta$-edge-coloring under random-order arrivals, but it is even impossible to use any fewer than $2\Delta-1$ colors: see \Cref{sec:ro-lb}. 

We note that previous random-order online edge coloring algorithms  \cite{bahmani2012online,aggarwal2003switch} required the graph to be $\Delta$-regular.  This assumption is without loss of generality in an offline setting, but it is unclear whether the same holds in the random-order online model. 
Our algorithm from \Cref{thm:ro}, however, works on any graph (including non-regular ones): this is discussed in \Cref{ro:warm-up}.

\medskip
\noindent\textbf{(II) Dynamic edge coloring with recourse:} 
Another widely-studied relaxation of online algorithms is online algorithms with \emph{recourse}. Here, an algorithm must make immediate choices upon each arrival, but is also allowed to make a small number of changes to its solution after each arrival (referred to as \emph{recourse}). 
The concept of recourse helps us understand the robustness/sensitivity of (near-)optimal solutions. 
Accordingly,  an influential line of research in the online algorithms community is devoted to studying the tradeoffs between the solution quality and recourse for many well-known problems~\cite{gupta2017online,GuptaKS14,BernsteinHR18,feldkord2018fully,GuGK13-steiner-tree,MegowSVW12,BosekLSZ14,GuptaK14}.

Many results for bounded-recourse online algorithms hold in a more general, \emph{dynamic} setting.
In the context of   edge coloring, the dynamic (oblivious) version of the problem with recourse is captured by the following scenario: The input graph $G$ changes via a sequence of $\poly(n)$ {\em updates} chosen in advance by an adversary, where each update consists of an edge insertion or deletion in $G$. 
The maximum degree in $G$ remains at most $\Delta$ throughout. The algorithm  maintains a proper edge coloring, while changing the colors of some edges in $G$ after each update (the number of such  changes per update is the recourse of the algorithm). The challenge is to design an algorithm that simultaneously (a) maintains a proper coloring with few colors and (b) has small recourse.

In recent years, the edge coloring problem has been extensively studied  from a different, but highly related, perspective of {\em dynamic data structures}~\cite{barenboim2017fully,duan2019dynamic,bhattacharya2018dynamic,wajc2020rounding}. Here, the goal is to maintain a proper edge coloring with few colors in a dynamic graph, taking little time after each edge update (insertion/deletion), where this time is referred to as {\em update time}. 
Note that the update time  of any data structure for dynamic edge coloring upper bounds its recourse, since the data structure has to spend at least $\Omega(1)$ time per edge which changes its color after an update.

The state-of-the-art result for dynamic edge coloring with recourse follows from the work of \citet{duan2019dynamic}.  In any dynamic graph with $\Delta = \Omega(\log^2 n/\eps^2)$, their algorithm maintains a proper $(1+\eps)\Delta$-edge coloring  with $\poly(\log n, 1/\eps)$ recourse. 
Given that other dynamic problems are known to admit super-constant recourse lower bounds  (see, e.g.,~\cite{feldkord2018fully}),
it is natural to ask if one can get a recourse bound for $(1+\eps)\Delta$-edge coloring that is independent of $n$.  We answer this question in the affirmative.

\begin{wrapper}
\begin{restatable}{thm}{thmdynamic}\label{thm:dynamic}
	There is an algorithm that maintains a proper $(1+\eps)\Delta$-edge-coloring w.h.p., with $\poly(1/\eps)$ expected  recourse  in dynamic graphs of maximum degree $\Delta=\Omega\left(\log n/\poly(\eps)\right)$. 
	\end{restatable}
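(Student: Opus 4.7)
The plan is to build on the offline Nibble-based algorithm that, as the introduction advertises, also underlies Theorem~\ref{thm:ro}. At a high level, this algorithm proceeds in $T = O(\log(1/\epsilon))$ iterations, where in iteration $t$ every still-uncolored edge $e$ samples a uniformly random color from its current palette $\palette_e \subseteq [(1+\epsilon)\Delta]$ (the colors not yet used at either endpoint), and commits that color iff none of its incident still-uncolored edges samples the same color. A standard nibble analysis shows that after $T$ iterations the residual subgraph has maximum degree at most $\epsilon\Delta$ with high probability, and can be finished by a greedy pass using $2\epsilon\Delta$ extra colors; the final palette has $(1+O(\epsilon))\Delta$ colors, which rescaling $\epsilon$ absorbs.

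To turn this into a dynamic algorithm, I would store, for every (current) edge $e$ and every iteration $t \in [T]$, an independent random tape $R_e^{(t)}$ from which $e$'s tentative color in that iteration is derived. The committed color of $e$ is then a deterministic function of the entire graph and of the tapes of edges lying within distance $2T$ of $e$ in the line graph. Upon an edge insertion, I would draw fresh tapes for the new edge; upon a deletion, the tapes of the deleted edge are discarded. In both cases I then re-execute the offline algorithm on the \emph{set of affected edges only}, recoloring each one according to the updated state. This preserves the invariant that the maintained coloring is identically distributed to the one produced by running the offline algorithm from scratch on the current graph, so correctness and color count carry over from the offline analysis.

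The heart of the argument is bounding the expected recourse. I would track a ``blame set'' $B_t$ of edges whose iteration-$t$ commitment may differ between the pre-update and post-update executions; clearly $|B_0| = O(1)$. The key locality claim is that for each $e \in B_{t-1}$, an incident edge $f$ enters $B_t$ only if, in iteration $t$, $f$ proposed the same color as $e$ (or as another blame-set edge), or if $f$'s palette changed because of an earlier blame-set commitment. Since in each iteration the proposed color of $f$ is uniform over a palette of size $\Theta(\Delta)$ and $f$ has $O(\Delta)$ incident edges, the expected number of neighbors of $e$ joining $B_t$ is $O(1)$ — more carefully, bounded by a constant $c = c(\epsilon)$ depending polynomially on $1/\epsilon$ after accounting for palette-shift effects. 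Iterating over $T = O(\log(1/\epsilon))$ rounds gives $\E[|B_T|] \le c^T = \poly(1/\epsilon)$, and the greedy cleanup step contributes at most an additional $\poly(1/\epsilon)$ recoloring per affected residual edge.

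The main obstacle I expect is handling the subtle coupling between iterations: when a blame-set edge $e$ commits to a different color (or fails to commit) under the two executions, the palettes of all its incident edges shift in subsequent rounds, potentially altering both which colors they propose and whether those proposals conflict. A clean way around this is to store each $R_e^{(t)}$ as an infinite stream of i.i.d.\ $\mathrm{Unif}[0,1]$ draws and define the proposed color by rejection sampling against the current palette, so that palettes agreeing on a prefix of colors induce identical proposals with high probability; a union bound over the $O(\poly(1/\epsilon))$ edges ever placed in $\bigcup_t B_t$ then shows that the palette-shift contribution does not blow up the branching constant $c$. Combined with a standard concentration argument to turn the expected recourse bound into a high-probability statement on the coloring's correctness, this yields Theorem~\ref{thm:dynamic}.
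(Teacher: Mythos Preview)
Your proposal diverges from the paper in a way that creates a real gap in the recourse bound. The paper's offline algorithm does \emph{not} have every uncolored edge propose a color in every round; rather, each edge is sampled independently with probability $\epsilon$ per round, and once sampled---whether it succeeds or fails---it never tries again (these are exactly modifications (1) and (3) from \Cref{sec:techniques}). In the dynamic implementation this becomes: every potential edge is assigned, once and for all at preprocessing, a single round $i(e)$ drawn from a capped geometric distribution. This is the heart of the recourse argument. Because only an $O(\epsilon)$ fraction of an edge's neighbors lie in any single round, when a round-$j$ edge changes its tentative color the expected number of later-round neighbors whose palette is affected is $O(\epsilon)$, not $O(1)$. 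The paper proves $\E[|D_i|] \le 6\epsilon(1+\E[|D_{<i}|])$ and hence a total of at most $(1+6\epsilon)^{t_\epsilon}$ dirty edges; with $t_\epsilon = \Theta(\log(1/\epsilon)/\epsilon)$ rounds this is $\exp(O(\log(1/\epsilon))) = \poly(1/\epsilon)$.

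Your variant---all uncolored edges active every round for $T=O(\log(1/\epsilon))$ rounds---forfeits that $\epsilon$ factor in the branching. You would then need the per-round branching constant $c$ to be an \emph{absolute} constant for $c^T=\poly(1/\epsilon)$, yet you explicitly hedge to ``$c=c(\epsilon)$ depending polynomially on $1/\epsilon$''; in that case $c^{O(\log(1/\epsilon))}=(1/\epsilon)^{\Theta(\log(1/\epsilon))}$, which is not $\poly(1/\epsilon)$. Even aiming for $c=O(1)$, your scheme lets a failed edge retry in the next round, so an edge whose success/failure flips in round $t$ is active in round $t{+}1$ in one execution only, and its fresh proposal there cascades further---the paper's one-round-per-edge structure avoids this entirely and is what makes the clean recursion $\E[|D_i|]\le O(\epsilon)\cdot\E[|D_{<i}|]$ go through. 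Two smaller points: the paper does not use rejection sampling but a specific resampling rule (\Cref{alg:recourse-change-color}) carefully designed so that \Cref{main:recourse:analysis:100} (the distribution matches \Cref{alg:model-agnostic} exactly) holds; and the final ``greedy cleanup'' has no bounded recourse on its own---the paper instead maintains the residual subgraph with a separate constant-recourse $O(\Delta)$-coloring subroutine (\textsc{SimpleColor}), which you would also need.
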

\end{wrapper}

\subsection{Our Techniques}\label{sec:techniques} At the heart of both our results is one common algorithmic approach, inspired by the R\"odl Nibble Method \cite{alon2004probabilistic}, as  applied to distributed edge coloring by \citet{dubhashi1998near}. This method and its variants have since found further uses in distributed settings \cite{chang2018complexity,elkin20142delta}. To the best of our knowledge, we are the first to export this method to online settings.

We analyze our basic algorithm, which is a variant of \cite{dubhashi1998near}, in an offline model. We then show how to implement this algorithm in online and dynamic settings, from which we obtain our results. 
We now outline this basic algorithm, and the ideas needed to implement it in the models we study. For simplicity, we focus on $\Delta$-regular graphs in this section.

\medskip
\noindent {\bf The High-Level Framework:} The Nibble Method in the framework of edge coloring was first used in \cite{dubhashi1998near} in the distributed model. Let us sketch how their algorithm would work in the offline setting. The algorithm consists of multiple rounds. In each round, each vertex $v$ selects a random $\eps$ fraction of its incident uncolored edges. Each sampled edge $e$ chooses a tentative color u.a.r.~among the colors in $[\Delta]$ not yet taken by incident edges (\emph{palette} of $e$).
We then assign the tentative color $c(e)$ to sampled edges $e$ for which no incident edge $e'$ picked the same tentative color $c(e)$, else we mark $e$ as \emph{failed}, and leave $e$ uncolored.
It turns out that each sampled edge fails at each round with probability $O(\epsilon)$. Crucially, picking $\epsilon$ appropriately results in a number of important parameters (degrees in the uncolored subgraph, palette sizes, etc') behaving in a predictable manner, and being sharply concentrated around their mean, w.h.p. In particular, this results in the uncolored subgraph's maximum degree decreasing w.h.p.~at a rate of roughly $1-\epsilon$ per application of this subroutine, or \emph{round}. Consequently, some $t_\eps = O(\log (1/\eps)/\epsilon)$ rounds leave an uncolored subgraph of maximum degree $\Delta'=\poly(\eps)\Delta$ w.h.p., which can then be greedily colored using a further $2\Delta'=\poly(\epsilon)\Delta$ colors. This approach therefore yields a proper $(1+\poly(\epsilon))\Delta$ edge coloring.

In part inspired by \cite{chang2018complexity}, we consider a slight modification of the above algorithm which is more convenient for our goals. In more detail, we make the following changes:
\\
(1) We do not attempt to re-color an edge $e$ which fails in a given round in future rounds, instead leaving $e$ to be colored greedily in the final stage. Intuitively, ignoring these edges still results in a low-degree uncolored graph after $t_\epsilon$ rounds, since few edges incident to each vertex fail. 
\\
(2) Whenever an edge $e$ picks a tentative color $c$, we remove $c$ from the palettes of its incident edges even if $e$ fails. 
Intuitively, this does not decrease the palette sizes much, again, since few edges incident to each vertex fail.
\\
(3) We sample each edge independently with probability $\epsilon$ in each round.

\medskip Our modifications bring two main advantages. 
First, the analysis can be substantially simplified: rather than using a specialized concentration inequality of \citet{grable1998large}, we mostly use 
Hoeffding bounds for negatively-associated variables (see \Cref{sec:prelims}).
This allows us to provide a relatively concise, but complete analysis for sub-constant values of $\eps$ and for non-regular graphs.
Second, and importantly for us, it is easier to adapt the modified algorithm to the online settings that we study.

\medskip
\noindent {\bf Random-Order Online Implementation:}
To obtain our results for random-order arrivals, we first observe that our edge-centric sampling of modification (3) allows us to use the randomness of edge arrivals to ``sample edges for us''. More formally, we implement the independent edge-sampling part of each round by considering an appropriate binomially-distributed prefix of the remaining edges (relying on our knowledge of the number of edges of the $\Delta$-regular graph, $m=\frac{n\Delta}{2}$). This results in each remaining edge of the graph being sampled independently with probability $\epsilon$.

For each round, we have each edge of the round sample a tentative color u.a.r.~from its palette.
In this online setting, however, we cannot always tell when an edge arrives whether it picked the same tentative color as its incident edges of the same round (since some of these arrive \emph{later}). 
We therefore assign the tentative color $c(e)$ to sampled edges $e$ for which no \emph{previous} incident edge $e'$ picked the same tentative color $c(e)$, else we mark $e$ as \emph{failed}.
Modification (2) in the basic algorithm implies that this change still results in a feasible (partial) coloring.
On the other hand, the uncolored subgraph ``after'' the rounds in this algorithm clearly has lower maximum degree than its counterpart in the basic algorithm, and so greedily coloring this subgraph requires fewer colors than the same stage of the basic algorithm.
Finally, modification (1) of our basic algorithm, whereby we do not attempt to re-color a failed edge in ``future rounds'' (which would require knowledge of future arrivals), 
implies that we can greedily color  every failed edge before the next edge arrives.
So, by the analysis of our basic algorithm, we obtain \Cref{thm:ro} for $\Delta$-regular graphs. In \Cref{sec:ro} we build on this approach to obtain our full result, for general graphs.

\medskip
\noindent {\bf Low-Recourse Implementation:} For our low-recourse dynamic implementation, we show how to maintain, after each update, a coloring drawn from the same distribution as that of the basic algorithm applied to the current graph. Modifications (1) and (3) of the basic algorithm imply that deciding the round in which we sample any edge can be done in advance (prior to any arrival), sampling this round number from the appropriately capped geometric distribution.
The more delicate point is dealing with the choice of tentative colors. For example, when an edge is added, its tentative color is removed from the palettes of its incident edges of later rounds. 
Thus, some incident edges no longer have a tentative color which is u.a.r.~drawn from their current palette. 
In \Cref{sec:recourse} we show that a natural approach of correcting these distributions---sampling a new tentative color if the previous one is no longer valid, and switching to a new color if one samples a newly-available color---results in bounded recourse. 
In particular, building on our analysis of the basic algorithm, we show that the expected number of edges of round $i+1$ whose tentative color changes due to the change of a tentative color of an edge in round $j\leq i$ is at most $O(\epsilon)$. Therefore, for any update, the number of tentative color changes is at most $(1+O(\eps))^{t_\eps} = \poly(1/\epsilon)$, from which we obtain \Cref{thm:dynamic}.
In \Cref{sec:recourse} we formalize this approach and its analysis

\subsection{Further Related Work}
Other than \cite{dubhashi1998near}, most closely related to our approach are other distributed edge coloring algorithms using the Nibble Method and its variants and extensions \cite{chang2018complexity,elkin20142delta}.
While these distributed algorithms improve on \cite{dubhashi1998near}, they require crucially that edges be tentatively colored in multiple rounds---a design pattern which seems hard to implement in online settings. 
Another approach is suggested by the work of \citet{cohen2019tight} for bipartite one-sided vertex arrivals; using an online matching algorithm of \citet{cohen2018randomized} which matches each edge with probability $\frac{1-o(1)}{\Delta}$, they color roughly one edge of each maximum-degree node per round, resulting in a $(1+o(1))\Delta$-edge-coloring. 
Unfortunately, for the models we study, no such matching algorithm is known---ruling out this approach.

Returning to previous algorithms in our models, we note that the algorithm of \citet{duan2019dynamic}, which uses an augmenting path based approach, has an inherent polylogarithmic recourse.
On the other hand, the approaches of \cite{aggarwal2003switch,bahmani2012online} for random-order arrivals seem challenging to extend to dynamic recourse-bounded algorithms. Moreover, in the random-order online model, it is unclear how to provably achieve a $(1+o(1))\Delta$-edge-coloring for simple graphs with $\Delta=\omega(\log n)$ by using the ideas in those papers. In this work we show that the Nibble Method, and in particular, a variant of the algorithm of \cite{dubhashi1998near}, allows us to obtain such desired results in \emph{both} these online models.

\section{The Basic Algorithm}
\label{sec:overview:algo}

In this section, we describe our basic algorithm for near-regular graphs in the {\em static setting}. 
and state the key theorem needed for its analysis. We defer
a more detailed analysis to \Cref{sec:agnostic}. Our online and dynamic algorithms in subsequent sections will be built on top of this basic algorithm.

The input to the algorithm is a graph $G = (V, E)$ with $|V| = n$ nodes, where the degree of each node lies in the interval $[(1-\eps^2)\Delta, (1+\eps^2) \Delta]$. The parameter $\eps$ satisfies the following condition:
\begin{equation}
\label{eq:eps}
1/10^4 \geq  \eps \geq 10 \cdot \left( \ln n/\Delta\right)^{1/6}.
\end{equation}
Note that such $\epsilon$ exist if $\Delta =\Omega(\log n)$ is large enough.  The algorithm runs in two {\em phases}, as follows. 

\medskip
\noindent\textbf{Phase One.} In  phase one, the algorithm properly colors a subset of edges of $G$ using $(1+\eps^2)\Delta$ colors, while leaving an uncolored subgraph of small maximum degree. This phase consists of $t_{\eps}-1$  {\em rounds} $\{1, \ldots, t_{\eps}-1\}$, for
\begin{equation}
\label{eq:t}
t_{\eps} := \left\lfloor \ln(1/\eps)/(2K\eps) \right\rfloor, \text{ and } K = 48.
\end{equation} 
Each round $i \in [t_{\eps}-1]$ operates on a subgraph $G_i := (V, E_i)$ of the input graph (with $E_1 = E$), identifies a subset of edges $S_i \subseteq E_i$, picks a {\em tentative}  color $c(e) \in [(1+\eps^2)\Delta] \cup \{ \text{null}\}$ for each edge $e \in S_i$, and  returns the remaining set of edges $E_{i+1} = E_i \setminus S_i$ for the next round. Thus, we have: $E = E_1  \supseteq E_2 \supseteq \cdots  \supseteq E_{t_{\eps}}$.  We now describe how a given round $i \in [t_{\eps}-1]$ works. We start by defining a couple of important notations. 
\begin{itemize}
\item  (a) For all  $v \in V$, let 
$P_i(v) := \{ \chi \in [(1+\eps^2)\Delta] : \chi \neq c(u,v) \text{ for all } (u, v) \in \bigcup_{j < i} S_j\}$ 
denote the {\em palette} of the node $v$ for round $i$.  A color $\chi \in [(1+\eps^2)\Delta]$ belongs to $P_i(v)$ iff no edge incident on $v$ has tentatively picked the color $\chi$ in previous rounds $j < i$.  (b) Similarly, for all  $(u, v) \in E_i$, let $P_i(u, v) := P_i(u) \cap P_i(v)$ denote the {\em palette} of the edge $(u,v)$ for round $i$. 
\end{itemize}
In round $i$, we first sample  each edge $e \in E_i$ {\em independently} with probability $\eps$. Let $S_i \subseteq E_i$ be the set of sampled edges. Next, each  edge $e \in S_i$ with $P_i(e) \neq \emptyset$  tentatively picks  a color $c(e)$ from its palette $P_i(e)$ uniformly and independently at random.  We say that an edge $e \in S_i$ {\em failed} in round $i$ iff  either  (a) $P_i(e) = \emptyset$ (in this case we set $c(e) := \text{null}$), or (b) among the  edges $N(e) \subseteq E$ that are adjacent to $e$, there is
some edge $e' \in S_i$ that tentatively picked the same color  (i.e., $c(e) = c(e')$). Let $F_i \subseteq S_i$ denote the set of failed edges in round $i$. The remaining sampled edges $e \in S_i \setminus F_i$ are called {\em successful} in round $i$. Each such edge $e \in S_i \setminus F_i$ is {\em assigned} the color $c(e)$ it tentatively picked in round $i$. Before terminating the current round, we set $E_{i+1} := E_i \setminus S_i$ and $G_{i+1} := (V, E_{i+1})$. We remark that the color tentatively sampled by a failed edge $e$ cannot be used by the edges incident to $e$ in subsequent rounds. This will prove useful both for our analysis and when implementing this algorithm in other models in subsequent sections.

\medskip
\noindent\textbf{Phase Two.} Finally,  in phase two, we greedily color all edges that were not successful in  phase one.  That is, letting $G_F := (V, \cup_i F_i)$ be the subgraph  consisting of all the edges that failed in phase one, and $G_{t_{\eps}} := (V, E_{t_{\eps}})$ be the subgraph consisting of all the edges that were never sampled in phase one, we color the edges of $G_{t_{\eps}} \cup G_F$ greedily, using a new palette of $2\Delta(G_{t_{\eps}} \cup G_F)-1$ colors. 
Here $\Delta(H)$ denotes the maximum degree in any graph $H$.  

 \begin{algorithm}[h]
	\caption{The Basic Algorithm}
	\label[algorithm]{alg:model-agnostic}
	\begin{algorithmic}[1]
		\State $E_1 \gets E$ and $G_1 \gets (V, E_1)$
		\For{$i=1,2,\dots,t_{\eps}-1$}
		\State $S_i\gets \emptyset$
		\For{each $e\in E_i$ independently}
		\State With probability $\eps$, add $e$ to $S_i$  \label{sample-alg}
%		\Comment{$e$ is sampled}
		\State $\palette_i(e) \gets [(1+\eps^2)\Delta]\setminus \{c(e') \mid e'\in N(e)\cap \bigcup_{j<i} S_j\}$. 
		\State If $P_i(e)\neq\emptyset$, sample $c(e)\sim_R  \palette_i(e)$, else set $c(e)\gets \textrm{null}$ 
		\Comment{Tentative coloring of $e$}
		\EndFor
		\State Let $F_i \gets \{e\in S_i \mid c(e)\in \{\textrm{null}\}\cup \{c(e') \mid e'\in N(e) \cap S_i \}\}$ \Comment{The set of failed edges} \label{failed-edges}
		\State Color each edge $e\in S_i\setminus F_i$ using color $c(e)$ 
		\State $E_{i+1} \gets E_i\setminus S_i$ and $G_{i+1} \gets (V, E_{i+1})$.
		\EndFor
		\State  Let $G_F := \left(V, \bigcup_i F_i \right)$ denote the subgraph of $G$ consisting of the failed edges from phase one. 
		\State Color $G_{t_\epsilon} \cup G_F$ greedily using colors $(1+\eps^2)\Delta+1,(1+\eps^2)\Delta+2,\dots,(1+\eps^2)\Delta+2\Delta(G_{t_\epsilon} \cup G_F)-1$.
	    \label{line:greedy}
	\end{algorithmic}
\end{algorithm}

The algorithm's pseudocode is given in \Cref{alg:model-agnostic}. We now turn to discussing its analysis.

\begin{obs}
\label{main:obs:correct}
\Cref{alg:model-agnostic} outputs a proper $((1+\eps^2)\Delta+ 2\Delta(G_{t_{\eps}} \cup G_F)-1)$-edge-coloring of the input graph $G = (V, E)$. 
\end{obs}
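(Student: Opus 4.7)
The plan is to verify three things: (i) every edge of $G$ receives a color, (ii) the coloring is proper, and (iii) the total palette has size $(1+\eps^2)\Delta + 2\Delta(G_{t_\eps}\cup G_F)-1$. Item (iii) is immediate from reading off the color ranges used in the two phases of \Cref{alg:model-agnostic}, and item (i) follows by partitioning $E$ into successful edges (which are colored when they are assigned $c(e)$ in phase one), edges in $E_{t_\eps}$ (never sampled), and edges in $\bigcup_i F_i$ (failed); the latter two sets together form $G_{t_\eps}\cup G_F$, which is colored in the greedy step on \Cref{line:greedy}. So the real content is properness.

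For properness, I would first observe that the two phases use disjoint color palettes, so any conflict must happen within a phase. Within phase two, the greedy step on \Cref{line:greedy} uses $2\Delta(G_{t_\eps}\cup G_F)-1$ colors on a graph of maximum degree $\Delta(G_{t_\eps}\cup G_F)$, which suffices for proper greedy edge coloring by the standard argument that any uncolored edge has at most $2(\Delta(G_{t_\eps}\cup G_F)-1)$ forbidden colors coming from its (at most two) endpoints.

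Within phase one, I would take any two adjacent edges $e,e'$ that are both successful, say $e\in S_i\setminus F_i$ and $e'\in S_j\setminus F_j$ with $j\leq i$, and show $c(e)\neq c(e')$ by splitting on the round index. If $j<i$, then $c(e')$ was picked in a previous round, so by the definition of $P_i(e)$ in \Cref{alg:model-agnostic}, we have $c(e')\notin P_i(e)$; but $c(e)\in P_i(e)$ because $e$ is successful (so $c(e)\neq\text{null}$ and thus was drawn from $P_i(e)$). Hence $c(e)\neq c(e')$. If $j=i$, then $e'\in N(e)\cap S_i$, so the definition of $F_i$ on \Cref{failed-edges} directly implies that $c(e)=c(e')$ would force $e\in F_i$, contradicting $e\in S_i\setminus F_i$. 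This exhausts the cases.

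I do not anticipate a genuine obstacle here; the observation is essentially bookkeeping, and the only subtle point is being careful that the palette $P_i(e)$ is defined to exclude tentative colors from \emph{all} previous rounds, not only those of successful edges — but this is exactly what \Cref{alg:model-agnostic} does, and it is what makes the first case above go through cleanly (it is also the content of modification (2) highlighted in \Cref{sec:techniques}).
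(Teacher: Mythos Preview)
Your proposal is correct and follows essentially the same approach as the paper's proof: both argue that phase one yields a proper partial coloring (via the definitions of $P_i(e)$ and $F_i$) and that phase two uses a disjoint color range and is proper by greediness. Your version is simply more explicit in splitting the phase-one analysis into the cases $j<i$ and $j=i$, which the paper compresses into the single clause ``the definition of $P_i(e)$ and $F_i$ guarantees that no other edge $e'\in N(e)$ in any round of Phase One can be colored with $c(e)$.''
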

\begin{proof}
First observe that the algorithm computes a valid partial coloring in Phase One. Indeed, any $e \in S_i \setminus F_i$ selects a color $c(e) \in P_i(e) \subseteq [(1+\eps^2)\Delta]$, and the definition of $P_i(e)$ and $F_i$ guarantees that no other edge $e'\in N(e)$ in any round of Phase One can be colored with $c(e)$. The claim follows by observing that in Phase One we use only colors from $ [(1+\eps^2)\Delta]$, while in Phase Two the greedy algorithm uses a disjoint set of at most $2\Delta(G_{t_{\eps}} \cup G_F)-1$ extra colors.
\end{proof}

The key property of the basic algorithm is captured in the following theorem.
 
\begin{restatable}{thm}{uncoloreddeg}
\label{main:th:main}
$\Delta(G_{t_{\eps}} \cup G_F) = O\left(\eps^{1/(3K)}   \Delta\right)$ w.h.p.
\end{restatable}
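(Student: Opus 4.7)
The plan is to set up an inductive argument over rounds $i = 1, \ldots, t_{\eps}$ that maintains, with high probability, three simultaneous invariants: for every vertex $v$ and every surviving edge $e\in E_i$,
\[
d_i(v) = (1\pm o(1))(1-\eps)^{i-1}\Delta,\quad |P_i(v)| = (1\pm o(1))(1-\eps)^{i-1}\Delta,\quad |P_i(e)| = (1\pm o(1))(1-\eps)^{2(i-1)}\Delta,
\]
where $d_i(v):=|N(v)\cap E_i|$. The factor-of-two exponent for $|P_i(e)|=|P_i(u)\cap P_i(v)|$ comes from thinking of the two vertex palettes as approximately ``independent'' subsets of $[(1+\eps^2)\Delta]$; the slack hidden in $o(1)$ grows mildly with $i$ but stays $o(1)$ for all $i\leq t_\eps$ under \eqref{eq:eps}. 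The base case $i=1$ is immediate from the near-regularity assumption.

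For the induction step I would condition on the round-$i$ invariants. The new degree $d_{i+1}(v)=d_i(v)-|S_i\cap N(v)|$ is a sum of $d_i(v)$ independent Bernoullis of mean $1-\eps$, so a Chernoff bound delivers the invariant at round $i+1$. For the vertex palette, $|P_{i+1}(v)|=|P_i(v)|-|\{c(e):e\in S_i\cap N(v),\,c(e)\neq \mathrm{null}\}|$: for each color $\chi\in P_i(v)$, the event that some sampled incident edge picks $\chi$ is determined by independent uniform choices over (possibly different) edge palettes, and the indicators over $\chi$ are negatively associated, so Hoeffding-type concentration applies. The edge palette $|P_{i+1}(e)|$ is handled analogously, computing the expected shrinkage by summing over $\chi\in P_i(e)$ the probability it is removed from either endpoint's palette.

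Given the invariants, I would then show that each sampled edge $e\in S_i$ fails with probability $O(\eps)$. Since $|P_i(e)|$ stays $\omega(1)$, the nullity case is excluded; the dominant failure mode is a color collision with an incident sampled edge. For two incident edges $e,e'\in S_i$, the probability they pick the same tentative color equals $|P_i(e)\cap P_i(e')|/(|P_i(e)|\,|P_i(e')|)$. Introducing the auxiliary invariant $|P_i(e)\cap P_i(e')|=(1\pm o(1))(1-\eps)^{3(i-1)}\Delta$ (proved like the edge-palette bound but now with three vertex palettes treated as approximately independent), this probability becomes $(1\pm o(1))/((1-\eps)^{i-1}\Delta)$. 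Summing over the $O(\eps(1-\eps)^{i-1}\Delta)$ incident sampled edges yields an $O(\eps)$ failure probability per sampled edge, and a final negative-association Hoeffding gives concentration of $|F_i\cap N(v)|$ around its mean $O(\eps^2)\cdot d_i(v)$.

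The theorem then follows by combining two estimates. The unsampled subgraph satisfies $\Delta(G_{t_\eps})\leq (1+o(1))(1-\eps)^{t_\eps-1}\Delta= O(\eps^{1/(2K)}\Delta)$, using $(1-\eps)^{t_\eps-1}\leq e^{-\eps(t_\eps-1)}\leq O(\eps^{1/(2K)})$ by definition of $t_\eps$ in \eqref{eq:t}. The failed edges contribute at each vertex $v$ an expected mass $\sum_{i=1}^{t_\eps-1}O(\eps^2)\cdot d_i(v)=O(\eps^2)\cdot \Delta\sum_i(1-\eps)^{i-1}=O(\eps\Delta)$, which concentrates via another Hoeffding. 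Adding the two bounds gives $\Delta(G_{t_\eps}\cup G_F)\leq O(\eps^{1/(2K)}\Delta)+O(\eps\Delta)=O(\eps^{1/(3K)}\Delta)$ w.h.p. The main obstacle is the palette-concentration step: the random variables controlling palette shrinkage depend on correlated edge samples and on prior-round history, so establishing clean negative-association statements that pass through the inductive conditioning, and tracking joint invariants like $|P_i(e)\cap P_i(e')|$ over triples of palettes, is where most of the technical work sits.
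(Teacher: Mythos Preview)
Your high-level architecture---induct over rounds, bound $\Delta(G_{t_\eps})$ via the degree decay $(1-\eps)^{i-1}$, bound $\Delta(G_F)$ by showing each sampled edge fails with probability $O(\eps)$, then add the two---is the same as the paper's. The final arithmetic is also correct. The gap is in your choice of inductive invariants.

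You propose to track $d_i(v)$, $|P_i(v)|$, $|P_i(e)|$, and (for the failure bound) $|P_i(e)\cap P_i(e')|$, justifying the decay rates by an ``approximate independence'' heuristic on vertex palettes. But this heuristic is not a proof, and your invariants do not close the induction. Concretely: to show $|P_{i+1}(e)|$ concentrates for $e=(u,v)$, you must compute, for each $\chi\in P_i(e)$, the probability that no sampled edge adjacent to $u$ or $v$ picks $\chi$. The edges that \emph{can} pick $\chi$ are exactly those $e'=(u,w)$ with $\chi\in P_i(w)$, i.e., the edges in $N_{i,\chi}(u)$ (and symmetrically $N_{i,\chi}(v)$). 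So the survival probability of $\chi$ depends on the \emph{per-color} quantity $|N_{i,\chi}(u)\cap S_i|$, not on $|P_i(u)|$ or on any averaged palette-intersection size. The same issue arises for your triple-intersection invariant: computing $\E[|P_{i+1}(u)\cap P_{i+1}(v)\cap P_{i+1}(w)|]$ again requires $|N_{i,\chi}(\cdot)|$ for each $\chi$ in the triple intersection. Without control on these $c$-degrees, you cannot compute the expected one-round shrinkage of any of your palette quantities, and the induction stalls at the very step you flag as ``the main obstacle.''

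The paper fixes this by replacing your $|P_i(v)|$ invariant with the $c$-degree invariant $|N_{i,c}(v)|=(1\pm\gamma_i)(1-\eps)^{2(i-1)}\Delta$ for every color $c$ (event $\mathcal{C}_i$). The pair $(|P_i(e)|,\,|N_{i,c}(v)|)$ is exactly what is needed: $|N_{i,c}(\cdot)|$ lets you compute $\E[|P_{i+1}(e)|]$, and conversely $|P_i(e')|$ for all $e'$ lets you compute $\E[|N_{i+1,c}(v)|]$, so the two invariants sustain each other. The failure-probability bound then falls out directly from $|N_{i,c}(v)\cap S_i|\approx \eps(1-\eps)^{2(i-1)}\Delta$ and $|P_i(e')|\approx(1-\eps)^{2(i-1)}\Delta$, without any need for triple-palette intersections. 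If you swap your vertex-palette invariant for the $c$-degree invariant, your outline becomes essentially the paper's proof.
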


\begin{cor}
\label{main:cor:th:main}
The basic algorithm  $\left(\Delta+O\left(\eps^{1/(3K)} \Delta \right)\right)$ edge colors $G$, w.h.p.
\end{cor}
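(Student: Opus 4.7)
The plan is to prove the corollary by direct substitution, combining the deterministic color-count of Observation~\ref{main:obs:correct} with the high-probability maximum-degree bound of Theorem~\ref{main:th:main}. The observation gives, unconditionally, that the basic algorithm produces a proper edge coloring using exactly $(1+\eps^2)\Delta + 2\Delta(G_{t_\eps}\cup G_F) - 1$ colors. Conditioning on the high-probability event from the theorem, namely $\Delta(G_{t_\eps}\cup G_F) = O(\eps^{1/(3K)}\Delta)$, replaces the $2\Delta(G_{t_\eps}\cup G_F) - 1$ term with $O(\eps^{1/(3K)}\Delta)$. The ``w.h.p.''\ qualifier transfers verbatim from the theorem, since the observation is a deterministic structural statement about the algorithm's output.

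The only routine bookkeeping is to absorb the additive $\eps^2\Delta$ slack coming from the palette size $(1+\eps^2)\Delta$ into the error term $O(\eps^{1/(3K)}\Delta)$. Since $K = 48$, we have $1/(3K) = 1/144 \ll 2$, and the constraint \eqref{eq:eps} forces $\eps \le 1/10^4 < 1$, so $\eps^2 \le \eps^{1/(3K)}$. Therefore $(1+\eps^2)\Delta + O(\eps^{1/(3K)}\Delta) = \Delta + O(\eps^{1/(3K)}\Delta)$, matching the claimed bound; the trailing $-1$ is likewise swallowed by the $O(\cdot)$.

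There is no real obstacle here: Theorem~\ref{main:th:main} is doing all the work, and the corollary is a one-line deduction. If anything, the only thing worth flagging explicitly in the write-up is that the constant $\gamma$ appearing in the theorem statements of the introduction (e.g.\ \Cref{thm:ro}) corresponds precisely to this $1/(3K) = 1/144$, so readers can see how the abstract ``absolute constant $\gamma \in (0,1)$'' arises from the basic algorithm's parameters fixed in \eqref{eq:t}.
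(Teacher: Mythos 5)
Your proof is correct and takes exactly the same route as the paper, which simply says the corollary ``follows from Theorem~\ref{main:th:main} and Observation~\ref{main:obs:correct}''; you have merely spelled out the routine absorption of the $\eps^2\Delta$ and $-1$ terms. One small correction to your closing remark: the constant $\gamma$ in Theorem~\ref{thm:ro} is \emph{not} $1/(3K)=1/144$. Substituting $\eps = 10(\ln n/\Delta)^{1/6}$ into $\eps^{1/(3K)}\Delta$ gives $O\big((\ln n/\Delta)^{1/864}\Delta\big) = O\big(\Delta^{1-1/864}\log^{1/864} n\big)$, so the exponent appearing in $\Delta^\gamma \log^{1-\gamma} n$ is $\gamma = 1 - \frac{1}{6\cdot 3K} = 1 - \frac{1}{864}$, not $\frac{1}{3K}$.
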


\begin{proof}
Follows from~\Cref{main:th:main} and~\Cref{main:obs:correct}.
\end{proof}

In some sense, the arguments behind the proof of \Cref{main:th:main}  were already apparent in the work of \citet{dubhashi1998near}.
Consequently, we defer a complete and self-contained proof of this theorem to \Cref{sec:agnostic}.
For now, we turn to exploring implications of this theorem and \Cref{alg:model-agnostic} to online edge coloring.
\section{Random-Order Online Algorithm}\label{sec:ro}

In this section we present algorithms which (essentially) implement \Cref{alg:model-agnostic} in the random-order online model. We start with a warm-up case, where the input  graph is near-regular, and we know the value of $m$ (the number of edges in the final graph).

%\color{blue}
\subsection{Warm-up: Near-Regular Graphs with Known $m$}\label{ro:warm-up}

One subroutine we rely on  is the ability to use the stream's randomness to simulate independent sampling of edges. For completeness, we provide a proof of the following simple fact in \Cref{sec:ro-appendix}.

\begin{restatable}{fact}{rosamplinglem}\label{ro-sampling}
	Consider a universe $U$ of $n$ elements, and let $p\in [0,1]$. 
	Let $U_k\subseteq U$ denote the first $k$ elements in a random-order stream of $U$, and let $X \sim Bin(n, p)$ be a binomial random variable with parameters $n$ and $p$. Then the random set $U_X$ contains every element in $U$ independently with probability $p$.
\end{restatable}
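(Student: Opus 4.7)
The plan is to show that the distribution of $U_X$ coincides with the distribution of the set obtained by independent Bernoulli sampling of each element with probability $p$. Both distributions are supported on subsets of $U$, so it suffices to verify that they assign the same probability to every fixed subset $S \subseteq U$.

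First I would condition on the value of $X$. Since the stream is a uniformly random permutation of $U$, for any fixed $k \in \{0, 1, \dots, n\}$, the first $k$ elements form a uniformly random $k$-element subset of $U$. Therefore, for any $S \subseteq U$ with $|S| = k$,
\[
\Pr[U_X = S \mid X = k] \;=\; \frac{1}{\binom{n}{k}}.
\]
Only the event $X = |S|$ contributes to $\Pr[U_X = S]$, so a one-line computation using $\Pr[X = |S|] = \binom{n}{|S|} p^{|S|}(1-p)^{n-|S|}$ gives
\[
\Pr[U_X = S] \;=\; p^{|S|}(1-p)^{n-|S|},
\]
which is exactly the probability that independent Bernoulli($p$) sampling selects precisely the subset $S$. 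Since this holds for every $S$, the two distributions coincide, and in particular $U_X$ contains every element of $U$ independently with probability $p$.

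There is no real obstacle to this proof; the only point worth stating carefully is the symmetry fact that a length-$k$ prefix of a uniformly random permutation is a uniformly random $k$-subset, which is immediate from the fact that each $k$-subset is the prefix of the same number $k!(n-k)!$ of permutations. The argument does not require the stream length $X$ to be independent of anything else in the application; what matters is only that $X$ follows the binomial law and that, conditional on $X$, the identity of the first $X$ elements is uniform over $k$-subsets.
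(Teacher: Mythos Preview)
Your proof is correct and follows essentially the same route as the paper: both arguments fix an arbitrary subset $S$, use that $X$ is independent of the stream so that the length-$|S|$ prefix is a uniformly random $|S|$-subset, and compute $\Pr[U_X = S] = \Pr[X=|S|]/\binom{n}{|S|} = p^{|S|}(1-p)^{n-|S|}$. The only difference is presentational---you phrase the independence requirement as ``conditional on $X$, the prefix is uniform over $k$-subsets,'' whereas the paper states it directly as independence of $X$ from the stream's randomness.
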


Using \Cref{ro-sampling}, we simulate (a variant of) \Cref{alg:model-agnostic} with parameter $\epsilon$ under random-order edge arrivals in a graph $G = (V, E)$ with  $m$ edges and $n$ nodes, where the degree of each node lies in the interval $(1\pm \eps^2)\Delta$, and $\Delta=\omega(\log n)$. The algorithm knows  $n, \Delta$ and $m$ (but not $G$).

\smallskip
\noindent
\textbf{Warm-up Algorithm:}
Set $\eps:= 10 \cdot (\ln n/\Delta)^{1/6}$ (see~(\ref{eq:eps})). For round $i=1,\dots,t_{\epsilon}-1$, sample an independent random variable $X_i\sim Bin(m-\sum_{j<i} X_j, \eps)$, 
and let $S_i$ be the set of edges in $G$ whose positions in the random-order stream lie in the interval $(\sum_{j<i} X_j,\sum_{j\leq i} X_j]$.
As with \Cref{alg:model-agnostic}, each edge $e\in S_i$, upon its arrival, samples a tentative color 
$$c(e)  \sim_R  \palette_i(e) := [(1+\eps^2)\Delta]\setminus \{c(e') \mid e'\in N(e)\cap S_j,\, j<i\},$$
where we set $c(e)\gets null$ if $P_i(e)=\emptyset$.
Unlike in \Cref{alg:model-agnostic}, in this online setting the algorithm cannot know whether the color $c(e)$ conflicts with  the tentative color of a neighboring edges $e' \in N(e) \cap S_i$ that arrives in the same round $i$, but {\em after} $e$ in the stream. Hence, we color each edge $e\in S_i$ with its tentative color $c(e)$, unless $c(e)=null$ or some {\em previously-arrived} neighboring edge $e' \in N(e) \cap S_i$ also  picked  color $c(e')=c(e)$.
In the latter case, we instead color $e$ greedily with the first available color $j>(1+\eps^2)\Delta$. We let $F'_i$ be the edges in $S_i$ which are colored greedily.

As we show, this online algorithm inherits the performance of the basic \Cref{alg:model-agnostic}.

\begin{thm}\label{thm:ro-regular}
For some absolute constant $\gamma \in (0, 1)$, the warm-up algorithm described above  yields a proper $\left(\Delta+O\left(\Delta^{\gamma }\cdot \log^{1-\gamma} n\right)\right) = (1+o(1))\Delta$-edge coloring of $G$ w.h.p.
\end{thm}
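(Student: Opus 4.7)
\textbf{Proof plan for \Cref{thm:ro-regular}.} The plan is to couple the warm-up algorithm's execution with a hypothetical run of \Cref{alg:model-agnostic} on the same graph $G$, and then argue that the greedy subgraph produced by the warm-up algorithm is (under this coupling) a subgraph of the one produced by \Cref{alg:model-agnostic}, so that \Cref{main:th:main} can be invoked as a black box.

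First I would show that the joint distribution of $(S_1,\dots,S_{t_\eps-1})$ in the warm-up algorithm matches exactly the one in \Cref{alg:model-agnostic}. For round $1$ this follows from \Cref{ro-sampling} applied with $U = E$, $p = \eps$, $n = m$: the stream prefix of length $X_1 \sim \mathrm{Bin}(m, \eps)$ is exactly $S_1$, and \Cref{ro-sampling} says it is a $p$-independent sample of $E$. For the inductive step, conditional on $(S_1,\dots,S_{i-1})$, the $m - \sum_{j<i} X_j$ edges of $E_i := E \setminus \bigcup_{j<i} S_j$ still appear in a uniformly random order in the stream's suffix, so applying \Cref{ro-sampling} to $U = E_i$ and $X_i \sim \mathrm{Bin}(|E_i|, \eps)$ yields that $S_i$ is an independent $\eps$-sample of $E_i$. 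Since the palettes $\palette_i(e)$ in both algorithms depend only on the tentative colors of previously-sampled edges (not on any Phase-One failures), I can couple the tentative colors $c(e)$ to coincide across both algorithms.

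Next I would compare the two algorithms' greedy subgraphs. Under the coupling, the only behavioral difference is which sampled edges get to keep their tentative color: the warm-up algorithm declares $e \in S_i$ failed only if a \emph{previously-arrived} neighbor in $S_i$ shares $c(e)$, whereas \Cref{alg:model-agnostic} declares $e$ failed if \emph{any} neighbor in $S_i$ does. Hence $F'_i \subseteq F_i$ for every round $i$, and the warm-up's greedy subgraph $G_{t_\eps} \cup G_{F'}$ is a subgraph of $G_{t_\eps} \cup G_F$. In particular, $\Delta(G_{t_\eps} \cup G_{F'}) \le \Delta(G_{t_\eps} \cup G_F)$. Using that online first-fit greedy colors any graph $H$ with at most $2\Delta(H) - 1$ colors (each arriving edge has at most $2\Delta(H) - 2$ forbidden colors), the warm-up's Phase-Two step uses at most $2\Delta(G_{t_\eps} \cup G_{F'}) - 1$ colors, disjoint from the $(1+\eps^2)\Delta$ colors of Phase One.

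Finally, invoking \Cref{main:th:main} gives $\Delta(G_{t_\eps} \cup G_F) = O(\eps^{1/(3K)} \Delta)$ w.h.p., so the overall number of colors used is
\[
(1+\eps^2)\Delta + O\!\left(\eps^{1/(3K)} \Delta\right) \;=\; \Delta + O\!\left(\Delta^{1 - 1/(18K)} \log^{1/(18K)} n\right)
\]
after substituting $\eps = 10(\ln n/\Delta)^{1/6}$; this matches the claimed form with $\gamma := 1 - 1/(18K) \in (0,1)$, and equals $(1+o(1))\Delta$ since $\eps = o(1)$ whenever $\Delta = \omega(\log n)$. The main obstacle I anticipate is the inductive coupling in the second paragraph: one must confirm that the stream's randomness can be reused across rounds, i.e., that conditional on $(S_1,\dots,S_{i-1})$ the remaining stream is still a uniformly random permutation of $E_i$, which is exactly where the random-order assumption is leveraged. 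Everything else follows by black-box application of \Cref{main:th:main}.
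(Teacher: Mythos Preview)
Your proposal is correct and follows essentially the same approach as the paper's own proof: both arguments couple the warm-up algorithm's $(S_i,c(\cdot))$ to those of \Cref{alg:model-agnostic} via \Cref{ro-sampling} and an induction on rounds, observe that $F'_i\subseteq F_i$ so the online greedy subgraph is contained in $G_{t_\eps}\cup G_F$, and then invoke \Cref{main:th:main} to bound the number of greedy colors. Your write-up is slightly more explicit about the inductive step (conditioning on $(S_1,\dots,S_{i-1})$ leaves a uniformly random order on $E_i$) and about the value of $\gamma$, but the logic is the same.
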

\begin{proof}
	This algorithm outputs a valid edge coloring, as it colors every edge (due to the greedy stage) and never assigns an edge a color used by an incident edge. It remains to bound its performance.

	For any $i\geq 0$, Let $E_i$ be the set of edges whose positions in the random-order stream lie in the interval $(\sum_{j<i} X_j, m]$.
	By \Cref{ro-sampling}, the set of edges  $S_i$ is  a random subset of $E_i$ which contains each edge in $E_i$ independently with probability $\epsilon$.
	A simple induction on $i$ shows that the sets $S_i$ and $E_i$ share the same distributions as their counterparts in \Cref{alg:model-agnostic}. Next, denote by $F_i\supseteq F'_i$ the set of edges $e \in S_i$ for which $c(e)\in \{null\}\cup \{c(e')\mid e'\in N(e) \cap S_i\}$. Since each edge $e \in S_i$ picks a color uniformly at random from the set of colors not picked by any of its neighboring edges in previous rounds (including the edges in $F_j$ for all $j<i$), 
	a simple induction on $i$ shows that the random variables $F_i$ and $c(e)$ in this algorithm are distributed exactly as their counterparts in  \Cref{alg:model-agnostic}.
	Consequently, the upper bounds on $\Delta(\bigcup_i F_i)\geq \Delta(\bigcup_i F'_i)$ and $\Delta(G_{t_\eps})$ of \Cref{alg:model-agnostic} hold for this online algorithm as well.
	Therefore, the greedy (online) algorithm colors the uncolored edges in $G_{t_\eps}\cup G_F$ using at most $2 \cdot \Delta(G_{t_\epsilon} \cup G_F) -1= O(\epsilon^{1/(3K)} \Delta)$ colors w.h.p., by~\Cref{main:th:main} and our choice of $\eps =  10 \cdot (\ln n/\Delta)^{1/6}$, as in \eqref{eq:eps}. 
	As we use $(1+\eps^2)\Delta$ distinct colors for all other edges, this online algorithm uses $\Delta + O\left(\epsilon^{1/(3K)} \Delta\right)$ colors overall w.h.p. Since $\Delta = \omega(\log n)$ and $K$ is an absolute constant (see~(\ref{eq:t})), the theorem follows from our choice of $\eps$.
\end{proof}

\textbf{Assuming near-regularity, and known $\mathbf{m}$.} The assumption of near-regularity used by the above algorithm is common in the literature. Indeed, all prior random-order online edge-coloring algorithms assume perfect regularity \cite{aggarwal2003switch,bahmani2012online}. As pointed out in those papers, this assumption is without loss of generality in the offline model, where we can add dummy edges to make the graph regular. In a random-order online setting, this is problematic, however, as these dummy edges should be interspersed among the real edges to create a regular graph \emph{presented in random order}. This last point seems impossible without prior knowledge of vertices' final degrees, and the number of edges, $m$, which we assume prior knowledge of. In the next section we show how to remove the assumption of near-regularity, as well as knowledge of $m$, while retaining the asymptotic performance of  \Cref{thm:ro-regular}.

\subsection{General Graphs}
\label{sec:non-regular}

We now present and analyze our random-order online edge coloring algorithm for general graphs $G = (V, E)$ with $n$ nodes, $m$ edges and maximum degree $\Delta = \omega(\log n)$. In particular, we do not assume that all nodes have degree close to $\Delta$. The algorithm knows  $n, \Delta$; but does {\em not} know  $m$ nor $G$. Let $e_1,\ldots,e_m$ be the random stream of edges, $G^{(k)}$ be the subgraph induced by $e_1,\ldots,e_k$, and $d^{(k)}(v)$ be the degree of node $v$ in $G^{(k)}$. Our key insight is to observe the first few edges in the input stream until some node reaches degree $\eps \Delta$. This is sufficient to infer (approximately) the value of $m$ and the degree of each node in $G$. At the same time, we can afford to color such initial set of edges greedily.

In more detail, our algorithm consists of $3$ main steps. In Step (I), we observe the first $T$ edges until some node $v$ reaches the degree $d^{(T)}(v)=\eps \Delta$ (or we reach the end of the stream). This first set of edges is colored greedily using the first available color. Let $\Delta_1$ be the largest color used in Step (I). The following technical lemma follows from a standard application of Chernoff bounds over sums of negatively-associated variables (proof in \Cref{sec:ro-appendix}).
\begin{restatable}{lem}{estimateslem}\label{estimate-using-eps-fraction}
	Let $\epsilon\leq \frac{1}{2}$, and let $\alpha>0$ be a constant, and assume $\Delta\geq \frac{24(\alpha+3) \ln n}{\epsilon^8}$. 
	Then, with probability at least $1-O(n^{-\alpha})$, the following properties hold:\footnote{We let $c=a\pm b$ denote $c\in [a-b,a+b]$.}
	\begin{enumerate}\itemsep0pt
	\item $T=\epsilon\cdot m (1\pm \epsilon^2)$. \label{m-estimate}
	\item $d^{(T)}(v) = \epsilon\cdot d(v) \pm 2\epsilon^3 \Delta$ for every node $v$. \label{degree-estimates}
	\item Let $m' :=T/(\eps(1+\eps^2))$. Conditioned on $m'\leq m$, every node $v$ has  $d(v) - d^{(m')}(v)\leq 2\epsilon^2 \Delta$. \label{last-edges}
	\end{enumerate} 
\end{restatable}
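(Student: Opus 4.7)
My plan is to reduce the lemma to additive concentration statements at a small number of deterministic times, and then handle the random stopping time $T$ by sandwiching it between these times using the monotonicity of $k \mapsto d^{(k)}(v)$. The key probabilistic tool is that, for any fixed $k \in [m]$ and any vertex $v$, we can write $d^{(k)}(v) = \sum_{e \ni v} \mathbf{1}[\pi(e) \le k]$, where $\pi$ is the uniformly random permutation of $E$ induced by the stream. The $d(v)$ indicators in this sum correspond to sampling $k$ positions out of $m$ without replacement, so they are negatively associated, and the standard Hoeffding bound for NA $\{0,1\}$-sums gives $\Pr[|d^{(k)}(v) - d(v)k/m| \ge t] \le 2\exp(-2t^2/d(v))$. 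For $t \in \{\epsilon^3\Delta/2,\; \epsilon^4\Delta\}$ and $d(v)\le\Delta$, the hypothesis $\Delta \ge 24(\alpha+3)\ln n/\epsilon^8$ makes this probability at most $n^{-(\alpha+3)}$, with room to spare.

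I would fix three deterministic checkpoints, $k^- := \lceil \epsilon m(1-\epsilon^2)\rceil$, $k^+ := \lfloor \epsilon m(1+\epsilon^2)\rfloor$, and $k^* := \lceil m(1-2\epsilon^2+\epsilon^4)\rceil$, and take a union bound over the $3n$ (vertex, checkpoint) pairs. This yields a single good event $\mathcal{E}$, of probability $1-O(n^{-\alpha})$, on which $|d^{(k)}(v) - d(v)k/m| \le \epsilon^3\Delta/2$ for all $v\in V$ and $k\in\{k^-,k^+\}$, and the sharper bound $|d^{(k^*)}(v) - d(v)k^*/m| \le \epsilon^4\Delta$ holds for all $v$ simultaneously. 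All three claims are then derived deterministically under $\mathcal{E}$.

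Claim~(1) is immediate: at any max-degree vertex $v^*$, $d^{(k^+)}(v^*) \ge \epsilon\Delta(1+\epsilon^2)-\epsilon^3\Delta/2 > \epsilon\Delta$ (up to a negligible rounding term from the floor in $k^+$), so $T\le k^+$; and at every vertex $v$, $d^{(k^-)}(v) \le \epsilon d(v)(1-\epsilon^2)+\epsilon^3\Delta/2 \le \epsilon\Delta-\epsilon^3\Delta/2 < \epsilon\Delta$, so $T>k^-$. Rearranging gives $T = \epsilon m(1\pm\epsilon^2)$, and moreover $m' = T/(\epsilon(1+\epsilon^2)) \le k^+/(\epsilon(1+\epsilon^2)) \le m$. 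Claim~(2) then drops out of monotonicity: $k^- < T \le k^+$ forces $d^{(T)}(v) \in [d^{(k^-)}(v), d^{(k^+)}(v)]$, and plugging in the concentration estimates at the two endpoints together with $\epsilon^3 d(v) \le \epsilon^3\Delta$ yields $d^{(T)}(v) = \epsilon d(v) \pm 2\epsilon^3\Delta$.

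For Claim~(3), unfolding $m' > m(1-\epsilon^2)/(1+\epsilon^2) = m\bigl(1-2\epsilon^2+2\epsilon^4/(1+\epsilon^2)\bigr)$ gives $m' \ge k^*$. Monotonicity and the sharp concentration at $k^*$ then yield
\[
d(v)-d^{(m')}(v) \;\le\; d(v) - d^{(k^*)}(v) \;\le\; d(v)\cdot (m-k^*)/m + \epsilon^4\Delta \;\le\; (2\epsilon^2-\epsilon^4)d(v) + \epsilon^4\Delta \;\le\; 2\epsilon^2\Delta,
\]
where the final inequality uses that the linear function of $d(v)$ on the right is non-decreasing on $[0,\Delta]$ and evaluates to exactly $2\epsilon^2\Delta$ at $d(v)=\Delta$. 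The main technical nuisance of the proof is precisely this accounting: naively using a deviation of $\epsilon^3\Delta/2$ at $k^*$ leaves a $\Theta(\epsilon^3\Delta)$ residual slack that cannot be absorbed into the $\Theta(\epsilon^4 m)$ wiggle room available in the lower bound on $m'$, so both the checkpoint $k^*$ and the Chernoff deviation used at $k^*$ must be chosen at the $\epsilon^4$ scale, which the hypothesis on $\Delta$ comfortably permits.
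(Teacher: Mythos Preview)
Your argument is correct. Both you and the paper reduce to tail bounds for $d^{(k)}(v)$ via negative association of the indicators $\mathbf{1}[\pi(e)\le k]$, but the executions diverge. The paper applies multiplicative Chernoff and, for Properties~2 and~3, takes a union bound over \emph{every} time $k$ in the range $[\epsilon m(1-\epsilon^2),m]$ (hence over $\Theta(mn)$ vertex--time pairs, which is why it drives each failure probability down to $n^{-(\alpha+3)}$). You instead fix only three deterministic checkpoints $k^-,k^+,k^*$ and sandwich the random times $T$ and $m'$ between them using the monotonicity of $k\mapsto d^{(k)}(v)$; this is tidier and sidesteps the large union bound entirely. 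Your handling of Property~3 is also more explicit: by placing $k^*$ at the $(1-2\epsilon^2+\epsilon^4)m$ scale and using a deviation of $\epsilon^4\Delta$ there, you hit the exact $2\epsilon^2\Delta$ target, whereas working at $\epsilon^3$ precision (as you note) would overshoot by $\Theta(\epsilon^3\Delta)$. One small loose end: the floors and ceilings in $k^{\pm}$ and $k^*$ introduce $\pm 1$ rounding terms that you acknowledge but do not fully track; since $m\ge\Delta\ge 24/\epsilon^8$, each such term is dominated by the available $\Theta(\epsilon^4 m)$ or $\Theta(\epsilon^3\Delta)$ slack, so this is harmless.
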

Henceforth, we assume that all the high-probability events in Lemma \ref{estimate-using-eps-fraction} actually occur (otherwise the algorithm fails). In Step (II),  we color the next $m'-T$ edges $R:=\{e_{T+1},\ldots,e_{m'}\}$ using colors larger than $\Delta_1$, as described below.

Let $G_R = (V, R)$ denote the subgraph of $G$ induced by the edges in $R$. Before processing the $(T+1)^{th}$ update $e_{T+1}$, we virtually expand $G_R$ by adding {\em dummy nodes} $W$ and {\em dummy edges} $D$ in the following manner. 
For each node $v \in V$, create $\Delta$ dummy nodes $v_1,v_2,\dots,v_\Delta$ which form a $\Delta$-clique via dummy edges, and add extra dummy edges from $v$ to $\max\{0,\Delta-(1/\eps-1)\cdot d^{(T)}(v)\}$ of these dummy nodes  $\{v_1, \ldots, v_{\Delta}\}$. Let $H$ be the resulting graph. Note that at this point we only know the dummy edges in $H$, as the  edges  in $G_R$ will arrive in future.

Let ${\cal A}$ denote the warm-up online algorithm from \Cref{ro:warm-up}. In Step (II), we run this   online algorithm ${\cal A}$   with parameter $2\eps$ on $H$, where the edges of $H$ are presented to ${\cal A}$ in \emph{random order}. More precisely, initializing $j=T+1$, $D' = D$ and $R' = R$, we perform the following operations for $|R|+|D|$ iterations.

\begin{itemize} 
\item With probability $|D'|/(|R'|+|D'|)$, we sample a random edge $e_d$ from $D'$, and feed the edge $e_d$ to the online algorithm ${\cal A}$. We then set  $D' = D' \setminus \{e_d\}$ before going to the next iteration.
\item With remaining probability, we feed the edge $e_j$ to ${\cal A}$ and color $e_j$ with the color $\chi(e_j)+\Delta_1$, where $\chi(e)$ is the color chosen by ${\cal A}$ for $e_j$. Then we let $R' = R' \setminus \{e_j\}$ and increase $j$ by one.
\end{itemize}

\noindent Let $\Delta_2$ be the largest color chosen in Step (II).

Finally, in Step (III), we color the remaining edges $e_{m'+1},\ldots,e_{m}$ greedily with the first available color $j>\Delta_2$. Let $\Delta_3$ be the largest color used at the end of Step (III).

We next analyze the above algorithm (assuming  the occurrence of the high probability events from \Cref{estimate-using-eps-fraction}). Obviously this algorithm computes a feasible coloring. 
By definition, Step (I) uses $\Delta_1 \leq 2\epsilon\Delta$ colors. 
Analogously, by \Cref{last-edges} of \Cref{estimate-using-eps-fraction}, the number of colors used in  Step (III) is at most $\Delta_3-\Delta_2=O(\eps^2 \Delta)$. 
It remains to upper bound the number of colors $\Delta_2-\Delta_1$ used in the second step.
To this end, we note that 
\Cref{estimate-using-eps-fraction} implies that $H$ is near-regular. More precisely, we have the following bound, whose proof is deferred to \Cref{sec:ro-appendix}.
\begin{restatable}{lem}{dummysubgraphlem}\label{dummy-subgraph}
	The graph $H$ satisfies $d_H(v) = \Delta(1\pm 4\eps^2)$ for all $v\in V(H)$, w.h.p.
\end{restatable}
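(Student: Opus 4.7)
The plan is to verify the bound separately for the two kinds of vertices in $V(H)$: the ``real'' nodes $v \in V$, and the ``dummy'' nodes $v_1,\ldots,v_\Delta$ created for each such real $v$. The dummy case is immediate from the construction, while the real case is an $\eps$-accounting exercise using \Cref{estimate-using-eps-fraction}.

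For a dummy node $v_j$, the construction assigns it exactly $\Delta-1$ neighbors inside its own clique, plus either $0$ or $1$ additional edge to $v$ itself. Hence $d_H(v_j) \in \{\Delta-1,\Delta\}$, which lies in $\Delta(1\pm 4\eps^2)$ as soon as $4\eps^2\Delta \geq 1$; this is guaranteed by the lower bound on $\eps$ from~(\ref{eq:eps}) (since $\eps \geq 10(\ln n/\Delta)^{1/6}$ and $\Delta \geq 1$).

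For a real node $v\in V$, I would condition on the high-probability event of \Cref{estimate-using-eps-fraction}. By construction,
\[
d_H(v) \;=\; \bigl(d^{(m')}(v)-d^{(T)}(v)\bigr) \;+\; \max\!\bigl\{0,\; \Delta - (1/\eps-1)\,d^{(T)}(v)\bigr\}.
\]
First, combine Part~3 of \Cref{estimate-using-eps-fraction} (which, using $d^{(m')}(v)\le d(v)$ trivially, yields $d^{(m')}(v) = d(v)\pm 2\eps^2\Delta$) with Part~2 ($d^{(T)}(v)=\eps d(v)\pm 2\eps^3\Delta$) to obtain $d_{G_R}(v) = (1-\eps)d(v) \pm (2\eps^2+2\eps^3)\Delta$. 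Second, applying Part~2 inside $(1/\eps-1)d^{(T)}(v)$ gives $(1/\eps-1)d^{(T)}(v) = (1-\eps)d(v) \pm 2\eps^2(1-\eps)\Delta$. Since $d(v)\le \Delta$, this quantity is at most $(1-\eps+2\eps^2)\Delta$, which is strictly less than $\Delta$ for $\eps\le 1/2$ (hence under~(\ref{eq:eps})); so the $\max$ is realized at its non-zero argument, equal to $\Delta-(1-\eps)d(v)\mp 2\eps^2(1-\eps)\Delta$.

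Adding the two pieces, the $(1-\eps)d(v)$ contributions cancel, and the absolute errors add to at most $(2\eps^2+2\eps^3+2\eps^2(1-\eps))\Delta = 4\eps^2\Delta$, yielding $d_H(v) = \Delta \pm 4\eps^2\Delta$ as required. The main obstacle is really just disciplined bookkeeping of the $\eps^k$ error terms; the one genuinely non-routine check is that the $\max$ selects its non-zero argument, which relies on $d(v)\le \Delta$ together with the smallness of $\eps$. The high-probability guarantee comes for free from \Cref{estimate-using-eps-fraction}.
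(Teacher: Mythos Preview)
Your proof is correct and follows essentially the same approach as the paper: handle dummy nodes directly from the clique construction, verify that the dummy-edge count $\Delta-(1/\eps-1)d^{(T)}(v)$ is non-negative, and for real nodes combine Parts~2 and~3 of \Cref{estimate-using-eps-fraction} so that the $d(v)$-dependent terms cancel and only $O(\eps^2\Delta)$ errors remain. The only cosmetic difference is bookkeeping: the paper groups $(1/\eps)d^{(T)}(v)=d(v)\pm 2\eps^2\Delta$ in one step, whereas you track $(1/\eps-1)d^{(T)}(v)$ and $d_{G_R}(v)$ separately and cancel the $(1-\eps)d(v)$ pieces; both arrive at exactly $\pm 4\eps^2\Delta$.
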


It is easy to see that  Step (II) implements the warm-up algorithm on $H$, as the edges of $H$ are fed to this algorithm in a uniform random order.
Thus, by \Cref{thm:ro-regular}, w.h.p.~the number of colors used in  Step (II) is at most $\Delta_2-\Delta_1\leq \Delta+O\left(\Delta^{\gamma }\cdot \log^{1-\gamma} n\right)$ for a constant $\gamma\in (0,1)$. By choosing $\eps$ small enough so that $\eps \Delta\leq\Delta^{\gamma }\cdot \log^{1-\gamma} n$, we immediately obtain our main result.

\thmrandorder*

\section{Low-Recourse Dynamic Algorithm}\label{sec:recourse}
In this section, we give an implementation of \Cref{alg:model-agnostic} in a dynamic setting with low recourse. Before describing our algorithm, we explain why in the dynamic setting we can focus our attention on near-regular graphs, as required of inputs to \Cref{alg:model-agnostic}.

\smallskip
\noindent
\textbf{(Near-)Regularizing Gadget.} Consider a dynamic input graph $G = (V, E)$ on $n$ nodes, where the degree of each node remains at most $\Delta$ all the time.
In  this dynamic setting, we describe a procedure to  maintain a super-graph $G'=(V',E')$ of  $G=(V,E)$, with $V' \supseteq V$ and $E' \supseteq E$, such that: (a) each node in $G'$ always has degree either $\Delta$ or $\Delta-1$, and (b) each update (edge insertion/deletion) in $G$ results in a constant number of updates in $G'$. The node-set $V'$ of $G'$ consists of the nodes $v \in V$ of the input graph, plus $\Delta$ {\em dummy nodes} $v_1,v_2,\dots,v_{\Delta}$ for each $v\in V$. 

Initially, when $G$ is empty,\footnote{It is easy to extend our gadget to the setting where the input graph $G$ is not empty at preprocessing.} the edge-set of $G'$ is defined as follows. For each node $v \in V$ in the input graph, the set of nodes $\{v,v_1,\dots,v_{\Delta}\}$ induces a $(\Delta+1)$-clique of {\em dummy edges} in the supergraph $G'$.  In other words, there is a {\em dummy edge} $(v, v_i)$ for all $v \in V, i \in [\Delta]$; and a {\em dummy edge} $(v_i, v_j)$ for all $v \in V, i \in [\Delta], j \in [\Delta], i \neq j$. At this point in time, the edge-set $E'$ of $G'$ consists only of the dummy edges.

Subsequently,  whenever  an edge $e=(u,v)$ in added to (resp., removed from)  $G$ during an update, we perform the following steps. We first  add $e$ to (resp., remove $e$ from) $G'$. Next, let   $i$ and $j$ be the smallest indices in $[\Delta]$ for which the dummy edges $(u,u_i)$ and $(v,v_j)$ currently exist (resp., do not exist) in $G'$: we remove (resp., add) these two dummy edges $(u, u_i)$ and $(v, v_j)$. 

Note that every update in $G$ generated by an oblivious adversary results in three updates (likewise generated by an oblivious adversary) in $G'$.  Since  any $k$-edge-coloring of $G'$ trivially induces a $k$-edge-coloring of $G$, our goal will   be to maintain a $(1+\eps)\Delta$-edge-coloring of $G'$.

{\em Accordingly, w.l.o.g., henceforth we will  assume that  the  input graph $G$ given to us at preprocessing is near-regular (specifically, each of its nodes has degree either $\Delta$ or $\Delta-1$), and that the input graph $G$ remains near-regular  throughout the sequence of updates.}

\smallskip
\noindent {\bf Our dynamic algorithm:} At each time $t$ (i.e., after the $t$-th update), our dynamic algorithm strives to assign tentative colors $c^{(t)}(e)$ to all edges in the current input graph, denoted by $G^{(t)} = (V,  E^{(t)})$, as in \Cref{alg:model-agnostic}. It likewise defines failures in the same way as in \Cref{alg:model-agnostic}. Finally, it  maintains a  coloring of the failed and otherwise uncolored edges using a simple $O(\Delta)$-edge-coloring, constant-recourse {\em dynamic} algorithm, which we denote by \textsc{SimpleColor} (e.g., \cite{bhattacharya2018dynamic,wajc2020rounding}).

For the rest of this section,  the superscript $(t)$ on any given notation will indicate the state of the concerned object after the $t^{th}$ update in the input graph.\footnote{The reader should not confuse time $t$ in the dynamic setting with the last round $t_{\eps}$ in  phase one of \Cref{alg:model-agnostic}.}  Intuitively, the outcome of our dynamic algorithm just after the $t^{th}$ update will be the same as the outcome we would get if we ran \Cref{alg:model-agnostic} in the static setting with  $G^{(t)}$ as  input (see \Cref{main:recourse:analysis:100}).  We  will refer to an {\em unordered} pair of nodes $(u, v)$, with $u, v \in V$, as a {\em potential edge}. Thus, there are $|V| \choose 2$ many potential edges, and the set of potential edges do not change during the sequence of updates.  In contrast, we will refer to an edge $e \in E^{(t)}$ as a {\em current edge} in the input graph after the $t^{th}$ update.  %Each current edge after the $t^{th}$ update is a potential edge, but not vice versa.

\smallskip
\noindent \textbf{Preprocessing:} We start by assigning a {\em round} $i(u,v) \in [t_{\eps}]$ to each  potential edge $(u, v)$, where these $i(u,v)$ values are i.i.d.~samples from the capped geometric distribution $CappedGeo(\epsilon,t_\epsilon)$ with success probability $\eps$ and at most $t_\eps$ attempts.\footnote{For a random variable $X\sim CappedGeo(\epsilon,t_\epsilon)$, we have  $\Pr[X=k] =\begin{cases}
	\epsilon \cdot (1-\eps)^{k-1} & k \in \{1, \ldots, (t_\eps - 1)\} \\
	(1-\eps)^{t_{\eps}-1} & k=t_\epsilon.
\end{cases}$.} 
For each $j \in [t_{\eps}]$, we  let $S_j := \{(u,v) \mid i(u,v)=j\}$ denote the set of all {\em potential} edges  that are assigned to round $j$. Furthermore, for each  $t \geq 0$,  let $S_j^{(t)} := S_j \cap E^{(t)}$ denote the set of {\em current} edges $(u, v)$ after the $t^{th}$ update which have $i(u, v) = j$. 

%The random variables $\{i(u,v)\}$ are mutually independent. 

In future, throughout the sequence of updates, the random variables $\{i(u,v)\}$ will determine which sets of current edges  get sampled in which round (see \Cref{sample-alg} of \Cref{alg:model-agnostic}).  Specifically, consider any potential edge $(u, v)$, with $i(u,v) = j$, and any two nonnegative integers $t \neq t'$ such that $(u, v) \in E^{(t)} \cap E^{(t')}$. Then the edge $(u, v)$ will get sampled in the same round $j$ in both $G^{(t)}$ and $G^{(t')}$. Thus, a given edge gets assigned to the same round across all the updates.

After drawing the random variables $\{i(u,v)\}$ for all potential edges as described above, we  implement \Cref{alg:model-agnostic} on the input  $G^{(0)} = (V, E^{(0)})$ given to us at the preprocessing phase.

\smallskip
\noindent {\bf Handling an update:} For any $t \geq 1$, consider the $t^{th}$ update which changes the input graph $G$ from $G^{(t-1)} = (V, E^{(t-1)})$ to $G^{(t)} = (V, E^{(t)})$.  Our dynamic algorithm handles this update by computing an edge-coloring for $G^{(t)}$ in three steps, as described below. 

\smallskip
\noindent {\bf Step I:} We perform the following operations in increasing order of $i=1,2,\dots,t_\epsilon-1$:
\begin{itemize}
\item For every {\em potential} edge $e\in S_i$, we first define its palette
$$\palette_{i}^{(t)}(e) := [\Delta(1+\epsilon^2)]\setminus \left\{c^{(t)}(e')\,\, \Bigg\vert\,\, e'\in N^{(t)}(e)\cap \bigcup_{j<i}S_j^{(t)}\} \right\}.$$
Next, we  call \Cref{alg:recourse-change-color} to update the  tentative color $c^{(t)}(e)$ of  $e$. Note that as in \Cref{alg:model-agnostic}, if $\palette_{i}^{(t)}(e) = \emptyset$, then \Cref{alg:recourse-change-color} sets $c^{(t)}(e)\gets null$. 
\end{itemize}

\smallskip
\noindent {\bf Step II:} For every round $i \in [t_{\eps}-1]$, we now define the set of {\em failed} current edges 
$$F_i^{(t)} := \left\{ e \in S^{(t)}_i : c^{(t)}(e) \in \{ \text{null} \} \cup \{ c^{(t)}(e') : e' \in N^{(t)}(e) \cap S_i^{(t)} \} \right\}.$$
Let $F^{(t)} := \bigcup_{i=1}^{t_{\eps}-1} F_i^{(t)}$ denote the set of all  failed current edges across all rounds. Every current edge $e \in \left(\bigcup_{i=1}^{t_{\eps}-1} S_i^{(t)}\right) \setminus F^{(t)}$ gets colored with its tentative color $c^{(t)}(e)$. 

\smallskip
\noindent {\bf Step III:}
Finally, let $G^{(t)}_U:=G[F^{(t)}\cup S^{(t)}_{t_\eps}]$ be the subgraph of $G^{(t)}$ consisting of all the  edges that are not colored using their tentative colors in Step II above. We use \textsc{SimpleColor} to maintain an $O(\Delta(G^{(t)}_U))$ edge coloring of $G^{(t)}_U$.
In more detail, after each update, denoting by $A\bigoplus B := (A\setminus B)\cup (B\setminus A)$ the symmetric difference, we think of the graph $G^{(t)}_U:=G[F^{(t)}\cup S^{(t)}_{t_\eps}]$ as having undergone $|F^{(t)}\bigoplus F^{(t-1)}| + |S^{(t)}_{t_\epsilon}\bigoplus S^{(t-1)}_{t_\epsilon}|$ updates, which we feed to algorithm \textsc{SimpleColor}. 

\smallskip

This concludes the description of our dynamic algorithm.

\begin{algorithm}[H]
\caption{\textsc{TentativelyColor}($e$)}
\label[algorithm]{alg:recourse-change-color}
\begin{algorithmic}[1]
        \If{$c^{(t-1)}(e) \in \palette^{(t-1)}_i(e)\setminus \palette^{(t)}_i(e)$} 
	\State If $P^{(t)}_i\neq \emptyset$ sample $c^{(t)}(e) \sim_R \palette^{(t)}_i(e)$, else set $c^{(t)}(e)\gets null$  %\Comment{if $P^{(t)}_i=\emptyset$, then $c=null$} \label{line:recourse-uar}
	\Else %\Comment{$c(e)\in \palette^{(t)}_i(e)\cap \palette^{(t-1)}_i(e)$} 
	\label{line:recourse-give-new-colors-chance-start}
	\State If $P^{(t)}_i\neq \emptyset$ sample $c \sim_R \palette^{(t)}_i(e)$, else set $c\gets null$
	%	$c\sim_R \palette^{(t)}_i(e)$ \Comment{if $P^{(t)}_i=\emptyset$, then $c=null$}
	\If{$c=null$ \textrm{ or } $c\in \palette^{(t)}_i(e)\setminus \palette^{(t-1)}_i(e)$ }\label{line:recourse-give-new-colors-chance-test}
	\State $c^{(t)}(e)\gets c$ \label{line:recourse-give-new-colors-chance-end}
	\Else
	\State $c^{(t)}(e) \gets c^{(t-1)}(e)$  \Comment{We keep the previous color}
	\EndIf
	\EndIf
\end{algorithmic}
\end{algorithm}

\smallskip
\noindent {\bf Analysis:} Looking back at \Cref{alg:recourse-change-color}, a moment's thought reveals that if the tentative color $c^{(t-1)}(e)$  was chosen uniformly at random from the set $P_i^{(t-1)}(e)$, then the tentative color $c^{(t)}(e)$ is also chosen uniformly at random from the set $P_i^{(t)}(e)$. This, however, is far from being sufficient for our purpose. In particular, we need  a formal  (and much stronger) guarantee stated below.

\begin{restatable}{lem}{recourseimplement}
\label{main:recourse:analysis:100}
For each time $t\geq 0$, the joint distribution $\{c^{(t)}(e)\}_{e\in E^{(t)}}$ of colors sampled by the dynamic algorithm
is distributed identically to $\{c(e)\}_e$ of \Cref{alg:model-agnostic} when applied to graph $G^{(t)}$.
\end{restatable}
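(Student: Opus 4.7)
My plan is to prove \Cref{main:recourse:analysis:100} by a double induction: outer on the time step $t$, inner on the round $i\in[t_\eps-1]$ being processed in Step I at time $t$. The sampled-edge distributions are easy to control first: for every fixed $E^{(t)}$, the capped geometric law of $i(e)$ places each edge $e\in E^{(t)}$ in round $j<t_\eps$ with probability $\eps(1-\eps)^{j-1}$ and in round $t_\eps$ with probability $(1-\eps)^{t_\eps-1}$, independently across edges; this is exactly the joint law of $(S_j)_j$ in \Cref{alg:model-agnostic}, where the round-by-round Bernoulli$(\eps)$ samplings produce the same marginals. I would therefore couple the two algorithms to share the $i(e)$'s and henceforth argue only about the tentative colors.

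The outer base case $t=0$ is immediate because preprocessing invokes \Cref{alg:model-agnostic} verbatim. For the inductive step, assume that the joint law of $(c^{(t-1)}(e))_{e\in E^{(t-1)}}$ matches the static algorithm on $G^{(t-1)}$. The invariant maintained by inner induction on $i$ is the following: after round $i$ of Step I at time $t$, for every $j\leq i$, conditional on the round-$<j$ tentative colors at time $t$, the values $(c^{(t)}(e))_{e\in S_j^{(t)}}$ are mutually independent, each uniformly distributed on $P_j^{(t)}(e)$ (and equal to $null$ precisely when $P_j^{(t)}(e)=\emptyset$). This is exactly the static algorithm's law unfolded round by round, so establishing the invariant at $i=t_\eps-1$ proves the lemma.

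The heart of the proof is a single-edge claim about \Cref{alg:recourse-change-color}: if $c^{(t-1)}(e)$ is uniform on $A:=P_i^{(t-1)}(e)$ (or equal to $null$ when $A=\emptyset$) and is independent of the fresh randomness drawn inside the call, then after the call $c^{(t)}(e)$ is uniform on $B:=P_i^{(t)}(e)$ (or $null$ when $B=\emptyset$). A short case analysis confirms this. The \textbf{if}-branch ($c^{(t-1)}(e)\in A\setminus B$) resamples uniformly from $B$. The \textbf{else}-branch (the previous color lies in $A\cap B$, or is $null$ when $A=\emptyset$) draws $c\sim\mathrm{Unif}(B)$ and accepts $c$ when it is $null$ or lies in $B\setminus A$, otherwise keeping $c^{(t-1)}(e)$. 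Summing the two branches' contributions, weighted by $|A\setminus B|/|A|$ and $|A\cap B|/|A|$, yields probability exactly $1/|B|$ for every $\chi\in B$ (and probability $1$ for $null$ when $B=\emptyset$). Newly inserted edges, for which $c^{(t-1)}(e)$ is undefined, I would treat as having $A=\emptyset$ and $c^{(t-1)}(e)=null$, so they are handled by the same else-branch analysis.

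To complete the inner step, I check mutual independence within round $i$ at time $t$: \Cref{alg:recourse-change-color} reads only $c^{(t-1)}(e)$ together with fresh per-edge randomness, never $c^{(t)}(e')$ for another $e'\in S_i^{(t)}$; the fresh samples are independent across edges by construction, and the outer inductive hypothesis supplies mutual independence of the $c^{(t-1)}(e)$'s. I expect the main obstacle to be justifying that the statement \emph{$c^{(t-1)}(e)$ is uniform on $P_i^{(t-1)}(e)$} continues to hold after further conditioning on the round-$<i$ tentative colors at time $t$ rather than just at time $t-1$. This follows because the time-$t$ round-$<i$ colors are a measurable function of the time-$(t-1)$ round-$<i$ colors together with the fresh randomness introduced during time-$t$ rounds $<i$, and the latter is independent of the randomness that determined the time-$(t-1)$ round-$i$ colors; hence the extra conditioning preserves the uniform marginal needed to invoke the single-edge claim.
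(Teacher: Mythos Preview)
Your approach is essentially the same as the paper's: both argue by induction on $t$ that \Cref{alg:recourse-change-color} preserves the ``uniform on current palette, independent across $S_i$'' law, and the heart of both is the identical single-edge computation (the paper packages it as a product formula over all of $S_i$; you do it edge-by-edge and argue independence separately, but the arithmetic is the same). Your handling of the delicate conditioning step---that the time-$t$ round-$<i$ colors are a measurable function of the time-$(t{-}1)$ round-$<i$ colors together with fresh randomness and the oblivious update, so further conditioning on them does not disturb the uniform law of $c^{(t-1)}$ on $S_i$---is exactly the content of the paper's oblivious-adversary claim.

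One point needs correction. The dynamic algorithm calls \textsc{TentativelyColor} on \emph{every potential edge} $e\in S_i$, not only on those in $E^{(t)}$; hence a newly inserted edge $e^*$ already carries a value $c^{(t-1)}(e^*)$ from the previous step, and your convention ``treat $A=\emptyset$ and $c^{(t-1)}(e^*)=\mathrm{null}$'' does not match the algorithm. The paper resolves this by strengthening the inductive hypothesis to all potential edges in $S_i$ (not only current ones), proving that each such $e$ has $c^{(t-1)}(e)$ uniform on $P_i^{(t-1)}(e)$ independently of the rest. With that strengthening in place, your argument goes through verbatim; without it, the induction does not close at the inserted edge.
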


We defer the proof of \Cref{main:recourse:analysis:100} to \Cref{app:recourse:1}. This lemma, together  with our analysis of  \Cref{alg:model-agnostic}, immediately leads us to the following corollary.

\begin{cor}\label{recourse:num-colors}
For $\eps$ as in \eqref{eq:eps} and $K$ as in \eqref{eq:t}, the above dynamic algorithm $\Delta(1+O(\epsilon^{(1/3K)}))$-edge-colors $G$ at any time $t$, w.h.p.
\end{cor}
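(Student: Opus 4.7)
The corollary should follow almost mechanically by combining \Cref{main:recourse:analysis:100} (distributional identity), \Cref{main:th:main} (degree bound on the leftover subgraph), and the guarantee of \textsc{SimpleColor}. Fix an arbitrary time $t$. By \Cref{main:recourse:analysis:100}, the joint distribution of the tentative colors $\{c^{(t)}(e)\}_{e\in E^{(t)}}$ produced by the dynamic algorithm coincides with the joint distribution of the tentative colors produced by running \Cref{alg:model-agnostic} on the static input $G^{(t)}$. The sampled sets $S_i^{(t)}$ are determined by the preprocessed round assignments $i(u,v)\sim CappedGeo(\eps,t_\eps)$, whose distribution matches that of the round in which an edge is first sampled by \Cref{alg:model-agnostic} (where each remaining edge is sampled independently with probability $\eps$ in each round, yielding exactly a capped geometric distribution). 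Consequently, the distributions of the failed set $F^{(t)}$, the never-sampled set $S^{(t)}_{t_\eps}$, and hence of the uncolored subgraph $G^{(t)}_U := G[F^{(t)}\cup S^{(t)}_{t_\eps}]$, all agree with their counterparts in an execution of \Cref{alg:model-agnostic} on $G^{(t)}$.

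Next, I would invoke \Cref{main:th:main} on the static algorithm applied to $G^{(t)}$ and transfer the conclusion via the distributional identity above to obtain $\Delta(G^{(t)}_U) = O(\eps^{1/(3K)}\,\Delta)$ with high probability. A direct color count then finishes the argument: Steps~I and II of the dynamic algorithm only ever use colors from $[(1+\eps^2)\Delta]$, while Step~III invokes \textsc{SimpleColor} on $G^{(t)}_U$ with a \emph{disjoint} palette of size $O(\Delta(G^{(t)}_U)) = O(\eps^{1/(3K)}\Delta)$ w.h.p. Summing, the dynamic algorithm uses at most
\[
(1+\eps^2)\Delta \;+\; O\bigl(\eps^{1/(3K)}\Delta\bigr) \;=\; \Delta\cdot\bigl(1+O(\eps^{1/(3K)})\bigr)
\]
colors at time $t$, as required.

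The only subtlety I anticipate is the transfer of the high-probability guarantee. The failure event in \Cref{main:th:main} is defined over the internal randomness of the static algorithm; after applying \Cref{main:recourse:analysis:100}, the same polynomially-small failure probability carries over verbatim to the dynamic algorithm at the fixed time $t$. Since the statement is ``at any time $t$, w.h.p.,'' I read it as a pointwise (per-$t$) guarantee, which is immediate; a uniform-over-all-$t$ guarantee along a $\poly(n)$-length update sequence then follows by a standard union bound, absorbed by taking the w.h.p.\ exponent in \Cref{main:th:main} sufficiently large. I do not foresee a genuine conceptual obstacle here: once \Cref{main:recourse:analysis:100} is in hand, this corollary is essentially a one-line consequence of the static analysis of \Cref{alg:model-agnostic}, and the recourse bound of \Cref{thm:dynamic} (which is the real content of this section) is proved separately.
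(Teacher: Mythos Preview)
Your proposal is correct and matches the paper's own (essentially one-line) argument: the paper simply states that the corollary follows immediately from \Cref{main:recourse:analysis:100} together with the analysis of \Cref{alg:model-agnostic} (i.e., \Cref{main:th:main} and \Cref{main:cor:th:main}). Your write-up just spells out this reasoning in more detail, including the per-$t$ versus uniform-over-time union bound, which the paper leaves implicit.
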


\smallskip
\noindent {\bf Bounding Recourse:} We now fix some $t \geq 1$, and  bound the expected recourse  our algorithm has to pay while handling  the $t^{th}$ update. Say that an edge $e \in E^{(t)} \cap E^{(t-1)}$ is  {\em dirty} iff $i(e) \leq t_{\eps}-1$ and it changes its tentative color $c(e)$ due to the $t^{th}$ update. Let $D_j$ denote the set of dirty edges $e$ assigned to round $i(e) = j$. We will use the symbol $D_{< i} = \cup_{j < i} D_j$ to denote all the dirty edges at rounds $j < i$. Let $D = D_{< t_{\eps}}$ denote the set of  dirty edges across all rounds.

\begin{lem}
\label{main:lm:recourse:1}
The recourse of the dynamic algorithm to handle the $t^{th}$ update is $O(1+ |D|)$.
\end{lem}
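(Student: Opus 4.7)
The plan is to decompose the $t$-th update's recourse into three categories---the updated edge itself, the dirty edges, and edges whose final color changes for a subtler reason---and to reduce the third category to a short charging argument.

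Concretely, for each edge $f\in E^{(t-1)}\cap E^{(t)}$ whose final color differs between times $t-1$ and $t$, either (i) $f$ is dirty (at most $|D|$ such edges), or (ii) $f$ is not dirty but its failure status flipped (successful$\leftrightarrow$failed), or (iii) $f$'s failure status did not flip yet the color returned by \textsc{SimpleColor} changed. Let $\mathrm{NDF}$ denote the set of type-(ii) edges. Since \textsc{SimpleColor} uses $O(1)$ recourse per update fed to it, and the number of such updates delivered to $G_U^{(t)}$ is at most $|F^{(t)}\bigoplus F^{(t-1)}|+|S^{(t)}_{t_\eps}\bigoplus S^{(t-1)}_{t_\eps}|\leq (|D|+|\mathrm{NDF}|)+O(1)$, category (iii) contributes $O(1+|D|+|\mathrm{NDF}|)$. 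Together with the $O(1)$ from the updated edge, the lemma reduces to showing $|\mathrm{NDF}|=O(1+|D|)$.

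For that I would set up a charging argument. Fix $f\in \mathrm{NDF}$ in round $i$ and write $c:=c^{(t-1)}(f)=c^{(t)}(f)$ (necessarily $c\neq\mathrm{null}$, as otherwise $f$ is failed at both times). Let $W^{(s)}_f:=\{e'\in N^{(s)}(f)\cap S_i^{(s)}\mid c^{(s)}(e')=c\}$ be $f$'s set of same-round, same-color failure witnesses at time $s$. Because $f$'s status flipped, exactly one of $W^{(t-1)}_f, W^{(t)}_f$ is empty, so some $e'$ belongs to exactly one of the two. Any such $e'$ is either the updated edge itself, or an existing edge whose tentative color just moved into or out of $c$---in which case $e'$ is dirty. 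Choose one such $e'$ as the ``bad witness'' of $f$. The key structural observation is then that each bad witness is assigned to at most one $f\in \mathrm{NDF}$ per endpoint. Indeed, suppose $f_1,f_2\in \mathrm{NDF}$ share the same bad witness $e'=(u,v)$ and the same endpoint, say $v$, with $e'$. Then $f_1,f_2$ are adjacent through $v$, lie in the same round $i$, and both carry color $c$ at both times, so $f_1\in W_{f_2}^{(t-1)}\cap W_{f_2}^{(t)}$ and symmetrically; hence neither witness set can become or cease to be empty, contradicting the status flip of $f_1,f_2$. So each bad witness (the updated edge, or one of the $|D|$ dirty edges) is shared by at most two $f$'s, giving $|\mathrm{NDF}|\leq 2(1+|D|)=O(1+|D|)$, and plugging this back into the decomposition yields the claimed recourse bound.

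The main obstacle is executing the witness cancellation cleanly and symmetrically in both the failed$\to$successful and successful$\to$failed directions, and verifying that the chosen witness is insensitive to whether the $t$-th update is an insertion or a deletion---so that in either case a bad witness always exists and lies in the set of size $O(1+|D|)$ we are charging against. Once that observation is in place, the remaining arithmetic reduces to routine bookkeeping via the decomposition above.
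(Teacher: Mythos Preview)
Your approach matches the paper's: both reduce to a charging argument that maps each non-dirty, status-flipped edge $f$ to a ``bad witness'' $e' \in D \cup \{e^*\}$ in the same round, and then bounds how many $f$'s can charge a single $e'$. Your per-endpoint uniqueness claim, however, has a small gap. You assert that if $f_1, f_2 \in \mathrm{NDF}$ share bad witness $e'$ at the same endpoint $v$, then ``both carry color $c$ at both times''---but there is no single $c$ here: each $f_j$ carries its own fixed tentative color $c_j$, and the only constraint from being witnessed by $e'$ is $c_j \in \{c^{(t-1)}(e'),\, c^{(t)}(e')\}$. Concretely, if $e' \in D$ with $c^{(t-1)}(e') = c_1 \neq c_2 = c^{(t)}(e')$, then $e'$ can simultaneously be a bad witness for an $f_1$ of color $c_1$ (since $e'$ left $W_{f_1}$) and for an $f_2$ of color $c_2$ (since $e'$ entered $W_{f_2}$), both through $v$; in that scenario $f_1 \notin W_{f_2}$ because their colors differ, and your contradiction does not fire. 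The fix---which is exactly what the paper does---is to charge to ordered pairs in $\{u,v\} \times \{c^{(t-1)}(e'),\, c^{(t)}(e')\}$, yielding at most $4$ charges per bad witness and hence $|\mathrm{NDF}| \leq 4(1+|D|)$ rather than $2(1+|D|)$. The asymptotic conclusion $|\mathrm{NDF}| = O(1+|D|)$ is unaffected.

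Aside from this constant, your write-up is in one respect more explicit than the paper's: you route \textsc{SimpleColor}'s contribution through the number of updates fed to $G_U$, which cleanly accounts for recolorings among edges that remain in $G_U$ (failed at both times, or in $S_{t_\eps}$) across the update.
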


\begin{proof}
Let $e^*$ denote the edge being inserted/deleted during the $t^{th}$ update. The additive $+1$ term in the claimed recourse bound of $O(1+|D|)$ comes from the edge $e^*$. To simplify notations, we will assume $c^{(t-1)}(e^*) = c^{(t)}(e^*)$ for the rest of this proof. Any other edge $e \in E^{(t-1)} \cap E^{(t)}$  changes its final color during the $t^{th}$ update  only if: $i(e) < t_{\eps}$,  and  \{either (1) the edge $e$ changes its tentative color, or (2) the edge $e$ switches from being successful to failed (or vice versa) without changing its tentative color\}. In  case (1), we clearly have $e \in D$.  In  case (2), the edge $e$ must   have at least one  neighboring edge $e_d \in (D \cup \{e^*\}) \cap N(e)$ such that  $c^{(t-1)}(e) = c^{(t)}(e) \in \{ c^{(t-1)}(e_d), c^{(t)}(e_d)\}$. We {\em charge} the $1$ unit of recourse the algorithm has to pay for $e$ to any one such edge $e_d$. A moment's thought reveals that each edge $(u, v) \in D \cup \{e^*\}$ can receive at most $4$ units of charge in this scheme, one for each ordered pair  $\{u, v\} \times \{ c^{(t-1)}(u,v), c^{(t)}(u,v)\}$. Hence, the algorithm's recourse  is at most $O(1+|D|)$. 
\end{proof}

Below, we  state the key lemma that contains the technical meat of our argument. Since the argument requires some nontrivial properties of \Cref{alg:model-agnostic} which are proved only in \Cref{sec:agnostic}, we defer the proof of \Cref{main:lm:recourse:2} to \Cref{app:recourse:2}. 

\begin{lem}
\label{main:lm:recourse:2}
For every round $i \in [t_{\eps}-1]$, we have: $\E[|D_i|] \leq 6\eps + 6 \eps \cdot \E[|D_{< i}|]$.
\end{lem}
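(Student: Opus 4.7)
The plan is a charging argument. For each dirty edge $e\in D_i$ I will identify a ``cause'' in $\{e^*\}\cup D_{<i}$ --- namely an incident edge in some earlier round whose change triggered $e$'s tentative color change --- and then show that every candidate cause produces only $O(\eps)$ dirty edges in round $i$ in expectation. Inspecting \Cref{alg:recourse-change-color}, an edge $e$ with $i(e)=i$ can change its tentative color in exactly one of two ways: \textbf{(A)} $c^{(t-1)}(e)\in \palette_i^{(t-1)}(e)\setminus \palette_i^{(t)}(e)$, so the old color is no longer in the palette and is forcibly resampled, or \textbf{(B)} the branch starting at line~\ref{line:recourse-give-new-colors-chance-start} is taken and the fresh sample lies in $\{null\}\cup (\palette_i^{(t)}(e)\setminus \palette_i^{(t-1)}(e))$. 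In case (A), the newly excluded color $c^{(t-1)}(e)$ must equal $c^{(t)}(e')$ for some $e'\in N(e)\cap \bigcup_{j<i}S_j^{(t)}$, where $e'$ is either the inserted edge $e^*$ or an edge of $D_{<i}$. In case (B), the newly available color must equal $c^{(t-1)}(e')$ for some $e'\in N(e)$ in a prior round that was either deleted ($e'=e^*$) or changed color ($e'\in D_{<i}$). In either case $e$ is charged (arbitrarily if several $e'$ qualify) to an element of $\{e^*\}\cup D_{<i}$; letting $X_{e'}$ denote the number of edges charged to $e'$, we obtain $|D_i|\le X_{e^*}+\sum_{e_d\in D_{<i}}X_{e_d}$.

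I will then bound $\E[X_{e'}]\le 6\eps$ for each candidate cause $e'$. An edge $e$ counted by $X_{e'}$ must (i) lie in $N(e')$ (at most $2(\Delta-1)$ such edges in the near-regular input), (ii) satisfy $i(e)=i$, and (iii) have its color change actually triggered by $e'$. Since the round assignments $\{i(\cdot)\}$ are drawn i.i.d.\ from the capped geometric in preprocessing, $\Pr[i(e)=i]=\eps(1-\eps)^{i-1}$ is independent of everything about $e'$. Conditional on (i) and (ii), by \Cref{main:recourse:analysis:100} the conditional distribution of $c^{(t-1)}(e)$ (and of the fresh resample used in case (B)) is uniform on the appropriate palette $\palette_i^{(\cdot)}(e)$, so each triggering equality (e.g.\ $c^{(t-1)}(e)=c^{(t)}(e')$ in case (A), or the fresh sample equals $c^{(t-1)}(e')$ in case (B)) has conditional probability $1/|\palette_i^{(\cdot)}(e)|$. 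Combining these three factors with the palette lower bounds established for the static algorithm in \Cref{sec:agnostic} (which apply via \Cref{main:recourse:analysis:100}) and accounting for the two sub-cases (A) and (B) per $e_d$ produces the stated constant $6\eps$.

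Summing $|D_i|\le X_{e^*}+\sum_{e_d\in D_{<i}}X_{e_d}$ and taking expectations yields $\E[|D_i|]\le 6\eps+6\eps\cdot\E[|D_{<i}|]$. The main technical obstacle is that once we condition on ``$e_d\in D_{<i}$'', the palette $\palette_i^{(\cdot)}(e)$ and the sampled colors of $e$ become correlated with $e_d$'s dirtiness, so one cannot naively multiply conditional probabilities. I plan to sidestep this by conditioning on the entire history through round $i-1$ (all round assignments $\{i(\cdot)\}$ and all random bits consumed by \Cref{alg:recourse-change-color} in rounds $j<i$): this fixes whether $e_d\in D_{<i}$, fixes both $\palette_i^{(t-1)}(e)$ and $\palette_i^{(t)}(e)$, and leaves only the round-$i$ color sample of $e$ as fresh randomness, independent of the conditioning. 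The small-probability event that some palette falls below the \Cref{sec:agnostic} bound contributes at most $n^{-\omega(1)}$ to the expectation (since $1/|\palette_i^{(\cdot)}(e)|\le 1$ always) and is absorbed into the $O(\eps)$ bound.
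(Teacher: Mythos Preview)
Your overall charging structure is the same as the paper's: split $D_i$ into type-(a) and type-(b), attribute each dirty edge to some cause in $\{e^*\}\cup D_{<i}$, and bound the expected number of edges charged to each cause. The gap is in the quantitative step where you claim $\E[X_{e'}]\le 6\eps$.

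Carrying out your three-factor product for, say, type-(A) gives
\[
\sum_{e\in N(e')}\Pr[i(e)=i]\cdot\frac{1}{|\palette_i^{(t-1)}(e)|}
\;\le\; 2\Delta\cdot \eps(1-\eps)^{i-1}\cdot\frac{1}{(1-\eps)^{2(i-1)}(1-\gamma_i)\Delta}
\;\approx\;\frac{2\eps}{(1-\eps)^{i-1}},
\]
which is \emph{not} bounded by $6\eps$ uniformly in $i$: at $i\approx t_\eps$ the extra factor $(1-\eps)^{-(i-1)}\approx \eps^{-1/(2K)}$ appears, and the resulting recursion $\E[|D_i|]\le C\eps^{1-1/(2K)}(1+\E[|D_{<i}|])$ blows up when iterated $t_\eps$ times, destroying the $\poly(1/\eps)$ recourse bound downstream. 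The missing piece is that only neighbors $e$ for which the target color $c=c^{(t)}(e')$ actually lies in $\palette_i^{(t-1)}(e)$ can be charged; the number of such edges is $|N_{i,c}^{(t-1)}(u)\cap S_i|+|N_{i,c}^{(t-1)}(v)\cap S_i|$, which the paper controls via the events $\C_i$ and $\B_i$ (not just the palette event $\e_i$ you invoke). Those events give $|N_{i,c}(\cdot)\cap S_i|\approx \eps(1-\eps)^{2(i-1)}\Delta$, and it is precisely this extra $(1-\eps)^{i-1}$ factor that cancels against the palette size to yield a genuine $O(\eps)$ per cause.

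A secondary point: your two conditioning strategies are inconsistent. You first use $\Pr[i(e)=i]=\eps(1-\eps)^{i-1}$ as fresh randomness, but then propose to condition on all round assignments $\{i(\cdot)\}$, which fixes $i(e)$. The paper resolves this by conditioning on the ``critical'' variables (all round assignments \emph{and} all colors in rounds $<i$), so that $S_i$, the palettes, and $D_{<i}$ are all determined; it then uses the high-probability event $\mathcal{Z}=\e_i^{(t-1)}\cap\C_i^{(t-1)}\cap\B_i^{(t-1)}\cap\e_i^{(t)}\cap\C_i^{(t)}\cap\B_i^{(t)}$ to bound $|N_{i,c}(\cdot)\cap S_i|$ deterministically under this conditioning. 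You should do the same: drop the $\Pr[i(e)=i]$ factor, restrict the sum to $e\in N_{i,c}(\cdot)\cap S_i$, and use $\C_i\cap\B_i$ to bound the count.
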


\begin{cor}
\label{main:cor:recourse:recourse}
Our dynamic algorithm has an expected recourse of $O(1/\eps^{(3/K)})$ per update. 
\end{cor}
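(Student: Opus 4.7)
The plan is to combine the two lemmas via a short recurrence. Define $a_i := \E[|D_{<i}|]$ for $i \in [t_\eps]$, noting that $a_1 = 0$ since there are no rounds strictly before round $1$. Using $|D_{<i+1}| = |D_{<i}| + |D_i|$ and taking expectations, \Cref{main:lm:recourse:2} gives
\begin{equation*}
a_{i+1} \;=\; a_i + \E[|D_i|] \;\leq\; a_i + 6\eps + 6\eps \cdot a_i \;=\; (1+6\eps)\,a_i + 6\eps.
\end{equation*}

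Adding $1$ to both sides rewrites this cleanly as $a_{i+1} + 1 \leq (1+6\eps)(a_i + 1)$. First I would iterate this from $i = 1$ (where $a_1 + 1 = 1$) to $i = t_\eps$, obtaining
\begin{equation*}
\E[|D|] \;=\; a_{t_\eps} \;\leq\; (1+6\eps)^{t_\eps - 1} - 1 \;\leq\; (1+6\eps)^{t_\eps}.
\end{equation*}

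Next I would plug in the value $t_\eps = \lfloor \ln(1/\eps)/(2K\eps) \rfloor$ from \eqref{eq:t}. Using $1 + x \leq e^x$,
\begin{equation*}
(1+6\eps)^{t_\eps} \;\leq\; \exp(6\eps \cdot t_\eps) \;\leq\; \exp\!\left(6\eps \cdot \tfrac{\ln(1/\eps)}{2K\eps}\right) \;=\; \exp\!\left(\tfrac{3\ln(1/\eps)}{K}\right) \;=\; (1/\eps)^{3/K}.
\end{equation*}
Thus $\E[|D|] = O(1/\eps^{3/K})$. Finally, applying \Cref{main:lm:recourse:1} which bounds the recourse by $O(1 + |D|)$ and taking expectations yields $O(1 + \E[|D|]) = O(1/\eps^{3/K})$, as desired.

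The argument is entirely routine once the two lemmas are in place; the only ``obstacle'' is purely cosmetic, namely noticing the shift $a_i \mapsto a_i + 1$ which turns the inhomogeneous recurrence into a geometric one. No further probabilistic machinery is needed here, since all the delicate work---justifying that $\E[|D_i|]$ satisfies the bound in \Cref{main:lm:recourse:2}, which in turn relies on properties of \Cref{alg:model-agnostic} and the coupling of \Cref{main:recourse:analysis:100}---has been done elsewhere.
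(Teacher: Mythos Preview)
Your proof is correct and follows essentially the same approach as the paper: both unroll the linear recurrence from \Cref{main:lm:recourse:2} to bound $\E[|D|]$ by $(1+6\eps)^{t_\eps}$, then apply $1+x\leq e^x$ together with \eqref{eq:t} and finish via \Cref{main:lm:recourse:1}. The only cosmetic difference is that the paper writes the unrolled recurrence as the geometric sum $\sum_{i=0}^{t_\eps-1} 6\eps(1+6\eps)^i$ rather than using your shift $a_i\mapsto a_i+1$.
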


\begin{proof}
Since $|D_{<1}| = 0$ and $D = D_{< t_{\eps}}$, from \Cref{main:lm:recourse:2} we derive that: 
\begin{equation}
\E[|D|]   \leq   \sum_{i=0}^{t_{\eps}-1} 6\eps \cdot (1+6\eps)^{i} \leq (1+6\eps)^{t_{\eps}} \leq \exp(6\eps \cdot t_{\eps}) \leq \exp((3/K) \cdot \ln (1/\eps))  = 1/\eps^{(3/K)}.
\end{equation}
The last inequality holds because of~(\ref{eq:t}). The corollary now follows from \Cref{main:lm:recourse:2}.
\end{proof}

From~\eqref{eq:eps},~\eqref{eq:t},~\Cref{recourse:num-colors} and~\Cref{main:cor:recourse:recourse}, we obtain the following theorem.

\begin{thm}
\label{main:th:recourse:100}
There exists two absolute constants $\gamma, \gamma' \in (0,1)$ such that for all $\eps \in \left[ \left(\log n/\Delta\right)^{\gamma'},  \gamma\right]$ we can maintain, w.h.p., a proper $(1+\eps)\Delta$-edge-coloring in a dynamic graph $G$ of maximum degree $\Delta = \omega(\log n)$, with $O(\poly(1/\eps))$ expected recourse per update.
\end{thm}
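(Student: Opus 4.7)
The plan is to package the machinery already assembled in this section. First, I would apply the regularizing gadget described at the beginning of the section to reduce the general dynamic graph $G$ to a near-regular dynamic graph $G'$ on which every node has degree $\Delta$ or $\Delta-1$. Each update in $G$ induces at most three updates in $G'$, the adversary remains oblivious, and any edge-coloring of $G'$ restricts to one of $G$. Thus it suffices to prove the theorem for $G'$ and absorb the constant blow-up in recourse.

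Next, I would invoke the dynamic algorithm on $G'$ with an auxiliary parameter $\eps_0$ chosen to satisfy \eqref{eq:eps}. By \Cref{recourse:num-colors}, this maintains a proper $(1 + O(\eps_0^{1/(3K)}))\Delta$-edge-coloring of $G'$ w.h.p. By \Cref{main:cor:recourse:recourse}, the expected number of dirty edges per update is $O(\eps_0^{-3/K})$, and by \Cref{main:lm:recourse:1} this gives the same asymptotic expected recourse for the tentative-color layer. The $O(1)$ per-update recourse of \textsc{SimpleColor} used in Step III adds only a constant, so the total expected recourse is $O(\eps_0^{-3/K})$ per update of $G'$, and hence of $G$.

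Finally, I would reparametrize. To obtain a $(1+\eps)\Delta$-edge-coloring, set $\eps_0 := c \cdot \eps^{3K}$ for a sufficiently small absolute constant $c$ that absorbs the hidden constant in $O(\eps_0^{1/(3K)})$. Substituting, the expected recourse becomes $O(\eps^{-9}) = \poly(1/\eps)$. The constraint $\eps_0 \leq 1/10^4$ translates into an upper bound $\eps \leq \gamma$ for an absolute constant $\gamma$, and the constraint $\eps_0 \geq 10(\ln n/\Delta)^{1/6}$ translates (modulo a constant that can be absorbed by slightly shrinking the exponent) into $\eps \geq (\log n/\Delta)^{\gamma'}$ with $\gamma' \approx 1/(18K)$. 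Since $\Delta = \omega(\log n)$, this range of $\eps$ is nonempty for all large enough $n$, and the theorem follows.

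The main obstacle in the full argument is not in this final composition but in the two ingredients deferred to the appendix: the distributional coupling \Cref{main:recourse:analysis:100}, which lets the offline analysis of \Cref{alg:model-agnostic} (and in particular \Cref{main:th:main}) be reused verbatim for the coloring quality at each time step, and the geometric recurrence of \Cref{main:lm:recourse:2}, which bounds how dirtiness propagates across rounds and is what keeps the expected recourse only polynomially large in $1/\eps$ (rather than exponential in $t_\eps$).
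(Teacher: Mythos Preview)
Your proposal is correct and follows essentially the same approach as the paper: the paper's proof is the single line ``From~\eqref{eq:eps},~\eqref{eq:t},~\Cref{recourse:num-colors} and~\Cref{main:cor:recourse:recourse}, we obtain the following theorem,'' and you have simply unpacked what that line means, including the reparametrization $\eps_0 = c\,\eps^{3K}$ that turns the $O(\eps_0^{1/(3K)})$ coloring guarantee into a $(1+\eps)$ guarantee and the $O(\eps_0^{-3/K})$ recourse into $O(\eps^{-9})$. One minor imprecision: \textsc{SimpleColor} has $O(1)$ recourse per update \emph{to $G_U$}, and there can be $O(1+|D|)$ such updates, but this still folds into the $O(1+|D|)$ bound of \Cref{main:lm:recourse:1}, so the conclusion is unaffected.
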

\section{Conclusions and Open Questions}
We presented one common approach for tackling edge coloring in two widely-studied relaxations of the online model of computation, making progress on (and in one case, resolving) the conjecture of \citet{bar1992greedy} for these models.
We conclude with a few interesting research directions.

\smallskip
\noindent\textbf{Adversarial Online Arrivals.} The most natural question is whether the Bar-Noy et al.~conjecture holds in the strictest, adversarial edge-arrival model. This question still seems out of reach. One algorithmic approach which suggests itself is to extend the ideas in \cite{cohen2019tight}. This would require some form of online dependent rounding for fractional matching under edge arrivals, generalizing \cite{cohen2018randomized}. Alternatively, it is not implausible that the Bar-Noy et al.~conjecture is false under adversarial edge arrivals, despite being true for vertex arrivals \cite{cohen2019tight}. Such a refutation of this conjecture would mirror a similar separation between these arrival models recently proven for online matching \cite{gamlath2019online}.

\smallskip
\noindent\textbf{Knowledge of $\Delta$.} All our algorithms assume knowledge of the maximum degree $\Delta$. This assumption is common to all prior best algorithms in the models we study \cite{bahmani2012online,aggarwal2003switch,duan2019dynamic}.\footnote{\citet{duan2019dynamic} run a logarithmic number of algorithms for unknown $\Delta$, and switch between their colorings whenever $\Delta$ changes. This results in {\em fast update time}, but high recourse.}
In fact, \citet{cohen2019tight} showed that not knowing $\Delta$ in their online model results in a strictly harder problem, for which no better than $\frac{e}{e-1}\Delta$-edge-coloring algorithm exists, for any (unknown) $\Delta$.
Is the same separation between known and unknown $\Delta$ true for the models studied in this paper?

\medskip
\noindent {\bf Acknowledgements.}
We thank Janardhan Kulkarni for many helpful discussions. 
The work of Sayan Bahttacharya was supported by Engineering and Physical Sciences Research Council, UK (EPSRC)  Grant  EP/S03353X/1. The work of Fabrizio Grandoni was supported in part by the SNF Excellence Grant 200020B\_182865/1.
The work of David Wajc was supported in part by NSF grants CCF-1527110, CCF-1618280, CCF-1814603, CCF-1910588, NSF CAREER award CCF-1750808 and a Sloan Research Fellowship.

\section*{Appendix}
\appendix
\section{A Lower Bound for Random Order Online Edge Coloring}\label{sec:ro-lb}
\citet{bar1992greedy} gave a simple lower bound for edge coloring under adversarial arrivals. Specifically, they showed a family of graphs $\mathcal{F}$ with maximum degree $\Delta=O(\sqrt{\log n})$ for which any 
randomized online algorithm $\mathcal{A}$ colors some graphs in $\mathcal{F}$ with $2\Delta-1$ colors with constant probability. Extending these ideas slightly, we show that the same holds even if the arrival order is randomized. 
%At its core, this lower bound is essentially a reduction to the lower bound of \cite{bar1992greedy}, though we describe it in its entirety for completeness.

\begin{lem}
	There exists a distribution over $n$-node graphs $\mathcal{G}$ of maximum degree $\Delta=\Omega(\sqrt{\log n})$, for which any 
	online edge coloring algorithm $\mathcal{A}$ must, with constant probability, use $2\Delta-1$ colors on a graph $G\sim \mathcal{G}$ presented in random order.
\end{lem}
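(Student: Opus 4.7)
The plan is to adapt the Bar-Noy, Motwani, Naor (BMN) adversarial lower bound to random-order arrivals via a Yao-style argument on a hard random graph. Recall that BMN, for any deterministic online algorithm, produce a graph on $k = 2^{\poly(\Delta)}$ vertices with maximum degree $\Delta$ forcing $2\Delta-1$ colors under an adaptive adversarial order: they present a phase-one subgraph (where each vertex sees $\Delta-1$ edges) and, by pigeonholing on the relatively few palette patterns that can appear, add a phase-two matching across vertices with identical patterns to force a new, $(2\Delta-1)$-th color. The core structural feature is a partition of the edges into a phase-one set and a phase-two set, with the phase-two set chosen adversarially based on the algorithm's phase-one colorings.

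I would encode this adaptivity into a fixed distribution over graphs by defining a gadget distribution $\mathcal{H}$ that plants the phase-one subgraph deterministically and samples the phase-two matching uniformly at random from a family symmetric under relabelings of phase-two vertices. The key single-gadget claim is that, under a uniformly random edge order, any deterministic online algorithm on $H \sim \mathcal{H}$ is forced to use $2\Delta-1$ colors with probability $\Omega(1)$. The argument has two ingredients: (i) by a simple prefix/suffix analysis of the uniform random ordering on $H$, with constant probability a sufficiently large fraction of phase-one edges arrive before the phase-two edges of $H$, so that the algorithm is committed to a phase-one coloring by the time the phase-two edges begin to arrive; and (ii) conditioned on any such phase-one coloring, the independently random symmetric phase-two matching hits BMN's ``bad'' conflict configuration with probability $\Omega(1)$, via the original pigeonhole argument applied in expectation over the random planting rather than adaptively.

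To promote the single-gadget bound to the lemma, I would let $\mathcal{G}$ be $n/k$ vertex-disjoint independent copies of $H \sim \mathcal{H}$; this is feasible since $k = 2^{\poly(\Delta)} = n^{o(1)}$ whenever $\Delta = \Theta(\sqrt{\log n})$ (the reason the threshold in the statement is $\sqrt{\log n}$ rather than $\log n$ is precisely to afford $\poly(\Delta)$ in the exponent while keeping many independent gadgets). A uniform random order on the full edge set induces, conditional on which edges belong to each gadget, a uniform random order within each gadget; combined with independence across gadgets, a standard concentration argument gives that at least one gadget forces $2\Delta-1$ colors with probability $1-o(1)$, which in particular is constant. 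Yao's minimax principle then extends the bound to randomized online algorithms. The main obstacle I foresee is the clean execution of step (ii): making the BMN Ramsey argument, originally adaptive, fire against a \emph{random} phase-two matching and against any phase-one coloring the algorithm might produce. I would address this by choosing $\mathcal{H}$ with enough symmetry that the number of ``bad'' phase-two matchings dominates the total by a constant factor for \emph{every} fixed phase-one coloring pattern, so that a union bound over the at-most-$\exp(\poly(\Delta))$ possible coloring patterns per gadget still leaves the conflict probability bounded below by a constant.
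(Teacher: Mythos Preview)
Your high-level plan---define a random gadget $\mathcal{H}$ derived from the Bar-Noy--Motwani--Naor construction, argue that under random-order arrivals any deterministic algorithm fails on a single gadget with some positive probability, then amplify by taking many independent copies and invoke Yao---is exactly the paper's approach. The paper's concrete gadget is $\beta = 2\Delta\binom{2\Delta-2}{\Delta-1}\binom{2\Delta-1}{\Delta} = 4^{\Theta(\Delta)}$ disjoint $(\Delta-1)$-stars together with one extra vertex $v$ joined to the centers of $\Delta$ uniformly random stars.

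The genuine gap is your quantitative claim in step~(ii), that the random phase-two choice hits a bad configuration with probability $\Omega(1)$. In the paper's analysis the per-gadget failure probability is only $4^{-\Theta(\Delta^2)}$: first, with probability at least $\mu/2$ where $\mu=1/\binom{2\Delta-1}{\Delta}$, at least $\Delta\binom{2\Delta-2}{\Delta-1}$ of the $\beta$ stars have all their edges arrive before any edge of $v$; then, conditioned on this, the probability that $v$'s $\Delta$ random neighbors all lie in one color class is at least $1/\binom{\beta}{\Delta}=4^{-\Theta(\Delta^2)}$. Your proposed fix---choose $\mathcal{H}$ symmetric enough that bad phase-two choices form a constant fraction for \emph{every} phase-one coloring---does not hold up: a deterministic algorithm can spread the $\binom{2\Delta-2}{\Delta-1}$ possible palette patterns nearly evenly over the phase-one vertices, after which any symmetric distribution over phase-two choices that respects the degree bound places $\Delta$ neighbors of a common vertex into a single pattern class only with probability exponentially small in $\Delta$. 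Consequently your accounting for the $\sqrt{\log n}$ threshold is off: it arises not because the gadget size is $2^{\poly(\Delta)}=n^{o(1)}$, but because one needs $\gamma=4^{\Theta(\Delta^2)}$ independent gadgets to boost the $4^{-\Theta(\Delta^2)}$ per-gadget failure probability to a constant, giving $n=\gamma\cdot 4^{\Theta(\Delta)}=4^{\Theta(\Delta^2)}$ and hence $\Delta=\Theta(\sqrt{\log n})$.
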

\begin{proof}
	Consider a star on $\Delta-1$ leaves. If Algorithm $\mathcal{A}$ uses $2\Delta-2$ or fewer colors, then it may color any such star's edges with ${2\Delta-2 \choose \Delta-1}$ possible subsets of colors. 
	If $\Delta$ such stars' edges are colored using the same subset $S\subseteq [2\Delta-2]$ of $\Delta-1$ colors and some node $v$ neighbors the roots of these $\Delta$ stars, then the algorithm fails, as it is forced to use $\Delta$ colors outside of $S$ for the edges of $v$, for a total of $2\Delta-1$ distinct colors. We show a random graph $\mathcal{G}$ for which this bad event happens with constant probability, even when the edges are presented in random order.
	
	Our graph consists of independent copies of the following random graph, $\mathcal{H}$. The graph $\mathcal{H}$ contains $\beta := 2\Delta\cdot {2\Delta-2 \choose \Delta - 1} \cdot {2\Delta-1 \choose \Delta} \leq 4^{O(\Delta)}$ stars with $\Delta-1$ leaves, and one node $v$ which neighbors the centers of $\Delta$ randomly-chosen such stars. 
	For any star, the probability that all $\Delta-1$ edges of the star arrive before any of the $\Delta$ edges of $v$ arrive is $\frac{(\Delta-1)!\Delta!}{(2\Delta-1)!} = 1/{2\Delta-1 \choose \Delta}$. Therefore, by linearity of expectation, if we denote by $X$ the fraction of stars in $\mathcal{H}$ whose edges arrive before all edges of $v$, we have that $\mu:=\E[X] = 1/{2\Delta-1 \choose \Delta}$. By Markov's inequality applied to the non-negative variable $Y:=1-X$, whose expectation is $\E[Y]=1-\mu$, we have that
	\begin{equation}\label{ro-coloring-reverse-markov}
	\Pr\left[X\leq \frac{\mu}{2}\right] = \Pr\left[Y \geq 1-\frac{\mu}{2}\right] \leq \frac{1-\mu}{1-\frac{\mu}{2}} = 1 - \frac{\mu}{2-\mu} \leq 1-\frac{\mu}{2}.
	\end{equation}

	Now, if $X\geq \frac{\mu}{2} = 1/(2\cdot {2\Delta-1 \choose \Delta})$, then at least $\beta\cdot \frac{\mu}{2} = \Delta\cdot {2\Delta-2 \choose \Delta - 1}$ of the stars of $H\sim \mathcal{H}$ have all their edges arrive before any edge of $v$ arrives. By pigeonhole principle, some $\Delta$ of these stars are colored with a common set of $\Delta-1$ colors, $S\subset [2\Delta-1]$. If $v$ neighbors the roots of $\Delta$ such stars whose edges are colored with the colors in $S$, then Algorithm $\mathcal{A}$ fails, as it must color the graph using $2\Delta-1$ colors, as argued above. Therefore, conditioned on $X\geq \frac{\mu}{2}$, Algorithm $\mathcal{A}$ fails when coloring $H\sim \mathcal{H}$ with probability at least
	\begin{equation}\label{ro-prob-fail-many-colored-stars}
		\Pr\left[\mathcal{A} \textrm{ fails on } H\sim \mathcal{H} \midd X\geq \frac{\mu}{2}\right] \geq 1/{\beta \choose \Delta} \geq 4^{-O(\Delta^2)}.
	\end{equation}

	Consequently, combining \Cref{ro-coloring-reverse-markov} and \Cref{ro-prob-fail-many-colored-stars}, and using $\mu=1/{2\Delta-1 \choose \Delta} = 4^{-O(\Delta)}$, we find that the unconditional probability of Algorithm $\mathcal{A}$ not failing due to $H$ is at most 
	\begin{align*}
		\Pr\left[\mathcal{A} \textrm{ does not fail on } H\sim \mathcal{H}\right] & \leq 1 - \Pr\left[\mathcal{A} \textrm{ fails on } H\sim \mathcal{H} \midd X\geq \frac{\mu}{2}\right] \cdot \Pr\left[X\geq \frac{\mu}{2}\right] \\
		& \leq 1 - 4^{-O(\Delta^2)}\cdot \frac{\mu}{2} \\
		& = 1 - 4^{-O(\Delta^2)}.
	\end{align*}
%	where the last inequality follows from $\mu=1/{2\Delta-1 \choose \Delta} = 4^{-O(\Delta)}$. 
	
	As stated above, the random graph $\mathcal{G}$ we consider consists of some $\gamma$ independent copies of $\mathcal{H}$.
	For independent copies of $\mathcal{H}$, 
	the above upper bound on the probability of $\mathcal{A}$ not failing on a copy $H\sim \mathcal{H}$ holds independently of other copies' realization and coloring by $\mathcal{A}$.
	Therefore, letting $\mathcal{G}$ consist of some sufficiently large $\gamma:=4^{\Theta(\Delta^2)}$ independent copies of $\mathcal{H}$, we have that 
	\begin{align*}
	\Pr\left[\mathcal{A} \textrm{ does not fail on } G\right] \leq \left(1 - 4^{-O(\Delta^2)}\right)^\gamma \leq \frac{1}{e}.
	\end{align*}	
	Therefore, Algorithm $\mathcal{A}$ fails on $G$ with constant probability. The lemma follows by noting that $G$ consists of some $n = \gamma\cdot (\beta+1) = 4^{\Theta(\Delta^2)}$ nodes, and therefore $\Delta = \Omega(\sqrt{\log n})$.
\end{proof}
% !TeX root = main
\section{Deferred Proofs from \Cref{sec:ro}}\label{sec:ro-appendix}

In this section we give the full proofs deferred from \Cref{sec:ro}, restated below for ease of reference.

We start by proving \Cref{ro-sampling}, which intuitively implies that we can use the stream's random order to sample edges independently.
\rosamplinglem*
\begin{proof}
	For any given subset $S\subseteq U$ of size $|S|=k$, we have $U_k = S$ precisely when $X=k$ and the elements of $S$ are the first $k$ elements in the stream. As $X$ is independent of the stream's randomness, this gives us:
	$$\Pr[U_k = S] = \Pr[X = k]\cdot \Pr[S \textrm{ is a prefix of the stream}] = \Pr[X=k]/{n \choose k} = p^k (1-p)^{n-k}.$$	
	This is precisely the probability of  getting a specific set $S$, when each of the $n$ elements in $U$ is sampled independently with probability $p$.
\end{proof}

Next, we prove \Cref{estimate-using-eps-fraction}, which asserts that the edges until time $T$ where some vertex reaches degree $\epsilon\cdot \Delta$ gives sharp estimates of natural graph parameters, such as the number of edges, and every vertex's degree, w.h.p. 

\estimateslem*
\begin{proof}
	The proof relies on several applications of Chernoff bound and union bound, as follows. Fix some vertex $v$, and let $X_i$ be an indicator variable for the $i$-th edge in the stream containing $v$.
	Clearly, we have that $\E[X_i]=\frac{d(v)}{m}$, and so by linearity of expectation 
	\begin{equation}\label{expected-degree}
	\E[d^{(k)}(v)] = k\cdot \frac{d(v)}{m}.
	\end{equation} 
	On the other hand, the joint distribution $(X_1,\dots,X_m)$ is a permutation distribution, and so it is NA. We may therefore apply Chernoff bounds to sums of such variables, such as $d^{(k)}(v)=\sum_{i=1}^k X_i$. We will make use of this to prove that the three properties hold w.h.p.
	
	We begin by proving Property \ref{m-estimate}. Fix any vertex $v$ of degree $d(v)=\Delta$. For any $k \geq \epsilon\cdot m(1+\epsilon^2)$, we have by \Cref{expected-degree} that $\E[d^{(k)}(v)]\geq \eps \cdot \Delta (1+\eps^2)$. Consequently,  by Chernoff bound, we have
	\begin{align*}
	\Pr\left[d^{(k)}(v) \leq \epsilon\cdot \Delta\right] & 
	= \Pr\left[d^{(k)}(v) \leq \frac{1}{1+\epsilon^2}\cdot \epsilon\cdot \Delta(1+\eps^2)\right] \\
	& \leq \Pr\left[d^{(k)}(v) \leq (1-\nicefrac{\epsilon^2}{2})\cdot \epsilon\cdot \Delta(1+\eps^2)\right] & \frac{1}{1+x}\leq 1-\frac{x}{2}\,\, \forall x\leq 1  \\
	& \leq \exp\left(\frac{-\nicefrac{\epsilon^4}{4} \cdot \epsilon\cdot \Delta(1+\epsilon^2)}{3}\right) & \textrm{\Cref{chernoff-NA}}\\
	& \leq n^{-\alpha}. & \Delta\geq \frac{12\alpha \log n}{\epsilon^5}
\end{align*}
	Therefore, $\Pr[T\geq \epsilon\cdot m(1+\epsilon^2)] \leq n^{-\alpha}$.
	
	Now, fix a vertex $v$. For any $k \leq \epsilon\cdot m(1-\epsilon^2)$, we have by \Cref{expected-degree} that $\E[d^{(k)}(v)]\leq \eps \cdot d(v) (1-\eps^2) \leq \eps \cdot \Delta (1-\eps^2)$. Consequently,  by Chernoff bound, we have
		\begin{align*}
	\Pr\left[d^{(k)}(v) \geq \epsilon\cdot \Delta\right] & 
	= \Pr\left[d^{(k)}(v) \geq \frac{1}{1-\epsilon^2}\cdot \epsilon\cdot \Delta(1-\eps^2)\right] \\
	& \leq \Pr\left[d^{(k)}(v) \geq (1+\epsilon^2)\cdot \epsilon\cdot \Delta(1-\eps^2)\right] & 1+x\leq \frac{1}{1-x}\,\,\forall x<1 \\
	& \leq \exp\left(\frac{-\epsilon^4 \cdot \epsilon\cdot \Delta(1+\epsilon^2)}{3}\right) & \textrm{\Cref{chernoff-NA}}\\
	& \leq n^{-(\alpha+1)}. & \Delta\geq \frac{6(\alpha+1) \log n}{\epsilon^5}
	\end{align*}
	Taking union bound over all vertices, we find that $\Pr[T\leq \epsilon\cdot m(1-\epsilon^2)]\leq n^{-\alpha}$, which together with the above, implies that Property \ref{m-estimate} holds with probability at least $1-2n^{-\alpha}$.
	
We next prove that Properties \ref{degree-estimates} and \ref{last-edges} hold w.h.p. Consider the following property for a generic $k\leq m$:	
	\begin{equation}\label{degree-estimates-equation}
	d^{(k)}(v) = \epsilon\cdot d(v) \pm 2\epsilon^3 \Delta.
	\end{equation}
Let $B^{(k)}$ denote the bad event that some vertex $v$ \emph{fails} to satisfy \Cref{degree-estimates-equation} for this $k$. We will show that the probability of any event $B^{(k)}$ to happen for $k\in [\epsilon m (1-\eps^2),m]$ is at most $2n^{-\alpha}$.
Observe that Property \ref{degree-estimates} and Property \ref{last-edges} hold if events $B^{(T)}$ and $B^{(m')}$, resp., do \emph{not} happen. Assuming Property \ref{m-estimate}, both $T$ and $m'$ fall in the range $[\eps m (1-\eps^2),m]$. It then follows that the $3$ properties simultaneously hold with probability at least $1-4n^{-\alpha}$.

	\Cref{degree-estimates-equation} trivially holds for vertices $v$ with $d(v)\leq \epsilon^3\Delta$, for which $0\leq d^{(k)}(v) \leq d(v) \leq \epsilon^3\Delta$. On the other hand, for vertices $v$ with $d(v)\geq \epsilon^3 \Delta$ and $k$ as above, we have by \Cref{expected-degree} that $\E[d^{(k)}(v)] = \eps \cdot \Delta (1\pm \eps^2)$. Consequently,  by Chernoff bound, we have
	\begin{align*}
	\Pr\left[d^{(k)}(v) \geq \epsilon\cdot d(v) + 2\epsilon^3\Delta \right] & 
	\leq \Pr\left[d^{(k)}(v) \geq (1+2\epsilon^2)\cdot \epsilon\cdot d(v)\right] & d(v)\leq \Delta \\
	& \leq \Pr\left[d^{(k)}(v) \geq (1+\nicefrac{\epsilon^2}{2})\cdot \epsilon\cdot d(v)(1+\eps^2)\right] & 1+2\epsilon^2\geq (1+\epsilon^2)(1+\nicefrac{\epsilon^2}{2})\\
	& \leq \exp\left(\frac{-\nicefrac{\epsilon^4}{4} \cdot \epsilon\cdot d(v)(1+\epsilon^2)}{3}\right) & \textrm{\Cref{chernoff-NA}}\\
	& \leq \exp\left(\frac{-\nicefrac{\epsilon^4}{4} \cdot \epsilon\cdot \epsilon^3\Delta (1+\epsilon^2)}{3}\right) & d(v)\geq \epsilon^3\Delta \\	
	& \leq n^{-(\alpha+3)}. & \Delta\geq \frac{12(\alpha+3) \ln n}{\epsilon^8}
	\end{align*}	
	Similarly, we have that
	\begin{align*}
	\Pr\left[d^{(k)}(v) \leq \epsilon\cdot d(v) - 2\epsilon^3\Delta \right] & 
	\leq \Pr\left[d^{(k)}(v) \leq (1-2\epsilon^2)\cdot \epsilon\cdot d(v)\right] & d(v)\leq \Delta \\
	& \leq \Pr\left[d^{(k)}(v) \leq (1-\nicefrac{\epsilon^2}{2})\cdot \epsilon\cdot d(v)(1-\eps^2)\right] & 1-2\epsilon^2\leq (1-\epsilon^2)(1-\nicefrac{\epsilon^2}{2})\\
	& \leq \exp\left(\frac{-\nicefrac{\epsilon^4}{4} \cdot \epsilon\cdot d(v)(1-\epsilon^2)}{3}\right) & \textrm{\Cref{chernoff-NA}}\\
	& \leq \exp\left(\frac{-\nicefrac{\epsilon^4}{4} \cdot \epsilon\cdot \epsilon^3\Delta (1-\epsilon^2)}{3}\right) & d(v)\geq \epsilon^3\Delta \\	
	& \leq n^{-(\alpha+3)}. & \Delta\geq \frac{24(\alpha+3) \ln n}{\epsilon^8}
	\end{align*}	
	
	By union bound, we have that %for all such $k$,
	$$\Pr[B^{(k)}] \leq \sum_{v} \Pr[d^{(k)}(v) \neq \eps\cdot d(v)\pm 2\eps^3\Delta] \leq 2n^{-(\alpha+2)}.
	$$ 
The claim follows by union bound over the values $k\in [\epsilon m (1-\eps^2),m]$.
\end{proof}

Finally, we show that adding $\Delta$ dummy vertices per vertex $v$ in $G$ forming a $\Delta$-clique, and $(\Delta-(1/\eps-1)\cdot d^{(T)}(v))^+$ dummy edges\footnote{We use the symbol $x^+$ to denote $\max(0,x)$.} from $v$ to some of its dummy nodes, to the edges in the time range $(T,m'-T]$, yields a near-regular graph $H$.

\dummysubgraphlem*
\begin{proof}
	By \Cref{estimate-using-eps-fraction} (\Cref{degree-estimates}), the number of dummy edges of  $v$ is $\Delta - (1/\eps-1)\cdot d^{(T)}(v)$, as this number is non-negative for every vertex $v\in V$ w.h.p., since
	$$\Delta - (1/\eps-1)\cdot d^{(T)}(v) = \Delta - (1-\eps)\cdot (d(v) \pm 2\epsilon^2 \Delta) \geq \Delta(1-(1-\eps)-2\epsilon^2) = \Delta(\epsilon-2\epsilon^2) \geq 0.$$
	Each dummy node $v_i$ belongs to a single clique of size $\Delta$, and possibly has another dummy edge to a single real node $v$, and therefore it has degree $d_H(v)\in \{\Delta-1,\Delta\}$.
	As for real vertices $v$, again appealing to  \Cref{estimate-using-eps-fraction} (\Cref{degree-estimates}), combined with $v$ having at most $\epsilon^2 \Delta$ edges in the range $(m',m]$ by  \Cref{estimate-using-eps-fraction} (\Cref{last-edges}), we find that the degree of $v$ in $H$ satisfies
	\begin{align*}
	d_H(v) & = d(v) - d^{(T)}(v) + \Delta - (1/\eps-1)\cdot d^{(T)}(v) \pm 2\epsilon^2\Delta \\
	& = d(v) + \Delta - (d(v) \pm 2\epsilon^2\Delta) \pm 2\epsilon^2 \Delta \\
	& = \Delta(1\pm 4\epsilon^2).\qedhere
	\end{align*}
\end{proof}
\section{Analysis of the Basic Algorithm (Proof of \Cref{main:th:main})}\label{sec:agnostic}
In this section, we analyze our  {\em basic algorithm} (\Cref{alg:model-agnostic}) from \Cref{sec:overview:algo}. Before proceeding any further, the reader will find it useful to review \Cref{sec:overview:algo}.  The analysis of \Cref{alg:model-agnostic} boils down to proving that the uncolored subgraph after phase one, consisting of all the edges that either failed or were not sampled in the first place, has bounded degree. This is (re-)stated in \Cref{main:th:main} below.

\uncoloreddeg*

The rest of this section  (\Cref{sec:agnostic})  is   dedicated to the proof of \Cref{main:th:main}, and it is organized as follows. In \Cref{main:sec:analysis}, we give a brief, high-level and {\em informal} overview of the proof of \Cref{main:th:main}. We start the formal proof of \Cref{main:th:main} in \Cref{sec:notations:key}, which defines some key random variables and events that will be extensively used in our analysis. In \Cref{app:model:agnostic:main:proof}, we show how \Cref{main:th:main} follows from a sequence of lemmas. The remainder of \Cref{sec:agnostic} is devoted to the proofs of these individual lemmas from \Cref{app:model:agnostic:main:proof}.

\medskip
Throughout \Cref{sec:agnostic}, we use the notation $a\pm b$ to denote the interval $[a-b, a+b]$. Thus, whenever we write $x = a\pm b$ in this section, it means that $x \in [a-b, a+b]$. Similarly, whenever we write $a\pm b = a' \pm b'$, it means that $[a-b, a+b] \subseteq [a'-b', a'+b']$.

\subsection{An informal overview of the proof of~\Cref{main:th:main}}
\label{main:sec:analysis}

For all $i \in [t_{\eps}-1], v \in V$, let $N_i(v) := \{ (u, v) \in E_i\}$ denote the edges of $v$ that are present in $G_i = (V, E_i)$. The next lemma helps us bound the maximum degree in the subgraph $G_{t_{\eps}}$. 

\begin{lem}
	\label{main:lm:degree}
	$|N_i(v)| \approx (1-\eps)^{(i-1)} \cdot (1\pm \eps^2) \Delta$ for all $i \in [t_{\eps}], v \in V$, w.h.p.
\end{lem}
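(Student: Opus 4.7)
The plan is to exploit a clean independence structure: a given edge $e \in E$ belongs to $E_i$ if and only if $e$ was not sampled in any of the rounds $1,\dots,i-1$. Since every edge of $E_j$ is sampled (into $S_j$) independently with probability $\eps$ in each round $j$ in which it is still present, and these sampling coins are mutually independent across edges and rounds, every edge of $E$ survives to $E_i$ independently with probability $(1-\eps)^{i-1}$. Consequently $|N_i(v)|$ is a sum of $d(v)$ independent Bernoulli random variables, each with success probability $(1-\eps)^{i-1}$, whose expectation is $\mu := d(v)(1-\eps)^{i-1}$. Notably, this lets me avoid inducting on $i$ and then accumulating per-round errors: a single concentration inequality applied directly to $|N_i(v)|$ suffices.

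I would then apply a multiplicative Chernoff bound to each $|N_i(v)|$:
$$\Pr\bigl[\,\bigl||N_i(v)| - \mu\bigr| > \eta \mu\,\bigr] \;\leq\; 2\exp(-\eta^2 \mu / 3).$$
Choosing $\eta$ just large enough to make this probability $n^{-\Omega(1)}$ requires $\eta = \Theta\bigl(\sqrt{\log n / \mu}\bigr)$. Using $d(v) \geq (1-\eps^2)\Delta$ and $i \leq t_\eps = \lfloor \ln(1/\eps)/(2K\eps)\rfloor$, one has $(1-\eps)^{i-1} \geq \exp(-\eps t_\eps) = \Omega(\eps^{1/(2K)})$, so $\mu = \Omega(\eps^{1/(2K)} \Delta)$ and therefore $\eta = O\bigl(\eps^{-1/(4K)} \sqrt{\log n/\Delta}\bigr)$. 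Plugging in the lower bound $\log n/\Delta \leq (\eps/10)^6$ from~(\ref{eq:eps}) yields $\eta = O(\eps^{3-1/(4K)})$, which is $o(\eps^2)$ for an absolute constant $K$.

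Combining the Chernoff estimate with $d(v) = (1\pm\eps^2)\Delta$ then gives $|N_i(v)| \in (1-\eps)^{i-1}\Delta \cdot (1\pm\eps^2)(1\pm \eta)$, matching the lemma's claim (with the harmless factor $(1\pm\eta)$ absorbed into the ``$\approx$''). A union bound over the at most $n \cdot t_\eps$ events, one per pair $(v,i)$, preserves the high-probability guarantee. The only real work in this proof lies in the arithmetic bookkeeping that verifies $\eta$ is dominated by $\eps^2$; this is precisely where the specific choices in~(\ref{eq:eps}) and~(\ref{eq:t}) pay their dues, and I do not expect any substantive obstacle beyond this numerical check.
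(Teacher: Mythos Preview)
Your proposal is correct and takes a genuinely different route from the paper. The paper proves the lemma by induction on $i$: conditioning on rounds $1,\dots,i-1$, it observes that each edge of $N_i(v)$ survives to $N_{i+1}(v)$ independently with probability $1-\eps$, applies a Hoeffding bound to get $|N_{i+1}(v)| = (1-\eps)|N_i(v)| \pm \Theta(\sqrt{\Delta\ln n})$, and then absorbs the additive $\Theta(\sqrt{\Delta\ln n})$ error into the $\eps^2$ slack at each step. You instead exploit the fact that survival of an edge to round $i$ is a single Bernoulli event with success probability $(1-\eps)^{i-1}$, independent across edges, so $|N_i(v)|$ admits a one-shot Chernoff bound with no per-round error accumulation. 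Your approach is cleaner for this particular lemma; the paper's inductive framework, however, is the template that carries over to the harder invariants on palette sizes $|P_i(e)|$ and $c$-degrees $|N_{i,c}(v)|$, where the relevant events in successive rounds genuinely depend on tentative color choices and cannot be collapsed into a single product of independent coins. One minor quibble: the inequality $(1-\eps)^{i-1}\geq \exp(-\eps t_\eps)$ goes the wrong way as written (since $1-\eps\leq e^{-\eps}$); you need something like $(1-\eps)\geq e^{-\eps(1+\eps)}$ to lower-bound $(1-\eps)^{t_\eps}$, but this still gives $\Omega(\eps^{1/(2K)+o(1)})$ and your arithmetic goes through unchanged.
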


\begin{proof}(Sketch)
	The statement clearly holds for $i = 1$. The proof follows from induction on $i$. Condition on all the random choices made by the algorithm during rounds $\{1, \ldots, i-1\}$. Fix any node $v \in V$, and suppose that $|N_i(v)| \approx (1-\eps)^{(i-1)} \cdot (1\pm \eps^2)\Delta$ for some $i \in [t_{\eps}-2]$.  Each edge $e \in N_i(v)$ belongs to $N_{i+1}(v)$ independently with probability $1-\eps$. From linearity of expectation, we derive that $\E[|N_{i+1}(v)|] = (1-\eps) \cdot |N_i(v)|$. Now,  a standard Hoeffding bound gives us: $|N_{i+1}(v)| = (1-\eps) \cdot |N_i(v)| \pm \Theta(\sqrt{\Delta \ln n})$ w.h.p. From~(\ref{eq:eps}),~(\ref{eq:t}) and the inductive hypothesis, we get: $\sqrt{\Delta \ln n} \ll \eps^2 \cdot \Delta \ll \eps \cdot |N_{i}(v)|$. This implies that $|N_{i+1}(v)| \approx (1-\eps)^{i} \cdot |N_i(v)| \approx (1-\eps)^i \cdot (1\pm \eps^2) \Delta$ w.h.p. 
\end{proof}

The proof of~\Cref{main:lm:degree} gives us a glimpse as to why we need  a lower bound on $\eps$: Whenever we take a concentration bound during the analysis, we  will end up with an additive error term of the form $\sqrt{\Delta \ln n}$. We will like this additive error term to get subsumed within  $\eps^{\beta} \Delta$ for some large constant $\beta > 1$. Specifically, we will like to have $\eps^{\beta} \Delta \gg \sqrt{\Delta \ln n}$, which implies that $\eps \gg \left(\frac{\ln n}{\Delta}\right)^{1/(2\beta)}$. Henceforth, to convey the main ideas, we will often gloss over this issue in this section. 

\begin{cor}
	\label{main:cor:degree}
	$\Delta(G_{t_{\eps}}) = O\left(\eps^{1/(3K)}  \Delta\right)$, w.h.p.
\end{cor}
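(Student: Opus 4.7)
The plan is to derive \Cref{main:cor:degree} as a direct consequence of \Cref{main:lm:degree} by specializing to $i = t_\eps$ and computing. Since $\Delta(G_{t_\eps}) = \max_{v \in V} |N_{t_\eps}(v)|$, \Cref{main:lm:degree} gives, with high probability,
\[
\Delta(G_{t_\eps}) \;\leq\; (1-\eps)^{t_\eps - 1} (1 + \eps^2)\Delta \;+\; (\text{lower-order error}),
\]
where the lower-order error absorbs the $\sqrt{\Delta \ln n}$ slack discussed after \Cref{main:lm:degree}. So the remaining task is a one-line estimate of $(1-\eps)^{t_\eps-1}$.

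For that estimate I would use $1-\eps \leq e^{-\eps}$, together with the definition $t_\eps = \lfloor \ln(1/\eps)/(2K\eps) \rfloor$ from \eqref{eq:t}. This gives
\[
(1-\eps)^{t_\eps - 1} \;\leq\; \exp\!\bigl(-\eps(t_\eps - 1)\bigr) \;\leq\; \exp\!\left(-\frac{\ln(1/\eps)}{2K} + 2\eps\right) \;=\; e^{2\eps}\cdot \eps^{1/(2K)} \;=\; O\!\bigl(\eps^{1/(2K)}\bigr),
\]
where the last step uses $\eps \leq 1/10^4$ from \eqref{eq:eps}. Combined with $(1+\eps^2) = O(1)$, this yields $\Delta(G_{t_\eps}) = O(\eps^{1/(2K)} \Delta)$.

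Finally, since $\eps \leq 1$ we have $\eps^{1/(2K)} \leq \eps^{1/(3K)}$, giving the claimed bound $\Delta(G_{t_\eps}) = O(\eps^{1/(3K)} \Delta)$. The gap between the exponents $1/(2K)$ and $1/(3K)$ is intentional slack: it will later allow the additive $\sqrt{\Delta \ln n}$ concentration error inherited from \Cref{main:lm:degree} to be subsumed, given the lower bound $\eps \geq 10(\ln n/\Delta)^{1/6}$ from \eqref{eq:eps}. I do not anticipate any real obstacle here; the entire argument is a routine calculation once \Cref{main:lm:degree} is in hand, and the main work in proving \Cref{main:th:main} will instead lie in also controlling $\Delta(G_F)$, which is treated separately in the rest of \Cref{sec:agnostic}.
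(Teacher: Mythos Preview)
Your proposal is correct and follows essentially the same approach as the paper's sketch: apply \Cref{main:lm:degree} at $i=t_\eps$, bound $(1-\eps)^{t_\eps-1}\leq e^{-\eps(t_\eps-1)}\leq e^{2\eps}\eps^{1/(2K)}=O(\eps^{1/(2K)})$, and then relax the exponent to $1/(3K)$. The paper's formal version (\Cref{cor:sampled:degree}) routes through the events $\B_i$ and uses the recursion $|N_{i+1}(v)|\leq (1-\eps+\eps^2)|N_i(v)|$ rather than invoking \Cref{main:lm:degree} directly, but the resulting calculation is the same.
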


\begin{proof}(Sketch)
	Fix any node $v \in V$. Since $t_{\eps} \approx (\ln (1/\eps))/(2K\eps)$,~\Cref{main:lm:degree} gives us: $|N_{t_{\eps}}(v)| \approx (1-\eps)^{(t_{\eps}-1)} \cdot \Delta \approx \exp(\eps \cdot (t_{\eps}-1)) \cdot \Delta \approx \eps^{1/(2K)} \Delta$ w.h.p. In other words, every node in $G_{t_{\eps}}$ has degree at most $\eps \Delta$ w.h.p. This implies the corollary.
\end{proof}

It now remains to upper bound the maximum degree of any node in the subgraph $G_F$. Before proceeding any further, we need to introduce the following notation. For all $i \in [t_{\eps}-1]$,  $v \in V$ and  $c \in [(1+\eps^2)\Delta]$, let $N_{i, c}(v) = \{ (u, v) \in N_i(v) : c \in P_i(u)\}$ denote the set of edges in $N_i(v)$ whose other endpoints have  the color $c$ in their palettes for round $i$. We refer to the quantity $|N_{i, c}(v)|$ as the {\em $c$-degree} of the node $v$ for round $i$. The main challenge will be to bound the $c$-degrees of the nodes and the palette sizes for the edges in each round, as captured in the lemma below.

\begin{lem}
	\label{main:lm:split}
	The following guarantees hold for all rounds $i \in [t_{\eps}-1]$, w.h.p. 
	\begin{itemize}
		\item (a) $|P_i(e)| \approx (1-\eps)^{2(i-1)} \cdot (1\pm \eps^2)\Delta$ for all edges $e \in E_i$. 
		\item  (b) $|N_{i, c}(v)| \approx (1-\eps)^{2(i-1)} \cdot (1\pm \eps^2)\Delta$ for all nodes $v \in V$ and colors $c \in [\Delta]$. 
	\end{itemize}
\end{lem}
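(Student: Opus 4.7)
The plan is to prove (a) and (b) jointly by induction on $i$, conditioning at each step on all randomness from rounds $1,\ldots,i-1$ and analyzing only the fresh sampling and color-choice randomness of round $i$. The base case $i=1$ is immediate: $|P_1(e)| = (1+\eps^2)\Delta$, and $|N_{1,c}(v)| = |N_1(v)| \in (1\pm\eps^2)\Delta$ by the near-regularity of the input stated around~\eqref{eq:eps}.

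For the inductive step, first I would compute the expected values. Fix $e = (u,v) \in E_{i+1}$, i.e.~condition on $e \notin S_i$. A color $c \in P_i(e)$ survives to $P_{i+1}(e)$ iff no edge in $(N_{i,c}(u) \cup N_{i,c}(v))\setminus\{e\}$ is both sampled and chooses tentative color $c$ (edges not in $N_{i,c}$ do not have $c$ in their palette, so they cannot remove it). Each such $e'$ kills $c$ with probability $\eps/|P_i(e')| = \eps/((1-\eps)^{2(i-1)}(1\pm\eps^2)\Delta)$ by the inductive hypothesis, independently across distinct $e'$; the products over $N_{i,c}(u)\setminus\{e\}$ and $N_{i,c}(v)\setminus\{e\}$ are themselves independent because these edge sets are disjoint (as $G$ is simple). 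Plugging in $|N_{i,c}(u)| \approx (1-\eps)^{2(i-1)}(1\pm\eps^2)\Delta$ gives $\Pr[c \in P_{i+1}(e)] = (1-\eps)^2(1\pm O(\eps^2))$, and summing over the colors in $P_i(e)$ yields $\E[|P_{i+1}(e)|] = (1-\eps)^{2i}\Delta(1\pm O(\eps^2))$. Part (b) is analogous: for $(u,v) \in N_{i,c}(v)$, the probability that $(u,v) \in E_{i+1}$ and $c \in P_{i+1}(u)$ factors as $(1-\eps)\cdot (1-\eps)(1\pm O(\eps^2))$ by the same analysis restricted to the edges at $u$, and summing over $N_{i,c}(v)$ yields the desired mean.

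For concentration, both $|P_{i+1}(e)|$ and $|N_{i+1,c}(v)|$ are bounded-difference functions of the $O(\Delta)$ independent sampling and color-choice variables of round $i$: modifying one sampling bit changes the value by at most $1$, and swapping one color choice changes it by at most $2$. McDiarmid's inequality (equivalently, a Hoeffding bound applied to the negatively-associated survival indicators, using the framework discussed in \Cref{sec:prelims}) then gives deviation at most $O(\sqrt{\Delta\log n})$ with probability $1 - n^{-\omega(1)}$, and a union bound over all $e$, $(v,c)$, and $i \in [t_{\eps}-1]$ closes the induction. The main technical obstacle will be controlling the multiplicative error across the $t_\eps$ rounds: each round contributes a factor $(1+O(\eps^2))$ to the slack, which over $t_\eps = O(\log(1/\eps)/\eps)$ rounds can blow past the stated $(1\pm\eps^2)\Delta$ window, so one must work with a slightly looser inductive invariant (absorbing the accumulation into a larger absolute constant) and verify that the additive concentration error $\sqrt{\Delta \log n}$ is dominated by $\eps^2(1-\eps)^{2(i-1)}\Delta$ for all $i \leq t_\eps$, which is precisely what the lower bound on $\eps$ in~\eqref{eq:eps} is calibrated to ensure.
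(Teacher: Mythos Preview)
Your inductive framework matches the paper's approach (formalized via the events $\e_i$, $\C_i$, $\B_i$ in \Cref{sec:notations:key}), and your expectation calculations and your identification of the error-accumulation issue are both correct---the paper tracks explicit slack parameters $\gamma_i$ via the recursion \eqref{eq:gamma:1}--\eqref{eq:gamma:2}, which stay below $\eps^{1/2}$ over all $t_\eps$ rounds (\Cref{cor:gamma}).

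There is, however, a genuine gap in your concentration argument for part~(b). The quantity $|N_{i+1,c}(v)|$ is \emph{not} a function of only $O(\Delta)$ round-$i$ variables: whether a given $(u,v)\in N_{i,c}(v)\setminus S_i$ survives to $N_{i+1,c}(v)$ depends on the color choices of all edges in $N_{i,c}(u)\cap S_i$, and as $u$ ranges over the $\Theta((1-\eps)^{2(i-1)}\Delta)$ relevant neighbors these sets together have size $\Theta(\eps(1-\eps)^{4(i-1)}\Delta^2)$ (this is the set $T_i(v)$ in \Cref{sec:key-lem-c}). Plain McDiarmid over $\Theta(\Delta^2)$ variables with Lipschitz constant $2$ gives only an $O(\Delta\sqrt{\log n})$ deviation, which is useless here. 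Negative association does not save you either: the survival indicators of $(u_1,v)$ and $(u_2,v)$ can be \emph{positively} correlated through a shared edge $(u_1,u_2)\in S_i$ that picks color $c$. The paper's remedy is to exploit the tiny \emph{variance} contribution of each color variable---the probability that any particular $e'\in T_i(v)$ picks color $c$ is only $1/|P_i(e')|=O\bigl(1/((1-\eps)^{2(i-1)}\Delta)\bigr)$, so the total conditional variance is $O(\eps\Delta)$---and then apply a Bernstein-type bounded-variance version of McDiarmid (\Cref{lm:bernstein:new}) to recover the needed $O(\sqrt{\Delta\ln n})$ deviation.
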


\Cref{main:lm:split} is proved via an induction on $i$. We skip the rather technical proof of this lemma in this overview section. Instead, here we only explain how this lemma is used to give an upper bound on the maximum degree in $G_F$. The following lemma will be very useful towards this end.

\begin{lem}
	\label{main:lm:failed}
	$|N_{i}(v) \cap F_i| \approx 2\eps  \cdot |N_i(v) \cap S_i| \pm \Theta(\sqrt{\Delta \ln n})$, for all $i \in [t_{\eps} -1], v \in V$, w.h.p.
\end{lem}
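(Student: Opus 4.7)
My plan has three parts: conditioning, per-edge expectation, and concentration.

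\emph{Setup.} I would condition on all randomness from rounds $1,\ldots,i-1$ and on the high-probability events of~\Cref{main:lm:split}, which give $|P_i(e)|=(1\pm O(\eps^2))(1-\eps)^{2(i-1)}\Delta$ for every edge $e$ and $|N_{i,c}(w)|=(1\pm O(\eps^2))(1-\eps)^{2(i-1)}\Delta$ for every vertex $w$ and color $c$. Under this conditioning no palette is empty, so an edge $e\in S_i$ fails iff some neighbor $e'\in N(e)\cap S_i$ chose $c(e')=c(e)$.

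\emph{Per-edge failure probability.} Fix $e=(u,v)\in N_i(v)$ and condition further on $e\in S_i$ and $c(e)=c\in P_i(e)$. Since sampling and color choices are independent across edges,
\[
\Pr[\text{no collision}\mid c(e)=c] \;=\; \prod_{e'\in N(e)}\left(1-\eps\cdot\mathbb{1}[c\in P_i(e')]/|P_i(e')|\right).
\]
The number of $e'\in N(e)$ with $c\in P_i(e')$ equals $(|N_{i,c}(u)|-1)+(|N_{i,c}(v)|-1)\approx 2(1-\eps)^{2(i-1)}\Delta$, and each $|P_i(e')|\approx (1-\eps)^{2(i-1)}\Delta$, so a first-order expansion (second-order error $O(\eps^2)$) gives $\Pr[e\in F_i\mid e\in S_i]=2\eps(1\pm O(\eps^2))$. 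Multiplying by $\eps$ and summing, $\E[|N_i(v)\cap F_i|]=2\eps^2|N_i(v)|(1\pm O(\eps^2))$. Separately, Chernoff on the independent Bernoulli samples gives $|N_i(v)\cap S_i|=\eps|N_i(v)|\pm O(\sqrt{\Delta\ln n})$ w.h.p., so it suffices to concentrate $|N_i(v)\cap F_i|$ around $2\eps^2|N_i(v)|$ within additive error $O(\sqrt{\Delta\ln n})$.

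\emph{Concentration (the main obstacle).} Because failures of two edges $e,e'\in N_i(v)$ are correlated through color choices at the shared vertex $v$, a naive McDiarmid bound on the independent variables $(Y_{e''},C_{e''})$ has worst-case difference $\Omega(|N_i(v)|)$ when flipping an edge in $N_i(v)$, which is far too weak. My plan is to upper-bound $|N_i(v)\cap F_i|\le X:=\sum_{e\in N_i(v)}\sum_{e'\in N(e)}\mathbb{1}[e,e'\in S_i,\,c(e)=c(e')]$, with a matching lower bound up to slack $O(\eps)\cdot\E[|N_i(v)\cap F_i|]$ by inclusion--exclusion (two simultaneous collisions at a single edge are an extra $\eps$-factor rarer, and their count concentrates by a separate Chernoff bound on pairs). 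I would then concentrate $X$ by writing it as a sum of indicators indexed by collision-pairs and exploiting the negative dependence of these indicators conditional on $S_i$ to apply~\Cref{chernoff-NA}; the tight palette and $c$-degree bounds from~\Cref{main:lm:split} keep the conditional variance of each contribution proportional to its mean, yielding deviation $O(\sqrt{\Delta\ln n})$ w.h.p. Combining with the Chernoff bound on $|N_i(v)\cap S_i|$ via the triangle inequality gives $|N_i(v)\cap F_i|=2\eps|N_i(v)\cap S_i|\pm O(\sqrt{\Delta\ln n})$ for each fixed $(v,i)$, and a union bound over $v\in V$ and $i\in[t_\eps-1]$ completes the proof. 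The difficulty is precisely this last concentration step, where the worst-case bounded-difference approach is too crude and one must pass through the pair-count $X$ to exploit negative dependence.
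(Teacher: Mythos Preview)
Your setup and per-edge expectation calculation are correct and match the paper. You are also right that a direct bounded-differences argument is insufficient, though the obstruction is not that the Lipschitz constant is $\Omega(|N_i(v)|)$ --- changing a single $(Y_{e''},C_{e''})$ actually shifts $|N_i(v)\cap F_i|$ by $O(1)$ --- but rather that there are $\Theta(\eps^2\Delta^2)$ relevant variables (all sampled edges within distance two of $v$), so $\sum_{e''} d_{e''}^2=\Theta(\Delta^2)$. The real gap is your proposed fix: the collision-pair indicators $\mathbb{1}[e,e'\in S_i,\ c(e)=c(e')]$ are \emph{not} negatively associated conditional on $S_i$, even when all palettes have the same size. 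For three edges $e_1,e_2,e_3$ at a common vertex with palettes $P_1=\{1,\ldots,P\}$, $P_2=\{1,\ldots,P-1,P+1\}$, $P_3=\{1,\ldots,P-1,P+2\}$, one has $\Pr[c(e_1)=c(e_2)\text{ and }c(e_1)=c(e_3)]=(P-1)/P^3$ versus $\Pr[c(e_1)=c(e_2)]\cdot\Pr[c(e_1)=c(e_3)]=(P-1)^2/P^4$, so these two indicators are \emph{positively} correlated. Since under \Cref{main:lm:split} palettes are only approximately equal, this is not ruled out, and you cannot invoke \Cref{chernoff-NA} on $X$; the inclusion--exclusion lower bound faces the same obstacle for the triple-collision count.

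The paper's formal argument (\Cref{sec:degree-failures}) avoids this by conditioning on $S_i$ and splitting $F_i(v):=N_i(v)\cap F_i$ into two types. Type~(1), $F_i^{(1)}(v)$, is the set of edges colliding with some other edge of $N_i(v)\cap S_i$: here the only randomness is the $O(\eps\Delta)$ color choices at $v$, and $|F_i^{(1)}(v)|$ has Lipschitz constant $4$ in these variables (changing one color moves at most two edges into and two out of a repeated-color class), so McDiarmid over $O(\Delta)$ variables gives the required deviation. Type~(2), $F_i^{(2)}(v)=F_i(v)\setminus F_i^{(1)}(v)$, consists of edges colliding only at their other endpoint. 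After further conditioning on all colors in $N_i(v)\cap S_i$, the surviving edges have \emph{distinct} colors, and whether each one fails becomes a balls-and-bins process (balls are the color choices of neighboring sampled edges, bins are the at most $(1+\eps^2)\Delta$ distinct colors present at $v$), whose nonemptiness indicators are NA by \Cref{empty-bins}. This two-type decomposition is the key idea your plan is missing.
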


\begin{proof}(Sketch)
	Fix any round $i \in [t_{\eps}-1]$ and any node $v \in V$. Since each edge $e \in E_i$ is sampled in $S_i$ independently with probability $\eps$, standard concentration bounds imply that:
	\begin{equation}
	\label{main:eq:lm:failed:1}
	|N_{i, c}(x) \cap S_i| \approx \eps \cdot |N_{i, c}(x)| \text{ for all colors } c \in [(1+\eps^2)\Delta], \text{ w.h.p.} 
	\end{equation}
	Condition on all the random choices made by the algorithm during rounds $\{1, \ldots, i-1\}$, as well as the random choices  which determine the set $S_i$. Suppose that these random choices we are conditioning upon satisfy~(\ref{main:eq:lm:failed:1}) and~\Cref{main:lm:split}  for round $i$ (which anyway occur w.h.p.).
	
	Fix any edge $(u, v) \in N_i(v) \cap S_i$. The probability that $(u, v)$ belongs to the set $F_i$, conditioned on it having tentatively picked any color $c \in P_i(u, v)$ in round $i$, is given by:
	\begin{eqnarray*}
		\Pr\left[ (u,v) \in F_i \mid c(u,v) = c\right] & = & 1 - \prod_{e' \in (N_{i, c}(u) \cap S_i)} \left(1 - \frac{1}{|P_i(e')|}\right) \cdot \prod_{e' \in (N_{i, c}(v) \cap S_i)} \left(1 - \frac{1}{|P_i(e')|}\right) \\
		& \approx & 1 - \left(1 - \frac{1}{(1-\eps)^{2(i-1)} \cdot (1\pm \eps^2) \Delta}\right)^{\eps \cdot |N_{i, c}(u)| + \eps \cdot |N_{i, c}(v)|} \\
		& \approx & 1 - \left(1 - \frac{1}{(1-\eps)^{2(i-1)} \cdot (1\pm \eps^2)  \Delta}\right)^{2\eps \cdot (1-\eps)^{2(i-1)} \cdot (1\pm \eps^2) \Delta} \\
		& \approx & 1 - \exp(-2\eps) \approx 2\eps.
	\end{eqnarray*}
	Since the above derivation holds for all colors $c  \in P_i(u,v)$, we infer that $\Pr[(u, v) \in F_i] \approx 2\eps$. Now, by linearity of expectation, we get: $\E[|N_{i}(v) \cap F_i|] = \sum_{e \in N_{i}(v) \cap S_i} \Pr[(u, v) \in F_i] \approx 2 \eps \cdot |N_{i}(v) \cap S_i|$. With some extra effort (see the full version), we can show that the value of $|N_{i}(v) \cap F_i|$ is tightly concentrated around $\pm \Theta(\sqrt{\Delta \ln n})$ of its expectation.  
\end{proof}

\begin{cor}
	\label{main:cor:failed}
	$\Delta(G_F) = O(\eps \Delta)$, w.h.p.
\end{cor}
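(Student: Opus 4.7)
The plan is to fix an arbitrary node $v$, write its degree in $G_F$ as the telescoping sum $\deg_{G_F}(v)=\sum_{i=1}^{t_\eps-1}|N_i(v)\cap F_i|$, bound each summand using \Cref{main:lm:failed}, and then union-bound the resulting estimate over the $n$ nodes. Since the preceding lemmas already hold with high probability, and since there are only $n\cdot t_\eps=\poly(n)$ events to control, combining them loses only a polynomial factor in the failure probability and therefore preserves the w.h.p.\ guarantee.

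Concretely, \Cref{main:lm:failed} gives, simultaneously over all $i\in[t_\eps-1]$ and $v\in V$, the bound
$$|N_i(v)\cap F_i|\;\leq\;2\eps\cdot|N_i(v)\cap S_i|\,+\,O(\sqrt{\Delta\ln n}).$$
By \Cref{main:lm:degree}, $|N_i(v)|\leq (1-\eps)^{i-1}(1+\eps^2)\Delta$ w.h.p., and since each edge of $E_i$ is independently sampled into $S_i$ with probability $\eps$, a standard Hoeffding bound (applied conditionally on $E_i$) yields $|N_i(v)\cap S_i|\leq \eps(1-\eps)^{i-1}(1+\eps^2)\Delta+O(\sqrt{\Delta\ln n})$ w.h.p. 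Plugging these in and summing over $i$,
$$\deg_{G_F}(v)\;\leq\;2\eps^{2}(1+\eps^{2})\Delta\sum_{i=1}^{t_\eps-1}(1-\eps)^{i-1}\;+\;t_\eps\cdot O(\sqrt{\Delta\ln n}).$$
The geometric sum is at most $1/\eps$, so the main term is $O(\eps\Delta)$.

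For the additive error, the lower bound $\eps\geq 10(\ln n/\Delta)^{1/6}$ from \eqref{eq:eps} gives $\sqrt{\Delta\ln n}\leq \eps^{3}\Delta/10$, and combined with $t_\eps=O(\log(1/\eps)/\eps)$ from \eqref{eq:t} the error term is at most $O(\eps^{2}\log(1/\eps)\cdot\Delta)=o(\eps\Delta)$. Hence $\deg_{G_F}(v)=O(\eps\Delta)$ w.h.p., and a union bound over $v\in V$ completes the argument.

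There is no serious obstacle here once \Cref{main:lm:failed} is in hand; the entire work was done in establishing per-round failure rates. The only thing to be careful about is that the additive $\sqrt{\Delta\ln n}$ slack accumulates over $t_\eps$ rounds, and verifying that the lower bound on $\eps$ in \eqref{eq:eps} is exactly strong enough to absorb this slack into the main $O(\eps\Delta)$ term. This also explains the shape of the constraint $\eps\gtrsim(\ln n/\Delta)^{1/6}$: one factor of $1/\eps$ comes from the geometric sum, and another roughly $\log(1/\eps)/\eps$ from the number of rounds, so one needs $\sqrt{\Delta\ln n}$ to be at most a small polynomial in $\eps$ times $\Delta$.
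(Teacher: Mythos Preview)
Your proof is correct and follows essentially the same route as the paper: decompose $\deg_{G_F}(v)$ as $\sum_i |N_i(v)\cap F_i|$, invoke \Cref{main:lm:failed} per round, and check that the accumulated $t_\eps\cdot O(\sqrt{\Delta\ln n})$ slack is $o(\eps\Delta)$ via~\eqref{eq:eps} and~\eqref{eq:t}.

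The only difference is in how you control $\sum_i |N_i(v)\cap S_i|$. You bound each $|N_i(v)\cap S_i|$ individually by $\eps(1-\eps)^{i-1}(1+\eps^2)\Delta+O(\sqrt{\Delta\ln n})$ (via \Cref{main:lm:degree} and Hoeffding) and then sum the geometric series. The paper instead just observes that the sets $N_i(v)\cap S_i$ are pairwise disjoint across rounds---each edge is sampled in at most one round---so $\sum_i |N_i(v)\cap S_i|\leq \deg_G(v)\leq (1+\eps^2)\Delta$ deterministically. This shortcut avoids the extra concentration step and the geometric sum entirely, but both arguments yield the same $O(\eps\Delta)$ bound.
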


\begin{proof}
	Consider any node $v \in V$. The degree of $v$ in $G_F$ is given by $\text{deg}_v(G_F) := \sum_{i=1}^{t_{\eps} -1} |N_i(v) \cap F_i|$. Hence, w.h.p.,~\Cref{main:lm:failed} gives us the following bound on $\text{deg}_v(G_F)$. 
	\begin{eqnarray*}
		\text{deg}_v(G_F) & \approx & \sum_{i=1}^{t_{\eps}-1}\left( 2\eps  \cdot |N_i(v) \cap S_i| \pm \Theta(\sqrt{\Delta \ln n})\right) \approx 2\eps \cdot \sum_{i=1}^{t_{\eps}} |N_i(v) \cap S_i| \pm \Theta(t_{\eps} \cdot \sqrt{\Delta \ln n}) \\
		& \leq & 2 \eps \cdot \text{deg}_v(G) \pm \Theta(t_{\eps} \cdot \sqrt{\Delta \ln n})  \leq  2\eps \cdot (1+\eps^2)\Delta + \Theta(t_{\eps} \cdot \sqrt{\Delta \ln n}) = \Theta(\eps \Delta).
	\end{eqnarray*}
	In the above derivation, the last step follows from~(\ref{eq:eps}) and~(\ref{eq:t}). Since every node in $G_F$ has degree at most $O(\eps \Delta)$ w.h.p., we conclude that $\Delta(G_F) = O(\eps \Delta)$ w.h.p.
\end{proof}

\Cref{main:th:main} follows from~\Cref{main:cor:degree} and~\Cref{main:cor:failed}.

\subsection{Key random variables and events}
\label{sec:notations:key}

This section defines some random variables and events that will  be extensively used in our analysis.

\medskip
\noindent {\bf Random variables:} We will need to deal with the following random variables for each  $i \in [t_{\eps}]$. 
\begin{itemize}
\item $\palette_i(v) \subseteq [(1+\eps^{2})\Delta]$:  A color $c \in [(1+\eps^{2})\Delta]$ belongs to the set $\palette_i(v)$ iff there is no edge $(u,v) \in S_j$ which picked the color $c$ at some earlier round $j < i$. We refer to the set $\palette_i(v)$ as the {\em palette} of  the node $v \in V$ for round $i$. %Note that $P_1(v) = [(1+\eps)\Delta] = [(1-\eps+\eps^{\a}) \Delta^*]$.
\item $\palette_i(e) \subseteq [(1+\eps^2)\Delta]$:  A color $c \in [(1+\eps^2)\Delta]$ belongs to the set $\palette_i(e)$ iff $c \in \palette_i(u) \cap \palette_i(v)$, where $e = (u, v) \in E$. We refer to the set $\palette_i(e)$ as the {\em palette} of the edge $e \in E$ for round $i$. %Note that $P_1(e) = [\Delta]$.
\item $N_i(v)$: The set of neighboring edges of  $v \in V$ in $G_i$, that is, $N_i(v) = \{ (u, v) \in E : (u,v) \in E_i\}$.
\item $N_{i, c}(v)$: This is the set of neighboring edges of the node $v \in V$   in $G_i$ whose other endpoints  have the color $c$ in their palettes for round $i$, that is, $N_{i, c}(v) = \{ (u, v) \in N_i(v) :  c \in \palette_i(u)\}$. 
\end{itemize}
\medskip
\noindent {\bf Error parameters:} While analyzing the basic algorithm, we need to keep track of the amount by which the random variables $|P_i(e)|$ and $|N_{i,c}(v)|$ can deviate from their expected values. The magnitude of these deviations will be captured by the error-parameters $\{\gamma_i\}_{i \in [t_{\eps}]}$, where:
\begin{eqnarray}
\label{eq:gamma:1}
\gamma_1 & = & K\eps^2. \\
\gamma_{i+1} & = & (1+K\eps)\gamma_i + K \eps^2 \text{ for all } i \in [t_{\eps}-1].  \label{eq:gamma:2}
\end{eqnarray}

\begin{cor}
\label{cor:gamma}
We have $\gamma_i \leq \eps^{1/2}$ for all $i \in [t_{\eps}]$.
\end{cor}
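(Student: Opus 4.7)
The plan is to solve the linear recurrence in closed form and then bound it using the definition of $t_\epsilon$. The key observation is that although \eqref{eq:gamma:2} is inhomogeneous, it has a simple fixed point: writing $\gamma_{i+1} + \epsilon = (1+K\epsilon)\gamma_i + K\epsilon^2 + \epsilon = (1+K\epsilon)(\gamma_i + \epsilon)$, we see that the shifted sequence $\gamma_i + \epsilon$ is a geometric progression with ratio $(1+K\epsilon)$. Since $\gamma_1 + \epsilon = K\epsilon^2 + \epsilon = \epsilon(1+K\epsilon)$, iterating gives the closed form
\[
\gamma_i = \epsilon\bigl((1+K\epsilon)^i - 1\bigr).
\]

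Now I would bound $(1+K\epsilon)^i$ for $i \leq t_\epsilon$. Using $1+x \leq e^x$ and the definition $t_\epsilon = \lfloor \ln(1/\epsilon)/(2K\epsilon)\rfloor$ from \eqref{eq:t}, we get
\[
(1+K\epsilon)^i \;\leq\; (1+K\epsilon)^{t_\epsilon} \;\leq\; \exp(K\epsilon \cdot t_\epsilon) \;\leq\; \exp\!\bigl(\tfrac{1}{2}\ln(1/\epsilon)\bigr) \;=\; \epsilon^{-1/2}.
\]
Substituting back yields $\gamma_i \leq \epsilon\bigl(\epsilon^{-1/2} - 1\bigr) \leq \epsilon^{1/2}$, as required.

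There is no real obstacle here — the only thing to check is that the bound $K\epsilon \cdot t_\epsilon \leq \tfrac{1}{2}\ln(1/\epsilon)$ follows immediately from the floor in the definition of $t_\epsilon$, and the telescoping step is an exact identity rather than an approximation. The corollary is essentially a sanity check confirming that the error parameters defined in \eqref{eq:gamma:1}--\eqref{eq:gamma:2} stay small enough over the $t_\epsilon$ rounds of Phase One to be useful in subsequent concentration arguments.
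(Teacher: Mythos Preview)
Your proof is correct and essentially identical to the paper's. The paper expands the recurrence as the geometric sum $\gamma_{t_\eps} = K\eps^2 \sum_{j=0}^{t_\eps-1}(1+K\eps)^j$, which is exactly your closed form $\eps\bigl((1+K\eps)^{t_\eps}-1\bigr)$ written differently; it then applies the same bound $(1+K\eps)^{t_\eps} \le \exp(K\eps\, t_\eps) \le \eps^{-1/2}$ and notes that $\gamma_i$ is increasing in $i$.
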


\begin{proof}
From~(\ref{eq:gamma:1}) and~(\ref{eq:gamma:2}), we derive that:
\begin{align*}
\gamma_{t_{\eps}} = & K\eps^2 \cdot \sum_{i=0}^{t_{\eps-1}} (1+K\eps)^{i} \\
\leq & (K \eps^2) \cdot \frac{(1+K\eps)^{t_{\eps}}}{(K\eps)} \\
\leq & \eps \cdot \exp(K \eps t_{\eps}) \\
\leq & \eps \cdot \exp((1/2) \cdot \ln(1/\eps)) && (\text{by~(\ref{eq:t})}) \\
= & \eps^{1/2}.
\end{align*}
The corollary now follows from the observation that $\gamma_i \leq \gamma_{t_{\eps}}$ for all $i \in [t_{\eps}]$.
\end{proof}

\medskip
\noindent {\bf Random  events:} We will track  the  three events $\e_i, \C_i$ and $\B_i$ as defined below, for each  $i \in [t_{\eps}-1]$. 
\begin{itemize}
\item Event $\e_i$ occurs iff  $|P_{i}(e)| = (1-\epsilon)^{2(i-1)} \cdot (1\pm \gamma_i)\cdot \Delta$ for all $e \in E_i$. 
\item Event $\C_i$  occurs iff $|N_{i,c}(v)| = (1-\epsilon)^{2(i-1)} \cdot (1\pm \gamma_i) \cdot \Delta$ for all  $v \in V$ and   $c \in [(1+\eps^2)\Delta]$.
\item Event $\B_i$ occurs iff the following conditions hold for all nodes $v \in V$ and  colors $c \in [(1+\eps^2)\Delta]$:
\begin{eqnarray*}
|S_i \cap N_i(v)|  =  (\epsilon \pm \eps^2) \cdot |N_i(v)|,   \ \text{ and }  \ |S_i \cap N_{i,c}(v)|  =  (\epsilon \pm \eps^2) \cdot |N_{i,c}(v)|.
 \end{eqnarray*}
\end{itemize}
%Note that $\Pr[\mathcal{E}_1] = \Pr[\mathcal{C}_1] = 1$ since $|P_1(e)| = |N_{1, c}(v)| = \Delta$ for all $e \in E, v \in V, c \in [\Delta]$. 

\paragraph{Remark:} Since the degree of every node in $G$ is  $(1\pm \eps^2)\Delta$, we have $\Pr[\e_1] = \Pr[\C_1] = 1$.  

\medskip
\noindent {\bf Random bits used by our algorithm:} While proving \Cref{main:th:main}, we will often need to condition upon certain critical events. It will be easier to follow the proof if we view these conditionings via the prism of a classification of random bits used by the algorithm, as described below.

During any given round $i \in [t_{\eps}-1]$, there are two distinct tasks for which the algorithm uses randomness: (a) To determine the set of sampled edges $S_i$, and (b) to pick a color $c(e)$ for each sampled edge $e \in S_i$. We let $r_i^{\text{(edges)}}$ and $r_i^{\text{(colors)}}$ respectively denote the random bits used by the algorithm for task (a) and task (b). The random bits $r_i^{\text{(edges)}}$ and $r_i^{\text{(colors)}}$ are mutually independent of each other, and they are also independent of all the random bits used in the previous rounds $j < i$. We let $r_i = r_i^{\text{(edges)}} \cup r_i^{\text{(colors)}}$ denote all the random bits used by the algorithm in round $i$. Furthermore, we let $r_{< i} = \bigcup_{j=1}^{i-1} r_j$ denote the set of all random bits used by the algorithms in rounds $\{1, \ldots, i-1\}$. Note that the random bits $r_{<i}$ completely determine the occurrences of the following events: $\{\e_j, \C_j, \B_j\}_{j < i}$ and $\{\e_i, \C_i\}$.  On the other hand, the occurrence of the event $\B_i$ is completely determined by the random bits $r_{< i} \cup r_i^{\text{(edges)}}$.

\subsection{Proof of \Cref{main:th:main}}
\label{app:model:agnostic:main:proof}

The main challenge is to show that all the key events defined in~\Cref{sec:notations:key} occur w.h.p. This is summarized in~\Cref{cor:key-lem:key}, which in turn follows from~\Cref{lm:sampled:degree},~\Cref{key-lem} and~\Cref{key-lem-c}. The proofs of these three crucial lemmas respectively appear in~\Cref{sec:sampled:degree},~\Cref{sec:key-lem} and~\Cref{sec:key-lem-c}.

\begin{lem}\label{lm:sampled:degree}
Consider any round $i \in [t_{\eps}-1]$, and fix any instantiation of the  bits $r_{<i}$ which ensure the occurrence of the event $\e_i \cap \C_i$. Then we have: $\Pr[\B_i \mid r_{<i}] \geq 1 - 1/n^{1500}$.
\end{lem}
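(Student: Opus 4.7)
\textbf{Proof plan for \Cref{lm:sampled:degree}.}

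Since we condition on an instantiation of the bits $r_{<i}$ under which $\e_i \cap \C_i$ holds, the graph $G_i = (V, E_i)$, the palettes $P_i(\cdot)$, and thus the deterministic quantities $|N_i(v)|$ and $|N_{i,c}(v)|$ for every node $v$ and color $c$, are all fixed. The only randomness remaining that is relevant to $\B_i$ is $r_i^{\text{(edges)}}$, which independently includes each edge of $E_i$ in $S_i$ with probability $\eps$. Consequently, for every fixed $v$ and $c$, the two random variables $|S_i \cap N_i(v)|$ and $|S_i \cap N_{i,c}(v)|$ are sums of independent Bernoulli($\eps$) random variables with expectations $\eps \cdot |N_i(v)|$ and $\eps \cdot |N_{i,c}(v)|$ respectively.

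The next step is to lower bound $|N_i(v)|$ and $|N_{i,c}(v)|$ so that standard Chernoff bounds yield sharp concentration. Since $\C_i$ holds and $N_{i,c}(v) \subseteq N_i(v)$, we have
\[
|N_i(v)| \ \geq \ |N_{i,c}(v)| \ \geq \ (1-\gamma_i)(1-\eps)^{2(i-1)} \Delta.
\]
Using $i \leq t_\eps - 1$ and the definition $t_\eps = \lfloor \ln(1/\eps)/(2K\eps)\rfloor$ from \eqref{eq:t}, the inequality $(1-\eps)^{2t_\eps} \geq \exp(-2\eps t_\eps/(1-\eps)) \geq \eps^{2/K}$ holds for $\eps \leq 1/10^4$, while \Cref{cor:gamma} gives $(1-\gamma_i) \geq 1/2$. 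Thus $|N_i(v)|, |N_{i,c}(v)| \geq \tfrac{1}{2} \eps^{2/K} \Delta$.

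Now apply the multiplicative Chernoff bound with deviation parameter $\delta = \eps$: for each fixed $v$ (resp.\ $v$ and $c$),
\[
\Pr\!\left[\,\bigl| |S_i \cap N_i(v)| - \eps |N_i(v)| \bigr| > \eps^2 |N_i(v)| \mid r_{<i}\right] \ \leq \ 2\exp\!\left(-\tfrac{1}{3} \eps^3 |N_i(v)|\right),
\]
and the analogous bound for $|N_{i,c}(v)|$. Plugging in the lower bound just derived, each failure probability is at most $2\exp(-\tfrac{1}{6} \eps^{3+2/K} \Delta)$. The constraint \eqref{eq:eps} gives $\eps^6 \Delta \geq 100\ln n$, so
\[
\tfrac{1}{6}\eps^{3+2/K} \Delta \ \geq \ \tfrac{100}{6} \cdot \frac{\ln n}{\eps^{\,3-2/K}} \ \geq \ \Omega\!\left(\ln n / \eps^{3-2/K}\right),
\]
which, since $\eps \leq 10^{-4}$ and $K = 48$ so that $1/\eps^{3-2/K} \gg 1$, is much larger than $1500\ln n$.

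Finally, union-bounding over at most $n$ nodes and $(1+\eps^2)\Delta \leq 2n$ colors (and the two types of events) yields at most $O(n^2)$ failure events, each with probability at most $n^{-c}$ for any desired constant $c$. Adjusting constants in the Chernoff exponent gives $\Pr[\overline{\B_i} \mid r_{<i}] \leq n^{-1500}$, as claimed. The argument is essentially a routine application of Chernoff plus a union bound; the only subtlety is ensuring that the lower bound on $|N_{i,c}(v)|$ surviving from the conditioning on $\C_i$ is strong enough that $\eps^3 |N_{i,c}(v)|$ dominates $\ln n$ by a very wide margin, which is exactly what the lower bound on $\eps$ in \eqref{eq:eps} is engineered to guarantee.
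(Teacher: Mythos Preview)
Your proof is correct and follows essentially the same approach as the paper: both arguments observe that, conditioned on $r_{<i}$, the quantities $|S_i\cap N_i(v)|$ and $|S_i\cap N_{i,c}(v)|$ are sums of independent Bernoulli($\eps$) variables, apply a concentration inequality, and then union bound over all $(v,c)$. The only minor technical difference is that the paper uses the additive Hoeffding bound to get an error term $\pm 50\sqrt{\Delta\ln n}$ and then invokes \Cref{cl:technical:1} to absorb this into $\pm\eps^2|N_{i,c}(v)|$, whereas you apply the multiplicative Chernoff bound with $\delta=\eps$ directly; both routes ultimately rest on the same lower bound for $|N_{i,c}(v)|$ coming from $\C_i$ and \eqref{eq:eps}.
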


\begin{lem}
\label{keylem}
\label{key-lem}
Consider any round $i \in [t_{\eps}-2]$, and fix any instantiation of the  bits $r_{<i} \cup r_i^{(\text{edges})}$ which ensure the occurrence of the event $\e_i \cap \C_i \cap \B_i$. 	Then we have: 
$\Pr\left[\mathcal{E}_{i+1}  \mid r_{<i} \cup r_i^{(\text{edges})}\right]   \geq 1-1/n^{1500}.$
\end{lem}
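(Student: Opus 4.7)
\textbf{Proof proposal for \Cref{key-lem}.}

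The plan is to fix a single edge $e=(u,v)\in E_{i+1}$ (so $e\notin S_i$), write $|P_{i+1}(e)|$ as a sum of indicators over the surviving colors of round $i$, and then bound each term's expectation and the sum's concentration. Concretely, for each color $c\in P_i(e)$ let $X_c$ be the indicator of the event that no edge $e'\in S_i$ incident to $u$ or $v$ tentatively picked $c$ in round $i$; the edges that could pick $c$ are exactly those in $N_{i,c}(u)\cup N_{i,c}(v)$ (since $c\in P_i(u)\cap P_i(v)$), and the two sets are disjoint because their only common edge would be $e\notin S_i$. Thus $|P_{i+1}(e)|=\sum_{c\in P_i(e)} X_c$.

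The first main step computes $\E[X_c\mid r_{<i}\cup r_i^{(\mathrm{edges})}]$. Because the choices $c(e')$ for different $e'\in S_i$ are mutually independent, and each is uniform on $P_i(e')$,
\[
\E[X_c] \;=\; \prod_{e'\in (N_{i,c}(u)\cup N_{i,c}(v))\cap S_i}\Bigl(1-\tfrac{1}{|P_i(e')|}\Bigr).
\]
By $\e_i$, every factor lies in $1-(1\pm O(\gamma_i))/A$ with $A:=(1-\eps)^{2(i-1)}\Delta$; by $\C_i$ and $\B_i$, the number of factors is $|N_{i,c}(u)\cap S_i|+|N_{i,c}(v)\cap S_i|=2\eps A\,(1\pm O(\eps+\gamma_i))$. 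Taking logarithms and using $\ln(1-x)=-x-O(x^2)$ gives $\ln \E[X_c]=-2\eps\pm O(\eps^2+\eps\gamma_i)$, hence $\E[X_c]=(1-\eps)^2\bigl(1\pm O(\eps^2+\eps\gamma_i)\bigr)$ (using $e^{-2\eps}=(1-\eps)^2(1+O(\eps^2))$). Summing over the $|P_i(e)|=(1-\eps)^{2(i-1)}(1\pm\gamma_i)\Delta$ colors in $P_i(e)$ yields
\[
\E[|P_{i+1}(e)|] \;=\; (1-\eps)^{2i}\,\Delta\,\bigl(1\pm(\gamma_i+O(\eps\gamma_i+\eps^2))\bigr),
\]
and the error lies well inside $\gamma_{i+1}=(1+K\eps)\gamma_i+K\eps^2$ for $K=48$, leaving a multiplicative slack of roughly $(K/2)\eps^2$ for the concentration step.

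The second main step establishes concentration by McDiarmid's bounded-differences inequality applied to the independent color choices $\{c(e')\}_{e'\in S_i\cap (N(u)\cup N(v))}$. Changing a single $c(e')$ alters $|P_{i+1}(e)|$ by at most $2$ (it can release one color and block another). The number of variables is $|S_i\cap N(u)|+|S_i\cap N(v)|=O(\eps\Delta)$ by $\B_i$, so McDiarmid gives
\[
\Pr\bigl[\,\bigl||P_{i+1}(e)|-\E[|P_{i+1}(e)|]\bigr|>t\,\bigr]\;\leq\;2\exp\!\bigl(-\Omega(t^2/(\eps\Delta))\bigr).
\]
Setting $t=\Theta(\eps^2(1-\eps)^{2i}\Delta)$ (which fits inside the slack above), and using $(1-\eps)^{2i}\ge \eps^{1/K}$ together with the lower bound $\eps\ge 10(\ln n/\Delta)^{1/6}$ from \eqref{eq:eps}, the exponent is $\Omega(\eps^{3+2/K}\Delta)\gg \ln n$. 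This gives a failure probability at most $n^{-1502}$ for a single edge, and a union bound over the at most $n^2$ edges in $E_{i+1}$ yields $\Pr[\e_{i+1}\mid r_{<i}\cup r_i^{(\mathrm{edges})}]\ge 1-1/n^{1500}$.

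The main obstacle is the bookkeeping of multiplicative errors in Step~1: one has to show that the combined error terms $\gamma_i$, $\eps\gamma_i$, and $\eps^2$ that accumulate in $\E[|P_{i+1}(e)|]$, together with the McDiarmid deviation of order $\sqrt{\eps\Delta\ln n}$, all fit inside the budget $\gamma_{i+1}=(1+K\eps)\gamma_i+K\eps^2$ scaled by $(1-\eps)^{2i}\Delta$. The recursion \eqref{eq:gamma:2} is precisely calibrated so that the linear-in-$\gamma_i$ amplification $(1+K\eps)\gamma_i$ absorbs the $O(\eps\gamma_i)$ inflation and the additive $K\eps^2$ term absorbs both $O(\eps^2)$ from the expectation computation and the McDiarmid deviation, provided $K$ is chosen large enough and $\eps$ obeys \eqref{eq:eps}.
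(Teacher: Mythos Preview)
Your proof is correct and follows the same overall plan as the paper: fix an edge $e=(u,v)\in E_{i+1}$, write $|P_{i+1}(e)|$ as a sum of survival indicators $X_c$ over $c\in P_i(e)$, estimate the expectation, prove concentration, and union bound. Your expectation computation is essentially identical to the paper's (the paper factors the product via events $\Gamma_{u,c},\Gamma_{v,c}$ and inclusion--exclusion rather than expanding the product directly, but both arrive at $\E[X_c]=(1-\eps)^2(1\pm O(\eps\gamma_i+\eps^2))$).

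The one genuine difference is in the concentration step. The paper observes that the indicators $\{X_c\}_{c\in P_i(e)}$ are negatively associated (via the 0--1 principle for each sampled edge's color choice plus closure under monotone functions) and applies the Hoeffding bound for NA variables over $|P_i(e)|\le (1+\eps^2)\Delta$ terms, obtaining deviation $50\sqrt{\Delta\ln n}$ with failure probability $n^{-2000}$. You instead apply McDiarmid directly to the independent color variables $\{c(e')\}_{e'\in S_i\cap(N_i(u)\cup N_i(v))}$, using that changing one color changes $|P_{i+1}(e)|$ by at most~$2$ and that there are only $O(\eps\Delta)$ such variables. Your route is more elementary (no NA machinery) and actually yields a sharper exponent, by a factor $\Theta(1/\eps)$; either bound is comfortably sufficient to absorb the deviation into the slack $\gamma_{i+1}-\gamma_i-O(\eps\gamma_i+\eps^2)=\Theta(\eps\gamma_i+\eps^2)$ left by the recursion~\eqref{eq:gamma:2}.
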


\begin{lem}
\label{key-lem-c}
Consider any round $i \in [t_{\eps}-2]$, and fix any instantiation of the  bits $r_{<i} \cup r_i^{(\text{edges})}$ which ensure the occurrence of the event $\e_i \cap \C_i \cap \B_i$. 	Then we have:  $\Pr\left[\mathcal{C}_{i+1} \mid r_{<i} \cup r_i^{(\text{edges})}\right]   \geq 1-1/n^{500}$.
\end{lem}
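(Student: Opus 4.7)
The plan is, for each fixed pair $(v,c)\in V\times [(1+\eps^2)\Delta]$, to show $|N_{i+1,c}(v)|=(1-\eps)^{2i}(1\pm\gamma_{i+1})\Delta$ with probability at least $1-1/n^{502}$, then union-bound over the at most $n^2$ such pairs to conclude. After conditioning on $r_{<i}\cup r_i^{(\mathrm{edges})}$, the only remaining randomness is the mutually independent tentative colors $\{c(e):e\in S_i\}$, each sampled uniformly from $P_i(e)$ (or set to $\mathrm{null}$). Writing
\[
|N_{i+1,c}(v)|=\sum_{(u,v)\in A_c(v)} X_u,\qquad A_c(v):=N_{i,c}(v)\setminus S_i,
\]
with $X_u:=\mathbf{1}[c\in P_{i+1}(u)]=\mathbf{1}[\text{no edge of }N_{i,c}(u)\cap S_i\text{ tentatively picks color }c]$, I note that $A_c(v)$ is fully determined by the conditioning.

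Step two is to pin down the expectation under $\e_i\cap\C_i\cap\B_i$. From $\C_i$ and $\B_i$, $|A_c(v)|=(1-(\eps\pm\eps^2))|N_{i,c}(v)|=(1-\eps)^{2i-1}(1\pm(\gamma_i+O(\eps^2)))\Delta$. For each $u$ with $(u,v)\in A_c(v)$, independence of tentative colors gives
\[
\Pr[X_u=1]=\prod_{e\in N_{i,c}(u)\cap S_i}\Bigl(1-\tfrac{1}{|P_i(e)|}\Bigr).
\]
Plugging in $|P_i(e)|=(1-\eps)^{2(i-1)}(1\pm\gamma_i)\Delta$ from $\e_i$, $|N_{i,c}(u)\cap S_i|=(\eps\pm\eps^2)(1-\eps)^{2(i-1)}(1\pm\gamma_i)\Delta$ from $\C_i\cap\B_i$, and using $(1-p)^k=\exp(-kp(1+O(p)))$, one obtains $\Pr[X_u=1]=(1-\eps)(1\pm O(\eps(\gamma_i+\eps)))$. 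Multiplying through,
\[
\mathbb{E}[|N_{i+1,c}(v)|]=(1-\eps)^{2i}\Delta\cdot(1\pm C_0(\gamma_i+\eps^2))
\]
for some absolute constant $C_0$. Since $K$ in \eqref{eq:t} is chosen large enough (e.g., $K=48\gg C_0$), this leaves a slack of order $(K-C_0)\eps(\gamma_i+\eps)\cdot(1-\eps)^{2i}\Delta$ below the allowed deviation $\gamma_{i+1}(1-\eps)^{2i}\Delta$.

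Step three establishes concentration around this mean. The $X_u$'s fail to be mutually independent only through edges in $M:=\{e\in S_i:e\subseteq A_c(v),\ c\in P_i(e)\}$, since otherwise the edge-sets $N_{i,c}(u)\cap S_i$ for distinct $u$'s are pairwise disjoint. I would therefore first condition on the tentative colors of edges in $M$: the surviving $X_u$'s become functions of disjoint independent color variables, hence are mutually independent, and a standard Chernoff/Hoeffding bound yields $|Y-\mathbb{E}[Y\mid M]|=O(\sqrt{\Delta\ln n})$ with failure probability $\ll n^{-502}$. Separately, by Hoeffding on the edge-sampling already captured in $r_i^{(\mathrm{edges})}$, $|M|=O(\eps\Delta)$ w.h.p., and its contribution to $\sum_u(1-X_u)$ is bounded by $2|M|$, which is itself well within the slack. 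Combining, $\bigl||N_{i+1,c}(v)|-\mathbb{E}[|N_{i+1,c}(v)|]\bigr|=O(\sqrt{\Delta\ln n})$ with probability $1-1/n^{502}$. By \eqref{eq:eps}, $\sqrt{\Delta\ln n}\le \eps^3\Delta/10^3$, which is comfortably inside the slack since $\eps^3\le\eps(\gamma_i+\eps)$ and $(1-\eps)^{2i}\ge\eps^{1/K}$ under our parameter regime.

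The main obstacle is the tight error bookkeeping: the recursion $\gamma_{i+1}=(1+K\eps)\gamma_i+K\eps^2$ is engineered precisely so that the expectation-drift $C_0(\gamma_i+\eps^2)$ plus the additive concentration-deviation stay strictly below $\gamma_{i+1}$, and a sufficiently large absolute constant $K$ is what makes the induction close. A secondary and more delicate point is the partial dependence among the $X_u$'s introduced by the matching-like set $M$; splitting off its contribution by the two-stage conditioning sketched above, rather than applying McDiarmid directly (which would be too lossy for small $\eps$), is what allows genuine Chernoff/Hoeffding concentration over truly independent indicators and yields the $O(\sqrt{\Delta\ln n})$ error needed to close the argument.
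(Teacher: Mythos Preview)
Your expectation computation is essentially the same as the paper's, and your overall plan (fix $(v,c)$, control the mean, get concentration, union-bound) matches. The gap is in the concentration step.

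The claim ``$|M|=O(\eps\Delta)$ w.h.p.'' is not justified and is in general false. After conditioning on $r_{<i}\cup r_i^{(\text{edges})}$, the set $S_i$ is \emph{fixed}, so there is no remaining ``edge-sampling'' randomness to appeal to; the events $\e_i,\C_i,\B_i$ you are conditioning on say nothing about how many edges of $S_i$ lie inside the neighborhood $\{u:(u,v)\in A_c(v)\}$. If that neighborhood is dense (e.g., a clique), the number of such edges in $E_i$ can be $\Theta((1-\eps)^{2(i-1)}\Delta)^2$, and hence $|M|$ can be $\Theta\bigl(\eps(1-\eps)^{4(i-1)}\Delta^2\bigr)$, not $O(\eps\Delta)$. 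Consequently, the crude bound ``the contribution to $\sum_u(1-X_u)$ is at most $2|M|$'' is useless, and even the expected number of edges of $M$ that tentatively pick $c$ is on the order of $|M|/|P_i(\cdot)|=\Theta(\eps(1-\eps)^{2(i-1)}\Delta)$, which already exceeds the slack $\Theta(\eps^2(1-\eps)^{2i}\Delta)$ you left between the mean and the $\gamma_{i+1}$-window. So the two-stage conditioning with plain Hoeffding does not close.

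The paper handles exactly this obstacle by \emph{not} trying to decouple the $X_u$'s. It writes $|N_{i+1,c}(v)|=f\bigl(\{X_e\}_{e\in T_i(v)}\bigr)$, where $X_e=\mathbf{1}[c(e)=c]$ for $e\in T_i(v):=\bigcup_{u:(u,v)\in A_c(v)}(N_{i,c}(u)\cap S_i)$. These $X_e$'s are genuinely independent, $f$ is $2$-Lipschitz in each coordinate, and---crucially---the conditional variance contributed by each coordinate is at most $4/|P_i(e)|=O\bigl(1/((1-\eps)^{2(i-1)}\Delta)\bigr)$. Even though $|T_i(v)|$ can be as large as $\Theta(\eps(1-\eps)^{4(i-1)}\Delta^2)$, this makes $\sum_e\lambda_e=O(\eps\Delta)\le\Delta$, and a Bernstein/bounded-variances inequality (\Cref{lm:bernstein:new}) then yields the required $\pm O(\sqrt{\Delta\ln n})$ deviation. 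Your own remark that ``McDiarmid directly would be too lossy'' is exactly right; the fix is the variance-aware inequality, not a decoupling via $M$.
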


\begin{cor}
\label{cor:key-lem:key}
We have $\Pr\left[\bigcap_{i=1}^{t_{\eps}-1} \left(\mathcal{E}_{i} \cap \mathcal{C}_{i} \cap \B_i \right)\right]   \geq 1-1/n^{400}$.
\end{cor}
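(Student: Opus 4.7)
\textbf{Proof plan for \Cref{cor:key-lem:key}.} The three preceding lemmas are structured so that they chain together: \Cref{lm:sampled:degree} shows that $\mathcal{E}_i\cap\mathcal{C}_i$ implies $\mathcal{B}_i$ w.h.p., and \Cref{key-lem}, \Cref{key-lem-c} show that $\mathcal{E}_i\cap\mathcal{C}_i\cap\mathcal{B}_i$ implies $\mathcal{E}_{i+1}\cap\mathcal{C}_{i+1}$ w.h.p. Combined with the fact that $\mathcal{E}_1,\mathcal{C}_1$ hold deterministically (by the remark after the event definitions, using that the input degrees lie in $(1\pm\eps^2)\Delta$), this sets up an induction on $i$ whose failure probabilities are then controlled by a union bound.

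The plan is the following. Let $\mathcal{G}_i := \bigcap_{j=1}^{i}(\mathcal{E}_j\cap\mathcal{C}_j\cap\mathcal{B}_j)$ for $i\in[t_\eps-1]$, and adopt the convention $\mathcal{G}_0 = \mathcal{E}_1\cap\mathcal{C}_1$, which occurs with probability $1$. First I would show, by induction on $i$, that
\[
\Pr[\mathcal{G}_i] \;\ge\; 1 - i\cdot\frac{3}{n^{500}}.
\]
For the inductive step, assume $\Pr[\mathcal{G}_{i-1}\cap\mathcal{E}_i\cap\mathcal{C}_i]\ge 1 - (i-1)\cdot\tfrac{3}{n^{500}} - \tfrac{1}{n^{500}}$ (the case $i=1$ being trivial). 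Since the event $\mathcal{G}_{i-1}\cap\mathcal{E}_i\cap\mathcal{C}_i$ is measurable with respect to $r_{<i}$, we can write
\[
\Pr[\mathcal{G}_{i-1}\cap\mathcal{E}_i\cap\mathcal{C}_i\cap\mathcal{B}_i] \;=\; \sum_{r}\Pr[\mathcal{B}_i\mid r_{<i}=r]\cdot\Pr[r_{<i}=r],
\]
where the sum is over instantiations $r$ for which $\mathcal{G}_{i-1}\cap\mathcal{E}_i\cap\mathcal{C}_i$ holds. \Cref{lm:sampled:degree} bounds each term's conditional probability by at least $1-1/n^{1500}$, yielding $\Pr[\mathcal{G}_{i-1}\cap\mathcal{E}_i\cap\mathcal{C}_i\cap\mathcal{B}_i]\ge\Pr[\mathcal{G}_{i-1}\cap\mathcal{E}_i\cap\mathcal{C}_i]\cdot(1-1/n^{1500})$. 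An entirely analogous application of \Cref{key-lem} and \Cref{key-lem-c}, conditioning on instantiations of $r_{<i+1}\cup r_{i+1}^{(\text{edges})}$ realizing $\mathcal{G}_{i-1}\cap\mathcal{E}_i\cap\mathcal{C}_i\cap\mathcal{B}_i$ (where the probabilities are $\ge 1-1/n^{1500}$ and $\ge 1-1/n^{500}$ respectively) then gives
\[
\Pr[\mathcal{G}_i\cap\mathcal{E}_{i+1}\cap\mathcal{C}_{i+1}] \;\ge\; \Pr[\mathcal{G}_{i-1}\cap\mathcal{E}_i\cap\mathcal{C}_i] - \frac{3}{n^{500}},
\]
completing the induction.

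Finally, applying the bound at $i=t_\eps-1$, I would use that $t_\eps\le \ln(1/\eps)/(2K\eps) = O(\log n)$ under the conditions on $\eps$ from~\eqref{eq:eps}, so $3(t_\eps-1)/n^{500}\le 1/n^{400}$ for $n$ sufficiently large. This yields the claimed bound $\Pr\bigl[\mathcal{G}_{t_\eps-1}\bigr]\ge 1-1/n^{400}$. There is no real obstacle here: all the work is in the preceding three lemmas; this corollary is a bookkeeping step consisting of the induction setup plus a union bound, with care taken only in correctly using the per-instantiation conditional bounds provided by the lemmas (which is why the formulation in terms of fixing $r_{<i}$, rather than merely conditioning on events, matters).
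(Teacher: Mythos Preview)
Your approach is essentially the same as the paper's: chain the three lemmas inductively, using that $\mathcal{E}_1\cap\mathcal{C}_1$ holds deterministically, and sum the failure probabilities. Two small slips to fix: (i) when applying \Cref{key-lem} and \Cref{key-lem-c} you should condition on $r_{<i}\cup r_i^{(\text{edges})}$ (which determines $\mathcal{E}_i\cap\mathcal{C}_i\cap\mathcal{B}_i$), not $r_{<i+1}\cup r_{i+1}^{(\text{edges})}$; (ii) the bound $t_\eps=O(\log n)$ is not correct under~\eqref{eq:eps}---for $\eps$ near the lower end, $t_\eps$ can be polynomial in $n$---but this is harmless since all you need is $t_\eps\le n$ (as the paper uses) to get $3(t_\eps-1)/n^{500}\le 1/n^{400}$.
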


\begin{proof}
First, recall that $\Pr[\e_1] = \Pr[\C_1] = 1$, and hence~\Cref{lm:sampled:degree} gives us:
\begin{equation}
\label{eq:cor:key-lem:key:100}
\Pr[\e_1 \cap \C_1 \cap \B_1] \geq 1 - 1/n^{1500}. 
\end{equation}
Next, applying a union bound over  \Cref{key-lem} and~\Cref{key-lem-c}, we get:
\begin{equation}
\label{eq:cor:key-lem:key:1}
\Pr[\mathcal{E}_{i+1} \cap \C_{i+1} \mid \mathcal{E}_i \cap \mathcal{C}_i \cap \B_i]   \geq 1-2/n^{500} \text{ for all rounds } i \in [t_{\eps} - 2].
\end{equation}
Accordingly, from~(\ref{eq:cor:key-lem:key:1}) and~\Cref{lm:sampled:degree} we infer that:
\begin{eqnarray}
\Pr[\e_{i+1} \cap \C_{i+1} \cap \B_{i+1} \mid \e_i \cap \C_i \cap \B_i] & = & \Pr[\e_{i+1} \cap \C_{i+1} \mid \e_i \cap \C_i \cap \B_i] \cdot \Pr[\B_{i+1} \mid \e_{i+1} \cap \C_{i+1}] \nonumber \\
& \geq & \left(1 - 2/n^{500}\right) \cdot \left(1 - 1/n^{1500}\right) \text{ for all  } i \in [t_{\eps} - 2]. \label{eq:cor:key-lem:key:105}
\end{eqnarray}
Now, from~(\ref{eq:cor:key-lem:key:100}) and~(\ref{eq:cor:key-lem:key:105}) we derive that:
\begin{eqnarray*}
\Pr\left[\bigcap_{i=1}^{t_{\eps}-1} \left(\mathcal{E}_{i} \cap \mathcal{C}_{i} \cap \B_i \right)\right] & = & \Pr[\e_1 \cap \C_1 \cap \B_1] \cdot \prod_{i=1}^{t_{\eps}-2} \Pr[\e_{i+1} \cap \C_{i+1} \cap \B_{i+1} \mid \e_i \cap \C_i \cap \B_i] \\
& \geq & \left(1 - 1/n^{1500}\right) \cdot \left(1 - 2/n^{500}\right)^{t_{\eps}} \cdot \left(1 - 1/n^{1500}\right)^{t_{\eps}} \\
& \geq & \left(1 - 1/n^{1500}\right) \cdot \left(1 - 2t_{\eps}/n^{500}\right) \cdot \left(1 - t_{\eps}/n^{1500}\right) \\
& \geq & 1 - 1/n^{1500} - 2t_{\eps}/n^{500} - t_{\eps}/n^{1500} \geq 1 - 1/n^{400}.
\end{eqnarray*}
In the derivation above, the last inequality holds since $t_{\eps} \leq n$ and $n \geq 2$. 
\end{proof}

In order to prove \Cref{main:th:main}, we need to upper bound  the maximum degree of any node in the subgraph $G_{t_{\eps}} \cup G_F$. Accordingly,~\Cref{cor:sampled:degree} upper bounds the maximum degree in the subgraph $G_{t_{\eps}}$, whereas~\Cref{cor:degree-failures} (which follows from~\Cref{degree-failures}), upper bounds the maximum degree in the subgraph $G_F$.~\Cref{sec:degree-failures} contains the proof of~\Cref{degree-failures}.

\begin{cor}
\label{cor:sampled:degree}
$\Delta(G_{t_{\eps}}) = O\left(\epsilon^{1/(3K)} \cdot \Delta\right)$ with probability at least $1 - 1/n^{400}$.
\end{cor}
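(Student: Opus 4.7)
\textbf{Proof proposal for \Cref{cor:sampled:degree}.}

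The plan is to derive the bound as an immediate consequence of \Cref{cor:key-lem:key}, which guarantees that the event $\bigcap_{i=1}^{t_\eps - 1}(\e_i \cap \C_i \cap \B_i)$ holds with probability at least $1 - 1/n^{400}$. I will condition on this event throughout, and under this conditioning I will show the deterministic bound $\Delta(G_{t_\eps}) = O(\eps^{1/(3K)}\Delta)$.

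The key observation is that $N_{i+1}(v) = N_i(v) \setminus S_i$ for every node $v$ and every round $i \in [t_\eps - 1]$, because $E_{i+1} = E_i \setminus S_i$ by construction. Hence $|N_{i+1}(v)| = |N_i(v)| - |N_i(v) \cap S_i|$, and the event $\B_i$ directly yields
\[
|N_{i+1}(v)| \;=\; |N_i(v)| - (\eps \pm \eps^2)\cdot |N_i(v)| \;=\; (1 - \eps \pm \eps^2)\cdot |N_i(v)|.
\]
Iterating this relation from $i = 1$ up to $i = t_\eps - 1$, and using the base estimate $|N_1(v)| \leq (1+\eps^2)\Delta$ (since $G$ is assumed near-regular per~\eqref{eq:eps}), I obtain
\[
|N_{t_\eps}(v)| \;\leq\; (1 - \eps + \eps^2)^{t_\eps - 1}\cdot (1+\eps^2)\,\Delta.
\]

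It remains to plug in $t_\eps = \lfloor \ln(1/\eps)/(2K\eps)\rfloor$ (see~\eqref{eq:t}) and simplify. Using $1 - \eps + \eps^2 \leq \exp(-\eps(1-\eps))$ and the fact that $\eps \leq 1/10^4$ ensures $(1-\eps)(t_\eps - 1)\eps \geq \ln(1/\eps)/(3K)$ (the constant $3K$ absorbs the rounding in the floor and the $(1-\eps)$ factor), I get
\[
(1 - \eps + \eps^2)^{t_\eps - 1} \;\leq\; \exp\!\left(-(1-\eps)\eps (t_\eps-1)\right) \;\leq\; \exp\!\left(-\tfrac{1}{3K}\ln(1/\eps)\right) \;=\; \eps^{1/(3K)}.
\]
Combining with the factor $(1+\eps^2) \leq 2$ gives $|N_{t_\eps}(v)| = O(\eps^{1/(3K)}\Delta)$, which holds simultaneously for every $v \in V$ on our conditioned event. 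Taking the maximum yields $\Delta(G_{t_\eps}) = O(\eps^{1/(3K)}\Delta)$, and the failure probability $1/n^{400}$ is inherited from \Cref{cor:key-lem:key}.

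The argument is essentially bookkeeping; the substantive work is packed into \Cref{cor:key-lem:key}, so the only delicate step here is verifying that the constants in the exponent $\ln(1/\eps)/(2K\eps)$ leave enough slack to absorb the $\eps^2$ error term from $\B_i$, the rounding in $\lfloor \cdot\rfloor$, and the $(1-\eps)$ factor; this is exactly why I weaken $1/(2K)$ to $1/(3K)$ in the final exponent.
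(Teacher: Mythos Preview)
Your proof is correct and follows essentially the same route as the paper's own argument: condition on the high-probability event from \Cref{cor:key-lem:key}, use $\B_i$ to get $|N_{i+1}(v)| \leq (1-\eps+\eps^2)\,|N_i(v)|$, iterate, and then plug in $t_\eps \approx \ln(1/\eps)/(2K\eps)$ while weakening $1/(2K)$ to $1/(3K)$ in the exponent to absorb the slack. The only (cosmetic) difference is that you correctly track $t_\eps - 1$ iterations rather than $t_\eps$, but this is immaterial for the big-$O$ conclusion.
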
 

\begin{proof}
Define the event $\B := \bigcap_{i \in [t_{\eps}-1]} \B_i$. From~\Cref{cor:key-lem:key}, we infer that $\Pr[\B] \geq 1 - 1/n^{400}$. Henceforth, we condition on the event $\B$.

Consider any node $v \in V$. Conditioned on the event $\B$, we have $|S_i \cap N_i(v)| \geq (\eps-\eps^2)  \cdot |N_i(v)|$ for each round $i \in [t_{\eps}- 1]$. Since $|N_{i+1}(v)| = |N_i(v)| - |N_i(v) \cap S_i|$, we infer that:
\begin{equation}
\label{eq:cor:sampled:degree:1}
|N_{i+1}(v)| \leq (1-\eps+\eps^2) \cdot |N_i(v)|  \text{ for all } i \in [t_{\eps}-1], \text{ conditioned on the event } \B.
\end{equation}
Since $|N_1(v)| \leq (1+ \eps^{2})\Delta$, it is now easy to derive from~(\ref{eq:cor:sampled:degree:1}) that:
\begin{eqnarray}
|N_{t_{\eps}}(v)| & \leq & (1-\eps +\eps^2)^{t_{\eps}} \cdot (1+\eps^{2}) \cdot \Delta  \nonumber \\
& \leq & \exp(-\eps(1-\eps) t_{\eps}) \cdot  (1+\eps^2) \cdot \Delta   \nonumber \\
& \leq & \exp\left( - \eps(1-\eps) \cdot \left(\frac{\ln (1/\eps)}{2K\eps} - 1\right)\right) \cdot (1+\eps^2) \cdot \Delta  \qquad \qquad \ \ \ \  \  (\text{by~(\ref{eq:eps}) and~(\ref{eq:t})})   \nonumber \\
& \leq & \exp\left( - \frac{(1-\eps)\ln(1/\eps)}{2K} + 2\eps\right) \cdot (1+\eps^2) \cdot \Delta \qquad \qquad \qquad  \ \ \ \ \ (\text{by~(\ref{eq:eps})}) \nonumber \\
& \leq & \eps^{(1-\eps)/(2K)} \cdot (1+4\eps) \cdot (1+\eps^2) \cdot  \Delta \qquad \qquad \qquad  \qquad \qquad \ \ \  \  (\text{by~(\ref{eq:eps})})  \nonumber \\
& \leq &  O(\eps^{1/(3K)} \cdot  \Delta) \text{ for all  } v \in V, \text{ conditioned on the event } \B.  \ \ \ (\text{by~(\ref{eq:eps}) and~(\ref{eq:t})}) \nonumber
\end{eqnarray}
 The degree of a node $v \in V$ in $G_{t_{\eps}}$ exactly equals $|N_{t_{\eps}}(v)|$. Hence,  we conclude that $\Delta(G_{t_{\eps}}) = O\left(\eps^{1/(3K)}  \Delta\right)$, conditioned on the event $\B$. The corollary follows since   $\Pr[\B] \geq 1 - 1/n^{400}$.
\end{proof}

\begin{lem}\label{degree-failures}
Consider any round $i \in [t_{\eps}-1]$, and fix any instantiation of the bits $r_{< i} \cup r_i^{(\text{edges})}$ which ensure the occurrence of  the event $\e_i \cap \C_i \cap \B_i$. Then 
$\Pr\left[ \Delta(G_{F_i}) \leq 9 \epsilon^2 \Delta \mid  r_{< i}  \cup r_i^{(\text{edges})} \right] \geq 1 - 1/n^{300}$, where $G_{F_i} = (V, F_i)$ denotes the subgraph of $G$ consisting of the failed edges in round $i$.
\end{lem}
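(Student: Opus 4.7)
The plan is to fix an arbitrary vertex $v\in V$, show that $|F_i\cap N_i(v)|\le 9\eps^2\Delta$ except with probability at most $n^{-301}$, and then union bound over the $n$ vertices. Under the stated conditioning on $r_{<i}\cup r_i^{(\text{edges})}$, the set $S_i$ and the palette $P_i(e)$ for every $e\in E_i$ are determined, and the only remaining randomness is the collection of mutually independent uniform color choices $\{c(e):e\in S_i\}$; moreover $\mathcal{E}_i$ guarantees $P_i(e)\neq\emptyset$ for every $e\in E_i$, so no edge is labelled ``null'' in this round.

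Following the sketch of \Cref{main:lm:failed}, I first bound the expectation. For any $e=(u,v)\in N_i(v)\cap S_i$ and any fixed $c\in P_i(e)$, a union bound gives
\[
\Pr[e\in F_i\mid c(e)=c]\le \sum_{e'\in N(e)\cap S_i}\Pr[c(e')=c]\le \frac{|N(e)\cap S_i|}{\min_{e'\in N(e)\cap S_i}|P_i(e')|}.
\]
Event $\mathcal{B}_i$ bounds $|N(e)\cap S_i|\le 2(\eps+\eps^2)(1+\eps^2)\Delta$, while $\mathcal{E}_i$ bounds $|P_i(e')|\ge (1-\eps)^{2(i-1)}(1-\gamma_i)\Delta$; averaging over $c\in P_i(e)$ then yields $\Pr[e\in F_i]\le 2\eps\bigl(1+o(1)\bigr)$. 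Summing over the at most $|N_i(v)\cap S_i|\le 2\eps\Delta$ such edges (again by $\mathcal{B}_i$) gives $\E\bigl[|F_i\cap N_i(v)|\bigr]\le 4\eps^2\Delta\bigl(1+o(1)\bigr)$.

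For concentration I split $|F_i\cap N_i(v)|\le A+B$, where $A$ counts failures caused by a collision \emph{inside} $N_i(v)\cap S_i$ and $B$ counts failures of edges $e=(u,v)\in N_i(v)\cap S_i$ that collide only with some edge in $(N_i(u)\setminus N_i(v))\cap S_i$. Letting $A_c:=\{e\in N_i(v)\cap S_i:c(e)=c\}$, one has $A\le 2\sum_c\binom{|A_c|}{2}$; for each fixed $c$, $|A_c|$ is a sum of indicators that are mutually independent across $e$, so the total is a quadratic form in mutually independent indicators, and a second-moment estimate followed by Kim--Vu polynomial concentration (alternatively, a two-stage argument first showing $\max_c|A_c|=O(\log n/\log\log n)$ w.h.p.\ and then applying McDiarmid) yields $A=\E[A]\pm O(\sqrt{\Delta\log n})$. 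For $B$, I form a Doob martingale by revealing, block by block for each $e=(u,v)\in N_i(v)\cap S_i$, the color $c(e)$ together with all colors $\{c(e'):e'\in N_i(u)\cap S_i\}$; different blocks overlap only in the colors of the few edges of $S_i$ that lie in two different $u$-neighborhoods (edges of the form $(u_1,u_2)$ with both endpoints adjacent to $v$), whose total number is $O(\eps\Delta^2)$ in expectation but whose contribution to any single block is $O(1)$. Azuma--Hoeffding over the $|N_i(v)\cap S_i|=O(\eps\Delta)$ blocks then gives $B=\E[B]\pm O(\sqrt{\Delta\log n})$.

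Combining the expectation bound with both concentration bounds and using $\sqrt{\Delta\log n}\le \eps^3\Delta$, which follows from \eqref{eq:eps}, yields $|F_i\cap N_i(v)|\le 4\eps^2\Delta\bigl(1+o(1)\bigr)+O(\sqrt{\Delta\log n})\le 9\eps^2\Delta$ with probability at least $1-n^{-301}$; a union bound over $v\in V$ gives the lemma. The main obstacle is the concentration for $A$: revealing a single $c(e)$ with $e\in N_i(v)\cap S_i$ can change the raw collision count by as much as $|N_i(v)\cap S_i|\approx 2\eps\Delta$, which is far too large for a naive bounded-differences argument. Overcoming this forces one to exploit the quadratic structure of $\sum_c\binom{|A_c|}{2}$, either directly via polynomial concentration or by first controlling the largest color class at $v$ and conditioning on its size before applying McDiarmid.
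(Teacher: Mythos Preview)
Your overall plan—fix a vertex $v$, split the failures at $v$ into ``internal'' collisions (your $A$, the paper's $F_i^{(1)}(v)$) and ``external'' collisions (your $B$, the paper's $F_i^{(2)}(v)$), bound each in expectation and then concentrate—matches the paper exactly. But both of your concentration arguments are substantially harder than they need to be, and the one for $A$ rests on a false obstacle.

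For $A$, you work with the quadratic proxy $2\sum_c\binom{|A_c|}{2}$ and then worry that flipping a single $c(e)$ can change this by as much as $|N_i(v)\cap S_i|$. But the quantity you actually care about is $|F_i^{(1)}(v)|=\sum_c |A_c|\cdot\mathds{1}[|A_c|\ge 2]=\sum_c\phi(|A_c|)$ with $\phi(y)=y\cdot\mathds{1}[y\ge 2]$. Moving one edge's color from $c_1$ to $c_2$ changes $|A_{c_1}|$ and $|A_{c_2}|$ by one each, and $|\phi(y\pm 1)-\phi(y)|\le 2$, so $|F_i^{(1)}(v)|$ is $4$-Lipschitz in the independent variables $\{c(e):e\in N_i(v)\cap S_i\}$. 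Plain bounded differences over $|N_i(v)\cap S_i|\le 2\eps\Delta$ coordinates already gives the desired $\pm O(\sqrt{\Delta\ln n})$ fluctuation—no Kim--Vu, no two-stage argument.

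For $B$, the paper first conditions on \emph{all} colors $\{c(e):e\in N_i(v)\cap S_i\}$; this fixes the set of ``target'' colors and turns the remaining randomness into an independent balls-into-bins process (each edge $e'\in T_i(v)$ throws a ball into a target bin with probability $1/|P_i(e')|$). Then $|F_i^{(2)}(v)|$ is precisely the number of nonempty target bins, whose indicators are negatively associated, so Hoeffding for NA variables gives the concentration directly. Your block-by-block Doob martingale is shakier: the blocks $\{c(e')\,:\,e'\in N_i(u)\cap S_i\}$ for different $u$ genuinely overlap (any $(u_1,u_2)\in S_i$ with both endpoints adjacent to $v$ sits in two blocks), so they cannot be revealed as disjoint groups, and your claim that the overlap ``contributes $O(1)$ to any single block'' is not justified.
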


\begin{cor}
\label{cor:degree-failures}
$\Delta(G_F) = O(\epsilon \ln(1/\epsilon) \cdot \Delta)$ with probability at least $1 - 1/n^{200}$.
 \end{cor}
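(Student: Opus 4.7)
The plan is to combine Corollary~\ref{cor:key-lem:key}, which ensures all the key events $\e_i \cap \C_i \cap \B_i$ hold simultaneously across all rounds w.h.p., with Lemma~\ref{degree-failures}, which bounds the per-round failure degree $\Delta(G_{F_i})$ conditioned on these events. Since $G_F = \bigcup_{i=1}^{t_{\eps}-1} G_{F_i}$, the degree of any node in $G_F$ is at most the sum of its degrees in each $G_{F_i}$, so a per-round bound of $O(\eps^2 \Delta)$ summed across $t_\eps - 1 = O(\log(1/\eps)/\eps)$ rounds yields precisely the desired $O(\eps \ln(1/\eps) \Delta)$ overall bound.

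More concretely, first I would define the global good event $\B^* := \bigcap_{i=1}^{t_{\eps}-1}(\e_i \cap \C_i \cap \B_i)$, which by Corollary~\ref{cor:key-lem:key} has $\Pr[\B^*] \geq 1 - 1/n^{400}$. Next, I would fix any round $i \in [t_\eps - 1]$ and note that the event $\e_i \cap \C_i \cap \B_i$ is determined by the bits $r_{<i} \cup r_i^{(\text{edges})}$. Therefore, averaging the bound of Lemma~\ref{degree-failures} over all instantiations of $r_{<i} \cup r_i^{(\text{edges})}$ consistent with $\B^*$ (and the subsequent random bits $r_i^{(\text{colors})}, r_{>i}$), we obtain
\[
\Pr\bigl[\Delta(G_{F_i}) > 9\eps^2 \Delta \;\wedge\; \B^*\bigr] \leq 1/n^{300}.
\]
Taking a union bound over all $i \in [t_\eps - 1]$ (where $t_\eps \leq n$) gives that, except on a failure event of probability at most $1/n^{400} + (t_\eps-1)/n^{300} \leq 1/n^{200}$, we simultaneously have $\B^*$ and $\Delta(G_{F_i}) \leq 9\eps^2 \Delta$ for every round $i$.

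Finally, on this high-probability event, for every node $v \in V$,
\[
\deg_v(G_F) \;\leq\; \sum_{i=1}^{t_\eps - 1} \deg_v(G_{F_i}) \;\leq\; (t_\eps - 1) \cdot 9\eps^2 \Delta.
\]
Substituting $t_\eps \leq \ln(1/\eps)/(2K\eps)$ from~(\ref{eq:t}) yields $\deg_v(G_F) = O(\eps \ln(1/\eps) \Delta)$, which establishes the bound on $\Delta(G_F)$.

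The proof is essentially a bookkeeping exercise built on top of the two heavy lemmas already in place, so there is no genuine obstacle; the only mild subtlety is being careful about the conditioning when applying Lemma~\ref{degree-failures}, since its hypothesis fixes the bits $r_{<i} \cup r_i^{(\text{edges})}$ rather than just the abstract event $\e_i \cap \C_i \cap \B_i$. However, this is handled by noting that $\B^*$ is a union of such fixings and applying the lemma pointwise before integrating back.
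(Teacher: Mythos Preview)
Your proposal is correct and follows essentially the same approach as the paper's proof: both combine Corollary~\ref{cor:key-lem:key} with Lemma~\ref{degree-failures} to get a per-round bound $\Delta(G_{F_i}) \leq 9\eps^2\Delta$ w.h.p., then union-bound over the $t_\eps - 1 = O(\ln(1/\eps)/\eps)$ rounds and sum the per-round degrees. The only cosmetic difference is that the paper conditions on the per-round event $\e_i \cap \C_i \cap \B_i$ directly (obtaining $\Pr[\Delta(G_{F_i}) \leq 9\eps^2\Delta] \geq 1 - 1/n^{250}$ for each $i$), whereas you work with the global intersection $\B^*$; your handling of the conditioning subtlety is, if anything, slightly more explicit than the paper's.
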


\begin{proof}
From~\Cref{cor:key-lem:key} and~\Cref{degree-failures}, we infer that:
\begin{eqnarray}
\Pr\left[\Delta(G_{F_i}) =O(\eps^{2}\cdot \Delta)\right] & = & \Pr\left[\Delta(G_{F_i}) =O(\eps^{2}\cdot \Delta) \mid \e_i \cap \C_i \cap \B_i \right] \cdot \Pr[\e_i \cap \C_i \cap \B_i] \nonumber \\
& \geq & \left(1 - 1/n^{300} \right) \cdot \left(1 - 1/n^{400}\right) \nonumber \\
& \geq & 1 - 1/n^{300} - 1/n^{400} \nonumber \\
& \geq & 1 - 1/n^{250} \text{ for each round } i \in [t_{\eps} - 1]. \label{eq:cor:degree-failures:1}
\end{eqnarray}
As $\Delta(G_F) \leq \sum_{i=1}^{t_{\eps}-1} \Delta(G_{F_i})$ and $t_{\eps} \leq n$, the corollary follows from~(\ref{eq:t}) and a union bound over~(\ref{eq:cor:degree-failures:1}).\end{proof}

\Cref{main:th:main} now follows by  \Cref{cor:sampled:degree} and \Cref{cor:degree-failures}.

\subsection{A couple of important technical claims}

Here, we prove two technical claims that will be used multiple times in the subsequent sections.

\begin{cla}
\label{cl:technical:1}
Fix any round $i \in [t_{\eps}-1]$ and condition on the event $\e_i \cap \C_i$. Then we have:
\begin{eqnarray}
(\eps^2/2) \cdot |N_{i, c}(v)| & \geq &  50 \sqrt{\Delta \ln n}    \text{ for all nodes } v \in V \text{ and colors } c \in [(1+\eps^2)\Delta]. \label{eq:technical:1:1} \\
(\eps^2/2) \cdot |N_i(v)| & \geq & 50 \sqrt{\Delta \ln n} \text{ for all nodes } v \in V. \label{eq:technical:1:2}\\
(\eps^2/2) \cdot |P_i(e)| & \geq & 50 \sqrt{\Delta \ln n} \text{ for all edges } e \in E_i. \label{eq:technical:1:3}
\end{eqnarray} 
\end{cla}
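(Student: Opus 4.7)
The plan is to observe that all three bounds follow from a single lower bound on $(1-\eps)^{2(i-1)}(1-\gamma_i)\Delta$, which is the common lower bound on the three quantities in question under the conditioning on $\e_i \cap \C_i$. Specifically, event $\e_i$ gives $|P_i(e)| \geq (1-\eps)^{2(i-1)}(1-\gamma_i)\Delta$ for all $e \in E_i$, event $\C_i$ gives $|N_{i,c}(v)| \geq (1-\eps)^{2(i-1)}(1-\gamma_i)\Delta$ for all $v \in V, c \in [(1+\eps^2)\Delta]$, and the containment $N_{i,c}(v) \subseteq N_i(v)$ (immediate from the definitions) yields $|N_i(v)| \geq |N_{i,c}(v)| \geq (1-\eps)^{2(i-1)}(1-\gamma_i)\Delta$ as well. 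So it suffices to show
\[
(\eps^2/2) \cdot (1-\eps)^{2(i-1)}(1-\gamma_i)\Delta \;\geq\; 50\sqrt{\Delta \ln n}.
\]

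Next, I would lower bound each of the two small factors. Since $i \leq t_\eps - 1$, and using $1-\eps \geq \exp(-2\eps)$ valid for $\eps \leq 1/2$, I would compute $(1-\eps)^{2(i-1)} \geq \exp(-4\eps t_\eps) \geq \eps^{2/K}$, where the last step uses the definition $t_\eps = \lfloor \ln(1/\eps)/(2K\eps) \rfloor$ from~(\ref{eq:t}). With $K=48$, this gives $(1-\eps)^{2(i-1)} \geq \eps^{1/24}$. For the second factor, Corollary~\ref{cor:gamma} yields $\gamma_i \leq \eps^{1/2} \leq 1/2$ (the latter by~(\ref{eq:eps})), so $1-\gamma_i \geq 1/2$. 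Combining, the common lower bound satisfies $(1-\eps)^{2(i-1)}(1-\gamma_i)\Delta \geq \eps^{1/24}\Delta/2$.

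Finally, I would verify the target inequality $\eps^{49/24}\Delta/4 \geq 50\sqrt{\Delta \ln n}$ directly from~(\ref{eq:eps}). Squaring and rearranging, this is equivalent to $\Delta \geq 4 \cdot 10^4 \cdot \eps^{-49/12}\ln n$, which will follow from $\eps \geq 10(\ln n/\Delta)^{1/6}$, i.e., $\Delta \geq 10^6 \eps^{-6}\ln n$, provided $10^6 \eps^{-6} \geq 4\cdot 10^4 \eps^{-49/12}$, i.e., $\eps^{-6+49/12} = \eps^{-23/12} \geq 4/100$, which holds trivially for $\eps \leq 1/10^4$. The only nontrivial step here is the exponent bookkeeping: the exponent $1/(3K)$ appearing in~\Cref{main:th:main} and the exponent $1/6$ in~(\ref{eq:eps}) together pin down $K = 48$ as the right constant, and one must check that the slack in~(\ref{eq:eps}) dominates the $\eps^{-49/24}$ loss picked up here. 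I do not anticipate any genuine obstacle; the entire claim is a bookkeeping consequence of the parameter choices~(\ref{eq:eps}),~(\ref{eq:t}) and Corollary~\ref{cor:gamma}.
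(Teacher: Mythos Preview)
Your proposal is correct and follows essentially the same approach as the paper: both proofs observe that under $\e_i \cap \C_i$ all three quantities are at least $(1-\eps)^{2(i-1)}(1-\gamma_i)\Delta$, lower bound $(1-\eps)^{2(i-1)}$ by $\eps^{2/K}$ via the exponential inequality and~(\ref{eq:t}), bound $1-\gamma_i \geq 1/2$ via Corollary~\ref{cor:gamma}, and finish using the slack in~(\ref{eq:eps}). The only cosmetic difference is that the paper rounds the exponent $2+2/K$ up to $3$ (writing $\eps^3\Delta/4$) before invoking~(\ref{eq:eps}), whereas you keep the sharper $\eps^{49/24}$ and verify the numerics directly.
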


\begin{proof}
Fix any node $v \in V$ and any color $c \in [(1+\eps^2)\Delta]$. Conditioned on the event $\e_i \cap \C_i$, we get:
\begin{align*}
(\eps^2/2) \cdot |N_{i, c}(v)|  \geq & (\eps^2/2) \cdot (1-\eps)^{2(i-1)} \cdot (1-\gamma_i) \cdot \Delta \\
\geq & (\eps^2/2) \cdot (1-\eps)^{2(t_{\eps}-1)} \cdot (1-\eps^{1/2}) \cdot \Delta && (\text{by~\Cref{cor:gamma}}) \\
\geq & (\eps^2/2) \cdot \exp(-4 \eps (t_{\eps}-1)) \cdot (\Delta/2) && (\text{by~(\ref{eq:eps})}) \\
\geq & (\eps^2/2) \cdot \exp\left(- (2/K) \cdot \ln(1/\eps) \right) \cdot (\Delta/2) && (\text{by~(\ref{eq:t})}) \\
= & (\eps^2/2) \cdot \eps^{2/K} \cdot (\Delta/2) \\
\geq & \eps^{3} \cdot (\Delta/4) && (\text{by~(\ref{eq:eps}) and~(\ref{eq:t})}) \\
\geq & 50 \sqrt{\Delta \ln n} && (\text{by~(\ref{eq:eps})})
\end{align*}
Applying the same line of reasoning, one can  derive that $(\eps^2/2) \cdot |P_i(e)| \geq 50 \sqrt{\Delta \ln n}$ for all  $e \in E_i$. Finally,~(\ref{eq:technical:1:2}) follows from~(\ref{eq:technical:1:1}) and the observation that $N_{i}(v) \supseteq N_{i, c}(v)$.
\end{proof}

\begin{cla}
\label{cl:technical:2}
We have: 
$\left(1 - \frac{1}{(1-\epsilon)^{2(i-1)}\cdot (1\pm\gamma_i)  \Delta}\right)^{(\eps \pm \eps^{2}) \cdot (1-\eps)^{2(i-1)} \cdot (1\pm\gamma_i)  \Delta}  = 1 - \eps \pm ( 4 \eps \gamma_i + \eps^2).$
\end{cla}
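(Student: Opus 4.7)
}
This is a purely computational claim, so the plan is to approximate the quantity by an exponential, control the ratio of the two $(1\pm\gamma_i)$ factors, and then use a first-order Taylor expansion of $e^{-\eps}$. Concretely, write the base as $(1 - 1/A)$ and the exponent as $B$, where
\[
A = (1-\eps)^{2(i-1)}(1\pm\gamma_i)\Delta, \qquad B = (\eps\pm\eps^2)(1-\eps)^{2(i-1)}(1\pm\gamma_i)\Delta,
\]
so that the quantity of interest is $(1 - 1/A)^B$. Since $\Delta$ is much larger than any other parameter (by~(\ref{eq:eps})), the quantity $1/A$ is tiny, and the standard two-sided estimate $-y - y^2 \leq \ln(1-y) \leq -y$ valid for small $y$ gives
\[
\exp\!\left(-\tfrac{B}{A} - \tfrac{B}{A^2}\right) \;\leq\; \left(1 - \tfrac{1}{A}\right)^{B} \;\leq\; \exp\!\left(-\tfrac{B}{A}\right).
\]
The error term $B/A^2$ is of order $\eps/\Delta$, which by~(\ref{eq:eps}) is swallowed by any $\eps^2$ additive slack.

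Next I would control the ratio $B/A$. Writing out the factors, $B/A$ equals $(\eps\pm\eps^2)$ times a ratio of the form $(1\pm\gamma_i)/(1\pm\gamma_i)$, where the two $\pm$ signs may be independent. Since $\gamma_i \leq \eps^{1/2} \leq 1/2$ by Corollary~\ref{cor:gamma} and~(\ref{eq:eps}), the elementary inequalities
\[
1 - 2\gamma_i \;\leq\; \frac{1-\gamma_i}{1+\gamma_i} \;\leq\; \frac{1+\gamma_i}{1-\gamma_i} \;\leq\; 1 + 4\gamma_i
\]
show that this ratio lies in $1 \pm 4\gamma_i$. Multiplying by $\eps \pm \eps^2$ and using $\gamma_i \leq 1$, we obtain $B/A = \eps \pm (4\eps\gamma_i + 2\eps^2)$.

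Finally, I would expand the exponential. Since $e^{-x} = 1 - x + x^2/2 - \cdots$, for any $x = \eps \pm \delta$ with $\delta$ small we have $e^{-x} = 1 - \eps \pm (\delta + \eps^2)$ (here $\eps^2$ absorbs the $\eps^2/2$ from the Taylor remainder and any cross term). Applied with $\delta = 4\eps\gamma_i + 2\eps^2$, this yields $\exp(-B/A) = 1 - \eps \pm (4\eps\gamma_i + \eps^2)$ up to the tiny $B/A^2$ slack absorbed into the $\eps^2$ term. Combining the upper and lower exponential bounds above gives the claimed two-sided estimate.

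The only real obstacle is bookkeeping the constants so that the final additive error really fits inside $4\eps\gamma_i + \eps^2$ rather than some slightly larger constant-multiple version; this amounts to carefully choosing which $\eps^2$ terms to absorb where, using the strong upper bound $\eps \leq 10^{-4}$ from~(\ref{eq:eps}) whenever a constant needs to be tamed. No probabilistic content is needed—everything reduces to scalar inequalities.
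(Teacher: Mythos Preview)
Your approach is essentially the same as the paper's: both bound $(1-1/A)^B$ above and below by exponentials, control the ratio $B/A$ via the elementary bounds on $(1\pm\gamma_i)/(1\pm\gamma_i)$, and then Taylor-expand $e^{-x}$ around $x=\eps$. The only cosmetic difference is that you use the symmetric two-sided estimate $-y-y^2\le\ln(1-y)\le -y$, whereas the paper handles the upper bound via $(1-y)^k\le e^{-yk}$ and the lower bound via the inequality $(1-\lambda/x)^x\ge e^{-\lambda}(1-\lambda^2/x)$; these are equivalent up to the same $O(\eps/\Delta)$ slack you identify. Your honest remark that the only real issue is taming the constants to land exactly on $4\eps\gamma_i+\eps^2$ (rather than, say, $4\eps\gamma_i+2\eps^2$) is well placed---that bookkeeping, leaning on $\eps\le 10^{-4}$, is indeed the entire content of the proof.
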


\begin{proof}
Let $M = \left(1 - \frac{1}{(1-\epsilon)^{2(i-1)}\cdot (1\pm\gamma_i)  \Delta}\right)^{(\eps \pm \eps^{2}) \cdot (1-\eps)^{2(i-1)} \cdot (1\pm\gamma_i)  \Delta}$. We first upper bound  $M$  as follows. 
\begin{align}
M &  \leq  \left(1 - \frac{1}{(1-\epsilon)^{2(i-1)}\cdot (1+\gamma_i)  \Delta}\right)^{(\eps - \eps^{2}) \cdot (1-\eps)^{2(i-1)} \cdot (1-\gamma_i)  \Delta} \nonumber \\
& \leq  \exp\left(- (\eps - \eps^{2}) \cdot \left(1-\gamma_i\right) \cdot \left(1+\gamma_i\right)^{-1}    \right) \nonumber \\
& \leq  \exp\left(- (\eps - \eps^{2}) \cdot \left(1-2\gamma_i\right)  \right) \qquad \qquad \qquad \qquad \qquad \qquad \,  (\text{by~(\ref{eq:eps}) and~\Cref{cor:gamma}}) \nonumber \\
& \leq \exp\left( - (\eps - 3 \eps \gamma_i)  \right) \qquad \qquad \qquad \qquad \qquad \qquad \qquad \qquad (\text{by~(\ref{eq:eps}) and~\Cref{cor:gamma}}) \nonumber \\
& \leq 1 - (\eps - 3 \eps \gamma_i) + (1/2) \cdot (\eps - 3 \eps \gamma_i)^2 \nonumber \\
& \leq 1 - \eps + (4 \eps \gamma_i + \eps^2). \qquad \qquad \qquad \qquad \qquad \qquad \qquad \ \ \ \ (\text{by~(\ref{eq:eps}) and~\Cref{cor:gamma}}) \nonumber
\end{align}
Next, we lower bound $M$ as follows. 
\begin{align}
M & \geq  \left(1 - \frac{1}{(1-\epsilon)^{2(i-1)}\cdot (1-\gamma_i)  \Delta}\right)^{(\eps + \eps^{2}) \cdot (1-\eps)^{2(i-1)} \cdot (1+\gamma_i)  \Delta} \nonumber \\
& \geq  \exp\left(-  \frac{(\eps+\eps^{2})(1+\gamma_i)}{(1-\gamma_i)}   \right) \cdot \left(1 -  \frac{(\eps+\eps^{2}) (1+\gamma_i)}{(1-\eps)^{2(i-1)} \cdot (1-\gamma_i)^2  \Delta}\right) \nonumber \\
& \geq  \exp\left(-  \frac{(\eps+\eps^{2})(1+\gamma_i)}{(1-\gamma_i)}   \right) \cdot \left(1-\frac{8}{(1-\eps)^{2 t_{\eps}} \Delta}\right)  \qquad \qquad (\text{by~(\ref{eq:eps}) and~\Cref{cor:gamma}})  \nonumber \\
& \geq \exp\left(- (\eps + 2\eps \gamma_i )   \right) \cdot \left(1-\frac{8}{(1-\eps)^{2 t_{\eps}} \Delta}\right) \qquad \qquad \qquad \ \ \  \ (\text{by~(\ref{eq:eps}) and~\Cref{cor:gamma}}) \nonumber \\
& \geq \exp\left(- (\eps + 2\eps \gamma_i )   \right) \cdot \left(1-\frac{8 \cdot \exp(4 \eps t_{\eps})}{\Delta}\right) \qquad \qquad \qquad  \ \ \ (\text{by~(\ref{eq:eps})}) \nonumber \\ 
&  \geq \exp\left(- (\eps + 2\eps \gamma_i )   \right) \cdot \left(1-\frac{8}{\Delta} \cdot \exp\left(4 \eps + \frac{2 \cdot\ln(1/\eps)}{K}\right)\right) \ \ \,(\text{by~(\ref{eq:t})}) \nonumber \\ 
&  \geq \exp\left(- (\eps + 2\eps \gamma_i )   \right) \cdot \left(1-\frac{8}{\Delta} \cdot \left(1+8 \eps\right) \cdot \frac{1}{\eps^{2/K}}\right) \qquad \qquad  \ (\text{by~(\ref{eq:eps})}) \nonumber \\
& \geq \exp\left(- (\eps + 2\eps \gamma_i )   \right) \cdot (1-\eps^{2}) \qquad \qquad \qquad \qquad \qquad \qquad \  (\text{by~(\ref{eq:eps}) and~(\ref{eq:t}}))\nonumber \\
& \geq (1 - \eps - 2\eps \gamma_i) \cdot (1-\eps^{2}) \nonumber \\
& \geq 1 - \eps - (4 \eps\gamma_i  + \eps^2). \qquad \qquad \qquad \qquad \qquad \qquad \qquad \ \ \  \ \ (\text{by~(\ref{eq:eps}) and~\Cref{cor:gamma}}) \nonumber
\end{align}
The second inequality holds since $\left(1- \frac{\lambda}{x}\right)^x \geq \exp(-\lambda) \cdot \left(1 - \frac{\lambda^2}{x}\right)$ for all $0 < \lambda < x$.
\end{proof}

\subsection{Proof of \Cref{lm:sampled:degree}}\label{sec:sampled:degree}

For any node $v \in V$ and any color $c \in [(1+\eps^2)\Delta]$, let $\B_i(v, c)$ denote the event which occurs iff: $$|S_i \cap N_i(v)| = (\epsilon \pm \eps^2)  \cdot |N_i(v)|  \text{ and } |S_i \cap N_{i,c}(v)| = (\epsilon \pm \eps^2) \cdot |N_{i,c}(v)|.$$
We first focus on bounding  $\Pr[\B_i(v, c) \mid r_{< i}]$ for a given node $v \in V$ and color $c \in [(1+\eps^2)\Delta]$. We start by observing that $N_{i, c}(v) \subseteq N_i(v)$. For each edge $e \in N_i(v)$, consider an indicator random variable $X_e \in \{0,1\}$  that is set to one iff the edge $e$ is sampled in round $i$. Thus, we have $|S_i \cap N_i(v)| = \sum_{e \in N_i(v)} X_e$ and $|S_i \cap N_{i, c}(v)| = \sum_{e \in N_{i,c}(v)} X_e$. We also have $\E\left[ X_e \mid r_{< i} \right] = \Pr\left[X_e = 1 \mid r_{< i} \right] = \epsilon$ for all edges $e \in N_i(v)$. Hence, applying linearity of expectation, we get:
\begin{eqnarray*}
\E\left[|S_i \cap N_i(v)|  \mid  r_{< i} \right] & = & \sum_{e \in N_i(v)} \E\left[X_e  \mid  r_{< i} \right]  = \epsilon \cdot |N_i(v)|. \\
\E\left[|S_i \cap N_{i,c}(v)|  \mid  r_{< i} \right] & = & \sum_{e \in N_{i,c}(v)} \E\left[X_e  \mid  r_{< i} \right]  = \epsilon \cdot |N_{i,c}(v)|. 
\end{eqnarray*}
The  random variables $\{X_e\}$ are mutually independent (even after conditioning on $r_{< i}$). Since  $|N_{i, c}(v)| \leq |N_i(v)| \leq (1+\eps^2)\Delta$ and each $X_e$ is a $0/1$ random variable,~\Cref{hoeffding} gives us:
\begin{eqnarray}
\Pr\left[ |S_i \cap N_i(v)| = \eps \cdot |N_i(v)| \pm 50 \sqrt{\Delta \ln n} \, \Big| r_{< i} \, \right] & \geq & 1 - 1/n^{2000}. \label{eq:lm:sampled:degree:1} \\
\Pr\left[ |S_i \cap N_{i,c}(v)| = \eps  \cdot |N_{i,c}(v)| \pm 50 \sqrt{\Delta \ln n} \, \Big| r_{< i} \, \right] & \geq & 1 - 1/n^{2000}. \label{eq:lm:sampled:degree:2}
\end{eqnarray}
From~(\ref{eq:eps}) and~\Cref{cl:technical:1}, we infer that:
\begin{eqnarray}
\label{eq:lm:sampled:degree:3}
\eps   \cdot |N_i(v)| \pm 50 \sqrt{\Delta \ln n} & = & (\eps \pm \eps^2) \cdot |N_i(v)|. \\
\eps  \cdot |N_{i,c}(v)| \pm 50 \sqrt{\Delta \ln n} & = & (\eps \pm \eps^2) \cdot |N_{i,c}(v)|.
\label{eq:lm:sampled:degree:4}
\end{eqnarray}
From~(\ref{eq:lm:sampled:degree:1}) and~(\ref{eq:lm:sampled:degree:3}), we get:
\begin{equation}
\Pr\left[ |S_i \cap N_i(v)| =(\eps \pm \eps^2) \cdot |N_i(v)|  \, \Big| r_{< i} \, \right]  \geq  1 - 1/n^{2000}.
\label{eq:lm:sampled:degree:5}
\end{equation}
Similarly, from ~(\ref{eq:lm:sampled:degree:2}) and~(\ref{eq:lm:sampled:degree:4}), we get:
\begin{equation}
\Pr\left[ |S_i \cap N_{i,c}(v)| =(\eps \pm \eps^2) \cdot |N_{i,c}(v)|  \, \Big| r_{< i} \, \right]  \geq  1 - 1/n^{2000}.
\label{eq:lm:sampled:degree:6}
\end{equation}
Applying a union bound over~(\ref{eq:lm:sampled:degree:5}) and~(\ref{eq:lm:sampled:degree:6}), we get: $\Pr[\B_i(v, c) \mid r_{< i}] \geq 1 - 1/n^{1900}$. Since $\B_i = \bigcap_{v, c} \B_i(v, c)$,  the lemma follows from one last union bound  over all  $v \in V$ and $c \in [(1+\eps^2)\Delta]$.

\subsection{Proof of \Cref{key-lem}}\label{sec:key-lem}

Throughout this section, we fix the bits $r_{< i} \cup r_i^{(\text{edges})}$ which ensure the occurrence of the event $\e_i \cap \C_i \cap \B_i$. To ease notations, henceforth we  refrain from repeatedly stating that we are conditioning on the bits $r_{< i} \cup r_i^{(\text{edges})}$. However, the reader should keep in mind that we are relying upon this conditioning for the rest of~\Cref{sec:key-lem}.

We  first bound the expected value of $|P_{i+1}(e)|$ for any given edge $e \in E_{i+1}$.\footnote{Note that the subset of edges $E_{i+1}$ is completely determined by the bits $r_{<i} \cup r_i^{\text{(edges)}}$.} Next, we  show that w.h.p.~the value of $|P_{i+1}(e)|$ does not deviate too far away from its expectation. Finally, we take a union bound over all the edges $e \in E_{i+1}$ to argue that the event $\e_{i+1}$ occurs w.h.p.

\medskip
\noindent {\bf Calculating the expected value of $|P_{i+1}(e)|$ for a given edge $e = (u,v) \in E_{i+1}$:}  

\smallskip
\noindent
Since $e \in E_{i+1}$, we have $e \notin S_i$. In particular, this implies that the sets $(N_{i, c}(u) \cap S_i)$ and $(N_{i,c}(v) \cap S_i)$ are mutually disjoint. This observation will be useful in subsequent derivations. 

Consider any color $c\in \palette_i(e)$. For any endpoint $x \in \{u, v\}$ of the edge $e$, let  $\Gamma_{x,c}$ be the event that at least one edge $e' \in (N_{i, c}(x) \cap S_i)$ picks the color $c$ in round $i$. Note that $\Pr\left[ c \notin \palette_{i+1}(e)\right] = \Pr\left[ \Gamma_{u,c} \cup \Gamma_{v, c}\right]$. 
Since the sets $(N_{i, c}(u) \cap S_i)$ and $(N_{i,c}(v) \cap S_i)$ are mutually disjoint, the events $\Gamma_{u, c}$ and $\Gamma_{v, c}$ are mutually independent. Hence, from the inclusion-exclusion principle we infer that:
\begin{eqnarray}
\label{eq:key-lem-1-1}
	\Pr[c\notin  \palette_{i+1}(e)] 
	= \Pr\left[ \Gamma_{u,c}\right] + \Pr\left[ \Gamma_{v,c}\right]  - 
	\Pr\left[ \Gamma_{u,c}\right] \cdot \Pr\left[ \Gamma_{v,c}\right].
	\end{eqnarray}
	
We now  focus on estimating the value of $\Pr\left[ \Gamma_{x,c}\right]$ for a given node $x \in \{u, v\}$.  Recall that the bits $r_{< i} \cup r_i^{(\text{edges})}$  we condition upon ensure the occurrence of the event $\e_i \cap \C_i \cap \B_i$. Hence,  we have:
\begin{eqnarray}
\label{eq:key-lem-1-proof-100}
 |P_i(e')| & = &  (1-\epsilon)^{2(i-1)}\cdot (1\pm \gamma_i) \cdot \Delta \text{ for all edges } e' \in E_{i}. \\
 |N_{i,c}(x)| & = & (1-\eps)^{2(i-1)} \cdot (1\pm\gamma_i) \cdot \Delta. \\
 \label{eq:key-lem-1-proof-101}
|N_{i,c}(x) \cap S_i| & = & (\eps \pm \eps^{2}) \cdot (1-\eps)^{2(i-1)} \cdot (1\pm\gamma_i) \cdot \Delta. 
\end{eqnarray}
Since the event $\Gamma_{x,c}$ occurs iff some edge $e' \in N_{i, c}(x) \cap S_i$  picks color  $c$  in round  $i$, we infer that:
\begin{eqnarray}
\Pr\left[ \Gamma_{x,c} \right]  &  = &   1- \prod_{e' \in N_{i,c}(x)\cap S_i} \left(1-\frac{1}{|\palette_i(e')|}\right) \nonumber  \\
&  = &  1 - \left(1 - \frac{1}{(1-\epsilon)^{2(i-1)}\cdot (1\pm\gamma_i) \cdot \Delta}\right)^{(\eps \pm \eps^{2}) \cdot (1-\eps)^{2(i-1)} \cdot (1\pm\gamma_i) \cdot \Delta} \nonumber  \\
& = & \eps \pm (4\eps \gamma_i+\eps^2). \label{eq:key-lem-1-proof-700}
\end{eqnarray}
In the derivation above, the  second step follows from~(\ref{eq:key-lem-1-proof-100}) and~(\ref{eq:key-lem-1-proof-101}), whereas the last step follows from~\Cref{cl:technical:2}. From~(\ref{eq:key-lem-1-1}) and~(\ref{eq:key-lem-1-proof-700}),  we  next infer that: 
	$$\Pr[c \notin \palette_{i+1}(e)] =  2 \left(\eps \pm (4\eps\gamma_i+\eps^2)\right) - \left(\eps \pm (4\eps\gamma_i+\eps^2)\right)^2 \text{ for every color } c \in P_i(e).$$ 
Equivalently, for every color $c \in \palette_i(e)$, we have: 
$$\Pr[c \in \palette_{i+1}(e)] = 1 -  \Pr[c \notin \palette_{i+1}(e)]  = \left(1-\eps \pm (4\eps\gamma_i+\eps^2) \right)^2.$$  Applying linearity of expectation, we now get:
	\begin{equation}
	\label{eq:key-lem-1-proof-1}
		\E[|\palette_{i+1}(e)|] = \sum_{c \in \palette_i(e)} \Pr[c \in \palette_{i+1}(e)]  =  \left(1-\eps \pm (4\eps\gamma_i+\eps^2) \right)^2 \cdot |P_i(e)|.
	\end{equation}

\medskip
\noindent {\bf Deriving a concentration bound on $|P_{i+1}(e)|$ for a given edge $e = (u, v) \in E_{i+1}$:}

\smallskip
\noindent For each color $c \in P_i(e)$, let $X_c \in \{0,1 \}$ be an indicator random variable that is set to one iff $c \in P_{i+1}(e)$. Clearly, we have: $|P_{i+1}(e)| = \sum_{c \in P_i(e)} X_c$. We will now show that the random variables $\{X_c\}_{c \in P_i(e)}$ are negatively associated, and then apply Hoeffding bound.

\begin{cla}
\label{cl:key-lem-1-proof-1}
The random variables $\{X_c\}_{c \in P_i(e)}$ are negatively associated.
\end{cla}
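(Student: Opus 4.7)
The plan is to build $\{X_c\}_{c\in P_i(e)}$ out of a larger family of indicator variables that is known to be negatively associated (NA), and then apply the standard closure properties of NA distributions. After conditioning on $r_{<i}\cup r_i^{(\text{edges})}$, the sample $S_i$ and all palettes $P_i(\cdot)$ are fixed; the only remaining randomness is the tentative color chosen by each sampled edge, and these color choices are mutually independent across edges in $S_i$ (each $e'\in S_i$ with $P_i(e')\neq\emptyset$ picks $c(e')$ uniformly from $P_i(e')$).

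For each $e'\in (N_{i,c}(u)\cup N_{i,c}(v))\cap S_i$ (using $e=(u,v)$) and each color $c\in P_i(e)$, introduce the indicator $Y_{e',c}:=\mathbf{1}[c(e')=c]$. For a fixed $e'$, the vector $(Y_{e',c})_{c\in P_i(e')}$ is the indicator of a single uniformly chosen element of $P_i(e')$, hence is a $0/1$ vector whose coordinates sum to at most $1$; such vectors are NA by the standard ``zero--one principle'' (e.g.\ Proposition~11 of Dubhashi--Ranjan). Since the color choices for distinct $e'\in S_i$ are independent, the union-of-independent-NA-blocks closure (Proposition~7 of Dubhashi--Ranjan) implies that the whole family $\{Y_{e',c}\}_{e',c}$ is NA.

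Now observe the key structural fact: because $e\notin S_i$, the sets $N_{i,c}(u)\cap S_i$ and $N_{i,c}(v)\cap S_i$ are disjoint, and more importantly, for distinct colors $c\neq c'$ the index sets $\{(e',c):e'\in N_c\}$ and $\{(e'',c'):e''\in N_{c'}\}$ are disjoint (the second coordinate differs). We can write
\[
X_c \;=\; \prod_{e'\in N_{i,c}(u)\cap S_i}(1-Y_{e',c})\;\cdot\;\prod_{e'\in N_{i,c}(v)\cap S_i}(1-Y_{e',c}),
\]
which is a (coordinatewise) monotone non-increasing function of the block of variables $\{Y_{e',c}:e'\in(N_{i,c}(u)\cup N_{i,c}(v))\cap S_i\}$, and these blocks are disjoint across $c$. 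Applying the closure of NA under monotone functions of disjoint blocks of variables (Proposition~7(2) of Dubhashi--Ranjan, applied with all functions non-increasing) yields that $\{X_c\}_{c\in P_i(e)}$ is NA, completing the proof.

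The argument is essentially bookkeeping: there is no quantitative obstacle, only the need to verify the disjointness of the variable blocks that witness each $X_c$, which is ensured by the conditioning $e\notin S_i$ (so $e$'s own color choice plays no role) and by indexing $Y$ by the ordered pair (edge, color). Once this disjointness is noted, the standard NA toolkit does all the work.
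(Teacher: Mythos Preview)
Your proof is correct and follows essentially the same approach as the paper: introduce indicators $Y_{e',c}=\mathbf{1}[c(e')=c]$, apply the $0$--$1$ principle per edge, take the independent union across edges in $S_i$, and then write each $X_c$ as a monotone (non-increasing) function of the block $\{Y_{e',c}\}_{e'}$, noting that these blocks are disjoint across colors $c$. The paper expresses $X_c$ as $1-\max_{e'} Y_{e',c}$ rather than your product $\prod_{e'}(1-Y_{e',c})$, but for $0/1$ indicators these are identical; also, the disjointness of $N_{i,c}(u)\cap S_i$ and $N_{i,c}(v)\cap S_i$ that you highlight is not actually needed for the NA argument (the relevant disjointness is across colors, as you correctly note afterward).
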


\begin{proof}
For each color $c \in P_i(e)$ and each edge $e' \in (N_{i}(u) \cap S_i) \cup (N_{i}(v)) \cap S_i)$, define an indicator random variable $X_{c, e'} \in \{0,1\}$ that is set to one iff the edge $e'$ picks color $c$ in round $i$. Since each edge   picks at most one color in round $i$, \Cref{0-1-NA} implies that for each edge $e' \in (N_{i}(u) \cap S_i) \cup (N_{i}(v) \cap S_i)$, the random variables $\{ X_{c, e'}\}_{c}$ are negatively associated. Next, note that the color picked by any edge $e' \in S_i$ in round $i$ is independent of the color picked by a different edge $e'' \in S_i \setminus \{e'\}$ in round $i$. Hence, part (1) of \Cref{NA-closure} implies that the random variables $\{X_{c, e'}\}_{c, e'}$ are also negatively associated. Finally, note that  $X_c = 1 - \max_{e' \in (N_{i}(u) \cap S_i) \cup (N_{i}(v) \cap S_i)} \{ X_{c,e'} \}$ for all colors $c \in P_i(e)$. Accordingly, part (2) of \Cref{NA-closure} implies that the random variables $\{ X_c\}_{c \in P_i(e)}$ are negatively associated. This concludes the proof of the claim.
\end{proof}

\begin{cla}
\label{cl:key-lem-1-proof-2}
We have: $\Pr\left[ |P_{i+1}(e)| =  \E \left[ |P_{i+1}(e)| \right] \pm 50 \sqrt{\Delta \ln n}  \right] \geq 1 - 1/n^{2000}$.
\end{cla}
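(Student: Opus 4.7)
\textbf{Proof plan for Claim \ref{cl:key-lem-1-proof-2}.} My plan is to apply a Hoeffding-type concentration inequality to the sum $|P_{i+1}(e)| = \sum_{c \in P_i(e)} X_c$, exploiting the negative association established in Claim \ref{cl:key-lem-1-proof-1}. Each $X_c$ is a $\{0,1\}$-valued indicator, and the index set $P_i(e)$ has size $|P_i(e)| \leq (1+\eps^2)\Delta \leq 2\Delta$ by definition. Since Hoeffding's inequality holds for sums of bounded negatively associated random variables (this is the concentration tool the paper uses throughout the analysis of the basic algorithm, as hinted at in the proof sketch of Lemma \ref{main:lm:degree}), we may apply it directly to the sum $\sum_{c \in P_i(e)} X_c$.

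The calculation is as follows. With $N := |P_i(e)| \leq 2\Delta$ summands taking values in $\{0,1\}$, Hoeffding's bound yields
\begin{equation*}
\Pr\!\left[\,\bigl||P_{i+1}(e)| - \E[|P_{i+1}(e)|]\bigr| \geq t\,\right] \;\leq\; 2\exp\!\left(-\frac{2t^2}{N}\right).
\end{equation*}
Substituting $t = 50\sqrt{\Delta \ln n}$ and using $N \leq 2\Delta$, the right-hand side is at most $2\exp(-5000 \ln n/2) \leq 1/n^{2000}$ for $n$ sufficiently large, which is exactly the bound claimed.

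There is essentially no obstacle here: Claim \ref{cl:key-lem-1-proof-1} has done the main work by establishing negative association of the $\{X_c\}$ variables, and the only remaining ingredients are (i) the bound on the number of summands, which is immediate from the palette size bound $|P_i(e)| \leq (1+\eps^2)\Delta$, and (ii) a standard constant-verification for Hoeffding's inequality. The only slightly delicate point to flag is that the conditioning on $r_{<i} \cup r_i^{(\text{edges})}$ is maintained throughout: the randomness driving the $X_c$ variables is the tentative color choices in round $i$ (i.e., $r_i^{(\text{colors})}$), which is independent of what we condition on, so the NA structure and the Hoeffding bound both apply after conditioning without any additional care.
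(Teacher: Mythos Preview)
Your proposal is correct and follows essentially the same approach as the paper: express $|P_{i+1}(e)|$ as the sum of the $0/1$ indicators $X_c$ over $c\in P_i(e)$, invoke the negative association from Claim~\ref{cl:key-lem-1-proof-1}, and apply the Hoeffding bound for NA variables (Lemma~\ref{hoeffding-NA}) with $\sum_c (b_c-a_c)^2 = |P_i(e)| \leq (1+\eps^2)\Delta$. Your bound $N\leq 2\Delta$ is slightly looser than the paper's $(1+\eps^2)\Delta$ but still comfortably yields the $1/n^{2000}$ failure probability, and your remark about the conditioning being on $r_{<i}\cup r_i^{(\text{edges})}$ (with the randomness of the $X_c$ coming from $r_i^{(\text{colors})}$) is accurate.
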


\begin{proof}
Note that $|P_{i+1}(e)| = \sum_{c \in P_i(e)} X_c$, where $|P_i(e)| \leq (1+\eps^2)\Delta$ and each $X_c$ is a $0/1$ random variable. Since the random variables $\{X_c\}$ are negatively associated according to~\Cref{cl:key-lem-1-proof-1}, from~\Cref{hoeffding-NA} we now infer that $\Pr\left[ |P_{i+1}(e)| =  \E \left[ |P_{i+1}(e)| \right] \pm 50 \sqrt{\Delta \ln n}  \right] \geq 1 - 1/n^{2000}$.
\end{proof}

\begin{cor} 
\label{cor:cl:key-lem-1-proof-1}
We have: 
$\Pr\left[ |P_{i+1}(e)| =  (1-\eps)^{2i} \cdot \left(1 \pm \gamma_{i+1}\right) \cdot \Delta  \right] \geq 1 - 1/n^{2000}$. 
\end{cor}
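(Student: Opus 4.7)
\medskip
\noindent\textbf{Proof plan for Corollary~\ref{cor:cl:key-lem-1-proof-1}.} The corollary combines three ingredients already in hand: the expected-value formula \eqref{eq:key-lem-1-proof-1}, the concentration bound of Claim~\ref{cl:key-lem-1-proof-2}, and the inductive information $|P_i(e)| = (1-\eps)^{2(i-1)}(1\pm\gamma_i)\Delta$ carried by the event $\e_i$ we are conditioning on. Our job is essentially to check that the multiplicative error propagates correctly into the recurrence \eqref{eq:gamma:2} defining $\gamma_{i+1}$, and that the additive Hoeffding slack $50\sqrt{\Delta\ln n}$ is negligible.

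First I would substitute the inductive estimate for $|P_i(e)|$ into \eqref{eq:key-lem-1-proof-1} to obtain
\[
\E[|P_{i+1}(e)|] \;=\; (1-\eps)^{2(i-1)}\,(1\pm\gamma_i)\,\bigl(1-\eps \pm (4\eps\gamma_i+\eps^2)\bigr)^{2}\,\Delta.
\]
Expanding the squared factor gives $(1-\eps)^{2}\bigl(1\pm\delta_i\bigr)$ for some $\delta_i$ with $|\delta_i|\le 3(4\eps\gamma_i+\eps^2)/(1-\eps)\le 13\eps\gamma_i + 3\eps^2$, where I use $\gamma_i\le \eps^{1/2}$ from Corollary~\ref{cor:gamma} and $\eps\le 10^{-4}$ from \eqref{eq:eps}. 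Multiplying by $(1\pm\gamma_i)$ and absorbing the cross term (again using $\gamma_i\le\eps^{1/2}$) yields
\[
\E[|P_{i+1}(e)|] \;=\; (1-\eps)^{2i}\,\Delta\,\bigl(1 \pm (\gamma_i + 15\eps\gamma_i + 4\eps^2)\bigr).
\]

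Next I would handle the concentration slack. By Claim~\ref{cl:key-lem-1-proof-2}, w.h.p. $|P_{i+1}(e)|$ lies within $\pm 50\sqrt{\Delta\ln n}$ of its expectation. I convert this additive error into a multiplicative one with respect to $(1-\eps)^{2i}\Delta$: by the exact computation already performed inside Claim~\ref{cl:technical:1}, we have $(1-\eps)^{2i}\Delta \ge \eps^{2/K}\Delta/2$, and the lower bound on $\eps$ in \eqref{eq:eps} forces $50\sqrt{\Delta\ln n}\le \eps^3\Delta/4$. Hence the additive Hoeffding term contributes a relative error of at most $O(\eps^{3-2/K})\ll \eps^2$ for $K=48$. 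Folding this into the previous display gives, with probability at least $1-1/n^{2000}$,
\[
|P_{i+1}(e)| \;=\; (1-\eps)^{2i}\,\Delta\,\bigl(1 \pm (\gamma_i + 16\eps\gamma_i + 5\eps^2)\bigr).
\]

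Finally I would verify that $\gamma_i + 16\eps\gamma_i + 5\eps^2 \le (1+K\eps)\gamma_i + K\eps^2 = \gamma_{i+1}$, which is immediate since $K=48 > 16$ and $K=48>5$. The only thing that requires any care is the constant bookkeeping, which is precisely why $K$ was set to $48$ in \eqref{eq:t}; the main potential pitfall is losing track of the difference between the two $\gamma_i$ contributions (one from $|P_i(e)|$, one from the squared factor in \eqref{eq:key-lem-1-proof-1}), but since both are bounded by $\eps^{1/2}$, all cross-terms are absorbed comfortably into the $K\eps^2$ slack of the recurrence.
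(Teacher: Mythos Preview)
Your proposal is correct and follows essentially the same route as the paper: combine the expectation formula~\eqref{eq:key-lem-1-proof-1}, the concentration bound of Claim~\ref{cl:key-lem-1-proof-2}, and the inductive hypothesis on $|P_i(e)|$, then verify the resulting error fits the recurrence~\eqref{eq:gamma:2}. The only cosmetic differences are that the paper absorbs the additive $50\sqrt{\Delta\ln n}$ term \emph{first} (directly via Claim~\ref{cl:technical:1}, as $\pm\eps^2|P_i(e)|$) before expanding, whereas you expand first and then convert the Hoeffding slack to a multiplicative error relative to $(1-\eps)^{2i}\Delta$; and the paper's coarser bookkeeping lands on constants $(48,48)$ matching $K$ exactly, while your tighter tracking gives $(16,5)$ with room to spare.
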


\begin{proof}
Consider any $M = \E\left[ |P_{i+1}(e)| \right] \pm 50 \sqrt{\Delta \ln n}$. Observe that:
\begin{align}
M = & \left(1-\eps \pm (4\eps \gamma_i + \eps^2)\right)^2 \cdot |P_{i}(e)| \pm 50 \sqrt{\Delta \ln n} && (\text{by~(\ref{eq:key-lem-1-proof-1})}) \nonumber \\
= & \left(\left(1-\eps \pm (4\eps \gamma_i + \eps^2)\right)^2 \pm \eps^2\right) \cdot |P_{i}(e)| && (\text{by~\Cref{cl:technical:1}}) \nonumber  \\
= & \left((1-\eps)^2 + (4\eps\gamma_i +\eps^2)^2 \pm 2 (4\eps\gamma_i + \eps^2) \pm \eps^2 \right) \cdot |P_{i}(e)| \nonumber \\
= & \left( (1-\eps)^2 \pm (12 \eps \gamma_i + 12 \eps^2) \right) \cdot |P_{i}(e)| && (\text{by~(\ref{eq:eps}) and~\Cref{cor:gamma}}) \nonumber \\
= & (1-\eps)^2 \cdot \left(1 \pm (24 \eps \gamma_i + 24 \eps^2)\right) \cdot (1-\eps)^{2(i-1)} \cdot (1\pm \gamma_i) \cdot \Delta && (\text{by~(\ref{eq:eps}) and~(\ref{eq:key-lem-1-proof-100})}) \nonumber \\
= & (1-\eps)^{2i} \cdot \left(1 \pm \left((1+48\eps) \gamma_i + 48\eps^2\right) \right)  \cdot \Delta && (\text{by~(\ref{eq:eps}) and~\Cref{cor:gamma}}) \nonumber \\
= & (1-\eps)^{2i} \cdot (1\pm \gamma_{i+1}) \cdot \Delta. &&  (\text{by~(\ref{eq:t}) and~(\ref{eq:gamma:2})}) \nonumber
\end{align}
To summarize, we have derived that if $|P_{i+1}(e)| =  \E\left[ |P_{i+1}(e)| \right] \pm 50 \sqrt{\Delta \ln n}$, then it must be the case that $|P_{i+1}(e)|  = (1-\eps)^{2i} \cdot (1 \pm \gamma_{i+1}) \cdot \Delta$. The corollary now follows from~\Cref{cl:key-lem-1-proof-2}.
\end{proof}

\medskip
\noindent {\bf Wrapping up the proof of \Cref{key-lem}:}

\smallskip
\noindent \Cref{key-lem}  follows from~\Cref{cor:cl:key-lem-1-proof-1} and a union bound over all the edges $e \in E_{i+1}$.

\subsection{Proof  of \Cref{key-lem-c}}
\label{sec:key-lem-c}

Throughout this section, we fix the bits $r_{< i} \cup r_i^{(\text{edges})}$ which ensure the occurrence of the event $\e_i \cap \C_i \cap \B_i$. To ease notations, henceforth we  refrain from repeatedly stating that we are conditioning on the bits $r_{< i} \cup r_i^{(\text{edges})}$. However, the reader should keep in mind that we are relying upon this conditioning for the rest of~\Cref{sec:key-lem-c}.

We  first bound the expected value of $|N_{i+1,c}(v)|$ for a given  $(c, v) \in [(1+\eps^2)\Delta] \times V$. Next, we  show that w.h.p.~the value of $|N_{i+1,c}(v)|$ does not deviate too far away from its expectation. Finally, we  take a union bound over all $(c, v) \in [(1+\eps^2)\Delta] \times V$ to argue that  $\C_{i+1}$ occurs w.h.p.

\medskip
\noindent {\bf Calculating the expected value of $|N_{i+1,c}(v)|$ for a given $(c, v) \in [(1+\eps^2)\Delta] \times V$:} 

\smallskip
\noindent  First, note that $N_{i+1, c}(v) \subseteq N_{i, c}(v) \setminus S_i$. Consider any edge $e'=(u,v) \in N_{i,c}(v) \setminus S_i$, and let $\Gamma_{e'}$ be the event that the edge   $e'$  belongs to the set $N_{i+1, c}(v)$. Our immediate goal is to calculate the value of $\Pr[\Gamma_{e'}]$. Towards this end, we first recall that the bits $r_{< i} \cup r_i^{(\text{edges})}$  we are conditioning upon ensure the occurrence of the event $\e_i \cap \C_i \cap \B_i$. Hence, we have:
\begin{eqnarray}
\label{eq:key-lem-2-proof-900}
|P_i(e)| & = &  (1-\epsilon)^{2(i-1)}\cdot (1\pm \gamma_i) \cdot \Delta \text{ for all edges } e \in E_i. \\
|N_{i, c}(x)|  & = &  (1-\epsilon)^{2(i-1)}\cdot (1\pm \gamma_i) \cdot \Delta \text{ for all nodes } x \in V. \label{eq:key-lem-2-proof-901} \\
\label{eq:key-lem-2-proof-902}
|N_{i, c}(x) \cap S_i|  & = & (\eps \pm \eps^2) \cdot (1-\epsilon)^{2(i-1)}\cdot (1\pm \gamma_i) \cdot \Delta \text{ for all nodes } x \in V.
\end{eqnarray}
The event $\Gamma_{e'}$ occurs iff  no edge $(u, w) \in N_{i, c}(u) \cap S_i$ picks the color $c$ in round $i$. 
Hence, from~(\ref{eq:key-lem-2-proof-900}),~(\ref{eq:key-lem-2-proof-902}) and~\Cref{cl:technical:2}, we now derive that:
\begin{align}
		 \Pr[\Gamma_{e'}] 
		 & = \prod_{w \in N_{i,c}(u) \cap S_i}\left(1-\frac{1}{|\palette_i(u,w)|}\right) \nonumber \\
		 & = \left(1-\frac{1}{(1-\epsilon)^{2(i-1)}\cdot (1\pm \gamma_i) \cdot \Delta}\right)^{(\eps\pm \eps^2) \cdot (1-\epsilon)^{2(i-1)} \cdot (1\pm \gamma_i) \cdot \Delta} \nonumber \\
		& = 1 - \eps \pm (4\eps\gamma_i +\eps^2)    \label{eq:key-lem-2-proof-903}
	\end{align}
Since we have already conditioned on the event $\B_i$, we get:
 \begin{equation}
 \label{eq:key-lem-2-proof-904}
  |N_{i, c}(v) \setminus S_i| = |N_{i, c}(v)| - |N_{i, c}(v) \cap S_i| = (1-\eps \pm \eps^2) \cdot |N_{i, c}(v)|.
  \end{equation} 
  From~(\ref{eq:key-lem-2-proof-903}) and~(\ref{eq:key-lem-2-proof-904}), together with linearity of expectation, we now derive that:
\begin{eqnarray}
\E\left[ |N_{i+1, c}(v)| \right] & = & \sum_{e' \in N_{i, c}(v) \setminus S_i} \Pr[\Gamma_{e'}]  \nonumber \\
%\qquad \qquad \qquad \qquad \qquad (\text{by linearity of expectation})\nonumber \\
& = & (1-\eps\pm (4\eps \gamma_i + \eps^2)) \cdot |N_{i, c}(v) \setminus S_i|   \nonumber \\
& = & (1-\eps \pm (4\eps \gamma_i + \eps^2)) \cdot (1-\eps\pm \eps^2) \cdot |N_{i,c}(v)| \nonumber \\
& = & \left(1-\eps \pm (4\eps \gamma_i + \eps^2)\right)^2 \cdot |N_{i,c}(v)|. \label{eq:key-lem-2-proof-1}
\end{eqnarray}
	
\medskip
\noindent {\bf Deriving a concentration bound on $|N_{i+1,c}(v)|$ for a given  $(c, v) \in [(1\pm \eps^2)\Delta] \times V$:}

\smallskip
\noindent We first identify the  sampled edges in round $i$ that are responsible for determining   which edges from $N_{i, c}(v) \setminus S_i$ will end up being included in $N_{i+1, c}(v)$. Towards this end, we define $T_i(v) := \bigcup_{u \in V : (u, v) \in N_{i, c}(v) \setminus S_i} N_{i, c}(u) \cap S_i$. Observe that if an edge $(u, u') \in T_i(v)$, where $(u, v) \in N_{i, c}(v) \setminus S_i$, picks the color $c$ in round $i$, then $(u, v) \notin N_{i+1, c}(v)$. Conversely, if an edge $(u, v) \in N_{i, c}(v) \setminus S_i$ ends up not being part of $N_{i+1, c}(v)$, then  some edge $(u, u') \in T_i(v)$ must pick the color $c$ in round $i$. For each edge $e \in T_i(v)$, define an indicator random variable $X_e \in \{0, 1\}$ that is set to one iff the edge $e$ picks color $c$ in round $i$. Clearly,  the random variables $\{X_e\}$ are mutually independent. Since  $|N_{i+1, c}(v)|$ is  completely determined by the random variables $\{ X_e \}, e \in T_i(v)$, we write  $|N_{i+1, c}(v)| := f(X)$, where $X \in \{0,1\}^{|T_i(v)|}$  follows the joint distribution of the random variables $\{X_e\}$.  We now prove a concentration bound on $f(X)$.

\begin{cla}
\label{cl:nodecolor:1}
The function $f$ satisfies the Lipschitz property (\Cref{def:bounded:differences}) with  $d_e = 2$, $e \in T_i(v)$. Furthermore, for  each edge $e \in T_i(v)$, let $X_{-\{e\}} \in \{0,1 \}^{|T_i(v)|-1}$ denote the values of all the remaining  variables $\{X_{e'}\}_{e' \in T_i(v) \setminus \{e\}}$. Then for all $e \in T_i(v)$ and  $X_{-\{e\}} \in \{0,1\}^{|T_i(v)| - 1}$, we have:
$$var\left[ f(X) \ \Big| \ X_{-\{e\}} \right] \leq  \lambda_e, \text{ where } \lambda_e := \frac{8}{(1-\eps)^{2(i-1)}  \Delta}.$$
\end{cla}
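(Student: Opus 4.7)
The plan is to establish both assertions through straightforward bookkeeping of which edges of $N_{i+1,c}(v)$ can be affected by toggling a single variable $X_e$, combined with the observation that conditionally $X_e$ is a Bernoulli variable whose success probability is dictated by $|P_i(e)|$.

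For the Lipschitz property, I would reason as follows. Any $e \in T_i(v)$ can be written as $e = (u,w)$ where $u \in \{u,w\}$ satisfies $(u,v) \in N_{i,c}(v) \setminus S_i$ and $e \in N_{i,c}(u) \cap S_i$ (and analogously for $w$). By the definition of $N_{i+1,c}(v)$, an edge $(x,v) \in N_{i,c}(v)\setminus S_i$ lies in $N_{i+1,c}(v)$ iff no edge of $N_{i,c}(x) \cap S_i$ picks the color $c$. Hence $X_e$ can appear in the membership condition for $(x,v)$ only when $x \in \{u,w\}$. Flipping $X_e$ therefore changes the membership status of at most two edges in $N_{i+1,c}(v)$, namely $(u,v)$ (if it lies in $N_{i,c}(v)\setminus S_i$) and $(w,v)$ (likewise). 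This yields $\bigl|f(X_e{=}1, X_{-\{e\}}) - f(X_e{=}0, X_{-\{e\}})\bigr| \leq 2$, so $d_e = 2$ suffices.

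For the conditional variance bound, fix $X_{-\{e\}}$ and let $f_0, f_1$ denote the two possible values of $f(X)$ according to $X_e \in \{0,1\}$. Since each sampled edge in round $i$ picks a color independently and uniformly at random from its own palette (and the $r_i^{(\text{edges})}$ we are conditioning on only fixes $S_i$, not the color choices), the variable $X_e$ is $\mathrm{Bernoulli}(p)$ with $p = 1/|P_i(e)|$, independently of $X_{-\{e\}}$. Thus
\[
\mathrm{var}\bigl[f(X) \mid X_{-\{e\}}\bigr] = p(1-p)(f_1-f_0)^2 \leq p \cdot d_e^2 \leq \frac{4}{|P_i(e)|}.
\]
Since we are conditioning on $\mathcal{E}_i$, we have $|P_i(e)| \geq (1-\gamma_i)(1-\eps)^{2(i-1)}\Delta$, and Corollary~\ref{cor:gamma} gives $\gamma_i \leq \eps^{1/2} \leq 1/2$, so $|P_i(e)| \geq \tfrac{1}{2}(1-\eps)^{2(i-1)}\Delta$. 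Plugging this in yields $\mathrm{var}[f(X) \mid X_{-\{e\}}] \leq 8/((1-\eps)^{2(i-1)}\Delta) = \lambda_e$, as required.

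There is no substantive obstacle here: the only subtlety is confirming that conditioning on $X_{-\{e\}}$ does not alter the marginal distribution of $X_e$. This holds because the color choices of distinct edges in $S_i$ are mutually independent given $r_{<i} \cup r_i^{(\text{edges})}$, so the event $\{X_e = 1\}$ is independent of the color choices of all other sampled edges, which fully determine $X_{-\{e\}}$.
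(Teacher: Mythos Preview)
Your proposal is correct and follows essentially the same approach as the paper's proof: both identify that toggling $X_e$ for $e=(u,w)$ can only affect the membership of the (at most two) edges $(u,v),(w,v)\in N_{i,c}(v)\setminus S_i$, giving $d_e=2$, and both bound the conditional variance via $p(1-p)(f_1-f_0)^2\le 4/|P_i(e)|$ together with the lower bound on $|P_i(e)|$ from $\mathcal{E}_i$ and \Cref{cor:gamma}. The paper phrases the Lipschitz argument slightly differently by defining $Z_e=\{(v,w')\in N_{i,c}(v)\setminus S_i:w'\in\{u,w\}\}$ and noting $|Z_e|\le 2$, but the content is identical.
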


\begin{proof}
Throughout the proof, we fix any edge $e = (u, u') \in T_i(v)$ and  the color picked by every other edge $e' \in T_i(v) \setminus \{e\}$ in round $i$, which determine the value of $X_{-\{e\}}$. Let $Z_e = \{ (v, w) \in N_{i, c}(v) \setminus S_i : w \in \{u, u'\}\}$ denote the set of edges in $N_{i, c}(v) \setminus S_i$ that are adjacent to the edge $e$. 

When we are trying to figure out which edges from $N_{i+1, c}(v) \setminus S_i$ will end up being part of $N_{i+1, c}(v)$, observe that the color picked by  $e$  can change the {\em fate} of only the edges in $Z_e$. Indeed, if the edge $e$ picks the color $c$, then the edges in $Z_e$ will  {\em not} be included in $N_{i+1, c}(v)$. In contrast, if the edge $e$ picks some color $c' \in P_i(e) \setminus \{c\}$, then the edges in $Z_e$ can potentially be included in $N_{i+1, c}(v)$.\footnote{In this event, whether or not an edge in $Z_e$ is actually included in $N_{i+1, c}(v)$ will depend on $X_{-\{e\}}$.}   The fate of every other edge $e'' \in (N_{i, c}(v) \setminus S_i) \setminus Z_e$ is completely determined by $X_{-\{e\}}$. 

Since $|Z_e| \leq 2$,  the function $f$ satisfies the Lipschitz property with  $d_e = 2$,  $e \in T_i(v)$. Furthermore, since the edge $e$ picks the color $c \in P_i(e)$ with probability $1/|P_i(e)|$, we conclude that:
\begin{eqnarray*}
\Var \left( f(X) \ \Big| \ X_{-\{e\}} \right) & \leq & \frac{2^2}{|P_i(e)|}   \\
& = & \frac{4}{(1-\eps)^{2(i-1)} (1\pm \gamma_i) \Delta} \qquad \qquad \ \ \ (\text{by~(\ref{eq:key-lem-2-proof-900})}) \\
& \leq & \frac{8}{(1-\eps)^{2(i-1)}  \Delta}. \qquad \qquad \qquad \qquad (\text{by~(\ref{eq:eps}) and~\Cref{cor:gamma}})
\end{eqnarray*} 
%In the  derivation above, the last inequality follows from~(\ref{eq:eps}),~(\ref{eq:key-lem-2-proof-900}) and~\Cref{cor:gamma}.
\end{proof}

\begin{cla}
\label{cor:cl:key-lem-2-proof-1}
We have: 
$\Pr\left[ |N_{i+1, c}(v)| =  \E\left[ |N_{i+1, c}(v)| \right] \pm 50 \sqrt{\Delta \ln n} \right] \geq 1 - 1/n^{600}$.
\end{cla}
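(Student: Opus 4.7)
The plan is to apply a Bernstein-type bounded-differences inequality to $f(X) = |N_{i+1,c}(v)|$ viewed as a function of the independent $\{0,1\}$-valued color choices $\{X_e\}_{e \in T_i(v)}$. Claim 5.8 already gives us exactly the two inputs such an inequality needs: the Lipschitz constants $d_e = 2$ and the conditional variance bounds $\text{Var}(f(X)\mid X_{-\{e\}}) \leq \lambda_e = 8/((1-\eps)^{2(i-1)}\Delta)$. The Doob martingale of $f$ with respect to revealing the $X_e$'s one by one then has increments bounded by $2$ and sum of conditional variances bounded by $\sum_{e \in T_i(v)} \lambda_e$, so a standard Freedman/Bernstein martingale inequality yields
\begin{equation*}
\Pr\!\left[\,|f(X)-\E f(X)| \geq t\,\right] \;\leq\; 2\exp\!\left(-\frac{t^2}{2\sum_e \lambda_e + (4/3)\,t}\right).
\end{equation*}

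The main computation is to upper bound $V := \sum_{e \in T_i(v)} \lambda_e$. Since $T_i(v) = \bigcup_{(u,v)\in N_{i,c}(v)\setminus S_i} (N_{i,c}(u)\cap S_i)$, we have $|T_i(v)| \leq |N_{i,c}(v)| \cdot \max_u |N_{i,c}(u)\cap S_i|$. Conditioned on $\e_i \cap \C_i \cap \B_i$, the bounds \eqref{eq:key-lem-2-proof-901} and \eqref{eq:key-lem-2-proof-902} give $|N_{i,c}(v)| \leq (1-\eps)^{2(i-1)}(1+\gamma_i)\Delta$ and $|N_{i,c}(u)\cap S_i| \leq (\eps+\eps^2)(1-\eps)^{2(i-1)}(1+\gamma_i)\Delta$. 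Multiplying and using $\gamma_i \leq \eps^{1/2}$ from Corollary 5.2, we obtain
\begin{equation*}
V \;\leq\; \frac{8}{(1-\eps)^{2(i-1)}\Delta} \cdot (1-\eps)^{4(i-1)}(1+\gamma_i)^2(\eps+\eps^2)\,\Delta^2 \;\leq\; 16\,\eps\,(1-\eps)^{2(i-1)}\Delta \;\leq\; 16\,\eps\,\Delta.
\end{equation*}

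Plugging $t = 50\sqrt{\Delta \ln n}$ into the Bernstein bound, the denominator is at most $32\eps\Delta + 70\sqrt{\Delta\ln n}$, which is $O(\eps\Delta)$ because the condition \eqref{eq:eps} on $\eps$ implies $\eps^2 \Delta \geq 100\ln n$, hence $\eps\Delta \geq \sqrt{\Delta\ln n}$. Therefore the exponent is $\Omega(\Delta\ln n / (\eps\Delta)) = \Omega((\ln n)/\eps) \geq \Omega(\ln n)$ (since $\eps \leq 1/10^4$), and choosing the implicit constants carefully this yields probability at most $n^{-600}$, as required. The one slightly delicate point is to verify that the constants align so that the exponent of $n$ comes out at least $600$; this is where the factor $50$ in $50\sqrt{\Delta \ln n}$ (rather than a smaller constant) is used. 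No fundamental obstacle is expected: the variance bound $\lambda_e$ is already tight up to constants, the sampling guarantee from $\B_i$ gives exactly the right size for $|T_i(v)|$, and the relation $\eps\Delta \geq \sqrt{\Delta\ln n}$ forced by \eqref{eq:eps} is precisely what makes the Bernstein term dominate the Lipschitz term.
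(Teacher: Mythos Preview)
Your proposal is correct and follows essentially the same approach as the paper: apply the Bernstein-type bounded-variance inequality (the paper's \Cref{lm:bernstein:new}) using the Lipschitz constants $d_e=2$ and variance bounds $\lambda_e$ supplied by \Cref{cl:nodecolor:1}, after bounding $|T_i(v)|$ via the $\C_i\cap\B_i$ estimates. The only cosmetic difference is that the paper, after obtaining $\sum_e \lambda_e \leq 64\eps\Delta$, immediately relaxes this to $\sum_e \lambda_e \leq \Delta$, which makes the final arithmetic explicit ($2500\Delta\ln n/(4\Delta)=625\ln n$) and directly yields the $n^{-600}$ bound; you instead retain the $\eps$ factor and obtain a nominally stronger exponent $\Omega((\ln n)/\eps)$ at the cost of leaving the constant verification implicit.
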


\begin{proof}
We derive an upper bound on the size of the set $T_i(v)$. 
\begin{align}
|T_i(v)| \leq & \sum_{u \in V : (u, v) \in N_{i, c}(v) \setminus S_i} |N_{i, c}(u) \cap S_i| \nonumber \\
= & |N_{i, c}(v) \setminus S_i| \cdot (\eps\pm \eps^2) \cdot (1-\eps)^{2(i-1)} \cdot (1\pm \gamma_i) \cdot \Delta &&  (\text{by~(\ref{eq:key-lem-2-proof-902})}) \nonumber \\
\leq & |N_{i, c}(v) \setminus S_i| \cdot 4 \eps \cdot (1-\eps)^{2(i-1)} \cdot \Delta && (\text{by~(\ref{eq:eps})}) \nonumber \\
\leq & 8\eps \cdot (1-\eps)^{2(i-1)} \cdot \Delta^2. \label{eq:upperbound:tiv}
\end{align}
From~(\ref{eq:eps}),~(\ref{eq:upperbound:tiv}) and \Cref{cl:nodecolor:1}, we infer that:
\begin{eqnarray}
\label{eq:nodecolor:500}
\sum_{e \in T_i(v)} \lambda_e \leq 64 \eps \cdot \Delta \leq \Delta.
\end{eqnarray}
Recall that $|N_{i+1, c}(v)| = f(X)$, where $X$ is drawn from the joint distribution of  mutually independent random variables $\{X_e\}_{e \in T_i(v)}$. Hence, from~(\ref{eq:nodecolor:500}), \Cref{cl:nodecolor:1} and \Cref{lm:bernstein:new}, we get:
\begin{eqnarray*}
\Pr\left[ |N_{i+1, c}(v)| =  \E\left[ |N_{i, c}(v)|\right] \pm 50 \sqrt{\Delta \ln n} \right] & \geq & 1 - 2 \cdot \exp\left( - \frac{2500 \cdot \Delta \log n}{2\Delta + (2/3) \cdot 100 \cdot \sqrt{\Delta \log n}}\right) \\ 
& \geq & 1 - 2\cdot \exp\left( - \frac{2500 \cdot \Delta \log n}{4\Delta} \right) \ \ \  (\text{by~(\ref{eq:eps})})\\
& \geq & 1 - 1/n^{600}.
\end{eqnarray*}
This concludes the proof of the claim.
\end{proof}

\begin{cor}
\label{cor:nodecolor:100}
We have: $\Pr\left[ |N_{i+1, c}(v)| =  (1-\eps)^{2i} \cdot (1 \pm \gamma_{i+1}) \cdot \Delta \right] \geq 1 - 1/n^{600}$.
\end{cor}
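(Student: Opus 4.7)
The plan is to derive the corollary from \Cref{cor:cl:key-lem-2-proof-1} by an algebraic manipulation that is essentially identical in structure to the one used at the end of the proof of \Cref{cor:cl:key-lem-1-proof-1}. The tail probability is already in hand, so all that remains is to show that any value in the range $\E[|N_{i+1,c}(v)|] \pm 50\sqrt{\Delta \ln n}$ lies in the target range $(1-\eps)^{2i}\cdot(1\pm\gamma_{i+1})\cdot\Delta$.

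First, I would start with an arbitrary $M = \E[|N_{i+1,c}(v)|] \pm 50\sqrt{\Delta \ln n}$ and substitute the expression for the expectation given by~\eqref{eq:key-lem-2-proof-1}, obtaining
\[
M \;=\; \bigl(1-\eps \pm (4\eps\gamma_i + \eps^2)\bigr)^{2}\cdot |N_{i,c}(v)| \;\pm\; 50\sqrt{\Delta \ln n}.
\]
Next, I would absorb the additive error term into a multiplicative one: since we are conditioning on $\C_i$, \Cref{cl:technical:1} (specifically~\eqref{eq:technical:1:1}) gives $50\sqrt{\Delta \ln n} \le (\eps^2/2)\cdot|N_{i,c}(v)|$, so the additive slack becomes at most an extra $\pm\,\eps^2$ factor multiplying $|N_{i,c}(v)|$.

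Then I would expand the square $\bigl(1-\eps \pm (4\eps\gamma_i + \eps^2)\bigr)^2 = (1-\eps)^2 \pm 2(4\eps\gamma_i + \eps^2)(1-\eps) + (4\eps\gamma_i + \eps^2)^2$, and use \eqref{eq:eps} together with \Cref{cor:gamma} ($\gamma_i\le \eps^{1/2}$) to bound the cross and quadratic terms crudely. Combined with the extra $\pm\eps^2$ slack from the previous step, this yields
\[
M \;=\; \bigl((1-\eps)^2 \pm (12\eps\gamma_i + 12\eps^2)\bigr)\cdot |N_{i,c}(v)|,
\]
mirroring the intermediate step in the palette-size calculation.

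Finally, I would substitute $|N_{i,c}(v)| = (1-\eps)^{2(i-1)}\cdot(1\pm\gamma_i)\cdot\Delta$ from~\eqref{eq:key-lem-2-proof-901}, multiply out the two error factors $\bigl(1\pm(24\eps\gamma_i+24\eps^2)\bigr)(1\pm\gamma_i) = 1\pm\bigl((1+K\eps)\gamma_i + K\eps^2\bigr)$ (using $K=48$ as in~\eqref{eq:t} and absorbing lower-order products via \Cref{cor:gamma} and~\eqref{eq:eps}), and invoke the recursive definition~\eqref{eq:gamma:2} of $\gamma_{i+1}$ to conclude
\[
M \;=\; (1-\eps)^{2i}\cdot(1\pm\gamma_{i+1})\cdot\Delta.
\]
The probability bound then follows directly from \Cref{cor:cl:key-lem-2-proof-1}. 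There is no real obstacle here: the whole step is routine bookkeeping, and the only thing to be careful about is making sure the constant in the recursion~\eqref{eq:gamma:2} (i.e.\ $K=48$) is large enough to swallow the $2\cdot 4 + 2\cdot 1 + \text{cross terms} = $ up to roughly $24$ from expanding the square plus the doubling incurred when multiplying by the $(1\pm\gamma_i)$ factor from $|N_{i,c}(v)|$. This is precisely why $K=48$ was chosen.
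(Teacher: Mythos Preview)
Your proposal is correct and follows essentially the same approach as the paper's own proof: start from \Cref{cor:cl:key-lem-2-proof-1}, substitute the expectation from~\eqref{eq:key-lem-2-proof-1}, absorb the additive $50\sqrt{\Delta\ln n}$ term via \Cref{cl:technical:1}, expand the square, plug in~\eqref{eq:key-lem-2-proof-901}, and collapse the error terms into $\gamma_{i+1}$ via~\eqref{eq:gamma:2}. The intermediate constants you track ($12$, then $24$, then $48=K$) are exactly those in the paper, and your remark that this mirrors \Cref{cor:cl:key-lem-1-proof-1} is precisely how the paper proceeds.
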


\begin{proof}
Consider any $M = \E\left[ |N_{i+1, c}(v)| \right] \pm 50 \sqrt{\Delta \ln n}$. Observe that:
\begin{align}
M = & \left(1-\eps \pm (4\eps \gamma_i + \eps^2)\right)^2 \cdot |N_{i,c}(v)| \pm 50 \sqrt{\Delta \ln n} && (\text{by~(\ref{eq:key-lem-2-proof-1})}) \nonumber \\
= & \left(\left(1-\eps \pm (4\eps \gamma_i + \eps^2)\right)^2 \pm \eps^2\right) \cdot |N_{i,c}(v)| &&  (\text{by~\Cref{cl:technical:1}}) \nonumber  \\
= & \left((1-\eps)^2 + (4\eps\gamma_i +\eps^2)^2 \pm 2 (4\eps\gamma_i + \eps^2) \pm \eps^2 \right) \cdot |N_{i,c}(v)| \nonumber \\
= & \left( (1-\eps)^2 \pm (12 \eps \gamma_i + 12 \eps^2) \right) \cdot |N_{i,c}(v)| && (\text{by~(\ref{eq:eps}) and~\Cref{cor:gamma}}) \nonumber \\
= & (1-\eps)^2 \cdot \left(1 \pm (24 \eps \gamma_i + 24 \eps^2)\right) \cdot (1-\eps)^{2(i-1)} \cdot (1\pm \gamma_i) \cdot \Delta \qquad  &&   (\text{by~(\ref{eq:eps}) and~(\ref{eq:key-lem-2-proof-901})}) \nonumber \\
 = & (1-\eps)^{2i} \cdot \left(1 \pm \left((1+48\eps) \gamma_i + 48\eps^2\right) \right)  \cdot \Delta &&  (\text{by~(\ref{eq:eps}) and~\Cref{cor:gamma}}) \nonumber \\
  = & (1-\eps)^{2i} \cdot (1\pm \gamma_{i+1}) \cdot \Delta. && 
  (\text{by~(\ref{eq:t}) and~(\ref{eq:gamma:2})}) \nonumber
\end{align}
To summarize, we have derived that if $|N_{i+1, c}(v)| =  \E\left[ |N_{i+1, c}(v)| \right] \pm 50 \sqrt{\Delta \ln n}$, then it must be the case that $|N_{i+1, c}(v)|  = (1-\eps)^{2i} \cdot (1 \pm \gamma_{i+1}) \cdot \Delta$. The corollary now follows from~\Cref{cor:cl:key-lem-2-proof-1}.
\end{proof}

\medskip
\noindent {\bf Wrapping up the proof of \Cref{key-lem-c}:}

\smallskip
\noindent \Cref{key-lem-c} follows from~\Cref{cor:nodecolor:100} and a union bound over all pairs $(c, v) \in  [(1+\eps^2)\Delta] \times V$.

\subsection{Proof of \Cref{degree-failures}}
\label{sec:degree-failures}

Recall the discussion on the random bits $r_{<i}, r_i^{(\text{edges})}$ and $r_i^{(\text{colors})}$  from~\Cref{sec:notations:key}.  We will prove the lemma stated below. \Cref{degree-failures}  follows from~\Cref{lm:degree-failures:key} and a union bound over all  $v \in V$.

\begin{lem}
\label{lm:degree-failures:key}
Fix any instantiation of the bits $r_{< i} \cup r_i^{(\text{edges})}$ which ensure the occurrence of  the event $\e_i \cap \C_i \cap \B_i$. Fix any node $v \in V$, and let $F_i(v) = N_i(v) \cap F_i$ denote the set of failed edges in round $i$ that are incident on $v$. Then we have:
$\Pr\left[ |F_i(v)| \leq 9\epsilon^2 \Delta \, \Big| \,  r_{< i} \cup r_i^{(\text{edges})} \right] \geq 1 - 1/n^{305}$.
\end{lem}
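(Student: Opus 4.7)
The plan is to decompose $|F_i(v)| \leq |F_i^a(v)| + |G_i(v)|$, where $F_i^a(v)$ counts the edges $e \in N_i(v) \cap S_i$ whose tentative colour coincides with that of some other edge in $N_i(v) \cap S_i$ (a collision through the shared endpoint $v$), and $G_i(v)$ counts the edges $e = (u,v) \in N_i(v) \cap S_i$ whose tentative colour is also chosen by some $(u,w) \in S_i$ with $w \neq v$ (a collision at the other endpoint). Since $\e_i$ guarantees that $P_i(e) \neq \emptyset$, every failed edge incident to $v$ lies in $F_i^a(v) \cup G_i(v)$, and after the conditioning the only remaining randomness is the independent colour vector $\{c(e') : e' \in S_i\}$. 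Mimicking the computation used in the proof of \Cref{key-lem} (see (\ref{eq:key-lem-1-proof-700})), I expect $\E[|G_i(v)|] \leq (2\eps + O(\eps \gamma_i + \eps^2))\cdot |N_i(v) \cap S_i| \leq 4\eps^2\Delta$; similarly, since $|N_i(v)|$ shrinks like $(1-\eps+\eps^2)^{i-1}$ under $\B_j$ for $j<i$ whereas $|P_i(e)|$ only shrinks like $(1-\eps)^{2(i-1)}$, I obtain $\E[|F_i^a(v)|] \leq |N_i(v) \cap S_i|^2/\min_e |P_i(e)| \leq 3\eps^2 \Delta$.

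For concentration, I would treat each piece as a function of the independent colour vector and apply a bounded-differences inequality. For $|G_i(v)|$, changing any single $c(e')$ flips at most two of the indicator terms: if $e' = (u,w) \notin N_i(v)$, only the terms for $(u,v)$ and $(w,v)$ may flip, while if $e' \in N_i(v) \cap S_i$, only the term for $e'$ itself can (its endpoints are $v$ and one other, and the ``away from $v$'' clause rules it out of every other term). Imitating \Cref{cl:nodecolor:1} then yields a Lipschitz bound $d_{e'} \leq 2$ together with a per-coordinate variance $\lambda_{e'} = O(1/|P_i(e')|)$; summing over the $O(\eps^2 \Delta^2)$ relevant coordinates gives $\sum \lambda_{e'} = O(\eps^2 \Delta)$, so \Cref{lm:bernstein:new} with $d=2$ and $t = \eps^2\Delta$ yields $\Pr[|G_i(v)| \geq \E[|G_i(v)|] + \eps^2 \Delta] \leq 2\exp(-\Omega(\eps^2 \Delta)) \leq n^{-400}$, since $\eps^2 \Delta \geq \eps^6\Delta \geq 10^6 \ln n$ by~(\ref{eq:eps}).

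The main obstacle is proving the same Lipschitz bound $d_{e'} \leq 2$ for $|F_i^a(v)|$, because a priori a single colour change at $v$ could flip $\Theta(K_v)$ status flags (all of $N_i(v) \cap S_i$ share endpoint $v$). I would argue this by fixing the remaining colours and analysing, for $e' \in N_i(v)\cap S_i$, the integer $m_c := |\{e^* \in N_i(v)\cap S_i \setminus \{e'\} : c(e^*) = c\}|$: if $m_{c(e')} = 0$ then $e'$ contributes nothing to $|F_i^a(v)|$; if $m_{c(e')} = 1$ the unique match was previously orphaned, so both that edge and $e'$ enter $F_i^a(v)$, a $+2$ contribution; if $m_{c(e')} \geq 2$ those edges were already colliding with each other, so only $e'$ is newly added, a $+1$ contribution. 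Hence the incremental contribution of $e'$ lies in $\{0,1,2\}$, $d_{e'} \leq 2$ everywhere, and $d_{e'} = 0$ for $e' \notin N_i(v) \cap S_i$. McDiarmid's inequality with $\sum d_{e'}^2 \leq 4 |N_i(v)\cap S_i| \leq 8\eps\Delta$ and $t = \eps^2\Delta$ then gives $\Pr[|F_i^a(v)| \geq \E[|F_i^a(v)|] + \eps^2 \Delta] \leq 2\exp(-\Omega(\eps^3 \Delta)) \leq n^{-400}$, using $\eps^3 \Delta \geq 10^3 \sqrt{\Delta \ln n}$ from~(\ref{eq:eps}). Union-bounding the two concentration statements yields $|F_i(v)| \leq (3 + 4 + 2)\eps^2 \Delta = 9\eps^2 \Delta$ with probability at least $1 - n^{-305}$, as required.
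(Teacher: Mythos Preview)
Your decomposition into ``collision at $v$'' versus ``collision at the other endpoint'' is exactly the paper's split $F_i^{(1)}(v)\cup F_i^{(2)}(v)$, and your concentration arguments are close in spirit (bounded differences for the $v$-side, a variance/Bernstein argument for the away-side where the paper instead uses a balls-and-bins negative-association argument after first conditioning on the colours of $N_i(v)\cap S_i$). Your Lipschitz analysis for $|F_i^a(v)|$ is actually sharper than the paper's: you correctly get $d_{e'}\le 2$, whereas the paper settles for $d_e\le 4$ via a triangle inequality.

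There is, however, a genuine gap in your expectation bound for $|F_i^a(v)|$. You write
\[
\E[|F_i^a(v)|]\ \le\ \frac{|N_i(v)\cap S_i|^2}{\min_e |P_i(e)|}\ \le\ 3\eps^2\Delta,
\]
justifying the second inequality by ``$|N_i(v)|$ shrinks like $(1-\eps+\eps^2)^{i-1}$ under $\B_j$ for $j<i$''. But the lemma's hypothesis only fixes $r_{<i}\cup r_i^{(\text{edges})}$ so that $\e_i\cap\C_i\cap\B_i$ holds; it does \emph{not} assume $\B_j$ for $j<i$. Under the stated hypothesis you only know $|N_i(v)\cap S_i|\le (\eps+\eps^2)(1+\eps^2)\Delta$ and $\min_e|P_i(e)|\ge (1-\eps)^{2(i-1)}(1-\gamma_i)\Delta$, so your crude birthday bound gives only
\[
\E[|F_i^a(v)|]\ \lesssim\ \frac{4\eps^2\Delta}{(1-\eps)^{2(i-1)}}\ \approx\ 4\,\eps^{\,2-1/K}\Delta
\]
for $i$ near $t_\eps$. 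Since $K=48$, the factor $\eps^{-1/K}$ is unbounded as $\eps\to 0$, and the total no longer fits under $9\eps^2\Delta$.

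The fix is to use $\C_i$, which \emph{is} in the hypothesis. Conditioning on $c(e)=c$ for $e\in N_i(v)\cap S_i$, only edges in $N_{i,c}(v)\cap S_i$ can collide with $e$ at $v$; by $\C_i\cap\B_i$ this set has size $(\eps\pm\eps^2)(1-\eps)^{2(i-1)}(1\pm\gamma_i)\Delta$, which exactly balances $|P_i(e')|$ and yields $\Pr[e\in F_i^a(v)]\le 2\eps$ for every $c$, hence $\E[|F_i^a(v)|]\le 2\eps\cdot|N_i(v)\cap S_i|\le 4\eps^2\Delta$. This is precisely what the paper does in its Claim bounding $\E[|F_i^{(1)}(v)|]$. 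With this correction in place, the rest of your argument goes through.
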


The rest of~\Cref{sec:degree-failures} is devoted to the proof of~\Cref{lm:degree-failures:key}. We fix the bits $r_{< i} \cup r_i^{(\text{edges})}$ which ensure the occurrence of the event $\e_i \cap \C_i \cap \B_i$. To ease notations, henceforth we  refrain from repeatedly stating that we are conditioning on the bits $r_{< i} \cup r_i^{(\text{edges})}$. However, the reader should keep in mind that we are relying upon this conditioning for the rest of~\Cref{sec:degree-failures}.

 \medskip
 \noindent {\bf A classification of failed edges:}
 
\smallskip
\noindent  Let $F_i^{(1)}(v) = \{ (u, v) \in F_i(v) : c(v', v) = c(u, v) \text{ for some } (v', v) \in N_i(v) \cap S_i\}$ denote the set of edges $(u,v)$ that fails in round $i$ because of the following reason: Some other edge  incident on $v$ picks the same color as $(u, v)$ in round $i$. Let $F_i^{(2)}(v) = F_i(v) \setminus F_i^{(1)}(v)$ denote the set of remaining edges incident on $v$ that fails in round $i$. An edge $(u, v) \in N_i(v) \cap S_i$ belongs to the set $F_i^{(2)}(v)$ iff no other edge  $e_v \in N_i(v) \cap S_i$  picks the same color as $(u,v)$ in round $i$, {\em and}  at least one edge  $e_u \in N_i(u) \cap S_i$  picks the same color as $(u, v)$ in round $i$.  We say that a failed edge $e \in F_i(v)$ is of type-(1) iff $e \in F_i^{(1)}(v)$ and it is of type-(2) iff $e \in F_i^{(2)}(v)$.  We will separately prove concentration bounds on the number of failed type-(1) and type-(2) edges incident on $v$. As  $|F_i(v)| = | F_i^{(1)}(v) | + | F_i^{(2)}(v) |$,  this will lead to the desired concentration bound on $|F_i(v)|$.

\medskip
\noindent {\bf Deriving a concentration bound on $| F_i^{(1)}(v) |$:} 

\smallskip 
\noindent \Cref{cl:type1:expectation} bounds  the expected value of $| F_i^{(1)}(v)|$.  \Cref{cl:type1:concentration} shows that w.h.p.~$|F_i^{(1)}(v)|$ does not deviate too far away from its expectation. \Cref{cl:type1:concentration} follows from \Cref{cl:type1:expectation} and \Cref{cl:type1:concentration}.
\begin{cla}
\label{cl:type1:expectation}
We have $\E\left[ |F_i^{(1)}(v)| \right]  \leq 4\eps^2 \Delta$.
\end{cla}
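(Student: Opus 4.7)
Fix the node $v$, and for each color $c \in [(1+\eps^2)\Delta]$ let $Y_c := |\{e \in S_i \cap N_i(v) : c(e) = c\}|$ denote the number of sampled neighbors of $v$ that tentatively picked $c$. An edge lies in $F_i^{(1)}(v)$ exactly when it is a sampled neighbor of $v$ whose tentative color is also chosen by some other sampled neighbor of $v$, so
\begin{equation*}
|F_i^{(1)}(v)| \;=\; \sum_c Y_c \cdot \mathbf{1}[Y_c \geq 2] \;\leq\; \sum_c Y_c(Y_c - 1).
\end{equation*}
Event $\e_i$ guarantees that every palette $P_i(e)$ with $e \in E_i$ is nonempty, so no edge picks null under our conditioning. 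Having conditioned on $r_{<i}\cup r_i^{(\text{edges})}$, the set $S_i$, the neighborhood $N_i(v)$ and all palettes are determined, and each sampled edge $e$ picks $c(e)$ independently and uniformly from $P_i(e)$. Thus $Y_c$ is a sum of independent Bernoullis, giving $\Var[Y_c] \leq \E[Y_c]$ and hence $\E[Y_c(Y_c-1)] = \Var[Y_c] + \E[Y_c]^2 - \E[Y_c] \leq \E[Y_c]^2$. Summing and factoring,
\begin{equation*}
\E[|F_i^{(1)}(v)|] \;\leq\; \sum_c \E[Y_c]^2 \;\leq\; \Bigl(\max_c \E[Y_c]\Bigr)\cdot \sum_c \E[Y_c].
\end{equation*}

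For $c \notin P_i(v)$ we have $\E[Y_c] = 0$, while for $c \in P_i(v)$ a sampled edge $(u,v)$ can pick $c$ iff $c \in P_i(u)$, i.e.\ iff $(u,v) \in S_i \cap N_{i,c}(v)$, yielding $\E[Y_c] = \sum_{e \in S_i \cap N_{i,c}(v)} 1/|P_i(e)|$. Plugging in $|P_i(e)| \geq (1-\eps)^{2(i-1)}(1-\gamma_i)\Delta$ from~$\e_i$, $|S_i \cap N_{i,c}(v)| \leq (\eps + \eps^2)|N_{i,c}(v)|$ from~$\B_i$, and $|N_{i,c}(v)| \leq (1-\eps)^{2(i-1)}(1+\gamma_i)\Delta$ from~$\C_i$, the factors of $(1-\eps)^{2(i-1)}\Delta$ cancel, leaving $\max_c \E[Y_c] \leq (\eps + \eps^2)(1+\gamma_i)/(1-\gamma_i) \leq 2\eps$ by~\eqref{eq:eps} and~\Cref{cor:gamma}. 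Since each sampled edge picks exactly one color, $\sum_c \E[Y_c] = |S_i \cap N_i(v)| \leq (\eps + \eps^2)|N_i(v)|$ by~$\B_i$.

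The only remaining piece is an upper bound on $|N_i(v)|$, which the three conditioned events do not control directly. The trick is the double-count identity
\begin{equation*}
\sum_c |N_{i,c}(v)| \;=\; \sum_{(u,v)\in N_i(v)} |P_i(u)|.
\end{equation*}
Since $P_i(u) \supseteq P_i(u,v)$, the right-hand side is at least $|N_i(v)| \cdot (1-\eps)^{2(i-1)}(1-\gamma_i)\Delta$ by~$\e_i$, while the left-hand side is at most $(1+\eps^2)\Delta \cdot (1-\eps)^{2(i-1)}(1+\gamma_i)\Delta$ by~$\C_i$. Rearranging gives $|N_i(v)| \leq (1+\eps^2)(1+\gamma_i)/(1-\gamma_i)\Delta \leq 2\Delta$ for $\eps$ satisfying~\eqref{eq:eps} and $\gamma_i \leq \eps^{1/2}$ (\Cref{cor:gamma}). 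Therefore $|S_i \cap N_i(v)| \leq 2\eps\Delta$, and combining gives $\E[|F_i^{(1)}(v)|] \leq (2\eps)(2\eps\Delta) = 4\eps^2\Delta$. I expect this last step---promoting the per-color bounds supplied by $\e_i$ and $\C_i$ into an aggregate bound on $|N_i(v)|$ via a double-count---to be the main obstacle, since the uncolored degree $|N_i(v)|$ is not one of the quantities tracked by the inductive invariants themselves.
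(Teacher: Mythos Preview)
Your argument is correct and takes a genuinely different route from the paper's. The paper bounds $\Pr[e\in F_i^{(1)}(v)\mid c(e)=c]$ for each fixed edge $e\in N_i(v)\cap S_i$: conditioning on $c(e)=c$, it computes the probability that no other edge of $N_{i,c}(v)\cap S_i$ picks $c$, obtains $\Pr[e\in F_i^{(1)}(v)]\le 2\eps$ via \Cref{cl:technical:2}, and then applies linearity over the at most $2\eps\Delta$ edges of $N_i(v)\cap S_i$. You instead count collisions globally via $\sum_c Y_c(Y_c-1)$ and use $\E[Y_c(Y_c-1)]\le\E[Y_c]^2$ for sums of independent Bernoullis; this second-moment route avoids the product computation in \Cref{cl:technical:2} entirely and is arguably cleaner.

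Two remarks. First, your double-count to bound $|N_i(v)|$ is valid but unnecessary: the input graph has every degree in $[(1-\eps^2)\Delta,(1+\eps^2)\Delta]$ by assumption, and $N_i(v)\subseteq N_1(v)$, so $|N_i(v)|\le(1+\eps^2)\Delta$ holds deterministically. The quantity you flagged as ``the main obstacle'' is in fact free. Second, with your stated bound $|N_i(v)|\le 2\Delta$ you only get $|S_i\cap N_i(v)|\le 2(\eps+\eps^2)\Delta$, not $2\eps\Delta$, so the final product slightly overshoots $4\eps^2\Delta$. This is a cosmetic slip: either use the trivial $(1+\eps^2)\Delta$ bound on $|N_i(v)|$, or keep the sharper estimate $\max_c\E[Y_c]\le(\eps+\eps^2)(1+\gamma_i)/(1-\gamma_i)\le 1.1\eps$ rather than rounding up to $2\eps$; either way the product lands well below $4\eps^2\Delta$.
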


\begin{proof}
Note that $F_i^{(1)}(v) \subseteq N_i(v) \cap S_i$. Consider any edge $e = (u, v) \in N_i(v) \cap S_i$. Our immediate goal is to bound the probability that this edge $e$ does {\em not} belong to $F_i^{(1)}(v)$, conditioned on it picking a given color $c \in P_i(e)$ in round $i$. Towards this end, we first recall that the bits $r_{< i} \cup r_i^{(\text{edges})}$  we are conditioning upon ensure the occurrence of the event $\e_i \cap \C_i \cap \B_i$. Hence, we have: 
\begin{eqnarray}
\label{eq:type1:900}
 |P_i(e')| & = & (1-\epsilon)^{2(i-1)}\cdot (1\pm \gamma_i) \cdot \Delta \text{ for all edges } e' \in E_i. \\
 \label{eq:type1:901}
 |N_{i, c}(x)| & = &  (1-\epsilon)^{2(i-1)}\cdot (1\pm \gamma_i) \cdot \Delta \text{ for all nodes } x \in V. \\
 \label{eq:type1:902}
  |N_{i, c}(x) \cap S_i|  & = &  (\eps \pm \eps^2) \cdot (1-\epsilon)^{2(i-1)}\cdot (1\pm \gamma_i) \cdot \Delta \text{ for all nodes } x \in V.
 \end{eqnarray} 
Conditioned on the edge $e$ picking the color $c \in P_i(e)$, it does not belong to $F_i^{(1)}(v)$ iff none of the edges $e' \in (N_{i, c}(v) \cap S_i) \setminus \{e\}$ picks the  same color $c$ in round $i$. Hence, we  derive that:
\begin{eqnarray}
\Pr\left[e\not\in F_i^{(1)}(v) \ \Big | \  c(e)=c \right] & = & \prod_{e'\in (N_{i,c}(v)\cap S_i)  \setminus \{e\}} \left(1-\frac{1}{|\palette_i(e')|}\right) \nonumber \\
	& = & \left(1-\frac{1}{(1-\eps)^{2(i-1)} \cdot (1\pm \gamma_i) \cdot \Delta}\right)^{|N_{i,c}(v) \cap S_i| - 1} \nonumber \\
	& \geq & \left(1-\frac{1}{(1-\eps)^{2(i-1)} \cdot (1\pm \gamma_i) \cdot \Delta}\right)^{|N_{i,c}(v) \cap S_i| } \nonumber \\
	 & = & \left(1-\frac{1}{(1-\eps)^{2(i-1)} \cdot (1\pm \gamma_i) \cdot \Delta}\right)^{(\eps\pm \eps^2) \cdot (1-\eps)^{2(i-1)} \cdot (1\pm \gamma_i) \cdot \Delta} \nonumber \\
	& \geq & 1 - \eps - (4\eps \gamma_i + \eps^2) \nonumber \\
	& \geq & 1 - 2\eps. \label{eq:type1:903}
\end{eqnarray}
In the derivation above, the second step follows from~(\ref{eq:type1:900}), the fourth step follows from~(\ref{eq:type1:902}), the fifth step follows from~\Cref{cl:technical:2}, and the last step follows from~(\ref{eq:eps}) and~\Cref{cor:gamma}.
Since~(\ref{eq:type1:903}) holds for every color $c \in P_i(e)$, we conclude that:
\begin{eqnarray}
\label{eq:lm:degree-failures:key-1}
\Pr\left[e \notin F_i^{(1)}(v)\right] \geq 1 - 2\eps \text{ for every edge } e \in N_i(v) \cap S_i.
\end{eqnarray}
Now, applying linearity of expectation, we get:
\begin{align}
\E\left[ |F_i^{(1)}(v)| \right] = & \sum_{e \in N_i(v) \cap S_i} \Pr\left[e \in F_i^{(1)}(v)\right]  \nonumber \\
\leq & |N_i(v) \cap S_i| \cdot (2 \eps) && (\text{by~(\ref{eq:lm:degree-failures:key-1})}) \nonumber \\
\leq &  (2\eps  \Delta) \cdot (2\eps) && (\text{by~(\ref{eq:eps}),~(\ref{eq:type1:902}) and~\Cref{cor:gamma}}) \nonumber \\
= &  4\eps^2 \Delta. \nonumber 
\end{align}
This concludes the proof of the claim.
\end{proof}

\begin{cla}
\label{cl:type1:concentration}
We have: $\Pr\left[ | F_i^{(1)}(v) | \leq \E\left[ | F_i^{(1)}(v) | \right] + 50 \sqrt{\Delta \ln n}\right] \geq 1 - 1/n^{310}$.
\end{cla}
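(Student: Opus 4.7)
I plan to apply the Bernstein-type inequality of \Cref{lm:bernstein:new} to the function $f(X) := |F_i^{(1)}(v)|$, where $X = (c(e'))_{e' \in T}$ is the vector of tentative colors picked in round $i$ by the edges in $T := N_i(v) \cap S_i$. This parallels the strategy used for \Cref{cl:nodecolor:1}.

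First, I note that once we condition on $r_{<i} \cup r_i^{(\text{edges})}$, both $T$ and the palettes $\{P_i(e')\}_{e' \in T}$ are deterministic, so the only remaining source of randomness in $f$ is $X$, whose coordinates are mutually independent with $c(e') \sim \mathrm{Unif}(P_i(e'))$.

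The heart of the argument is bounding the per-coordinate Lipschitz and variance constants. Fix $e_0 \in T$ and an arbitrary assignment $X_{-\{e_0\}}$, and let $n_c$ be the number of edges in $T \setminus \{e_0\}$ whose tentative color (under $X_{-\{e_0\}}$) is $c$. A short calculation, splitting an edge's collision-witnesses into those distinct from $e_0$ versus $e_0$ itself, yields the clean decomposition
\[
	f(x, X_{-\{e_0\}}) \;=\; A(X_{-\{e_0\}}) \;+\; \mathbf{1}[n_x \geq 1] \;+\; \mathbf{1}[n_x = 1],
\]
where $A(X_{-\{e_0\}})$ counts those $e' \in T \setminus \{e_0\}$ whose color collides with some edge in $T \setminus \{e_0, e'\}$ (hence is independent of $x$), the term $\mathbf{1}[n_x \geq 1]$ accounts for $e_0$'s own status, and $\mathbf{1}[n_x = 1]$ accounts for the (at most one) edge in $T \setminus \{e_0\}$ that becomes failed precisely because $e_0$ joined its color class. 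Since the $x$-dependent piece takes values in $\{0, 1, 2\}$, I obtain the Lipschitz bound $d_{e_0} = 2$ and the conditional variance bound $\lambda_{e_0} \leq 1$.

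To finish, event $\B_i$ combined with $|N_i(v)| \leq (1+\eps^2)\Delta$ gives $|T| \leq 2\eps\Delta$, so $\sum_{e_0 \in T} \lambda_{e_0} \leq 2\eps\Delta$ and $\max_{e_0} d_{e_0} = 2$. Substituting these into \Cref{lm:bernstein:new} with $t = 50\sqrt{\Delta \ln n}$, and using $\eps \Delta \geq \sqrt{\Delta \ln n}$ (an easy consequence of~\eqref{eq:eps}) to bound the denominator of the exponent by $O(\eps\Delta)$, yields a tail bound of the form $\exp(-\Omega(\ln n / \eps))$. Since $1/\eps \geq 10^4$ by~\eqref{eq:eps}, this comfortably beats $1/n^{310}$. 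The step I expect to require the most care is the case analysis underlying the decomposition of $f$; everything after it is a routine parallel of the handling of \Cref{cl:nodecolor:1}.
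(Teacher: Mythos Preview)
Your decomposition of $f$ is correct and actually yields a sharper Lipschitz constant ($d_{e_0}=2$) than the paper obtains. However, there is a technical mismatch: \Cref{lm:bernstein:new}, as stated in the paper, applies only to functions of mutually independent \emph{$0/1$} random variables, whereas your coordinates $c(e')$ take values in the palettes $P_i(e')$. (The parallel with \Cref{cl:nodecolor:1} breaks here: there the $X_e$ were indicators for ``$e$ picks the specific color $c$'', hence genuinely $0/1$.) So you cannot invoke \Cref{lm:bernstein:new} as written.

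The fix is immediate and, in fact, brings you closer to the paper's own proof: use the plain bounded-differences inequality, \Cref{lm:bounded:differences}, which imposes no $0/1$ restriction. With your Lipschitz constant $d_{e_0}=2$ and $|T|\leq (1+\eps^2)\Delta$ (you do not even need the sharper bound $|T|\leq 2\eps\Delta$ from $\B_i$), one gets $\sum_{e_0} d_{e_0}^2 \leq 4(1+\eps^2)\Delta$ and hence a tail of $\exp\bigl(-2\cdot 2500\,\Delta\ln n / (4(1+\eps^2)\Delta)\bigr)\leq n^{-1000}$, far stronger than needed. The paper does exactly this, but with the cruder Lipschitz constant $d_e=4$ obtained via a triangle-inequality estimate on the change in $\phi(n_{c_1}+1)-\phi(n_{c_1})$ and $\phi(n_{c_2}+1)-\phi(n_{c_2})$; your explicit decomposition $A+\mathbf{1}[n_x\geq 1]+\mathbf{1}[n_x=1]$ is a cleaner way to see the same phenomenon. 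The variance bound $\lambda_{e_0}\leq 1$ you derive is correct but ultimately unnecessary once you switch to \Cref{lm:bounded:differences}.
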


\begin{proof}
For each edge $e \in N_i(v) \cap S_i$, define a random variable $X_e \in P_i(e)$ whose value indicates the color picked by the edge $e$ in round $i$. The quantity $|F_i^{(1)}(v) |$ is a function of the random variables $\{X_e\}, e \in N_i(v) \cap S_i$, and the random variables $\{X_e\}$ themselves are mutually independent. 

We  claim  that the function $|F_i^{(1)}(v)|$ satisfies the Lipschitz property (see \Cref{def:bounded:differences}) with constants $d_e = 4$, $e \in N_i(v) \cap S_i$. To see why the claim holds, consider any given edge $e \in N_i(v) \cap S_i$ and fix the colors picked by every other edge $e' \in (N_{i}(v) \cap S_i) \setminus \{e\}$ in round $i$. Fix any two distinct colors $c_1, c_2 \in P_i(e)$. Let $n_{c_1}$ and $n_{c_2}$ respectively denote the number of edges $e' \in (N_i(v) \cap S_i) \setminus \{e\}$ that have picked color $c_1$ and color $c_2$ in round $i$.  Now, consider the following two scenarios: 
\begin{itemize}
\item (1) The edge $e$ picks  the color $c_1 \in P_i(e)$ in round $i$.
\item (2) The edge $e$ picks the color $c_2 \in P_i(e) \setminus \{c_1\}$ in round $i$. 
\end{itemize} 
As we switch from scenario (1) to scenario (2), the number of type-(1) failed edges in $N_i(v) \cap S_i$ that pick color $c_2$ increases by  $\phi(n_{c_2}+1) - \phi(n_{c_2})$; where  $\phi(y) = y$ if $y \geq 2$, and $\phi(y) = 0$ otherwise. Similarly, the number of type-(1) failed edges in $N_i(v) \cap S_i$ that pick color $c_1$ decreases by $\phi(n_{c_1}+1) - \phi(n_{c_1})$. In contrast, the number of type-(1) failed edges in $N_i(v) \cap S_i$ that pick any  color $c \notin [\Delta] \setminus \{c_1, c_2\}$ remains unchanged.  Thus, as we switch from scenario (1) to scenario (2), the absolute value of the change in $| F_i^{(1)}(v)|$ is given by:
\begin{eqnarray*}
\left| \left\{ \phi(n_{c_2}+1) - \phi(n_{c_2}) \right\} - \left\{ \phi(n_{c_1}+1) - \phi(n_{c_1}) \right\} \right| & \leq & \left| \phi(n_{c_2}+1) - \phi(n_{c_2}) \right| + \left|  \phi(n_{c_1}+1) - \phi(n_{c_1})  \right| \\
& \leq & 2 + 2 = 4.
\end{eqnarray*}

We therefore conclude that  $|F_i^{(1)}(v) |$ is a function of mutually independent random variables $\{X_e\}$ that satisfy the Lipschitz property (see \Cref{def:bounded:differences}) with constants $d_e = 4$, $e \in N_i(v) \cap S_i$. Since $\sum_{e \in N_i(v) \cap S_i} d_e^2 = 16 \cdot |N_i(v) \cap S_i| \leq 16 \Delta$, applying \Cref{lm:bounded:differences} we get:
$$\Pr\left[ | F_i^{(1)}(v) | \leq \E\left[ | F_i^{(1)}(v) | \right] + 50 \sqrt{\Delta \ln n}\right] \geq 1 - 1/n^{310}.$$ 
This concludes the proof of the claim.
\end{proof}

\begin{cor}
\label{cor:type1:concentration}
We have: $\Pr\left[ | F_i^{(1)}(v) | \leq 4 \eps^2 \Delta + 50 \sqrt{\Delta \ln n}\right] \geq 1 - 1/n^{310}$.
\end{cor}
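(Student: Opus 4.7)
The plan is essentially to chain together the two preceding claims. Claim \ref{cl:type1:expectation} provides the expectation bound $\E[|F_i^{(1)}(v)|] \leq 4\eps^2 \Delta$, while Claim \ref{cl:type1:concentration} provides the concentration statement $|F_i^{(1)}(v)| \leq \E[|F_i^{(1)}(v)|] + 50\sqrt{\Delta \ln n}$ with probability at least $1 - 1/n^{310}$. Substituting the former into the latter is immediate and yields the corollary.

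More concretely, I would write: by Claim \ref{cl:type1:concentration}, with probability at least $1 - 1/n^{310}$ the event $\{|F_i^{(1)}(v)| \leq \E[|F_i^{(1)}(v)|] + 50\sqrt{\Delta \ln n}\}$ occurs, and on this event Claim \ref{cl:type1:expectation} implies
\[
|F_i^{(1)}(v)| \;\leq\; \E[|F_i^{(1)}(v)|] + 50\sqrt{\Delta \ln n} \;\leq\; 4\eps^2 \Delta + 50\sqrt{\Delta \ln n}.
\]
Since both claims are conditional on the same instantiation of $r_{<i} \cup r_i^{(\text{edges})}$ (under the conditioning already fixed at the start of Section \ref{sec:degree-failures} ensuring $\e_i \cap \C_i \cap \B_i$), no union bound or extra conditioning is needed; the implication is deterministic on the concentration event.

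There is no real obstacle here; the only subtlety worth a sentence in the writeup is to note that the probability bound inherits the conditioning on $r_{<i} \cup r_i^{(\text{edges})}$ from Claim \ref{cl:type1:concentration}, so the statement of the corollary should be read (as in Claim \ref{cl:type1:concentration}) under that same conditioning when it is later fed into the proof of Lemma \ref{lm:degree-failures:key}. The analogous corollary for $|F_i^{(2)}(v)|$ will presumably follow by the same two-step pattern, and then $|F_i(v)| = |F_i^{(1)}(v)| + |F_i^{(2)}(v)|$ combined with a union bound finishes Lemma \ref{lm:degree-failures:key}.
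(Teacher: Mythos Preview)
Your proposal is correct and matches the paper's own proof, which simply states that the corollary follows from \Cref{cl:type1:expectation} and \Cref{cl:type1:concentration}. Your additional remarks about the conditioning and the roadmap toward \Cref{lm:degree-failures:key} are accurate and helpful context, though not strictly needed for this one-line corollary.
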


\begin{proof}
Follows from \Cref{cl:type1:expectation} and \Cref{cl:type1:concentration}.
\end{proof}

\medskip
\noindent {\bf Deriving a concentration bound on $| F_i^{(2)}(v)|$:} 

\smallskip 
\noindent While analyzing this quantity, for technical reasons we first condition upon the colors picked by all the edges incident on $v$ that are sampled in round $i$. After this conditioning, we bound the expected value of $|F_i^{(2)}(v)|$ in \Cref{cl:type2:expectation}, and in \Cref{cl:type2:concentration} we show that w.h.p.~$|F_i^{(2)}(v)|$ does not deviate too far away from its expectation. \Cref{cor:type2:concentration} follows from \Cref{cl:type2:expectation} and \Cref{cl:type2:concentration}. 

\begin{cla}
\label{cl:type2:expectation}
Fix any color $c^*(e) \in P_i(e)$ for every edge  $e \in N_i(v) \cap S_i$. Let $\Gamma^*$ be the event which occurs iff every edge $e \in N_i(v) \cap S_i$ picks the color $c^*(e)$ in round $i$.  Then we have:  $$\E\left[ |F_i^{(2)}(v)| \ \Big| \ \Gamma^*  \right] \leq 4 \eps^2 \Delta.$$
\end{cla}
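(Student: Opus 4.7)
The plan is to mirror the argument of \Cref{cl:type1:expectation}, but now focusing on color collisions at the endpoint $u$ (rather than $v$) of each sampled edge $(u,v) \in N_i(v) \cap S_i$. First, I would observe that conditioning on $\Gamma^*$ only fixes the colors chosen by the sampled edges incident to $v$, and in particular leaves the colors chosen by any edge in $(N_i(u) \cap S_i) \setminus \{(u,v)\}$ drawn independently and uniformly from its palette. This is because every such edge $(u,w)$ satisfies $w \neq v$ (the unique edge incident to both $u$ and $v$ is $(u,v)$ itself), so its random color is independent of $\Gamma^*$ under the algorithm's product distribution over color choices.

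Next, fix any $e = (u,v) \in N_i(v) \cap S_i$ and let $c := c^*(e)$. By definition of $F_i^{(2)}(v)$, a necessary condition for $e \in F_i^{(2)}(v)$ is that some edge $e'' \in (N_{i,c}(u) \cap S_i) \setminus \{e\}$ picks color $c$. Hence
\[
\Pr\!\left[ e \in F_i^{(2)}(v) \,\Big|\, \Gamma^* \right]
\;\leq\; 1 - \prod_{e'' \in (N_{i,c}(u) \cap S_i) \setminus \{e\}}
\left(1-\frac{1}{|P_i(e'')|}\right)
\;\leq\; 1 - \prod_{e'' \in N_{i,c}(u) \cap S_i}
\left(1-\frac{1}{|P_i(e'')|}\right),
\]
where the last inequality comes from including one extra factor, which is at most $1$. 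Invoking the estimates $|P_i(e'')| = (1-\eps)^{2(i-1)}(1\pm\gamma_i)\,\Delta$ from event $\e_i$ and $|N_{i,c}(u) \cap S_i| = (\eps\pm\eps^2)\,(1-\eps)^{2(i-1)}(1\pm\gamma_i)\,\Delta$ from events $\C_i \cap \B_i$, together with \Cref{cl:technical:2}, bounds this probability by $\eps + (4\eps\gamma_i + \eps^2) \leq 2\eps$.

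Finally, linearity of expectation and the bound $|N_i(v) \cap S_i| \leq (\eps+\eps^2)\,|N_i(v)| \leq 2\eps\Delta$ (from $\B_i$ together with $|N_i(v)| \leq (1+\eps^2)\Delta$) yield
\[
\E\!\left[|F_i^{(2)}(v)| \,\big|\, \Gamma^*\right]
\;\leq\; \sum_{e \in N_i(v) \cap S_i} \Pr\!\left[ e \in F_i^{(2)}(v)\,\big|\,\Gamma^*\right]
\;\leq\; 2\eps \cdot |N_i(v) \cap S_i|
\;\leq\; 4\eps^2 \Delta,
\]
as claimed. The only conceptual subtlety is justifying the conditional independence step in the first paragraph: one must verify that $\Gamma^*$ does not entangle the colors chosen by edges incident to $u$ but not to $v$. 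This is clean here because $\Gamma^*$ is defined solely through the colors of edges in $N_i(v) \cap S_i$, while the relevant edges at $u$ lie in $(N_i(u) \cap S_i) \setminus \{e\}$, which is disjoint from $N_i(v) \cap S_i$; the corresponding color-choice random variables are therefore independent of $\Gamma^*$, and the product formula above is valid.
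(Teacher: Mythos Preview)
Your proof is correct and follows essentially the same approach as the paper's: both fix the colors of edges in $N_i(v)\cap S_i$ via $\Gamma^*$, observe that for each $e=(u,v)$ the event $e\in F_i^{(2)}(v)$ is governed by the independent color choices of edges in $(N_{i,c}(u)\cap S_i)\setminus\{e\}$, bound the resulting probability by $2\eps$ using \Cref{cl:technical:2}, and sum over $|N_i(v)\cap S_i|\leq 2\eps\Delta$. Your explicit justification of the disjointness of $(N_i(u)\cap S_i)\setminus\{e\}$ and $N_i(v)\cap S_i$ (hence the conditional independence) is a nice touch that the paper leaves implicit.
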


\begin{proof}
The proof is  analogous to the proof of \Cref{cl:type1:expectation}. Nevertheless, for the sake of completeness, we reproduce the same chain of reasoning in its entirety.

The event $\Gamma^*$ completely determines the set $F_i^{(1)}(v)$. Furthermore, we have  $F_i^{(2)}(v) \subseteq (N_i(v) \cap S_i) \setminus F_i^{(1)}(v)$. Consider any edge $e = (u, v) \in (N_i(v) \cap S_i) \setminus F_i^{(1)}(v)$, which picks the color $c^*(e)$ in round $i$. Our immediate goal is to bound the probability that this edge $e$ does {\em not} belong to the set $F_i^{(2)}(v)$. Towards this end, we first recall that the bits $r_{< i} \cup r_i^{(\text{edges})}$  we have already conditioned upon ensure the occurrence of the event $\e_i \cap \C_i \cap \B_i$. Hence, we have:
\begin{eqnarray}
\label{eq:type2:900}
 |P_i(e')| & = & (1-\epsilon)^{2(i-1)}\cdot (1\pm \gamma_i) \cdot \Delta \text{ for all edges } e' \in E_i. \\
 \label{eq:type2:901}
 |N_{i, c^*(e)}(x)| & = &  (1-\epsilon)^{2(i-1)}\cdot (1\pm \gamma_i) \cdot \Delta \text{ for all nodes } x \in V. \\
 \label{eq:type2:902}
  |N_{i, c^*(e)}(x) \cap S_i|  & = &  (\eps \pm \eps^2) \cdot (1-\epsilon)^{2(i-1)}\cdot (1\pm \gamma_i) \cdot \Delta \text{ for all nodes } x \in V.
 \end{eqnarray}
The  edge $e = (u, v)$  does not belong to the set $F_i^{(2)}(v)$ iff no edge $e' \in (N_{i, c^*(e)}(u) \cap S_i) \setminus \{e\}$ picks the  color $c(e') = c^*(e)$ in round $i$. Hence, we  derive that:
\begin{eqnarray}
\Pr\left[e\not\in F_i^{(2)}(v) \ \Big| \  \Gamma^* \right] & = & \prod_{e'\in (N_{i,c^*(e)}(u)\cap S_i)  \setminus \{e\}} \left(1-\frac{1}{|\palette_i(e')|}\right) \nonumber \\
	& = & \left(1-\frac{1}{(1-\eps)^{2(i-1)} \cdot (1\pm \gamma_i) \cdot \Delta}\right)^{|N_{i,c^*(e)}(u) \cap S_i| - 1} \nonumber \\
	& \geq & \left(1-\frac{1}{(1-\eps)^{2(i-1)} \cdot (1\pm \gamma_i) \cdot \Delta}\right)^{|N_{i,c^*(e)}(u) \cap S_i| } \nonumber \\
	 & = & \left(1-\frac{1}{(1-\eps)^{2(i-1)} \cdot (1\pm \gamma_i) \cdot \Delta}\right)^{(\eps\pm \eps^2) \cdot (1-\eps)^{2(i-1)} \cdot (1\pm \gamma_i) \cdot \Delta} \nonumber \\
	& \geq & 1 - \eps - (4\eps \gamma_i + \eps^2) \nonumber \\
	& \geq & 1 - 2\eps. \label{eq:type2:903}
\end{eqnarray}
In the derivation above, the second step follows from~(\ref{eq:type2:900}), the fourth step follows from~(\ref{eq:type2:902}), the fifth step follows from~\Cref{cl:technical:2},  the last step follows from~(\ref{eq:eps}) and~\Cref{cor:gamma}. Thus, we have:
\begin{eqnarray}
\label{eq:lm:degree-failures:key-1:type2}
\Pr\left[e \notin F_i^{(2)}(v) \ \Big| \ \Gamma^* \right] \geq 1 - 2\eps \text{ for every edge } e \in (N_i(v) \cap S_i) \setminus F_i^{(1)}(v).
\end{eqnarray}
Now, applying linearity of expectation, we get:
\begin{eqnarray}
\E\left[ |F_i^{(2)}(v)|  \Big| \ \Gamma^* \right] & = & \sum_{e \in (N_i(v) \cap S_i) \setminus F_i^{(1)}(v)} \Pr\left[e \in F_i^{(2)}(v) \ \Big| \ \Gamma^*  \right]  \nonumber \\
& \leq & |N_i(v) \cap S_i| \cdot (2 \eps) \qquad \qquad \qquad (\text{by~(\ref{eq:lm:degree-failures:key-1:type2})}) \nonumber \\
& \leq &  (2\eps  \Delta) \cdot (2\eps) \qquad \qquad \qquad \qquad \  (\text{by~(\ref{eq:eps}),~(\ref{eq:type2:902}) and~\Cref{cor:gamma}}) \nonumber \\
& = &  4\eps^2 \Delta. \nonumber 
\end{eqnarray}
This concludes the proof of the claim.
\end{proof}

\begin{cla}
\label{cl:type2:concentration}
Fix any color $c^*(e) \in P_i(e)$ for every edge  $e \in N_i(v) \cap S_i$. Let $\Gamma^*$ be the event which occurs iff every edge $e \in N_i(v) \cap S_i$ picks the color $c^*(e)$ in round $i$.  Then we have:  
$$\Pr\left[ |F_i^{(2)}(v)| \leq \E\left[ |F_i^{(2)}(v)| \right] + 50  \sqrt{\Delta \ln n}  \ \Big| \ \Gamma^*  \right] \geq 1 - 1/n^{2000}.$$
\end{cla}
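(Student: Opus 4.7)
Conditioning on $\Gamma^*$ on top of $r_{<i}\cup r_i^{(\text{edges})}$ fixes the colors of every edge in $N_i(v)\cap S_i$, and in particular determines both $F_i^{(1)}(v)$ and the complementary set $Q:=(N_i(v)\cap S_i)\setminus F_i^{(1)}(v)$. The remaining randomness is the collection $\{X_{e'}\}_{e'\in S_i,\,v\notin e'}$, where each $X_{e'}$ is uniform on $P_i(e')$ and the $X_{e'}$ are mutually independent (edges disjoint from $N_i(v)\cap S_i$ are colored independently of $\Gamma^*$). I plan to write $|F_i^{(2)}(v)|=f(X)$ and mimic the Bernstein-type argument of~\Cref{cor:cl:key-lem-2-proof-1}, invoking~\Cref{lm:bernstein:new}.

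\medskip
\noindent\textbf{Lipschitz and variance constants.} For $(u,v)\in Q$, the indicator $\mathbf{1}[(u,v)\in F_i^{(2)}(v)]$ depends only on the colors of edges in $N_{i,c^*(u,v)}(u)\cap S_i\setminus\{(u,v)\}$. Hence for $e'=(x,y)\in S_i$ with $v\notin\{x,y\}$, changing $X_{e'}$ can flip at most the two indicators for $(x,v)$ and $(y,v)$, each by at most $1$, so $d_{e'}\leq 2$. For the conditional variance, fix $X_{-\{e'\}}$: for each of $(x,v),(y,v)$ lying in $Q$, either some other edge of $N_{i,c^*(\cdot)}(\cdot)\cap S_i$ has already chosen the relevant color (in which case the indicator is deterministically $1$), or the indicator equals $\mathbf{1}[X_{e'}=c^*(x,v)]$ (resp.~$\mathbf{1}[X_{e'}=c^*(y,v)]$), whose variance is at most $1/|P_i(e')|$. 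Combining via $\Var(A+B)\leq 2(\Var A+\Var B)$ together with $|P_i(e')|\geq (1-\eps)^{2(i-1)}(1-\gamma_i)\Delta$ (from $\e_i$ and~\Cref{cor:gamma}) yields $\lambda_{e'}\leq 8/\bigl((1-\eps)^{2(i-1)}\Delta\bigr)$.

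\medskip
\noindent\textbf{Summing variances and Bernstein.} Only edges $e'$ with at least one endpoint in $N(v):=\{u:(u,v)\in N_i(v)\cap S_i\}$ can contribute, so the count of relevant $e'$ is at most $\sum_{u\in N(v)}|N_i(u)\cap S_i|=O(\eps^2\Delta^2)$, using $\B_i$ to bound both $|N(v)|$ and each $|N_i(u)\cap S_i|$ by $O(\eps\Delta)$. Multiplying by the per-edge variance bound and using $(1-\eps)^{2(t_\eps-1)}\geq\eps^{1/K}$ (exactly as in the proof of~\Cref{cor:sampled:degree}) yields $\sum_{e'}\lambda_{e'}=O(\eps^{2-1/K}\Delta)$. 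Setting $t=50\sqrt{\Delta\ln n}$, the bounds~\eqref{eq:eps} and~\eqref{eq:t} make both $\sum_{e'}\lambda_{e'}$ and $t\cdot\max_{e'}d_{e'}/3=O(\sqrt{\Delta\ln n})$ at most $\Delta/c$ for an arbitrarily large constant $c$; plugging these into~\Cref{lm:bernstein:new} gives a Bernstein exponent of $\Omega(\ln n)$ with a sufficiently large hidden constant, delivering the claimed tail $1/n^{2000}$.

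\medskip
\noindent\textbf{Main obstacle.} The delicate part is the variance bound: one must case-split on $c^*(x,v)=c^*(y,v)$ versus $c^*(x,v)\neq c^*(y,v)$, handle the sub-cases where only one (or neither) of $(x,v),(y,v)$ lies in $Q$, and track when one indicator is already forced to $1$ by $X_{-\{e'\}}$. Showing that the variance scales as $O(1/|P_i(e')|)$ rather than the naive $d_{e'}^2=O(1)$ is what keeps $\sum_{e'}\lambda_{e'}$ small enough for Bernstein to yield the $1/n^{2000}$ concentration.
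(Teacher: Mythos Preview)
Your Bernstein/bounded-variances route is correct and leads to the claimed bound, but it is \emph{not} how the paper argues. The paper observes that, under $\Gamma^*$, the edges of $W_i(v):=(N_i(v)\cap S_i)\setminus F_i^{(1)}(v)$ carry \emph{pairwise distinct} colors $\chi_i(v)$, and then models the remaining randomness as a balls-and-bins process: each ball $e\in Z_i(v)$ (a sampled edge not touching $v$ but adjacent to some $W_i(v)$ edge) lands in bin $c$ iff $e$ picked $c$ and some neighbor in $W_i(v)$ has color $c$, else in a dummy bin. The non-empty-bin indicators $\{Y_c\}_{c\in\chi_i(v)}$ are NA by \Cref{empty-bins}, and the distinct-colors observation gives a bijection $g:\chi_i(v)\to W_i(v)$ with $Y_c=\mathbf 1[g(c)\in F_i^{(2)}(v)]$, so $|F_i^{(2)}(v)|=\sum_c Y_c$ and \Cref{hoeffding-NA} immediately yields the tail. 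This avoids any variance computation.

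Your approach recycles the machinery from the $|N_{i+1,c}(v)|$ analysis (\Cref{cl:nodecolor:1}, \Cref{cor:cl:key-lem-2-proof-1}) and works just as well numerically; two small remarks. First, your ``main obstacle'' about $c^*(x,v)=c^*(y,v)$ is moot: if both $(x,v),(y,v)\in Q=W_i(v)$ then by definition of $F_i^{(1)}(v)$ their colors differ, so this case never arises and the covariance term is non-positive. Second, \Cref{lm:bernstein:new} as stated in the paper is for \emph{0/1} variables, whereas your $X_{e'}$ is the chosen color; the underlying bounded-variances inequality (Dubhashi--Panconesi, Chapter~8.1) holds for general independent variables, so this is only a citation mismatch, not a mathematical gap. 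The paper's NA argument sidesteps this technicality entirely and is a bit cleaner; your argument has the virtue of being uniform with the treatment of $|N_{i+1,c}(v)|$.
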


\begin{proof}
Let $W_i(v) = (N_i(u) \cap S_i) \setminus F_i^{(1)}(v)$ denote the set of edges incident on $v$ that get sampled in round $i$ and do {\em not} end up being  type-(1) failures under the event $\Gamma^*$. By definition, all the edges in $W_i(v)$ receive distinct colors under the event $\Gamma^*$, and we have $F_i^{(2)}(v)  \subseteq W_i(v)$. Let $Z_i(v) = \{ (u, u') \in S_i \setminus N_i(v):  \text{either } u \in W_i(v) \text{ or } u' \in W_i(v)\}$ denote the set of  edges sampled in round $i$ that are not themselves incident on $v$, but are neighbors of at least one edge in $W_i(v)$. Note that the sets $W_i(v)$ and $Z_i(v)$, along with the colors picked by the edges in $W_i(v)$, are completely determined by the bits $r_{<i} \cup r_i^{\text{(edges)}}$ and the event $\Gamma^*$ we are conditioning upon. On the other hand, the colors picked by the edges in $Z_i(v)$ are yet to be determined. In particular, each edge in $Z_i(v)$ will pick a color uniformly at random from the set $P_i(e)$, independently of the other edges, and these colors will uniquely  determine the set $F_i^{(2)}(v)$. We can think of the edges  $e \in Z_i(v)$ picking their colors in round $i$ as a ``balls and bins" process, as described below. 

There is a  ball for each edge  $e \in Z_i(v)$, a  bin for each color $c \in [(1+\eps^2)\Delta]$, and an additional  dummy bin $\perp$. Consider any edge $e \in Z_i(v)$, and let $W_i(v, e) \subseteq W_i(v)$ be the set of edges $e' \in W_i(v)$ that share an endpoint with $e$. Note that $|W_i(v,e)| \in \{1, 2\}$. Suppose that the edge $e$ picks a color $c \in P_i(e)$ in round $i$. Then the ball for  $e$ gets thrown into the bin for   $c$ iff some edge $e' \in W_i(v, e)$ picked the same color $c$ under the event $\Gamma^*$; otherwise the ball for $e$ goes to the dummy bin. 

Let $\chi_i(v) = \bigcup_{e \in W_i(v)} \{ c^*(e)\}$ denote the set of colors picked by the edges $e \in W_i(v)$ in round $i$, under the event $\Gamma^*$.  For each color $c \in \chi_i(v)$, define an indicator random variable $Y_c \in \{0,1\}$ that is set to one iff  the bin for the color $c$ is nonempty (has at least one ball in it) at the end of the balls and bins process described above. As each edge $e \in Z_i(v)$ picks a color independently of the other edges in $Z_i(v)$, the balls get thrown into the bins independently of each other. Hence, \Cref{empty-bins} implies that the random variables $\{Y_c\}, c \in \chi_i(v),$ are negatively associated. Since $|\chi_i(v)| \leq (1+\eps^2)\Delta$, from \Cref{hoeffding-NA} we get:
\begin{equation}
\label{eq:type2:concentration:99}
\Pr\left[ \sum_{c \in \chi_i(v)} Y_c \leq \E\left[  \sum_{c \in \chi_i(v)} Y_c \right] + 50 \sqrt{\Delta \ln n}  \ \Big| \ \Gamma^*  \right] \geq 1 - 1/n^{2000}.
\end{equation}

Recall that  no two edges in $W_i(v)$ pick the same color under the event $\Gamma^*$. Accordingly, there is a natural bijective mapping $g : \chi_i(v) \rightarrow W_i(v)$, where $g(c)$ is the unique edge  in $W_i(v)$ that picked the color $c \in \chi_i(v)$ under the event $\Gamma^*$. For each color $c \in \chi_i(v)$, we have $Y_c = 1$ iff $g(c) \in F_i^{(2)}(v)$. Since $F_i^{(2)}(v) \subseteq W_i(v)$, we infer that $\sum_{c \in \chi_i(v)} Y_c = |F_i^{(2)}(v)|$.  The claim now follows from~(\ref{eq:type2:concentration:99}).
\end{proof}

\begin{cor}
\label{cor:type2:concentration}
We have: $\Pr\left[ | F_i^{(2)}(v) | \leq 4 \eps^2 \Delta + 50 \sqrt{\Delta \ln n}\right] \geq 1 - 1/n^{2000}$.
\end{cor}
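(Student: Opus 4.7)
The plan is to combine \Cref{cl:type2:expectation} and \Cref{cl:type2:concentration} and then average out the conditioning on $\Gamma^*$. Both statements hold for \emph{every} choice of the color profile $c^*(\cdot) : N_i(v)\cap S_i \to [(1+\eps^2)\Delta]$ that specifies $\Gamma^*$, so the argument is essentially a law of total probability.

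First, fix any color profile $c^*$ with $c^*(e)\in P_i(e)$ for each $e\in N_i(v)\cap S_i$, and let $\Gamma^*$ be the corresponding event. \Cref{cl:type2:expectation} gives $\E[|F_i^{(2)}(v)| \mid \Gamma^*]\leq 4\eps^2\Delta$, and \Cref{cl:type2:concentration} gives that $|F_i^{(2)}(v)|$ does not exceed its conditional expectation by more than $50\sqrt{\Delta \ln n}$ except with probability at most $1/n^{2000}$ conditional on $\Gamma^*$. Chaining these two inequalities yields
\[
\Pr\!\left[\,|F_i^{(2)}(v)| \leq 4\eps^2\Delta + 50\sqrt{\Delta \ln n} \;\middle|\; \Gamma^*\right] \geq 1 - 1/n^{2000}
\]
for every such $\Gamma^*$.

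Next, observe that as we range over all possible color profiles $c^*(\cdot)$, the events $\Gamma^*$ form a partition of (the portion of) the probability space determined by the random bits $r_{<i}\cup r_i^{(\text{edges})}$ we have conditioned on, together with the color choices of the edges in $N_i(v)\cap S_i$. Applying the law of total probability by averaging the displayed bound over this partition removes the conditioning on $\Gamma^*$ and preserves the $1-1/n^{2000}$ lower bound, proving the corollary.

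The only real step is this averaging argument; there is no additional probabilistic obstacle since the two input claims are both uniform over the choice of $\Gamma^*$. One minor bookkeeping point is to confirm that the events $\Gamma^*$ do indeed partition the conditional sample space, i.e., every realization of colors for edges in $N_i(v)\cap S_i$ corresponds to exactly one profile $c^*$; this is immediate from the fact that each such edge picks a single color (or is assigned $\text{null}$, which can be absorbed into the profile by allowing $c^*(e)=\text{null}$ when $P_i(e)=\emptyset$).
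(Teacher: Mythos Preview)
Your proposal is correct and follows essentially the same approach as the paper: combine \Cref{cl:type2:expectation} and \Cref{cl:type2:concentration} to obtain the conditional bound for every fixed $\Gamma^*$, then remove the conditioning by the law of total probability since the bound is uniform over all $\Gamma^*$. Your remark about the null case is a harmless addition, since under the ambient conditioning on $\e_i$ every palette $P_i(e)$ is nonempty.
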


\begin{proof}
Fix any color $c^*(e) \in P_i(e)$ for all  $e \in N_i(v) \cap S_i$. Let $\Gamma^*$ be the event which occurs iff every edge $e \in N_i(v) \cap S_i$ picks the color $c^*(e)$ in round $i$.  \Cref{cl:type2:expectation} and \Cref{cl:type2:concentration} imply that:
\begin{equation}
\label{eq:type2:concentration:100}
\Pr\left[ | F_i^{(2)}(v) | \leq 4 \eps^2 \Delta + 50\sqrt{\Delta \ln n} \ \Big| \ \Gamma^* \right] \geq 1 - 1/n^{2000}.
\end{equation}
Since the bound in~(\ref{eq:type2:concentration:100}) holds for every possible $\Gamma^*$, the corollary follows.
\end{proof}

\medskip
\noindent {\bf Wrapping up the proof of \Cref{lm:degree-failures:key}:} 

\smallskip
\noindent
Applying  a union bound over \Cref{cor:type1:concentration} and \Cref{cor:type2:concentration}, we get:
$$\Pr\left[ | F_i^{(1)}(v) | + | F_i^{(2)}(v) | \leq 8 \eps^2 \Delta + 100 \sqrt{\Delta \ln n} \right] \geq 1 - 1/n^{305}.$$
Now, \Cref{lm:degree-failures:key} follows from the following two observations: (a) $|F_i(v)| = |F_i^{(1)}(v)| + |F_i^{(2)}(v)|$, and (b) 
$100  \sqrt{\Delta \ln n} \leq \eps^2 \Delta$ according to~(\ref{eq:eps}).

\section{Missing Proofs from \Cref{sec:recourse}}
\label{app:recourse}

\subsection{Proof of \Cref{main:recourse:analysis:100}}
\label{app:recourse:1}

We start by proving the key property for our dynamic algorithm, which implies that it implements the first phase of \Cref{alg:model-agnostic}. We re-state this lemma here for ease of reference.

\recourseimplement*

To prove the above, we must prove that each edge $e\in S^{(t)}_i$ has a tentative color $c^{(t)}$ sampled u.a.r.~from $\palette_{i}^{(t)}(e)$ (or null if $\palette_{i}^{(t)}(e)$), independently of all other edges in $S^{(t)}_i$. To prove this, we will prove a stronger claim, whereby all edges in $S_i$ (including those not in $E^{(t)}$) have a tentative color $c^{(t)}$ sampled u.a.r.~from from $\palette_{i}^{(t)}$, independent of all edges in $S_i$.

To prove the above, we will think of $c^{(t)}$ and $\palette_{i}^{(t)}$ restricted to $S_i$ as $|S_i|$-dimensional vectors, indexed by (fake or real) edges $e\in S_i$. Moreover, we let $P(e) = \{0\}$ indicate $P(e)=\emptyset$, using $c^{(t)}(e)=0$ to mean $c^{(t)}(e) = null$. With this notation, we are ready to state our key lemma.

\begin{lem}\label{independent-colors-in-_S_i}
	For all times $t$, real vector $\vec{c}$ and vector of real sets $\vec{P}$ for which 	$\Pr[{P^{(t)}_i} = \vec{P}] \neq 0$, 
	$$\Pr[c^{(t)} = \vec{c} \mid {P^{(t)}_i}=\vec{P}] = \begin{cases}
	\prod_{e\in S_i} \frac{1}{|P(e)|} & c(e)\in P(e)\,\,\, \forall e\in S_i \\
	0 & \textrm{else}.
	\end{cases}$$
\end{lem}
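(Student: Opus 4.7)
I would prove the lemma by induction on $t$ (with the round $i$ held arbitrary but fixed throughout). The base case $t=0$ is immediate from the preprocessing step, which invokes \Cref{alg:model-agnostic} on $G^{(0)}$ and samples each $c^{(0)}(e)$, $e\in S_i$, independently and uniformly from $P^{(0)}_i(e)$ (with the $\{0\}$-convention in the degenerate case).

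For the inductive step, the crux is a per-edge preservation claim: conditional on fixed palettes $P^{(t-1)}_i(e)=Q'$ and $P^{(t)}_i(e)=Q$, if $c^{(t-1)}(e)$ is uniform over $Q'$, then after \textsc{TentativelyColor}$(e)$ runs with fresh random bits, $c^{(t)}(e)$ is uniform over $Q$. I would verify this by direct case analysis of \Cref{alg:recourse-change-color}, using the decomposition $A=Q\cap Q'$, $B=Q'\setminus Q$, $C=Q\setminus Q'$. The algorithm either resamples fresh from $Q$ (when $c^{(t-1)}(e)\in B$), or samples a candidate $c\sim Q$ and replaces $c^{(t-1)}(e)$ only if $c\in C\cup\{\mathrm{null}\}$. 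The contributions sum, for a target $c^*\in A$, to
$$\tfrac{|B|}{|A|+|B|}\cdot\tfrac{1}{|A|+|C|}+\tfrac{1}{|A|+|B|}\cdot\tfrac{|A|}{|A|+|C|}=\tfrac{1}{|Q|},$$
and for $c^*\in C$ the symmetric sum likewise equals $\tfrac{1}{|Q|}$. The degenerate cases $Q'=\emptyset$ and $Q=\emptyset$ reduce to the $\{0\}$-convention and are handled identically by tracing the branches of \Cref{alg:recourse-change-color}.

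To lift the per-edge claim to the joint statement, I would condition on a rich history $H$ consisting of the round assignments $\{i(u,v)\}$ together with $\{c^{(s)}|_{S_j}\}_{s\leq t,\,j<i}$. This $H$ determines both $P^{(t-1)}_i$ and $P^{(t)}_i$. The inductive hypothesis at time $t-1$ then gives that, conditional on $H$, the vector $c^{(t-1)}|_{S_i}$ is distributed as a uniform product over $P^{(t-1)}_i$. The random bits consumed by the \textsc{TentativelyColor} calls at $(t,i)$ are mutually independent across $e\in S_i$, and independent of both $H$ and the bits that produced $c^{(t-1)}|_{S_i}$. Since each $c^{(t)}(e)$ is a function only of $c^{(t-1)}(e)$, the two palettes $P^{(t-1)}_i(e)$ and $P^{(t)}_i(e)$, and the fresh bits for $e$, the random variables $\{c^{(t)}(e)\}_{e\in S_i}$ are conditionally independent given $H$, each uniform over $P^{(t)}_i(e)$ by the per-edge preservation. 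Because this conditional law depends on $H$ only through $\vec{P}=P^{(t)}_i$, averaging over $H$ conditional on $P^{(t)}_i=\vec{P}$ yields the claimed product distribution.

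The main obstacle will be cleanly separating the sources of randomness across time steps and rounds so that the per-edge preservation composes into joint independence without hidden correlations; concretely, I need (i) that at round $i$ the fresh bits at time $t$ are independent of everything generated at time $t-1$ round $i$ and at time $t$ rounds $<i$, and (ii) that \textsc{TentativelyColor}$(e)$ really is a function only of $e$'s own palettes, previous color, and fresh bits. Once these independence properties are stated explicitly, the probability computation above compiles into \Cref{independent-colors-in-_S_i}.
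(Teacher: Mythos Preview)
Your approach is essentially the paper's: both induct on $t$, and your per-edge preservation calculation with the $A,B,C$ decomposition is exactly the paper's joint sum over $\vec d$, factored edge by edge. The one step that needs care is ``the inductive hypothesis at time $t-1$ then gives that, conditional on $H$, the vector $c^{(t-1)}|_{S_i}$ is \ldots\ a uniform product'': the lemma at time $t-1$ only conditions on $P^{(t-1)}_i$, not on your richer $H$ (which also fixes the time-$t$ colors at rounds $<i$), so this does not follow as stated. The cleanest fix is to carry the stronger, $H$-conditional statement through the induction---your argument already proves it, since given $H$ the round-$i$ process is driven entirely by fresh independent bits---and then recover the lemma by averaging over $H$. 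The paper handles the same issue differently: it conditions only on $(P^{(t)}_i,P^{(t-1)}_i)$ and invokes the oblivious-adversary assumption to argue that the extra conditioning on $P^{(t)}_i$ does not change the law of $c^{(t-1)}|_{S_i}$, after which the stated inductive hypothesis applies directly.
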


\Cref{independent-colors-in-_S_i} implies  that the coloring of edges in $E^{(t)}\cap \bigcup_i (S_i \setminus F_i)$ shares the same joint distribution as its counterpart in $G^{(t)}$. This immediately leads to \Cref{main:recourse:analysis:100}. 

\begin{proof}
	We prove this lemma by induction on $t\geq 0$, referring to the hypothesis we wish to prove for time $t$ by $\calH^t$. Hypothesis $\calH^0$ holds trivially. We now prove that $\calH^{t-1}$ implies $\calH^t$.
	
	First, we consider the same probability when further conditioning on the previous palette, $\palette^{(t-1)}_i$, for which we have that for all vector of real sets $\vec{Q}$,
	\begin{align}
	& \Pr[c^{(t)} = \vec{c} \mid {P^{(t)}_i}=\vec{P}, {P^{(t-1)}_i}=\vec{Q}] \nonumber \\
	= & \sum_{\vec{d}}\Pr[c^{(t)} = \vec{c} \mid c^{(t-1)} = \vec{d}, {P^{(t)}_i}=\vec{P}, {P^{(t-1)}_i}=\vec{Q}]\cdot \Pr[c^{(t-1)} = \vec{d}\mid {P^{(t)}_i}=\vec{P}, {P^{(t-1)}_i}=\vec{Q}]. \label{}
	\end{align}
	We begin by simplifying the term $\Pr[c^{(t-1)} = \vec{d}\mid {P^{(t)}_i}=\vec{P}, {P^{(t-1)}_i}=\vec{Q}]$, in the following claim.
	\begin{cla}
	For any real vector $\vec{d}$ and vector of real set $\vec{Q}$, we have 
	$$\Pr[c^{(t-1)} = \vec{d} \mid {P^{(t)}_i}=\vec{P}, {P^{(t-1)}_i}=\vec{Q}] = \begin{cases}
	\prod_{e\in S_i} \frac{1}{|Q(e)|} & d(e)\in Q(e)\,\,\, \forall e\in S_i \\
		0 & \textrm{else},
	\end{cases}$$
	\end{cla}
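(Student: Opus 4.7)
The plan is to reduce this claim directly to the inductive hypothesis $\calH^{t-1}$ via a conditional independence argument. By $\calH^{t-1}$ applied to the time step $t-1$ (so playing the role of ``$t$'' in the hypothesis), for any $\vec Q$ with $\Pr[P^{(t-1)}_i=\vec Q]\neq 0$ we already have
\[
\Pr[c^{(t-1)}=\vec d \mid P^{(t-1)}_i=\vec Q] \;=\; \begin{cases}\prod_{e\in S_i} 1/|Q(e)| & d(e)\in Q(e)\ \forall e\in S_i\\ 0 & \text{else,}\end{cases}
\]
so it suffices to verify that the additional conditioning on $P^{(t)}_i=\vec P$ does not alter this distribution, i.e.\ that $c^{(t-1)}|_{S_i}$ and $P^{(t)}_i$ are conditionally independent given $P^{(t-1)}_i$.

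To establish this conditional independence, I would unpack which random sources each side depends on. The vector $c^{(t-1)}|_{S_i}$ is produced by the round-$i$ invocations of \Cref{alg:recourse-change-color} at times $\le t-1$, each of which reads only the current round-$i$ palette and consumes fresh round-$i$ random bits; by $\calH^{t-1}$, once we condition on $P^{(t-1)}_i=\vec Q$ the resulting law of $c^{(t-1)}|_{S_i}$ is the uniform one above and carries no further dependence on the $<i$-round history. On the other hand, $P^{(t)}_i(e)=[(1+\eps^2)\Delta]\setminus\{c^{(t)}(e'):e'\in N^{(t)}(e)\cap\bigcup_{j<i}S^{(t)}_j\}$ is computed from (i) the $<i$-round tentative colors at time $t-1$, (ii) the $t$-th update $e^*$ (adversarial, hence deterministic), and (iii) the fresh random bits that \Cref{alg:recourse-change-color} uses at time $t$ to re-sample colors of $<i$-round edges. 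None of (i)--(iii) involves the round-$i$ random bits that drive $c^{(t-1)}|_{S_i}$, and the $<i$-round history at time $t-1$ affects round-$i$ sampling only through $P^{(t-1)}_i$ itself. Factoring the joint law according to this partition of randomness then yields
\[
\Pr[c^{(t-1)}=\vec d,\, P^{(t)}_i=\vec P \mid P^{(t-1)}_i=\vec Q] \;=\; \Pr[c^{(t-1)}=\vec d \mid P^{(t-1)}_i=\vec Q]\cdot\Pr[P^{(t)}_i=\vec P \mid P^{(t-1)}_i=\vec Q],
\]
and dividing by $\Pr[P^{(t)}_i=\vec P\mid P^{(t-1)}_i=\vec Q]$ gives the claim.

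The main obstacle is pinning down the last conditional independence rigorously: one must argue not only that ``round-$i$ bits'' and ``round-$<i$ bits'' are independent as raw random sources, but also that conditioning on $P^{(t-1)}_i=\vec Q$ does not leak enough information about the round-$<i$ history to correlate the two sides. The clean way to do this is to note that $c^{(t-1)}|_{S_i}$ is a measurable function of $P^{(t-1)}_i$ and a collection of round-$i$ random bits (because each call of \Cref{alg:recourse-change-color} at a round-$i$ edge reads only the current round-$i$ palette), and these round-$i$ bits are jointly independent of everything that feeds into $P^{(t)}_i$; this independence is preserved under conditioning on $P^{(t-1)}_i$. With that factorization in hand, the desired formula is immediate from $\calH^{t-1}$.
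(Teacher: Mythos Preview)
Your overall strategy matches the paper's exactly: argue that the extra conditioning on $P^{(t)}_i=\vec P$ does not alter the law of $c^{(t-1)}|_{S_i}$ given $P^{(t-1)}_i=\vec Q$, then invoke $\calH^{t-1}$. The paper does this in one sentence, asserting (via the oblivious-adversary assumption) that the palette changes between times $t-1$ and $t$ are independent of $c^{(t-1)}|_{S_i}$.

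Where your write-up goes wrong is in the attempted justification. You assert that ``$c^{(t-1)}|_{S_i}$ is a measurable function of $P^{(t-1)}_i$ and a collection of round-$i$ random bits,'' and earlier that $\calH^{t-1}$ implies $c^{(t-1)}|_{S_i}$ ``carries no further dependence on the $<\!i$-round history'' once $P^{(t-1)}_i$ is fixed. Neither is correct. Each call of \textsc{TentativelyColor} on a round-$i$ edge $e$ at time $s$ reads \emph{both} $P^{(s-1)}_i(e)$ and $P^{(s)}_i(e)$, so unrolling, $c^{(t-1)}(e)$ depends on the entire trajectory $P^{(0)}_i(e),\ldots,P^{(t-1)}_i(e)$, hence on the full round-$<\!i$ state at all times $\le t-1$, not just on $P^{(t-1)}_i$. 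And $\calH^{t-1}$ only gives the \emph{marginal} law of $c^{(t-1)}|_{S_i}$ given $P^{(t-1)}_i$; it does not by itself assert conditional independence from finer round-$<\!i$ information. The honest factorization conditions on the full round-$<\!i$ state $W$ at time $t-1$: given $W$, $c^{(t-1)}|_{S_i}$ depends only on round-$i$ bits while $P^{(t)}_i$ depends only on $W$, the oblivious update, and fresh time-$t$ round-$<\!i$ bits, so the two are conditionally independent given $W$. Collapsing from $W$ to $P^{(t-1)}_i$ then needs the pointwise statement that $c^{(t-1)}|_{S_i}\mid W=w$ is uniform on $\prod_e P^{(t-1)}_i(e)$ for every $w$, which is stronger than $\calH^{t-1}$ as stated. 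The paper's one-line proof elides exactly the same subtlety, so your argument is no less rigorous than the original---but the specific functional-dependence claim you wrote is false and should be dropped.
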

	\begin{proof}
		By the oblivious adversary assumption, the changes between $P^{(t)}_i$ and $P^{(t-1)}_i$---the color palettes of round $i$ of times $t$ and $t-1$---are independent of the candidate colors of $S_i$ edges in time $t-1$, namely values $c^{(t-1)}$.
		Therefore, the joint distributions  
		$\left(c^{(t-1)} \mid P^{(t)}_i = \vec{P}, P^{(t-1)}_i = \vec{Q}\right)$
		and 
		$\left(c^{(t-1)} \mid P^{(t-1)}_i = \vec{Q} \right)$
		are identically distributed. The claim thus follows by Hypothesis $\calH^{t-1}$.
	\end{proof}
	
	By the above claim and the preceding equation, we have that 
	\begin{align}
	& \Pr[c^{(t)} = \vec{c} \mid {P^{(t)}_i}=\vec{P}, {P^{(t-1)}_i}=\vec{Q}] \nonumber \\
	= & \sum_{\vec{d}}\Pr[c^{(t)} = \vec{c} \mid c^{(t-1)} = \vec{d}, {P^{(t)}_i}=\vec{P}, {P^{(t-1)}_i}=\vec{Q}]\cdot \prod_{e\in S_i} \frac{1}{|Q(e)|}.\label{conditional-colors}
	\end{align}
	On the other hand, if we denote for each edge $e\in S_i$ the color sets $Q_\cap(e) := Q(e)\cap P(e)$ and $Q_\setminus(e) := Q(e)\setminus P(e)$, then by the definition of \Cref{alg:recourse-change-color}, we have that
	\begin{align}
	\Pr[c^{(t)} = \vec{c} \mid c^{(t-1)} = \vec{d}, {P^{(t)}_i}=\vec{P}, {P^{(t-1)}_i}=\vec{Q}] = \prod_{e:\, d(e) \in Q_\setminus(e)} \frac{1}{|P(e)|}\prod_{e:\,  c(e)=d(e)\in Q_\cap(e)}\frac{|Q_\cap(e)|}{|P(e)|}.\label{conditional-colors-summand}
		\end{align}
	Summing \Cref{conditional-colors-summand} over all vectors $\vec{d}$, \Cref{conditional-colors} then implies that, if we let every edge in the following equation implicitly belong to $S_i$, then
	\begin{align*}
	\Pr[c^{(t)} = \vec{c} \mid {P^{(t)}_i}=\vec{P}, {P^{(t-1)}_i}=\vec{Q}] & = \sum_{\vec{d}}\prod_{e:\, d(e) \in Q_\setminus(e)} \frac{1}{|P(e)|}\prod_{e:\, c(e)=d(e)\in Q_\cap(e)}\frac{|Q_\cap(e)|}{|P(e)|} \prod_{e} \frac{1}{|Q(e)|} \\
	& = \prod_{e} \frac{|Q_{\setminus} (e)| + |Q_\cap(e)|}{|P(e)|} \prod_{e\in S_i} \frac{1}{|Q(e)|} \\
	& = \prod_{e} \frac{1}{|P(e)|},
	\end{align*}
	where the last step relied on $Q_\cap(e)$ and $Q_\setminus(e)$ being a partition of $Q(e)$, which  therefore implies $|Q_{\setminus}(e)| + |Q_\cap(e)| = |Q(e)|$.
	
	But then, by total probability over $\palette^{(t-1)}_i$, we therefore have that Hypothesis $\calH^t$ holds, as
	\begin{align*}
	\Pr[c^{(t)} = \vec{c} \mid {P^{(t)}_i}=\vec{P}] & = \sum_{\vec{Q}} 	\Pr[c^{(t)} = \vec{c} \mid {P^{(t)}_i}=\vec{P}, {P^{(t-1)}_i}=\vec{Q}]\cdot \Pr[{P^{(t-1)}_i}=\vec{Q}] = \prod_{e\in S_i}\frac{1}{|P(e)|}.
	\qedhere
	\end{align*}
\end{proof}

\subsection{Proof of \Cref{main:lm:recourse:2}}
\label{app:recourse:2}
In this section, we will need to use some key concepts  from the analysis of  \Cref{alg:model-agnostic}. In particular, before proceeding any further, the reader will find it useful to review all of \Cref{sec:notations:key}, and the statement of \Cref{cor:key-lem:key} from \Cref{app:model:agnostic:main:proof}.

Throughout this section, we use the symbol $\e_i^{(t)}$ to denote the random event $\e_i$ for the output of the dynamic algorithm at the end of the $t^{th}$ update. To be very specific, the event $\e_i^{(t)}$ occurs iff $|P_i^{(t)}(e)| = (1-\eps)^{2(i-1)} \cdot (1\pm \gamma_i) \cdot \Delta$ for all edges  $e \in E_i^{(t)}$. Similarly, we will use the notations $\C_i^{(t)}$ and $\e_i^{(t)}$ to respectively denote the corresponding events $\e_i$ and $\C_i$ at the end of the $t^{th}$ update. 

We now  define the following random event that will play an important role in this section.
$$\mathcal{Z} := \e_i^{(t-1)} \cap \C_i^{(t-1)} \cap \B_i^{(t-1)} \cap \e_i^{(t)} \cap \C_i^{(t)} \cap \B_i^{(t)}.$$

\begin{lem}
\label{main:lm:recourse:3}
We have $\E[|D_i| \mid \mathcal{Z}] \leq 5\eps \cdot \left(1+ \E[|D_{< i}| \mid \mathcal{Z}]\right)$.
\end{lem}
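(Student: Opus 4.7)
My plan is to bound, for each potential edge $e \in S_i$, the conditional probability that $e$ becomes dirty as a result of the $t$-th update, and then sum via linearity of expectation.

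\textbf{Step 1 (per-edge probability).} Fix $e \in S_i$ and condition on all randomness up to $e$'s color draw; in particular fix $P^{(t-1)}_i(e)$ and $P^{(t)}_i(e)$. By \Cref{independent-colors-in-_S_i}, $c^{(t-1)}(e)$ is uniform on $P^{(t-1)}_i(e)$. Inspecting \Cref{alg:recourse-change-color}, $e$ can change its tentative color only in one of two disjoint ways: (A) $c^{(t-1)}(e) \in P^{(t-1)}_i(e) \setminus P^{(t)}_i(e)$, which forces a fresh draw from $P^{(t)}_i(e)$ that necessarily differs from $c^{(t-1)}(e)$; or (B) $c^{(t-1)}(e) \in P^{(t-1)}_i(e) \cap P^{(t)}_i(e)$ and the fresh $c \sim_R P^{(t)}_i(e)$ lands in $P^{(t)}_i(e)\setminus P^{(t-1)}_i(e)$, triggering a switch. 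Writing $a = |P^{(t-1)}_i(e)\setminus P^{(t)}_i(e)|$, $b = |P^{(t)}_i(e)\setminus P^{(t-1)}_i(e)|$ and $q = |P^{(t-1)}_i(e)\cap P^{(t)}_i(e)|$, a direct calculation gives
\[
\Pr[e\in D_i\mid \cdots] \;=\; \tfrac{a}{a+q}+\tfrac{q}{a+q}\cdot \tfrac{b}{b+q} \;\leq\; \tfrac{a+b}{\min(|P^{(t-1)}_i(e)|,\,|P^{(t)}_i(e)|)}.
\]

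\textbf{Step 2 (symmetric-difference bound).} A color in $P^{(t-1)}_i(e) \triangle P^{(t)}_i(e)$ must be the tentative color at exactly one of the two times of some incident potential edge $e' \in N(e) \cap \bigcup_{j<i} S_j$. Such an $e'$ is either the update edge $e^*$ (contributing at most one color, and only when $e^* \in N(e)$ and $i(e^*)<i$), or a dirty edge $e' \in D_{<i} \cap N(e)$ (contributing at most two colors: $c^{(t-1)}(e')$ and $c^{(t)}(e')$). Therefore
\[
a+b \;\leq\; 2\,|D_{<i}\cap N(e)| + \mathbf{1}[e^*\in N(e),\,i(e^*)<i].
\]

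\textbf{Step 3 (aggregation and cancellation).} Conditioned on $\mathcal{Z}$, $\mathcal{E}_i^{(t-1)} \cap \mathcal{E}_i^{(t)}$ yields $\min(|P^{(t-1)}_i(e)|,|P^{(t)}_i(e)|)\geq (1-\gamma_i)(1-\eps)^{2(i-1)}\Delta$. Summing the bound of Steps 1--2 over all $e\in S_i\cap E^{(t-1)}\cap E^{(t)}$ and swapping the sum gives
\[
\E[|D_i|\mid \mathcal{Z}] \;\leq\; \frac{2}{(1-\gamma_i)(1-\eps)^{2(i-1)}\Delta}\cdot \E\!\left[\sum_{e'\in D_{<i}\cup\{e^*\}} |N(e')\cap S_i^{(t)}|\,\middle|\,\mathcal{Z}\right].
\]
By $\mathcal{B}_i^{(t)}$ (together with the inductive near-regularity bound $|N_i^{(t)}(v)|\leq (1+\gamma_i)(1-\eps)^{i-1}\Delta$ supplied by the prior-round $\mathcal{B}_j^{(t)}$), one has $|N(e')\cap S_i^{(t)}|\leq 2(\eps+\eps^2)(1+\gamma_i)(1-\eps)^{i-1}\Delta$ for every potential edge $e'$. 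Substituting cancels one $(1-\eps)^{i-1}\Delta$ factor, leaving
\[
\E[|D_i|\mid \mathcal{Z}] \;\leq\; \frac{4\eps\,(1+O(\sqrt{\eps}))}{(1-\gamma_i)(1-\eps)^{i-1}}\cdot \bigl(1+\E[|D_{<i}|\mid \mathcal{Z}]\bigr).
\]
Using $\gamma_i\leq \sqrt{\eps}$ from \Cref{cor:gamma}, $(1-\eps)^{-(i-1)}\leq \exp(2\eps t_\eps)\leq \eps^{-1/K}$ by~\eqref{eq:t}, and the parameter regime of~\eqref{eq:eps}, the prefactor collapses to at most $5\eps$, yielding the claimed inequality.

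\textbf{Main obstacle.} The combinatorial skeleton (Steps 1--2) is mechanical, but squeezing the prefactor in Step 3 down to exactly $5\eps$ requires careful tracking of the small losses from $\gamma_i$, $(1-\eps)^{-(i-1)}$, and the $(\eps+\eps^2)$ slack in $\mathcal{B}_i$. Moreover, the upper bound $|N_i^{(t)}(v)|\leq (1+\gamma_i)(1-\eps)^{i-1}\Delta$ used above is not literally part of $\mathcal{Z}$; in the formal proof it must be supplied by an implicit conditioning on prior-round sampling events, analogous to the inductive argument behind \Cref{main:lm:degree} in \Cref{main:sec:analysis}.
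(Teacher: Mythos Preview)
Your decomposition in Steps~1--2 is correct and close in spirit to the paper's proof, but Step~3 contains a genuine gap that you yourself flag in the ``Main obstacle'' paragraph, and it is more serious than you suggest.

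The paper's argument reverses the summation as you do, but for each $e' = (u,v) \in D^*_{<i}$ it does \emph{not} count all of $N(e') \cap S_i$. Instead, for a type-(a) contribution with $c = c^{(t)}(e')$, only edges $e \in S_i$ adjacent to $e'$ with $c \in P_i^{(t-1)}(e)$ can possibly have $c^{(t-1)}(e)=c$; this restricts the count to the $c$-degree $|N^{(t-1)}_{i,c}(u)\cap S_i| + |N^{(t-1)}_{i,c}(v)\cap S_i|$. By $\mathcal{C}_i \cap \mathcal{B}_i$ (which \emph{are} in $\mathcal{Z}$), this quantity is $\leq 2(\eps+\eps^2)(1+\gamma_i)(1-\eps)^{2(i-1)}\Delta$, cancelling the $(1-\eps)^{2(i-1)}$ in the palette-size denominator exactly and giving $\E[|R^{-1}_a(e')|] \leq (5/2)\eps$. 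Type-(b) is symmetric. This is why the paper needs no bound on $|N_i(v)|$ and why the prefactor is genuinely $O(\eps)$ uniformly in $i$.

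Your route uses the full sampled neighborhood $|N(e')\cap S_i|$, which (i) requires $|N_i(v)| \leq (1+\gamma_i)(1-\eps)^{i-1}\Delta$, not supplied by $\mathcal{Z}$, and (ii) even granting that bound, only cancels one factor of $(1-\eps)^{i-1}$ against two in the denominator, leaving a prefactor $4\eps\,(1-\eps)^{-(i-1)} \leq 4\eps\cdot \eps^{-1/K}$. For this to be $\leq 5\eps$ you would need $\eps^{-1/K} \leq 5/4$, i.e.\ $\eps \geq (4/5)^{48} \approx 2.3\times 10^{-5}$; but the parameter regime~\eqref{eq:eps} allows $\eps$ arbitrarily small as $\Delta/\log n \to \infty$, so the claimed collapse to $5\eps$ is false. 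The fix is precisely to replace $|N(e')\cap S_i|$ by the $c$-degree count as above, after which everything goes through using only the events already in $\mathcal{Z}$.
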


The complete proof of \Cref{main:lm:recourse:3} appears in \Cref{app:lm:recourse:3}. For now, we focus on showing how  \Cref{main:lm:recourse:3} almost immediately leads us to the proof of \Cref{main:lm:recourse:2}.

\begin{lem}
\label{main:lm:recourse:4}
We have $\E[|D_i| \mid \neg \mathcal{Z}]  \cdot \Pr[\neg \mathcal{Z}] \leq \eps$.
\end{lem}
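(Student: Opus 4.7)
The plan is to exploit the gap between the superpolynomially small probability that $\mathcal{Z}$ fails and the merely polynomial worst-case bound on $|D_i|$, so that the product is negligible even without any refined estimate of either factor.

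First, I would invoke \Cref{main:recourse:analysis:100}: at any fixed time $t'$, the joint distribution of tentative colors produced by the dynamic algorithm is identical to that produced by \Cref{alg:model-agnostic} applied to the (near-regular) graph $G^{(t')}$. In particular the events $\e_i^{(t')}$, $\C_i^{(t')}$ and $\B_i^{(t')}$ have exactly the same probabilities as in the static analysis, and \Cref{cor:key-lem:key} gives
\[
\Pr\bigl[\e_i^{(t')} \cap \C_i^{(t')} \cap \B_i^{(t')}\bigr] \;\geq\; 1 - 1/n^{400}
\]
for each $t' \in \{t-1, t\}$. A union bound over those two times then yields $\Pr[\neg \mathcal{Z}] \leq 2/n^{400}$.

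Second, I would bound $|D_i|$ in the crudest possible way. Since a dirty edge in round $i$ must in particular be a (potential) edge on $V$, we trivially have $|D_i| \leq \binom{n}{2} \leq n^2$ deterministically, and hence $\E[|D_i| \mid \neg \mathcal{Z}] \leq n^2$.

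Combining the two estimates gives
\[
\E[|D_i| \mid \neg \mathcal{Z}] \cdot \Pr[\neg \mathcal{Z}] \;\leq\; n^2 \cdot \frac{2}{n^{400}} \;=\; \frac{2}{n^{398}},
\]
which is vastly smaller than $\eps$ since the lower bound $\eps \geq 10(\ln n/\Delta)^{1/6} \geq 10(\ln n/n)^{1/6}$ from~\eqref{eq:eps} implies $\eps \gg 1/n^{398}$ for $n \geq 2$. I do not anticipate any real obstacle here: the lemma is a book-keeping step whose only purpose is to isolate the negligible ``bad'' regime, so that the substantive recourse accounting is entirely confined to the proof of \Cref{main:lm:recourse:3}, which handles the typical case $\mathcal{Z}$.
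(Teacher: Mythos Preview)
Your proposal is correct and follows essentially the same approach as the paper: invoke \Cref{main:recourse:analysis:100} together with \Cref{cor:key-lem:key} at times $t-1$ and $t$, union bound to get $\Pr[\neg\mathcal{Z}]\le 2/n^{400}$, and then use the trivial deterministic bound $|D_i|\le n^2$ to conclude. The only cosmetic difference is that you spell out why $2/n^{398}\le\eps$ via~\eqref{eq:eps}, whereas the paper leaves this implicit.
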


\begin{proof}
\Cref{cor:key-lem:key} and \Cref{main:recourse:analysis:100}  imply that:
\begin{eqnarray}
\label{eq:recourse:500}
\Pr[\e_i^{(t-1)} \cap \C_i^{(t-1)} \cap \B_i^{(t-1)}] & \geq & 1 - 1/n^{400}. \\
\label{eq:recourse:501}
\Pr[\e_i^{(t)} \cap \C_i^{(t)} \cap \B_i^{(t)}] & \geq & 1 - 1/n^{400}. 
\end{eqnarray}
Taking a union bound over~(\ref{eq:recourse:500}) and~(\ref{eq:recourse:501}), we get:
\begin{equation}
\label{eq:recourse:502}
\Pr[\neg \mathcal{Z}] \leq 2/n^{400}.
\end{equation}
Since $|D_i| \leq n^2$ with probability one,~(\ref{eq:recourse:502}) implies that: $\E[|D_i| \mid \neg \mathcal{Z}]  \cdot \Pr[\neg \mathcal{Z}] \leq n^2 \cdot (2/n^{400}) \leq \eps$.
\end{proof}

\noindent{\bf Proof of \Cref{main:lm:recourse:2}:} From \Cref{main:lm:recourse:3} and \Cref{main:lm:recourse:4}, we now derive that:
\begin{eqnarray*}
\E[|D_{i}|] & = & \E[|D_{i}| \mid \mathcal{Z}] \cdot \Pr[\mathcal{Z}] + \E[|D_{i}| \mid \neg \mathcal{Z}] \cdot \Pr[\neg \mathcal{Z}] \\
& \leq & 5 \eps \cdot \left(1+ \E[|D_{< i}| \mid \mathcal{Z}]\right) \cdot \Pr[\mathcal{Z}] + \eps \\ 
& = & 5 \eps \cdot \Pr[\mathcal{Z}] + 5 \eps \cdot \E[|D_{< i}| \mid \mathcal{Z}] \cdot \Pr[\mathcal{Z}]  + \eps \\
& \leq & 6\eps + 6\eps \cdot \E[|D_{<i}|].
\end{eqnarray*}
In the above derivation, the first inequality follows from \Cref{main:lm:recourse:3} and \Cref{main:lm:recourse:4}.

\subsubsection{Proof of \Cref{main:lm:recourse:3}}
\label{app:lm:recourse:3}

Let $e^*$ be the edge that gets inserted/deleted during the $t^{th}$ update, that is, the sets $E^{(t)}$ and $E^{(t-1)}$ differ only in the edge $e^*$. Now, define the set $D^*_{< i}$ as follows. 
\begin{eqnarray*}
D^*_{< i }  := \begin{cases}  D_{< i} \cup \{e^*\} \text{ if } i(e^*) <  i; \\
 D_{< i} \text{ otherwise.}
 \end{cases}
\end{eqnarray*}

\begin{wrapper}
For  the rest of the proof, we will fix (condition upon)  the following {\em ``critical"} random variables:
\begin{itemize}
\item The set of edges $S_{\leq i}^* = \{ e \in E^{(t)} \cup E^{(t-1)} : i(e) \leq i\}$ that get assigned to levels at most $i$.
\item The tentative colors $c^{(t-1)}(e), c^{(t)}(e)$ for every edge $e \in S_{\leq i-1}^* = \{ e' \in S_{\leq i}^* : i(e') \leq i-1\}$.
\end{itemize}
\end{wrapper}

Note that once we fix the critical random variables, the occurrence of the event $\mathcal{Z}$ and  the contents of the sets $D_{<i}^*, D_{<i}$ are completely determined. Our main goal will be to show that:
\begin{wrapper}
\begin{equation}
\label{eq:recourse:key:100}
\E[|D_i| ] \leq 5\eps \cdot  |D^*_{<i}|, \text{ if we fix the critical random variables in such a way that event } \mathcal{Z} \text{ occurs.}
\end{equation}
\end{wrapper}
Since $|D_{<i}^*| \leq 1 + |D_{<i}|$, \Cref{main:lm:recourse:3} follows from~(\ref{eq:recourse:key:100}) if we take expectations on both sides  while still conditioning upon the event $\mathcal{Z}$.  Accordingly, for the rest of the proof,  we fix the critical random variables in such a way that the event $\mathcal{Z}$ occurs, and focus on proving~(\ref{eq:recourse:key:100}). For ease of exposition, however, from now on we will refrain from explicitly stating again and again that we have conditioned on the critical random variables.

\medskip
Our first task is to identify, for each edge $e \in D_i$, the set of edges in $D^*_{< i}$ that are  responsible for $e$ changing its tentative color during the $t^{th}$ update. Recall that an edge $e \in D_i$ changes its tentative color either in line (2) of \Cref{alg:recourse-change-color} (in this case we say that the edge $e$ is of  type-(a)), or in line (6) of \Cref{alg:recourse-change-color} (in this case we say that the edge $e$ is of type-(b)).  We now define:
 \begin{eqnarray}
 R(e) & := & \{ e' \in N(e) \cap D^*_{< i} : c^{(t)}(e') = c^{(t-1)}(e)\} \text{ for every edge } e \in D_i \text{ of type-(a)}. \label{eq:recourse:type:100} \\
 R(e) & := & \{ e' \in N(e) \cap D^*_{< i} : c^{(t-1)}(e') = c^{(t)}(e)\} \text{ for every edge } e \in D_i \text{ of type-(b).} \label{eq:recourse:type:101}
 \end{eqnarray}
 
 To appreciate the rationale behind~(\ref{eq:recourse:type:100}), consider any edge $e \in D_i$ of type-(a). For such an edge $e$,  line (1) of \Cref{alg:recourse-change-color} implies that $c^{(t-1)}(e) \in P_i^{(t-1)}(e) \setminus P_i^{(t)}(e)$. This  happens only if  $e$ has some neighboring edges $e'$ in $D_{<i}^*$ with $c^{(t)}(e') = c^{(t-1)}(e)$. These neighboring edges are precisely the ones that are {\em responsible} for $e$ becoming part of the set $D_i$. 
 
 Similarly, to appreciate the rationale behind~(\ref{eq:recourse:type:101}), consider any edge $e \in D_i$ of type-(b). For such an edge $e$,  line (5) of \Cref{alg:recourse-change-color} implies that $c^{(t)}(e) \in P_i^{(t)}(e) \setminus P_i^{(t-1)}(e)$.\footnote{Since conditioned on $\mathcal{Z}$, the event $\e_i^{(t)}$ occurs with probability one, we have $P_i^{(t)}(e) \neq \emptyset$ and hence  $c^{(t)}(e) \neq \text{null}$.} This happens only if  $e$ has some neighboring edges $e'$ in $D_{<i}^*$ with $c^{(t-1)}(e') = c^{(t)}(e)$. These neighboring edges are precisely the ones that are {\em responsible} for $e$ becoming part of the set $D_i$. 
 
 The discussion above leads us to the following observation.  
 
\begin{obs}
\label{cor:recourse:nonempty}
We have $R(e)  \neq \emptyset$ for all edges $e \in D_i$.
\end{obs}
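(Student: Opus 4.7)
The plan is a case analysis on the line of \Cref{alg:recourse-change-color} that effected the color change for the fixed $e \in D_i$, matching the two definitions of $R(e)$ in \eqref{eq:recourse:type:100}--\eqref{eq:recourse:type:101}. Call $e$ type-(a) if its color changed in line (2), and type-(b) if it changed in line (6). The strategy is first to establish a palette-asymmetry in each case---namely $c^{(t-1)}(e) \in P^{(t-1)}_i(e) \setminus P^{(t)}_i(e)$ for type-(a) and $c^{(t)}(e) \in P^{(t)}_i(e) \setminus P^{(t-1)}_i(e)$ for type-(b)---and then to unfold the palette difference using the definition of $P^{(\cdot)}_i(e)$ to produce a witness $e' \in R(e)$.

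For type-(a) the palette asymmetry is exactly the guard of the line-(1) test. For type-(b) a brief argument is needed. If $c^{(t-1)}(e)$ is an actual color, then the failure of the line-(1) test, combined with $c^{(t-1)}(e) \in P^{(t-1)}_i(e)$, forces $c^{(t-1)}(e) \in P^{(t)}_i(e)$, so $P^{(t)}_i(e) \neq \emptyset$ and the sample $c$ drawn on line (4) is a color; the line-(5) test that triggers the write then places $c^{(t)}(e) = c$ in $P^{(t)}_i(e) \setminus P^{(t-1)}_i(e)$. The remaining subcase $c^{(t-1)}(e) = \textrm{null}$ yields a change only when $c \neq \textrm{null}$, which again forces $c^{(t)}(e) \in P^{(t)}_i(e) \setminus P^{(t-1)}_i(e)$.

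To extract the witness in case (a), $c^{(t-1)}(e) \notin P^{(t)}_i(e)$ provides some $e' \in N^{(t)}(e) \cap \bigcup_{j<i} S^{(t)}_j$ with $c^{(t)}(e') = c^{(t-1)}(e)$, while $c^{(t-1)}(e) \in P^{(t-1)}_i(e)$ forbids the analogous equality at time $t-1$. Since $i(e') < i$ is fixed across updates, one of two things must hold: either $e' \in E^{(t-1)}$ with $c^{(t-1)}(e') \neq c^{(t)}(e')$, placing $e'$ in $D_{<i}$; or $e' \notin E^{(t-1)}$, in which case $e' = e^*$ is the inserted edge, and $i(e^*) < i$ puts $e^*$ into $D^*_{<i}$. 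In both subcases $e' \in N(e) \cap D^*_{<i}$ with $c^{(t)}(e') = c^{(t-1)}(e)$, so $e' \in R(e)$. Case (b) is entirely symmetric with the roles of $t-1$ and $t$ swapped: the asymmetry $c^{(t)}(e) \in P^{(t)}_i(e) \setminus P^{(t-1)}_i(e)$ produces $e' \in N^{(t-1)}(e) \cap \bigcup_{j<i} S^{(t-1)}_j$ with $c^{(t-1)}(e') = c^{(t)}(e)$, and the same dichotomy places $e'$ in $D_{<i}$ (if it survives to time $t$) or identifies it with a deleted $e^* \in D^*_{<i}$.

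No deep obstacle arises---the argument is purely structural and uses only the definitions together with the fact that the $t$-th update alters $E$ by a single edge $e^*$. The one delicate point is the handling of $\textrm{null}$ tentative colors in case (b), which I address up front so that the witness-extraction in the last paragraph applies uniformly.
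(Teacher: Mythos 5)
Your proof is correct and takes essentially the same approach as the paper's brief discussion preceding the observation: classify $e$ by which line of \Cref{alg:recourse-change-color} effected the change, establish the palette asymmetry ($c^{(t-1)}(e)\in P^{(t-1)}_i(e)\setminus P^{(t)}_i(e)$ for type-(a), $c^{(t)}(e)\in P^{(t)}_i(e)\setminus P^{(t-1)}_i(e)$ for type-(b)), and unfold the palette definition to extract a witness that either changed color (so lies in $D_{<i}$) or is the updated edge $e^*$ (so lies in $D^*_{<i}$). The one small departure is in type-(b): the paper appeals to the conditioning on $\mathcal{Z}$ (so $\e^{(t)}_i$ holds and $P^{(t)}_i(e)\neq\emptyset$) to rule out $c^{(t)}(e)=\textrm{null}$, whereas you derive the same conclusion directly from the branch structure of the algorithm, which is marginally more elementary but does not change the essence of the argument.
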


The sets $\{R(e)\}_{e\in D_i}$ tell us, for every $e \in D_i$, which edges in $D_{<i}^*$ are responsible for $e$ changing its tentative color during the $t^{th}$ update. We now want to look at the same picture, but from the point of view of the edges in $D_{<i}^*$. Accordingly, we define:
\begin{eqnarray}
\label{eq:recourse:type:102}
R^{-1}(e') & = &  \{ e \in D_i : e' \in R(e) \} \text{ for all } e' \in D_{<i}^*. \\
R^{-1}_a(e') & = &  \{ e \in D_i  :e \text{ is of  type-(a)  and } e' \in R(e) \} \text{ for all } e' \in D_{<i}^*. \label{eq:recourse:type:103} \\
R^{-1}_b(e') & = &  \{ e \in D_i  :e \text{ is of  type-(b)  and } e' \in R(e) \} \text{ for all } e' \in D_{<i}^*. \label{eq:recourse:type:104}
\end{eqnarray}
We are  now  derive a useful claim.
\begin{cla}
\label{cla:recourse:nonempty:101}
We have: $\E[|D_i|] \leq  \sum_{e' \in D_{< i}^*} \left( \E[|R^{-1}_a(e')|] + \E[|R^{-1}_b(e')|] \right)$.
\end{cla}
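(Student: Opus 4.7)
\textbf{Proof plan for Claim \ref{cla:recourse:nonempty:101}.} The plan is to establish the inequality by a simple double-counting argument, exploiting Observation \ref{cor:recourse:nonempty}, and then take expectations under the conditioning on the critical random variables.

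First, I would observe that every edge $e \in D_i$ is, by the definition of \Cref{alg:recourse-change-color}, either of type-(a) (it changes color on line (2)) or of type-(b) (it changes color on line (6)); these two cases are mutually exclusive and exhaustive for edges in $D_i$. Consequently, for every $e' \in D^*_{<i}$, the two sets $R^{-1}_a(e')$ and $R^{-1}_b(e')$ are disjoint, and $R^{-1}(e') = R^{-1}_a(e') \cup R^{-1}_b(e')$, so $|R^{-1}(e')| = |R^{-1}_a(e')| + |R^{-1}_b(e')|$.

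Next, I would apply Observation \ref{cor:recourse:nonempty}, which guarantees $R(e) \neq \emptyset$ for every $e \in D_i$. This yields the pointwise bound $1 \leq |R(e)|$ for each $e \in D_i$, and hence
\begin{equation*}
|D_i| = \sum_{e \in D_i} 1 \;\leq\; \sum_{e \in D_i} |R(e)| \;=\; \sum_{e \in D_i} \sum_{e' \in R(e)} 1 \;=\; \sum_{e' \in D^*_{<i}} |R^{-1}(e')|,
\end{equation*}
where the last equality is simply switching the order of summation, using the definition of $R^{-1}(e')$ in \eqref{eq:recourse:type:102}, and the fact that $R(e) \subseteq D^*_{<i}$ by \eqref{eq:recourse:type:100} and \eqref{eq:recourse:type:101}. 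Combining with the decomposition from the previous paragraph gives the deterministic bound $|D_i| \leq \sum_{e' \in D^*_{<i}} (|R^{-1}_a(e')| + |R^{-1}_b(e')|)$.

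Finally, I would take expectations on both sides of this inequality and apply linearity of expectation. Recall that the set $D^*_{<i}$ is entirely determined by the conditioning on the critical random variables, so it is a (deterministic) fixed set in the expectation, and the sum over $e' \in D^*_{<i}$ may be moved outside the expectation. This yields
\begin{equation*}
\E[|D_i|] \;\leq\; \sum_{e' \in D^*_{<i}} \left( \E[|R^{-1}_a(e')|] + \E[|R^{-1}_b(e')|] \right),
\end{equation*}
which is exactly the claim. There is no real obstacle here: the argument is purely a double-counting identity combined with the nonemptiness guarantee of \Cref{cor:recourse:nonempty}, which was the substantive ingredient proven earlier. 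The real difficulty in the subsequent analysis of \Cref{main:lm:recourse:3} will lie in bounding $\E[|R^{-1}_a(e')|]$ and $\E[|R^{-1}_b(e')|]$ by $O(\eps)$ for each fixed $e' \in D^*_{<i}$, which requires invoking the structural guarantees of the event $\mathcal{Z}$ (in particular the palette-size and $c$-degree bounds from $\e^{(t)}_i, \C^{(t)}_i$).
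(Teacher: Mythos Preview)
Your proposal is correct and follows essentially the same approach as the paper: both use \Cref{cor:recourse:nonempty} to obtain the deterministic inequality $|D_i| \leq \sum_{e' \in D^*_{<i}} |R^{-1}(e')|$ (the paper phrases this as $D_i = \bigcup_{e'} R^{-1}(e')$ followed by a union bound, you phrase it as double counting via $1 \leq |R(e)|$), then split $R^{-1}(e')$ into its type-(a) and type-(b) parts and take expectations. Your additional remark that $D^*_{<i}$ is fixed under the conditioning, allowing the sum to be pulled outside the expectation, is a helpful clarification that the paper leaves implicit.
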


\begin{proof}
\Cref{cor:recourse:nonempty} implies that $D_i = \bigcup_{e' \in D_{<i}^*} R^{-1}(e')$. Since each $R^{-1}(e')$ is partitioned into two subsets $R^{-1}_a(e')$ and $R^{-1}_b(e')$, we get:
$$|D_i| \leq  \sum_{e' \in D_{< i}^*} \left( |R^{-1}_a(e')| + |R^{-1}_b(e')|\right).$$
The claim follows if we take expectations on both sides of the above inequality. 
\end{proof}

\Cref{cla:recourse:nonempty:101} suggests the following  natural approach for coming up an upper bound on $\E[|D_i|]$: Separately upper bound $\E[|R^{-1}_a(e')|]$ and $\E[|R^{-1}_b(e')|]$ for every edge $e' \in D_{<i}^*$, and then sum up the resulting inequalities. We implement this approach in the next two claims.

\begin{cla}
\label{cla:recourse:1}
We have $\E[|R_a^{-1}(e')| ] \leq (5/2) \cdot \eps$ for every edge $e' \in D_{< i}^*$. 
\end{cla}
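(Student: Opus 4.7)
\textbf{Proof plan for \Cref{cla:recourse:1}.} Fix $e' = (x,y) \in D_{<i}^*$ and abbreviate $\chi := c^{(t)}(e')$, which is well-defined unless $e'=e^*$ is a deletion, in which case $R_a^{-1}(e')=\emptyset$ by definition of $R(e)$ for type-(a) edges. An edge $e$ lies in $R_a^{-1}(e')$ precisely when $e \in N(e') \cap S_i \cap E^{(t-1)} \cap E^{(t)}$ is of type-(a) and $c^{(t-1)}(e)=\chi$. Since $e' \in N(e)$ has $i(e') < i$ and $c^{(t)}(e')=\chi$, the color $\chi$ is automatically excluded from $P_i^{(t)}(e)$, so the single condition $c^{(t-1)}(e) = \chi \in P_i^{(t-1)}(e)$ already places $e$ in the type-(a) branch of \Cref{alg:recourse-change-color} and makes it dirty. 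Thus
\[
\E[|R_a^{-1}(e')|] \;\leq\; \sum_{e \in N(e') \cap S_i} \Pr[c^{(t-1)}(e)=\chi].
\]

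The next step invokes \Cref{main:recourse:analysis:100}: at time $t-1$, the colors $\{c^{(t-1)}(e)\}_{e \in S_i}$ are mutually independent and each is uniform on its palette $P_i^{(t-1)}(e)$. The key observation is that the critical random variables we conditioned on, together with the event $\mathcal{Z}$, are measurable with respect to $r_{<i}$, $r_i^{(\text{edges})}$, and the time-$t-1$/time-$t$ colors of edges in rounds $<i$ only; none of these involve the round-$i$ tentative colors $\{c^{(t-1)}(e)\}_{e\in S_i}$. Hence the uniform-independent law persists under the conditioning, and for each $e \in N(e')\cap S_i$ we obtain $\Pr[c^{(t-1)}(e)=\chi] \leq 1/|P_i^{(t-1)}(e)|$, which is nonzero only when $\chi \in P_i^{(t-1)}(e)$ (in particular only when $\chi \in P_i^{(t-1)}(v)$ for the shared endpoint $v \in \{x,y\}$ of $e$ and $e'$).

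Under that condition, the neighbors $e=(v,u)$ of $e'$ in $S_i$ with $\chi \in P_i^{(t-1)}(u)$ are exactly the edges of $N_{i,\chi}^{(t-1)}(v) \cap S_i$. Since $\mathcal{Z}$ asserts $\e_i^{(t-1)} \cap \C_i^{(t-1)} \cap \B_i^{(t-1)}$, the definitions in \Cref{sec:notations:key} give
\[
|P_i^{(t-1)}(e)| \;\geq\; (1-\eps)^{2(i-1)}(1-\gamma_i)\Delta, \qquad
|N_{i,\chi}^{(t-1)}(v)\cap S_i| \;\leq\; (\eps+\eps^2)(1-\eps)^{2(i-1)}(1+\gamma_i)\Delta.
\]
Summing over the (at most two) endpoints $v \in \{x,y\}$ yields
\[
\E[|R_a^{-1}(e')|] \;\leq\; \frac{2(\eps+\eps^2)(1+\gamma_i)}{1-\gamma_i},
\]
and $\gamma_i \leq \eps^{1/2}$ (by \Cref{cor:gamma}) together with $\eps \leq 10^{-4}$ (by~\eqref{eq:eps}) collapses the right-hand side to $(5/2)\eps$, as required.

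The main obstacle is the first step of the bookkeeping: verifying that the joint conditioning on the critical random variables \emph{and} on the event $\mathcal{Z}$ preserves the uniform-independent distribution of $\{c^{(t-1)}(e)\}_{e \in S_i}$ supplied by \Cref{main:recourse:analysis:100}. After this, the remainder is a clean palette-versus-neighborhood calculation driven by the sharp concentration bounds implicit in $\mathcal{Z}$.
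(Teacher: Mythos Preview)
Your proposal is correct and follows essentially the same approach as the paper: both identify that the relevant edges $e$ lie in $N_{i,\chi}^{(t-1)}(v)\cap S_i^{(t-1)}$ for an endpoint $v$ of $e'$, use \Cref{main:recourse:analysis:100} to get $\Pr[c^{(t-1)}(e)=\chi]\leq 1/|P_i^{(t-1)}(e)|$, and then invoke $\e_i^{(t-1)}\cap\C_i^{(t-1)}\cap\B_i^{(t-1)}\subseteq\mathcal{Z}$ to bound the palette and neighborhood sizes. Your version is slightly more explicit than the paper's about why the conditioning on the critical random variables and $\mathcal{Z}$ (which are determined by rounds $<i$ and $r_i^{(\text{edges})}$) leaves the round-$i$ color distribution intact, and about why the type-(a) condition is automatic once $c^{(t-1)}(e)=\chi$; these are welcome clarifications but not a different argument.
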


\begin{proof}
Fix any edge $e' = (u, v) \in D_{< i}^*$, and let $c = c^{(t)}(e')$ denote its tentative color in $G^{(t)}$. Note that for an edge $e \in E^{(t-1)} \cap E^{(t)}$ to be included in the set $R^{-1}_a(e')$, it  needs to: (1) be a neighbor of $e'$ with $i(e) = i$ and (2)  have the color $c$ in its palette for round $i$ in $G^{(t-1)}$. This happens iff   $e \in \left( (N_{i, c}^{(t-1)}(u) \cap S_i^{(t-1)}) \cup (N_{i, c}^{(t-1)}(v) \cap S_i^{(t-1)}) \right) \setminus \{e^*\}$.\footnote{Note that this specific set is completely determined once we condition upon the critical random variables.} For such an edge $e$, we have:
\begin{eqnarray}
\Pr[e \in R^{-1}_a(e') ]  & = & \Pr[ c^{(t-1)}(e) = c ]  =  \frac{1}{|P_{i}^{(t-1)}(e)|} \nonumber \\
& \leq & \frac{1}{(1-\eps)^{2(i-1)} \cdot (1-\gamma_i) \cdot \Delta} \label{eq:main:recourse:2}
\end{eqnarray}
In the above derivation, the second equality holds because of \Cref{main:recourse:analysis:100}. The last inquality holds since conditioned on  $\mathcal{Z}$, the event $\e_i^{(t-1)}$ occurs with probability one. Now, summing~(\ref{eq:main:recourse:2}) over all the relevant edges $e$, we get:
\begin{eqnarray}
\E[|R^{-1}_a(e')| ] & = & \sum_{e} \Pr[e \in R^{-1}_a(e') ] \nonumber \\
& \leq & \frac{|N_{i, c}^{(t-1)}(u) \cap S_i^{(t-1)}| + |N_{i, c}^{(t-1)}(v) \cap S_i^{(t-1)}|}{(1-\eps)^{2(i-1)} \cdot (1-\gamma_i) \cdot \Delta} \nonumber  \\
& \leq & \frac{2 \cdot (\eps+\eps^2) \cdot (1-\eps)^{2(i-1)} \cdot (1+\gamma_i) \cdot \Delta}{(1-\eps)^{2(i-1)} \cdot (1-\gamma_i) \cdot \Delta} \nonumber \\
& \leq & 2(\eps+\eps^2) (1+3\gamma_i) \leq (5/2) \cdot \eps. \nonumber
\end{eqnarray}
The second inequality follows from the fact that conditioned on $\mathcal{Z}$,  the event $\B_i^{(t-1)} \cap \C_i^{(t-1)}$ occurs with probability one. The last two inequalities follow from~(\ref{eq:eps}) and~\Cref{cor:gamma}. 
\end{proof}

\begin{cla}
\label{cla:recourse:2}
We have $\E[|R^{-1}_b(e')| ] \leq (5/2) \cdot \eps$ for every edge $e' \in D_{< i}^*$. 
\end{cla}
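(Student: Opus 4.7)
The argument will closely mirror the proof of \Cref{cla:recourse:1}, but working at time $t$ rather than time $t-1$. Fix $e'=(u,v)\in D_{<i}^*$ and, this time, set $c:=c^{(t-1)}(e')$ (the tentative color that $e'$ \emph{held} before the $t^{th}$ update, which is determined by the critical random variables). The plan is to show that any edge $e$ landing in $R_b^{-1}(e')$ must: (i) be an edge of $E^{(t-1)}\cap E^{(t)}\setminus\{e^*\}$ incident to $e'$ with $i(e)=i$, (ii) belong to $S_i^{(t)}$, (iii) have $c\in P_i^{(t)}(e)$ (since for a type-(b) change we need $c\in P_i^{(t)}(e)\setminus P_i^{(t-1)}(e)$), and (iv) actually pick $c^{(t)}(e)=c$. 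Condition (iii) says exactly that $e\in (N_{i,c}^{(t)}(u)\cup N_{i,c}^{(t)}(v))\cap S_i^{(t)}$.

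Next, I will use \Cref{main:recourse:analysis:100} to conclude that, conditioned on the critical random variables (which only pin down random choices in rounds $\le i-1$ and the round assignments for all edges), the tentative color $c^{(t)}(e)$ of an edge $e$ with $i(e)=i$ is distributed uniformly over $P_i^{(t)}(e)$. Consequently, for each candidate $e$ from the set above,
\[
\Pr[e\in R_b^{-1}(e')]\ \le\ \Pr[c^{(t)}(e)=c]\ \le\ \frac{1}{|P_i^{(t)}(e)|}\ \le\ \frac{1}{(1-\eps)^{2(i-1)}(1-\gamma_i)\Delta},
\]
where the final inequality uses that $\mathcal{E}_i^{(t)}$ holds under $\mathcal{Z}$. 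Summing this bound over the $|N_{i,c}^{(t)}(u)\cap S_i^{(t)}|+|N_{i,c}^{(t)}(v)\cap S_i^{(t)}|$ candidates and invoking $\mathcal{B}_i^{(t)}\cap\mathcal{C}_i^{(t)}$ (both guaranteed under $\mathcal{Z}$) to upper bound each of these two cardinalities by $(\eps+\eps^2)(1-\eps)^{2(i-1)}(1+\gamma_i)\Delta$ will give
\[
\E[|R_b^{-1}(e')|]\ \le\ \frac{2(\eps+\eps^2)(1+\gamma_i)}{1-\gamma_i}\ \le\ 2(\eps+\eps^2)(1+3\gamma_i)\ \le\ (5/2)\eps,
\]
using \eqref{eq:eps} and \Cref{cor:gamma} exactly as at the end of the proof of \Cref{cla:recourse:1}.

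I expect no real obstacle beyond making sure the characterization in step~(i)--(iv) is tight. The only subtlety is that type-(b) requires $e$ to enter the ``else'' branch of \Cref{alg:recourse-change-color} \emph{and} the newly sampled color to satisfy $c\in P_i^{(t)}(e)\setminus P_i^{(t-1)}(e)$. Both these restrictions only \emph{shrink} the set $R_b^{-1}(e')$ compared to $\{e:c^{(t)}(e)=c\}$, so using the marginal $\Pr[c^{(t)}(e)=c]=1/|P_i^{(t)}(e)|$ is a valid upper bound; we do not need to track the conditional probability of entering the correct branch. This is the key asymmetry with \Cref{cla:recourse:1}: there the relevant palette and neighborhood were those at time $t-1$ (and the relevant event for the lower bound on $|P_i^{(t-1)}(e)|$ was $\mathcal{E}_i^{(t-1)}$), whereas here we use their time-$t$ counterparts, which is why we need both halves of $\mathcal{Z}$.
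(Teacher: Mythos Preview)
Your proposal is correct and follows essentially the same approach as the paper's proof: fix $c=c^{(t-1)}(e')$, observe that candidates for $R_b^{-1}(e')$ lie in $(N_{i,c}^{(t)}(u)\cup N_{i,c}^{(t)}(v))\cap S_i^{(t)}\setminus\{e^*\}$, bound the per-edge probability by $1/|P_i^{(t)}(e)|$ via \Cref{main:recourse:analysis:100} and $\mathcal{E}_i^{(t)}$, then sum using $\mathcal{B}_i^{(t)}\cap\mathcal{C}_i^{(t)}$. Your use of $\Pr[e\in R_b^{-1}(e')]\le\Pr[c^{(t)}(e)=c]$ (rather than equality) and your remark that the type-(b) branch conditions only shrink the set are in fact slightly more careful than the paper, which writes this step as an equality.
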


\begin{proof}
Fix any edge $e' = (u, v) \in D_{< i}^*$, and let $c = c^{(t-1)}(e')$ denote its tentative color in $G^{(t-1)}$. For an edge $e \in E^{(t-1)} \cap E^{(t)}$ to be included in the set $R^{-1}_b(e')$, it needs to be a neighbor of $e'$ with $i(e) = i$ and have the color $c$ in its palette for round $i$ in $G^{(t)}$. This happens iff   $e \in \left( (N_{i, c}^{(t)}(u) \cap S_i^{(t)}) \cup (N_{i, c}^{(t)}(v) \cap S_i^{(t)}) \right) \setminus \{e^*\}$.\footnote{Note that this specific set is completely determined once we condition upon the critical random variables.} For such an edge $e$, we have:
\begin{eqnarray}
\Pr[e \in R^{-1}_b(e') ] & = &  \Pr[ c^{(t)}(e) = c ]  = \frac{1}{|P_{i}^{(t)}(e)|}  \nonumber \\
& \leq & \frac{1}{(1-\eps)^{2(i-1)} \cdot (1-\gamma_i) \cdot \Delta}. \label{eq:main:recourse:3}
\end{eqnarray}
In the above derivation, the second equality holds because of \Cref{main:recourse:analysis:100}. The last inequality holds since conditioned on  $\mathcal{Z}$, the event $\e_i^{(t)}$ occurs with probability one. Now, summing~(\ref{eq:main:recourse:3}) over all the relevant edges $e'$, we get:
\begin{eqnarray}
\E[|R^{-1}_b(e')| ] & = & \sum_{e} \Pr[e \in R^{-1}_b(e') ] \nonumber \\
& \leq & \frac{|N_{i, c}^{(t)}(u) \cap S_i^{(t)}| + |N_{i, c}^{(t)}(v) \cap S_i^{(t)}|}{(1-\eps)^{2(i-1)} \cdot (1-\gamma_i) \cdot \Delta} \nonumber  \\
& \leq & \frac{2 \cdot (\eps+\eps^2) \cdot (1-\eps)^{2(i-1)} \cdot (1+\gamma_i) \cdot \Delta}{(1-\eps)^{2(i-1)} \cdot (1-\gamma_i) \cdot \Delta} \nonumber \\
& \leq & 2 (\eps+\eps^2) (1+3\gamma_i) \leq (5/2) \cdot \eps. \nonumber
\end{eqnarray}
The second inequality follows from the fact that conditioned on $\mathcal{Z}$,  the event $\B_i^{(t)} \cap \C_i^{(t)}$ occurs with probability one. The last two inequalities follow from~(\ref{eq:eps}) and~\Cref{cor:gamma}. 
\end{proof}

\noindent {\bf Proof of \Cref{main:lm:recourse:3}:} Recall that in order to prove the lemma, all we needed to do was to prove~(\ref{eq:recourse:key:100}). Now,~(\ref{eq:recourse:key:100}) follows from \Cref{cla:recourse:nonempty:101}, \Cref{cla:recourse:1} and \Cref{cla:recourse:2}.

\section{Useful concentration inequalities}\label{sec:prelims}

 We start by describing the standard Hoeffding bounds for mutually independent random variables. 

\begin{lem}[Hoeffding bounds]\label{hoeffding} %The same implied by \cite{joag1983negative}, who note that NA can be used to obtain moment inequalities.
	Let $X$ be the sum of $m$ mutually independent random variables $X_1,\dots,X_m$ with $X_i \in[a_i, b_i]$ for each $i\in [m]$. Then for all $t>0$, we have:
	$$\Pr[X \geq  \E[X] + t] \leq \exp\left(-\frac{2t^2}{\sum_i (b_i-a_i)^2}\right),$$		
	$$\Pr[X \leq  \E[X] - t] \leq \exp\left(-\frac{2t^2}{\sum_i (b_i-a_i)^2}\right).$$			
\end{lem}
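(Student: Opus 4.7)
The plan is to prove the two-sided Hoeffding inequality via the standard Chernoff-method argument, reducing everything to a moment generating function (MGF) bound for a single bounded random variable (Hoeffding's lemma). First, for any $s>0$, apply the exponential Markov inequality to $X - \E[X]$:
\[
\Pr[X \geq \E[X] + t] \;=\; \Pr\!\left[e^{s(X-\E[X])} \geq e^{st}\right] \;\leq\; e^{-st}\,\E\!\left[e^{s(X-\E[X])}\right].
\]
Because the $X_i$ are mutually independent, the MGF factorizes: $\E\!\left[e^{s(X-\E[X])}\right] = \prod_{i=1}^m \E\!\left[e^{s(X_i-\E[X_i])}\right]$.

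The main obstacle is the one-variable MGF bound (Hoeffding's lemma): for each centered variable $Y_i := X_i - \E[X_i]$, which takes values in $[a_i-\E[X_i],\, b_i-\E[X_i]]$ (an interval of length $b_i-a_i$), one must show
\[
\E\!\left[e^{s Y_i}\right] \;\leq\; \exp\!\left(\tfrac{s^2 (b_i-a_i)^2}{8}\right).
\]
I would prove this by writing $Y_i$ as a convex combination of the interval's endpoints, using convexity of $y \mapsto e^{sy}$ to dominate $e^{sy}$ pointwise by the secant line, then taking expectations (using $\E[Y_i]=0$) to get an expression of the form $e^{\varphi(s)}$ where $\varphi(s) := \ln\!\left(1-p+p\,e^{s(b_i-a_i)}\right) - ps(b_i-a_i)$ for some $p \in [0,1]$. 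A direct calculus check shows $\varphi(0) = \varphi'(0) = 0$ and $\varphi''(s) \leq (b_i-a_i)^2/4$ for all $s$, so by Taylor's theorem $\varphi(s) \leq s^2(b_i-a_i)^2/8$, which gives the lemma.

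Plugging this back, $\prod_i \E[e^{sY_i}] \leq \exp\!\left(\tfrac{s^2}{8}\sum_i(b_i-a_i)^2\right)$, so
\[
\Pr[X \geq \E[X] + t] \;\leq\; \exp\!\left(-st + \tfrac{s^2}{8}\textstyle\sum_i(b_i-a_i)^2\right).
\]
Optimizing over $s>0$, the right-hand side is minimized at $s^* = 4t/\sum_i(b_i-a_i)^2$, yielding the claimed bound $\exp\!\left(-2t^2/\sum_i(b_i-a_i)^2\right)$.

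Finally, for the lower-tail inequality, apply the upper-tail argument to the variables $-X_1,\dots,-X_m$, which are independent and satisfy $-X_i \in [-b_i,-a_i]$ (an interval of the same length $b_i-a_i$). Then $\Pr[X \leq \E[X]-t] = \Pr[-X \geq \E[-X]+t]$ is bounded by the same exponential, completing the proof. The only nontrivial ingredient is Hoeffding's lemma; every other step is a routine application of Markov's inequality, independence, and optimization.
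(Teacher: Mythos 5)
Your proof is correct and is the standard textbook derivation of Hoeffding's inequality: exponential Markov, factorization of the MGF via independence, Hoeffding's lemma $\E[e^{sY_i}] \leq \exp(s^2(b_i-a_i)^2/8)$ for centered bounded variables via convexity plus the $\varphi''(s) \leq (b_i-a_i)^2/4$ Taylor bound, optimize over $s$, and symmetrize for the lower tail. The paper, however, does not prove \Cref{hoeffding} at all — it simply states it as a classical fact in the preliminaries section (alongside the NA-variable variants from \cite{dubhashi1996balls}), so there is no paper proof to compare against; your argument is the standard one that would be expected if a proof were included.
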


Throughout this paper, we denote by $x\sim_R S$ a (uniformly) random sample x from the set $S$. One tool we will rely on crucially for our analysis is concentration inequalities of \emph{dependent} random variables. Specifically, we will study concentration of sums of \emph{negatively associated} variables.

\begin{Def}[\cite{khursheed1981positive,joag1983negative}]\label{def:NA}
	We say a joint distribution $(X_1,\dots,X_n)$ is \emph{negatively associated (NA)}, or that the variables $X_1,\dots,X_n$ are NA, if every two monotone increasing functions $f$ and $g$ defined on disjoint subsets of the variables in $\vec{X}$ are negatively correlated. That is,
	\begin{equation}\label{eq:NA}
		\E[f\cdot g] \leq \E[f]\cdot \E[g].
	\end{equation}
\end{Def}

A trivial example of NA is \emph{independent} random variables (for which \eqref{eq:NA} holds with equality.)
A more interesting, useful example of NA for our use is given by the following two propositions.

\begin{prop}[0-1 Principle \cite{dubhashi1996balls}]\label{0-1-NA}
	Let $X_1,\dots,X_n\in \{0,1\}$ be binary random variables such that $\sum_i X_i\leq 1$ always. Then, the joint distribution $(X_1,\dots,X_n)$ is NA.
\end{prop}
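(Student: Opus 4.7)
The plan is to prove that for any two monotone increasing functions $f, g$ defined on disjoint index subsets $A, B \subseteq [n]$, we have $\Cov(f(X_A), g(X_B)) \leq 0$, where $X_A := (X_i)_{i \in A}$ and $X_B := (X_i)_{i \in B}$. Since covariance is invariant under additive shifts, I would first replace $f$ and $g$ by the centered functions $f' := f - f(\vec 0)$ and $g' := g - g(\vec 0)$. Monotonicity together with $\vec 0 \leq X_A$ coordinate-wise (valid since the $X_i$'s are in $\{0,1\}$) gives $f'(X_A) \geq 0$ and $g'(X_B) \geq 0$ pointwise, and both functions remain monotone increasing.

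The key observation is the following simple consequence of the constraint $\sum_i X_i \leq 1$: since $A \cap B = \emptyset$, having $X_A \neq \vec 0$ would force $\sum_{i \in A} X_i \geq 1$, leaving $\sum_{i \in B} X_i \leq 0$, i.e. $X_B = \vec 0$; symmetrically, $X_B \neq \vec 0$ implies $X_A = \vec 0$. Hence the events $\{X_A \neq \vec 0\}$ and $\{X_B \neq \vec 0\}$ are disjoint. On the first event, $g'(X_B) = g'(\vec 0) = 0$; on the second, $f'(X_A) = 0$; and on the remaining event $\{X_A = \vec 0\} \cap \{X_B = \vec 0\}$, both $f'$ and $g'$ vanish by construction. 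Therefore $f'(X_A) \cdot g'(X_B) = 0$ identically, yielding $\E[f' g'] = 0$, while $\E[f'] \geq 0$ and $\E[g'] \geq 0$. Combining these, $\Cov(f, g) = \Cov(f', g') = \E[f' g'] - \E[f']\,\E[g'] \leq 0$, as desired.

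There is essentially no serious obstacle here: the entire argument rests on the elementary observation that the ``at most one $1$'' constraint forces the centered functions $f'(X_A)$ and $g'(X_B)$ to have disjoint supports, so their pointwise product is identically zero. The only minor check is that shifting by the value at $\vec 0$ preserves both monotonicity (trivial) and non-negativity (immediate from $f(X_A) \geq f(\vec 0)$ for monotone $f$), which is why the centering step is sound.
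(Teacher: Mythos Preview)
Your proof is correct. The paper does not actually supply its own proof of this proposition; it simply states the result and cites \cite{dubhashi1996balls}. So there is no in-paper argument to compare against.

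For what it is worth, your centering trick is exactly the clean way to do this: shifting so that $f'(\vec 0) = g'(\vec 0) = 0$ turns the ``at most one coordinate is $1$'' constraint directly into the statement that $f'(X_A)$ and $g'(X_B)$ have disjoint supports, whence $\E[f'g'] = 0 \leq \E[f']\,\E[g']$. Every step checks out, including the observation that monotonicity plus $X_A \geq \vec 0$ coordinate-wise gives $f'(X_A) \geq 0$.
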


\begin{prop}[Permutation Distributions are NA (\cite{joag1983negative})]\label{lem:permutations}
	Let $x_1, \dots, x_n$ be $n$ values and let $X_1,\dots,X_n$ be random variables taking on all permutations of $(x_1,\dots,x_n)$ with equal probability. Then $X_1,\dots,X_n$ are NA.
\end{prop}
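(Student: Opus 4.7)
The plan is to prove this classical result of Joag-Dev and Proschan by induction on $n$, exploiting the recursive structure of permutation distributions under conditioning. Although the statement looks symmetric and innocent, a direct induction is subtle, and I expect the crux to lie in one specific step which I will flag.

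\textbf{Base case.} For $n=1$ there is nothing to show. For $n=2$, if $I=\{1\}$, $J=\{2\}$, then $X_2 = x_1+x_2-X_1$ is strictly decreasing in $X_1$, so for any nondecreasing $f,g$ the quantities $f(X_1)$ and $g(X_2)=g(x_1+x_2-X_1)$ are antimonotone functions of $X_1$ and hence negatively correlated.

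\textbf{Inductive step.} Assume the result for permutations of length $n-1$, and take disjoint $I,J\subseteq [n]$ with nondecreasing $f(X_I)$ and $g(X_J)$. First consider $I\cup J\subsetneq [n]$: pick $k\notin I\cup J$ and condition on $X_k$. Given $X_k=x_\ell$, the remaining coordinates $(X_j)_{j\neq k}$ are a uniform permutation of the $n-1$ values $(x_i)_{i\neq \ell}$, which is NA by the inductive hypothesis. Hence $\E[fg\mid X_k]\leq \E[f\mid X_k]\,\E[g\mid X_k]$ pointwise, and taking expectations gives $\E[fg]\leq \E\bigl[\E[f\mid X_k]\,\E[g\mid X_k]\bigr]$. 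It then remains to show that $F(X_k):=\E[f\mid X_k]$ and $G(X_k):=\E[g\mid X_k]$, viewed as functions of the single variable $X_k$, have nonpositive covariance. The case $I\cup J=[n]$ is reduced to the previous one by moving an arbitrary index out of the larger of $I,J$ and controlling the error by monotonicity.

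\textbf{Main obstacle.} The negative-covariance step above is the heart of the argument. Intuitively, a larger value at $X_k$ removes a large value from the pool available to the rest, which by itself would decrease both $F(X_k)$ and $G(X_k)$, suggesting positive (not negative) correlation. The right way to extract negativity is an exchange coupling that tracks not only which value is absorbed at $k$ but also how the remaining values are redistributed between $I$ and $J$: across any two realized values $x_\ell<x_{\ell'}$ at $k$, the conditional law on $(X_I,X_J)$ shifts mass so that gains to $f$ come at the cost of losses to $g$. An arguably cleaner route, which I would actually pursue, is to reduce to the binary case via the layer-cake decomposition $X_i = x_1 + \sum_{k\geq 2}(x_k-x_{k-1})\,\mathbf{1}[X_i\geq x_k]$, handle binary permutation distributions directly using the conditional structure together with the $0/1$-style argument underlying \Cref{0-1-NA}, and lift to general values using closure of NA under disjoint monotone aggregation. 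The substantive work then concentrates in proving NA of a uniform permutation of a $0/1$-vector with a fixed number of ones, which is a direct finite combinatorial calculation.
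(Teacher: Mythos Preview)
The paper does not actually prove this proposition; it is stated with a citation to Joag--Dev and Proschan and used as a black box. So there is no in-paper proof to compare against, and your sketch should be judged on its own.

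On the sketch itself: you are right that the naive induction (conditioning on $X_k$ for $k\notin I\cup J$) breaks down, and you correctly diagnose why --- both $F(X_k)=\E[f\mid X_k]$ and $G(X_k)=\E[g\mid X_k]$ are decreasing in $X_k$, which pushes their covariance the wrong way. But neither of your two proposed repairs closes the gap.

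First, the reduction direction between the two cases is backward. One reduces the general case \emph{to} the full partition $I\cup J=[n]$, simply by padding $J$ to $[n]\setminus I$ and letting $g$ ignore the new coordinates; your ``move an index out of $J$'' step does not make sense because $g$ still depends on the removed coordinate, and no amount of monotonicity recovers that.

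Second, and more seriously, the layer-cake route does not reduce to the binary case. Writing $X_i=x_{(1)}+\sum_{k\ge 2}(x_{(k)}-x_{(k-1)})Z_i^{(k)}$ with $Z_i^{(k)}=\mathbf 1[X_i\ge x_{(k)}]$, the closure property you invoke (\Cref{NA-closure}) requires the \emph{entire} family $\{Z_i^{(k)}:i\in[n],\,k\ge 2\}$ to be jointly NA, not merely each fixed-$k$ layer $(Z_i^{(k)})_{i}$. But $Z_i^{(k)}=\mathbf 1[R_i\ge k]$ is a monotone function of the rank $R_i$ alone, so joint NA of $\{Z_i^{(k)}\}$ is equivalent to NA of the rank vector $(R_1,\dots,R_n)$ --- that is, to the permutation-NA statement for the values $1,\dots,n$. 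Nothing has been simplified; the argument is circular. (Knowing each layer separately is NA does not help without independence across layers, which you do not have.) Separately, \Cref{0-1-NA} requires $\sum_i X_i\le 1$ and does not handle a binary permutation with $m>1$ ones, so the ``direct finite combinatorial calculation'' you defer to is itself nontrivial.

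The actual Joag--Dev--Proschan argument first reduces to the full-partition case as above, and then runs an induction that uses the stochastic monotonicity of the conditional law of $X_J$ given $X_I$ (increasing $X_I$ coordinatewise removes a larger value from the pool for $J$), combined with a careful two-index exchange; the crux is not a reduction to binary vectors.
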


Permutation distributions prove useful in our context due to our study of random-order streams.

More elaborate NA distributions can be obtained from simple NA distributions as those given by propositions \ref{0-1-NA} and \ref{lem:permutations} via the following closure properties.
\begin{prop}[NA Closure Properties \cite{khursheed1981positive,joag1983negative,dubhashi1996balls}]\label{NA-closure}
	$\phantom{a}$
	\begin{enumerate}
		\item \label{P7_union} \underline{Independent union.}
		If the joint distributions $(X_1,\dots,X_n)$ and $(Y_1,\dots,Y_m)$ are both NA and independent of each other, then the joint distribution  $(X_1,\dots,X_n,Y_1,\dots,Y_m)$ is also NA.
		\item \label{P6_inc_funs} \underline{Function composition.}
		Let $f_1,\dots,f_k$ be monotone (all increasing or all decreasing) functions defined on disjoint subsets of the variables in $\vec{X}$. Then the joint distribution  $(f_1(\vec{X}),\dots,f_k(\vec{X}))$ is NA.
	\end{enumerate}
\end{prop}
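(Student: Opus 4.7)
Both closure properties are standard, and the plan is to derive each directly from the definition of NA (equation \eqref{eq:NA}) by a careful conditioning argument and a composition argument, respectively.

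For part (\ref{P7_union}) (independent union), take any two monotone increasing functions $f, g$ on disjoint subsets of the combined variables. Write the support of $f$ as $A_X \cup A_Y$ and the support of $g$ as $B_X \cup B_Y$, where $A_X, B_X \subseteq \{X_1,\dots,X_n\}$ and $A_Y, B_Y \subseteq \{Y_1,\dots,Y_m\}$ satisfy $A_X \cap B_X = A_Y \cap B_Y = \emptyset$. The plan is a two-step conditioning. First, condition on $\vec{Y}$: by independence of the two blocks, the functions $\vec{x} \mapsto f(\vec{x}, \vec{Y})$ and $\vec{x} \mapsto g(\vec{x}, \vec{Y})$ are monotone increasing functions of disjoint subsets of $\vec{X}$, so NA of $\vec{X}$ gives
\[
\E[fg \mid \vec{Y}] \;\leq\; \E[f \mid \vec{Y}]\cdot \E[g \mid \vec{Y}].
\]
Second, observe that $\vec{Y} \mapsto \E[f\mid \vec{Y}]$ depends only on the $\vec{Y}$-coordinates in $A_Y$ and is monotone increasing there (as $f$ was increasing in those coordinates and integration over independent variables preserves monotonicity); analogously for $\E[g\mid \vec{Y}]$ on $B_Y$. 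Since $A_Y \cap B_Y = \emptyset$, NA of $\vec{Y}$ applied to these two conditional expectations yields $\E[\E[f\mid\vec{Y}]\cdot \E[g\mid\vec{Y}]] \leq \E[f]\cdot \E[g]$. Taking expectations in the first inequality and chaining with the second via the tower property closes the argument.

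For part (\ref{P6_inc_funs}) (function composition), take any two monotone increasing functions $h, h'$ defined on disjoint subsets of $\{f_1,\dots,f_k\}$, say $h$ depends on $\{f_i : i \in I\}$ and $h'$ on $\{f_j : j \in J\}$ with $I \cap J = \emptyset$. The plan is to rewrite $h(f_I(\vec{X}))$ and $h'(f_J(\vec{X}))$ as functions of the original variables in $\vec{X}$, and then apply NA of $\vec{X}$ directly. In the case that all $f_\ell$ are monotone increasing, the composition $\vec{X} \mapsto h(f_I(\vec{X}))$ is increasing in the variables of $\vec{X}$ in the support of $\{f_i : i \in I\}$; similarly for $h' \circ f_J$, with support inside the variables of $\{f_j : j \in J\}$. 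Because the $f_\ell$'s are defined on disjoint subsets of $\vec{X}$, the $\vec{X}$-supports of $h\circ f_I$ and $h'\circ f_J$ are themselves disjoint, and NA of $\vec{X}$ yields $\E[hh'] \leq \E[h]\E[h']$ as required.

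The case that all $f_\ell$ are monotone decreasing reduces to the increasing case by a sign flip: if all $f_\ell$ are decreasing then $h\circ f_I$ and $h'\circ f_J$ are both monotone decreasing in disjoint subsets of $\vec{X}$, and I would verify (directly from \eqref{eq:NA} applied to $-h\circ f_I$ and $-h'\circ f_J$, which are increasing on the same disjoint supports) that $\E[(h\circ f_I)(h'\circ f_J)] \leq \E[h\circ f_I]\cdot \E[h'\circ f_J]$. I don't anticipate a real obstacle in either part; the only subtlety is bookkeeping which coordinates each function genuinely depends on, so that disjointness of supports and the relevant direction of monotonicity can be checked before invoking the NA property.
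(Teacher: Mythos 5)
The paper does not prove \Cref{NA-closure} at all: it is stated as a known fact and cited to \cite{khursheed1981positive,joag1983negative,dubhashi1996balls}, so there is no in-paper proof to compare against. Your proposal is, however, a correct and standard proof of both parts.

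For the independent union, your two-step conditioning argument is exactly the classical one (e.g.\ Joag-Dev and Proschan). The two places that merit care are both handled correctly: (i) since $\vec{X}\perp\vec{Y}$, the conditional law of $\vec{X}$ given $\vec{Y}$ is the unconditional NA law of $\vec{X}$, which licenses applying NA to $f(\cdot,\vec{Y})$ and $g(\cdot,\vec{Y})$ for each fixed $\vec{Y}$; (ii) $\E[f\mid\vec{Y}]$ is a function of the $A_Y$-coordinates only, monotone increasing, because integrating out the independent $\vec{X}$-block preserves monotonicity, and $A_Y\cap B_Y=\emptyset$ makes the second application of NA to $\vec{Y}$ legitimate. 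Chaining via the tower property closes the argument.

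For function composition, your reduction is also the standard one. When all $f_\ell$ are increasing, $h\circ f_I$ and $h'\circ f_J$ are increasing on disjoint $\vec{X}$-supports and \eqref{eq:NA} applies directly; when all $f_\ell$ are decreasing, $h\circ f_I$ and $h'\circ f_J$ are both decreasing, and applying \eqref{eq:NA} to $-h\circ f_I$ and $-h'\circ f_J$ (increasing, disjoint supports) gives the same inequality after cancelling the signs. The bookkeeping of supports you flag as the ``only subtlety'' is indeed the only subtlety, and you track it correctly. No gap.
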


An example NA distribution obtained using these closure properties are balls and bins processes.

\begin{prop}[Balls and Bins is NA]\label{balls-and-bins-NA}
	Suppose $m$ balls are thrown independently into one of $n$ bins (not necessarily u.a.r., and not necessarily i.i.d).
	Let $B_i$ be the number of balls placed in bin $i$ in this process. Then the joint distribution $(B_1,B_2,\dots,B_n)$ is NA.
\end{prop}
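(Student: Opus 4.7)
The plan is to build up the NA property for $(B_1,\ldots,B_n)$ from elementary NA distributions using the closure properties in \Cref{NA-closure}. For each ball $i \in [m]$ and bin $j \in [n]$, I will introduce the indicator random variable $X_{i,j} \in \{0,1\}$ that is one iff ball $i$ lands in bin $j$. Since every ball lands in exactly one bin, $\sum_{j=1}^n X_{i,j} = 1$ deterministically, so for each fixed $i$ the $0/1$-valued vector $(X_{i,1},\ldots,X_{i,n})$ satisfies the hypothesis of the 0-1 Principle (\Cref{0-1-NA}), and therefore its joint distribution is NA.

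Next, because the balls are thrown independently of each other (the assumption of the proposition), the $m$ joint distributions $(X_{i,1},\ldots,X_{i,n})$, $i=1,\ldots,m$, are mutually independent. Applying part (1) of \Cref{NA-closure} (independent union) iteratively across $i$, I conclude that the full joint distribution $\{X_{i,j}\}_{i \in [m],\, j \in [n]}$ is NA.

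Finally, $B_j = \sum_{i=1}^m X_{i,j}$ is a monotone increasing function of the variables $\{X_{i,j}\}_{i \in [m]}$, and for distinct $j$ these subsets of underlying variables are disjoint (the bin index partitions the double-indexed family). Part (2) of \Cref{NA-closure} (composition with monotone functions on disjoint subsets) then directly yields that $(B_1,\ldots,B_n)$ is NA.

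There is no real obstacle here: the argument is a direct three-step application of the preceding propositions, and the only point requiring attention is verifying that the monotone functions $B_j$ act on disjoint subsets of the $X_{i,j}$'s, which follows from the bin indices forming a partition.
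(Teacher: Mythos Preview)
Your proof is correct and follows essentially the same approach as the paper: both introduce indicator variables for each ball-bin pair, apply the 0-1 Principle to each ball's indicators, take the independent union across balls, and then sum over balls via the monotone-function closure property. The only difference is notational (you index balls by $i$ and bins by $j$, while the paper uses $b$ and $i$).
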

\begin{proof}
	For each $b\in [m]$ and $i\in [n]$, let $X_{b,i}$ be an indicator variable for ball $b$ landing in bin $i$. By the 0-1 Principle (\Cref{0-1-NA}), the variables $\{X_{b,i} \mid i\in [n]\}$ are NA. By closure of NA under independent union (\Cref{NA-closure}.\ref{P7_union}), as each ball is placed independently of all other balls, $\{X_{b,i} \mid b\in [m], i\in [n]\}$ are NA. Finally, by closure of NA under monotone increasing functions on disjoint subsets (\Cref{NA-closure}.\ref{P6_inc_funs}), the variables $B_i=\sum_{b} X_{b,i}$ are indeed NA.
\end{proof}

As the variables $N_i = \min\{1,B_i\}$, indicating whether bin $i$ is non-empty, are monotone increasing functions depending on disjoint subsets of the $B_i$ variables (specifically, singletons), we obtain the following corollary.

\begin{cor}\label{empty-bins}
	The indicator variables $N_i$ for bins being non-empty in a balls and bins process as in \Cref{balls-and-bins-NA}, are NA.
\end{cor}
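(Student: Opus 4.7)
The plan is to derive this corollary as a one-line consequence of \Cref{balls-and-bins-NA} together with the closure of NA under monotone functions on disjoint coordinates (\Cref{NA-closure}, part~\ref{P6_inc_funs}).

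Concretely, I would first invoke \Cref{balls-and-bins-NA} to assert that the vector of bin loads $(B_1,\dots,B_n)$ is NA. Next, I would write each non-emptiness indicator as $N_i = \min\{1,B_i\} = \mathds{1}[B_i \geq 1]$, noting that this is a monotone non-decreasing function of $B_i$ and depends on no other coordinate. Since the $n$ functions $B \mapsto N_i$ are each monotone non-decreasing and act on the pairwise disjoint (in fact, singleton) index sets $\{i\}$, the hypotheses of the function-composition closure property in \Cref{NA-closure}.\ref{P6_inc_funs} are satisfied, and it follows directly that $(N_1,\dots,N_n)$ is NA.

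There is essentially no obstacle here: the work is entirely in \Cref{balls-and-bins-NA} (and the closure properties it invokes), and the corollary is a routine post-processing step. The only thing to be slightly careful about is using \emph{disjoint} singleton index sets (so that \Cref{NA-closure}.\ref{P6_inc_funs} applies), rather than trying to write, say, $N_i$ as a function of several $B_j$'s; but this is immediate from the definition $N_i = \min\{1,B_i\}$.
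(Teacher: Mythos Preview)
Your proposal is correct and matches the paper's approach exactly: the paper derives the corollary in a single sentence preceding it, observing that $N_i=\min\{1,B_i\}$ are monotone increasing functions on disjoint singleton subsets of the $B_i$, and then applies \Cref{NA-closure}.\ref{P6_inc_funs} to the NA vector $(B_1,\dots,B_n)$ from \Cref{balls-and-bins-NA}.
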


A particularly useful property of NA variables is the applicability of Chernoff-Hoeffding type concentration inequalities to sums of NA variables $X_1,X_2,\dots,X_n$. This follows from monotonicity of the exponential function implying (by induction, using \Cref{def:NA}) that $\E[\exp(\lambda\cdot \sum_i X_i)]\leq \prod_i \E[\exp(\lambda\cdot x_i)]$ (see \cite{dubhashi1996balls}). This is the crucial first step of proofs of such tail bounds. 
In particular, we will make use of the following  tail bounds.

\begin{lem}[Chernoff bounds for NA variables \cite{dubhashi1996balls}]\label{chernoff-NA} %The same implied by \cite{joag1983negative}, who note that NA can be used to obtain moment inequalities.
	Let $X$ be the sum of NA random variables $X_1,\dots,X_m$ with $X_i \in[0, 1]$ for each $i\in [m]$. Then for all $\delta\in (0,1)$, and $\kappa\geq \E[X]$, 
	$$\Pr[X\leq (1-\delta) \cdot \E[X]] \leq \exp\left(-\frac{\E[X] \cdot \delta^2}{2}\right),$$		
	$$\Pr[X\geq (1+\delta) \cdot \kappa] \leq \exp\left(-\frac{\kappa \cdot \delta^2}{3}\right).
	$$	
\end{lem}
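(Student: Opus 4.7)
The plan is to adapt the textbook Chernoff bound proof via the moment-generating function (MGF), observing that the only place where mutual independence enters that proof is in factoring the MGF of the sum $X = \sum_i X_i$ into a product of individual MGFs. This factoring step still holds for NA variables, so the rest of the argument goes through verbatim.

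For the upper tail, I start with Markov's inequality applied to $\exp(\lambda X)$ with $\lambda > 0$:
$$\Pr[X \geq (1+\delta)\kappa] \;\leq\; \E[\exp(\lambda X)]\,/\,\exp\!\bigl(\lambda(1+\delta)\kappa\bigr).$$
The key step is the MGF-factoring bound $\E[\exp(\lambda X)] \leq \prod_{i=1}^m \E[\exp(\lambda X_i)]$, which I would prove by induction on $m$. For $\lambda > 0$, the maps $\exp(\lambda \sum_{i\le k} X_i)$ and $\exp(\lambda X_{k+1})$ are monotone increasing functions on disjoint subsets of the NA family, so \Cref{def:NA} gives
$$\E\!\left[\exp\!\left(\lambda\!\!\sum_{i\le k+1}\!\! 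X_i\right)\right] \;\leq\; \E\!\left[\exp\!\left(\lambda\!\!\sum_{i\le k}\!\! X_i\right)\right]\cdot \E[\exp(\lambda X_{k+1})],$$
closing the induction. I then apply the standard estimate $\E[\exp(\lambda X_i)] \leq 1 + (e^\lambda - 1)\E[X_i] \leq \exp((e^\lambda - 1)\E[X_i])$, valid for any $X_i \in [0,1]$ by convexity of $\exp$.

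Multiplying the per-coordinate MGF bounds, and using $\E[X] \leq \kappa$ together with $e^\lambda - 1 > 0$, I obtain
$$\Pr[X \geq (1+\delta)\kappa] \;\leq\; \exp\!\bigl(\kappa\bigl(e^\lambda - 1 - \lambda(1+\delta)\bigr)\bigr).$$
Choosing $\lambda = \ln(1+\delta)$ and invoking the elementary inequality $(1+\delta)\ln(1+\delta) - \delta \geq \delta^2/3$ for $\delta \in (0,1)$ yields the claimed upper-tail bound $\exp(-\kappa\delta^2/3)$.

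For the lower tail, the same scheme works with $\lambda < 0$. Now $\exp(\lambda X_i)$ is monotone decreasing in $X_i$, but NA for monotone-increasing functions immediately extends to same-direction monotone-decreasing ones on disjoint subsets: apply \Cref{def:NA} to the negated (hence increasing) functions $-\exp(\lambda \sum_{i\le k} X_i)$ and $-\exp(\lambda X_{k+1})$ and rearrange, which preserves the MGF-factoring step. The standard optimization $\lambda = \ln(1-\delta)$ combined with $(1-\delta)\ln(1-\delta) + \delta \geq \delta^2/2$ for $\delta \in (0,1)$ then gives $\exp(-\E[X]\delta^2/2)$. The only ``obstacle'' worth naming is verifying MGF-factoring inductively for NA variables, but that is precisely what \Cref{def:NA} is engineered to support, so the entire argument reduces to one invocation of NA per inductive step plus textbook Chernoff arithmetic.
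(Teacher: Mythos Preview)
Your proposal is correct and follows exactly the approach the paper indicates: the paper does not give a full proof of this lemma but merely cites \cite{dubhashi1996balls} and notes in the paragraph preceding it that the key step is the MGF-factoring inequality $\E[\exp(\lambda \sum_i X_i)] \leq \prod_i \E[\exp(\lambda X_i)]$, obtained by induction from the NA definition. Your write-up supplies precisely those details plus the standard Chernoff optimization, so it is the paper's intended argument made explicit.
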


\begin{lem}[Hoeffding bounds for NA variables \cite{dubhashi1996balls}]\label{hoeffding-NA} %The same implied by \cite{joag1983negative}, who note that NA can be used to obtain moment inequalities.
	Let $X$ be the sum of $m$ NA random variables $X_1,\dots,X_m$ with $X_i \in[a_i, b_i]$ for each $i\in [m]$. Then for all $t>0$, we have:
	$$\Pr[X \geq  \E[X] + t] \leq \exp\left(-\frac{2t^2}{\sum_i (b_i-a_i)^2}\right),$$		
	$$\Pr[X \leq  \E[X] - t] \leq \exp\left(-\frac{2t^2}{\sum_i (b_i-a_i)^2}\right).$$			
\end{lem}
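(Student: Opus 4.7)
The approach is to run the classical Chernoff--Cramer moment-generating-function (MGF) method, substituting the independence-based factorization of the MGF with the analogous inequality for sums of NA variables that is already advertised in the paragraph preceding the statement. In particular, NA is exactly the property needed to replicate the independent-case proof of Hoeffding's inequality, with the only extra work being to justify the factorization of the MGF.

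The first step is to establish MGF domination: for NA variables $X_1,\dots,X_m$ and any $\lambda\geq 0$,
\[
\E\Bigl[\exp\bigl(\lambda\textstyle\sum_i X_i\bigr)\Bigr] \;\leq\; \prod_i \E[\exp(\lambda X_i)].
\]
I would prove this by induction on $m$. For the inductive step, note that $f(X_1,\dots,X_{m-1}) := \exp(\lambda\sum_{i<m}X_i)$ and $g(X_m) := \exp(\lambda X_m)$ are both monotone increasing (as $\lambda\geq 0$) and defined on disjoint subsets of the NA family, so \Cref{def:NA} yields $\E[fg]\leq\E[f]\E[g]$. Applying the inductive hypothesis to $\E[f]$ closes the induction.

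Now consider the centered variables $Y_i := X_i - \E[X_i]$, which remain NA since affine shifts are monotone (so \Cref{NA-closure}.\ref{P6_inc_funs} applies). Combining MGF domination for the $Y_i$ with Hoeffding's lemma $\E[\exp(\lambda Y_i)]\leq \exp(\lambda^2(b_i-a_i)^2/8)$ (a standard convexity argument on bounded random variables), together with Markov's inequality, gives for any $\lambda\geq 0$:
\[
\Pr[X-\E[X]\geq t] \;\leq\; e^{-\lambda t}\prod_i \E[\exp(\lambda Y_i)] \;\leq\; \exp\!\Bigl(-\lambda t+\tfrac{\lambda^2}{8}\textstyle\sum_i(b_i-a_i)^2\Bigr).
\]
Optimizing at $\lambda=4t/\sum_i(b_i-a_i)^2$ produces the claimed upper tail $\exp(-2t^2/\sum_i(b_i-a_i)^2)$. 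For the lower tail, I apply the same bound to $-X_1,\dots,-X_m$, which are NA by \Cref{NA-closure}.\ref{P6_inc_funs} (each map $x\mapsto -x$ is a monotone decreasing function of a singleton), live in intervals $[-b_i,-a_i]$ of the same width, and satisfy $\Pr[X-\E[X]\leq -t]=\Pr[(-X)-\E[-X]\geq t]$.

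The main subtlety, rather than obstacle, is being careful that the NA definition permits only monotone-increasing (respectively, only monotone-decreasing) functions on disjoint subsets. One must verify that the growing partial sum $\sum_{i<m}X_i$ and the singleton $X_m$ qualify as such at each inductive step, which holds because $\exp(\lambda\cdot)$ is increasing for $\lambda\geq 0$ and the subsets remain disjoint. Beyond this, Hoeffding's lemma itself is a standard, deterministic fact about bounded random variables that requires no modification in the NA setting, so the entire argument is a clean adaptation of the independent-case proof.
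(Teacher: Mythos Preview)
Your proposal is correct and follows exactly the approach the paper itself sketches in the paragraph preceding the lemma: establish the MGF factorization $\E[\exp(\lambda\sum_i X_i)]\leq \prod_i \E[\exp(\lambda X_i)]$ by induction using the NA definition, then run the standard Chernoff--Cram\'er argument with Hoeffding's lemma. The paper does not give a full proof (it cites \cite{dubhashi1996balls}), and your write-up simply fleshes out the details of that same route; the only minor omission is that the inductive step implicitly uses that any subcollection of NA variables is again NA, which is immediate from \Cref{def:NA}.
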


We next describe a  concentration inequality which is known as the {\em method of bounded differences}.

\begin{Def}\cite{Dubhashi2009}
\label{def:bounded:differences}
Consider $n$ sets $A_1, \ldots, A_n$ and a real-valued function $f : A_1 \times \cdots \times A_n \rightarrow \mathbf{R}$. The function $f$ satisfies the {\em Lipschitz property} with constants $\{d_i\}, i \in [n],$ iff $|f(\mathbf{a}) - f(\mathbf{a'})| \leq d_i$ whenever $\mathbf{a}$ and $\mathbf{a'}$ differ only in the $i^{th}$ co-ordinate, for all $i \in [n]$.
\end{Def}

\begin{lem}\cite{Dubhashi2009}
\label{lm:bounded:differences}
Consider a  function $f(X_1, \ldots, X_n)$ of $n$ mutually independent random variables $X_1, \ldots, X_n$ that   satisfy the Lipschitz property with  $\{d_i\mid i \in [n]\}$. Then for all $t > 0$, 
\begin{eqnarray*}
\Pr\left[f \geq \E[f] + t \right] & \leq &  \exp\left(-\frac{2t^2}{\sum_{i=1}^n d_i^2}\right), \\
\Pr\left[f \leq \E[f] - t \right] & \leq &  \exp\left(-\frac{2t^2}{\sum_{i=1}^n d_i^2}\right).
\end{eqnarray*}
\end{lem}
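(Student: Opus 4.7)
The statement is the classical method of bounded differences (McDiarmid's inequality) for independent random variables. The standard proof route is via a Doob martingale plus Azuma--Hoeffding, and that is the plan I would follow.

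\medskip

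\noindent\textbf{Step 1: Construct the Doob martingale.}
Define $Y_0 := \E[f(X_1,\dots,X_n)]$ and, for $i \in [n]$,
\[
Y_i := \E\bigl[f(X_1,\dots,X_n)\,\big|\,X_1,\dots,X_i\bigr].
\]
By the tower property, $(Y_i)_{i=0}^n$ is a martingale with respect to the filtration generated by $X_1,\dots,X_n$, with $Y_n = f(X_1,\dots,X_n)$.

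\medskip

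\noindent\textbf{Step 2: Bound the martingale differences.}
The main technical step is to show that $|Y_i - Y_{i-1}| \le d_i$ almost surely. Fix realizations $x_1,\dots,x_{i-1}$. For any two values $a,a'$ in the range of $X_i$, define
\[
g(a) := \E\bigl[f(x_1,\dots,x_{i-1},a,X_{i+1},\dots,X_n)\bigr],
\]
where the expectation is over $X_{i+1},\dots,X_n$; this is well defined because of mutual independence, which also gives
\[
\E[f \mid X_1=x_1,\dots,X_{i-1}=x_{i-1},X_i=a] = g(a).
\]
By the Lipschitz property (\Cref{def:bounded:differences}), pointwise in $X_{i+1},\dots,X_n$ the integrand satisfies
\[
\bigl|f(x_1,\dots,x_{i-1},a,X_{i+1},\dots,X_n) - f(x_1,\dots,x_{i-1},a',X_{i+1},\dots,X_n)\bigr| \le d_i,
\]
so $|g(a) - g(a')| \le d_i$. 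Averaging over $X_i$ on one side and conditioning on $X_i=a$ on the other, one obtains $|Y_i - Y_{i-1}| \le d_i$ on the event $\{X_1=x_1,\dots,X_{i-1}=x_{i-1},X_i=a\}$; since $x_1,\dots,x_{i-1},a$ were arbitrary, this holds almost surely. This is the step I expect to be the most delicate to write out carefully, since it is exactly where the mutual independence hypothesis is used (without it, the conditional expectation $g(a)$ is not simply an average over the independent tail, and the bound can fail).

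\medskip

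\noindent\textbf{Step 3: Apply Azuma--Hoeffding.}
Given the almost-sure bound $|Y_i - Y_{i-1}| \le d_i$, the Azuma--Hoeffding inequality yields, for every $t>0$,
\[
\Pr[Y_n - Y_0 \ge t] \le \exp\!\left(-\frac{2t^2}{\sum_{i=1}^n d_i^2}\right),
\]
and symmetrically for the lower tail by applying the same bound to $-f$ (which is Lipschitz with the same constants $d_i$). Since $Y_0 = \E[f]$ and $Y_n = f$, this is exactly the claimed two-sided concentration inequality, completing the proof.
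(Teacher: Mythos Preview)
The paper does not prove this lemma; it is simply quoted as a known result from \cite{Dubhashi2009} (McDiarmid's inequality), so there is no proof in the paper to compare against. Your outline is the standard one and is essentially correct.

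There is, however, a small slip in the constant in Step~3. The form of Azuma--Hoeffding that takes as hypothesis only $|Y_i - Y_{i-1}|\le d_i$ yields $\exp\bigl(-t^2/(2\sum_i d_i^2)\bigr)$, a factor of $4$ weaker than the stated bound. To get the constant $2$, you should extract in Step~2 the stronger conclusion your own argument already proves: conditionally on $X_1,\dots,X_{i-1}$, the map $a\mapsto g(a)$ takes values in an interval of length at most $d_i$ (since $|g(a)-g(a')|\le d_i$ for all $a,a'$), and $Y_{i-1}$, being the average of $g$ over $X_i$, lies in that same interval; hence $Y_i-Y_{i-1}$ lies in a predictable interval of length at most $d_i$. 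The version of Azuma--Hoeffding with this bounded-range hypothesis (increments in intervals of length $c_i$, rather than $[-c_i,c_i]$) gives exactly $\exp\bigl(-2t^2/\sum_i c_i^2\bigr)$, which is what the lemma asserts.
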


\begin{lem}
\label{lm:bernstein:new}
Consider a  function $f(X_1, \ldots, X_n)$ of $n$ mutually independent $0/1$ random variables $X_1, \ldots, X_n$ that   satisfy the Lipschitz property with  $\{d_i\}, i \in [n]$. For each $i \in [n]$, let $X_{-i} \in \{0,1\}^{n-1}$ denote the values taken up by every other random variable $\{X_j\}, j \in [n] \setminus \{i\}$. Furthermore, suppose that  $var\left[ f \mid X_{-i}\right] \leq \lambda_i$ for all $i \in [n]$ and all $X_{-i} \in \{0,1 \}^{n-1}$. Let $\lambda = \sum_{i=1}^n \lambda_i$, and $d = \max_{i \in [n]} \{d_i\}$. Then for all $t > 0$, we have: 
\begin{eqnarray*}
\Pr\left[f \geq \E[f] + t \right] & \leq &  \exp\left(-\frac{t^2}{2\lambda + (2/3) \cdot  t  d }\right), \\
\Pr\left[f \leq \E[f] - t \right] & \leq &  \exp\left(-\frac{t^2}{2\lambda + (2/3) \cdot  t  d }\right).
\end{eqnarray*}
\end{lem}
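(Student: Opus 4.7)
}
The plan is to reduce this to a standard Bernstein/Freedman inequality for martingales with bounded differences and bounded conditional variance, applied to the Doob martingale associated with $f$. Concretely, I would set $\mathcal{F}_k := \sigma(X_1,\ldots,X_k)$ and define the martingale $Y_k := \E[f \mid \mathcal{F}_k] - \E[f]$, with $Y_0 = 0$ and $Y_n = f - \E[f]$, and consider its differences $Z_k := Y_k - Y_{k-1}$. The goal is to verify the two hypotheses of Freedman's inequality: an almost-sure bound $|Z_k|\leq d$, and an almost-sure bound $\sum_k \E[Z_k^2 \mid \mathcal{F}_{k-1}] \leq \lambda$.

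The $|Z_k|\leq d_k \leq d$ bound follows from the Lipschitz hypothesis of \Cref{def:bounded:differences}. Indeed, using independence of the $X_i$'s, one can write
\[
Z_k \;=\; \E_{X_{k+1},\ldots,X_n}\bigl[\,f(X) - \E_{X_k'}[f(X_1,\ldots,X_{k-1},X_k',X_{k+1},\ldots,X_n)]\,\bigr],
\]
and the inner difference is bounded by $d_k$ in absolute value for each fixed $X_{-k}$, so the same bound survives the averaging.

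For the conditional second moment, the key step is Jensen's inequality followed by the variance hypothesis. Writing $g_k(X_1,\ldots,X_k) := \E_{X_{k+1},\ldots,X_n}[f \mid \mathcal{F}_k]$ and $\bar{f}_k(X_{-k}) := \E_{X_k'}[f \mid X_{-k}]$, I would estimate
\[
\E[Z_k^2 \mid \mathcal{F}_{k-1}] \;\leq\; \E_{X_k,\ldots,X_n}\!\bigl[(f - \bar{f}_k)^2 \,\big|\, \mathcal{F}_{k-1}\bigr] \;=\; \E_{X_{k+1},\ldots,X_n}\!\bigl[\Var[f \mid X_{-k}] \,\big|\, \mathcal{F}_{k-1}\bigr] \;\leq\; \lambda_k,
\]
where the first inequality is Jensen on the outer conditional expectation defining $Z_k$, the middle equality follows because averaging $(f-\bar f_k)^2$ over $X_k$ (for fixed $X_{-k}$) yields exactly $\Var[f\mid X_{-k}]$, and the last inequality uses the hypothesis $\Var[f\mid X_{-k}]\leq \lambda_k$ for every realization of $X_{-k}$. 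Summing over $k$ yields $\sum_k \E[Z_k^2 \mid \mathcal{F}_{k-1}] \leq \sum_k \lambda_k = \lambda$ almost surely.

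With both bounds in hand, I would invoke Freedman's martingale Bernstein inequality for $\sum_{k=1}^n Z_k = f - \E[f]$ to obtain both tail bounds
\[
\Pr[\,f - \E[f] \geq t\,] \;\leq\; \exp\!\left(-\frac{t^2}{2\lambda + (2/3)\,t\,d}\right), \qquad \Pr[\,f - \E[f] \leq -t\,] \;\leq\; \exp\!\left(-\frac{t^2}{2\lambda + (2/3)\,t\,d}\right),
\]
which is exactly the statement. The main conceptual obstacle is the variance step: one has to be careful that the variance hypothesis is stated with respect to $X_{-k}$ (i.e., conditioning on all other coordinates, not only on the past), and the Jensen/tower manipulation above is precisely what converts such an ``Efron--Stein-flavored'' bound into the per-step martingale-difference variance required by Freedman's inequality. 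The $0/1$ assumption on the $X_i$'s is never used beyond providing a concrete setting; the argument in fact goes through for any independent $X_i$'s satisfying the stated Lipschitz and conditional-variance bounds.
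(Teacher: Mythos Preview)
Your proposal is correct and is precisely the ``method of bounded variances'' that the paper cites from Chapter~8.1 of Dubhashi--Panconesi (equation~(8.5) there): the paper's own proof is a one-line citation to that result, and you have simply unpacked its proof---form the Doob martingale of $f$, bound the increments via the Lipschitz constants, bound the conditional variances of the increments via Jensen and the hypothesis $\Var[f\mid X_{-k}]\le\lambda_k$, and apply the Bernstein/Freedman martingale inequality. Your observation that the $0/1$ assumption is inessential is also consistent with the cited reference.
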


\begin{proof}
 The lemma follows from the method of bounded variances, as explained in Chapter 8.1 of \cite{Dubhashi2009} (in particular, the lemma follows from equation (8.5) in this chapter).
 \end{proof}

\bibliographystyle{acmsmall}
\bibliography{abb,ultimate}

\begin{thebibliography}{}

\bibitem[\protect\citeauthoryear{Aggarwal, Motwani, Shah, and Zhu}{Aggarwal
  et~al\mbox{.}}{2003}]{aggarwal2003switch}
{\sc Aggarwal, G.}, {\sc Motwani, R.}, {\sc Shah, D.}, {\sc and} {\sc Zhu, A.}
  2003.
\newblock Switch scheduling via randomized edge coloring.
\newblock In {\em Proceedings of the 44th Symposium on Foundations of Computer
  Science (FOCS)}. 502--512.

\bibitem[\protect\citeauthoryear{Alon and Spencer}{Alon and
  Spencer}{2004}]{alon2004probabilistic}
{\sc Alon, N.} {\sc and} {\sc Spencer, J.~H.} 2004.
\newblock {\em The probabilistic method}.
\newblock John Wiley \& Sons.

\bibitem[\protect\citeauthoryear{Bahmani, Mehta, and Motwani}{Bahmani
  et~al\mbox{.}}{2012}]{bahmani2012online}
{\sc Bahmani, B.}, {\sc Mehta, A.}, {\sc and} {\sc Motwani, R.} 2012.
\newblock Online graph edge-coloring in the random-order arrival model.
\newblock {\em Theory of Computing (conference version appeared in SODA
  2010)\/}~{\em 8,\/}~1, 567--595.

\bibitem[\protect\citeauthoryear{Bar-Noy, Motwani, and Naor}{Bar-Noy
  et~al\mbox{.}}{1992}]{bar1992greedy}
{\sc Bar-Noy, A.}, {\sc Motwani, R.}, {\sc and} {\sc Naor, J.} 1992.
\newblock The greedy algorithm is optimal for on-line edge coloring.
\newblock {\em Information Processing Letters (IPL)\/}~{\em 44,\/}~5, 251--253.

\bibitem[\protect\citeauthoryear{Barenboim and Maimon}{Barenboim and
  Maimon}{2017}]{barenboim2017fully}
{\sc Barenboim, L.} {\sc and} {\sc Maimon, T.} 2017.
\newblock Fully-dynamic graph algorithms with sublinear time inspired by
  distributed computing.
\newblock {\em Procedia Computer Science\/}~{\em 108}, 89--98.

\bibitem[\protect\citeauthoryear{Bernstein, Holm, and Rotenberg}{Bernstein
  et~al\mbox{.}}{2018}]{BernsteinHR18}
{\sc Bernstein, A.}, {\sc Holm, J.}, {\sc and} {\sc Rotenberg, E.} 2018.
\newblock Online bipartite matching with amortized replacements.
\newblock In {\em Proceedings of the Twenty-Ninth Annual {ACM-SIAM} Symposium
  on Discrete Algorithms (SODA)}, {A.~Czumaj}, Ed. 947--959.

\bibitem[\protect\citeauthoryear{Bhattacharya, Chakrabarty, Henzinger, and
  Nanongkai}{Bhattacharya et~al\mbox{.}}{2018}]{bhattacharya2018dynamic}
{\sc Bhattacharya, S.}, {\sc Chakrabarty, D.}, {\sc Henzinger, M.}, {\sc and}
  {\sc Nanongkai, D.} 2018.
\newblock Dynamic algorithms for graph coloring.
\newblock In {\em Proceedings of the 29th Annual ACM-SIAM Symposium on Discrete
  Algorithms (SODA)}. 1--20.

\bibitem[\protect\citeauthoryear{Bosek, Leniowski, Sankowski, and Zych}{Bosek
  et~al\mbox{.}}{2014}]{BosekLSZ14}
{\sc Bosek, B.}, {\sc Leniowski, D.}, {\sc Sankowski, P.}, {\sc and} {\sc Zych,
  A.} 2014.
\newblock Online bipartite matching in offline time.
\newblock In {\em 55th {IEEE} Annual Symposium on Foundations of Computer
  Science (FOCS)}. 384--393.

\bibitem[\protect\citeauthoryear{Chang, He, Li, Pettie, and Uitto}{Chang
  et~al\mbox{.}}{2018}]{chang2018complexity}
{\sc Chang, Y.-J.}, {\sc He, Q.}, {\sc Li, W.}, {\sc Pettie, S.}, {\sc and}
  {\sc Uitto, J.} 2018.
\newblock The complexity of distributed edge coloring with small palettes.
\newblock In {\em Proceedings of the 29th Annual ACM-SIAM Symposium on Discrete
  Algorithms (SODA)}. 2633--2652.

\bibitem[\protect\citeauthoryear{Charikar and Liu}{Charikar and
  Liu}{2021}]{charikar2021improved}
{\sc Charikar, M.} {\sc and} {\sc Liu, P.} 2021.
\newblock Improved algorithms for edge colouring in the w-streaming model.
\newblock In {\em Proceedings of the 4th Symposium on Simplicity in Algorithms
  (SOSA)}. To appear.

\bibitem[\protect\citeauthoryear{Cohen, Peng, and Wajc}{Cohen
  et~al\mbox{.}}{2019}]{cohen2019tight}
{\sc Cohen, I.~R.}, {\sc Peng, B.}, {\sc and} {\sc Wajc, D.} 2019.
\newblock Tight bounds for online edge coloring.
\newblock In {\em Proceedings of the 60th Symposium on Foundations of Computer
  Science (FOCS)}. 1--25.

\bibitem[\protect\citeauthoryear{Cohen and Wajc}{Cohen and
  Wajc}{2018}]{cohen2018randomized}
{\sc Cohen, I.~R.} {\sc and} {\sc Wajc, D.} 2018.
\newblock Randomized online matching in regular graphs.
\newblock In {\em Proceedings of the 29th Annual ACM-SIAM Symposium on Discrete
  Algorithms (SODA)}. 960--979.

\bibitem[\protect\citeauthoryear{Cole, Ost, and Schirra}{Cole
  et~al\mbox{.}}{2001}]{cole2001edge}
{\sc Cole, R.}, {\sc Ost, K.}, {\sc and} {\sc Schirra, S.} 2001.
\newblock Edge-coloring bipartite multigraphs in ${O}({E} \log {D})$ time.
\newblock {\em Combinatorica\/}~{\em 21,\/}~1, 5--12.

\bibitem[\protect\citeauthoryear{Duan, He, and Zhang}{Duan
  et~al\mbox{.}}{2019}]{duan2019dynamic}
{\sc Duan, R.}, {\sc He, H.}, {\sc and} {\sc Zhang, T.} 2019.
\newblock Dynamic edge coloring with improved approximation.
\newblock In {\em Proceedings of the 30th Annual ACM-SIAM Symposium on Discrete
  Algorithms (SODA)}. 1937--1945.

\bibitem[\protect\citeauthoryear{Dubhashi, Grable, and Panconesi}{Dubhashi
  et~al\mbox{.}}{1998}]{dubhashi1998near}
{\sc Dubhashi, D.}, {\sc Grable, D.~A.}, {\sc and} {\sc Panconesi, A.} 1998.
\newblock Near-optimal, distributed edge colouring via the nibble method.
\newblock {\em Theor. Comput. Sci.\/}~{\em 203,\/}~2, 225--252.

\bibitem[\protect\citeauthoryear{Dubhashi and Ranjan}{Dubhashi and
  Ranjan}{1996}]{dubhashi1996balls}
{\sc Dubhashi, D.} {\sc and} {\sc Ranjan, D.} 1996.
\newblock Balls and bins: A study in negative dependence.
\newblock {\em BRICS Report Series\/}~{\em 3,\/}~25.

\bibitem[\protect\citeauthoryear{Dubhashi and Panconesi}{Dubhashi and
  Panconesi}{2009}]{Dubhashi2009}
{\sc Dubhashi, D.~P.} {\sc and} {\sc Panconesi, A.} 2009.
\newblock {\em Concentration of Measure for the Analysis of Randomized
  Algorithms}.

\bibitem[\protect\citeauthoryear{Elkin, Pettie, and Su}{Elkin
  et~al\mbox{.}}{2014}]{elkin20142delta}
{\sc Elkin, M.}, {\sc Pettie, S.}, {\sc and} {\sc Su, H.-H.} 2014.
\newblock (2$\delta$—l)-edge-coloring is much easier than maximal matching in
  the distributed setting.
\newblock In {\em Proceedings of the 26th Annual ACM-SIAM Symposium on Discrete
  Algorithms (SODA)}. 355--370.

\bibitem[\protect\citeauthoryear{Feldkord, Feldotto, Gupta, Guruganesh, Kumar,
  Riechers, and Wajc}{Feldkord et~al\mbox{.}}{2018}]{feldkord2018fully}
{\sc Feldkord, B.}, {\sc Feldotto, M.}, {\sc Gupta, A.}, {\sc Guruganesh, G.},
  {\sc Kumar, A.}, {\sc Riechers, S.}, {\sc and} {\sc Wajc, D.} 2018.
\newblock Fully-dynamic bin packing with little repacking.
\newblock In {\em Proceedings of the 45th International Colloquium on Automata,
  Languages and Programming (ICALP)}. 51:1--51:24.

\bibitem[\protect\citeauthoryear{Gamlath, Kapralov, Maggiori, Svensson, and
  Wajc}{Gamlath et~al\mbox{.}}{2019}]{gamlath2019online}
{\sc Gamlath, B.}, {\sc Kapralov, M.}, {\sc Maggiori, A.}, {\sc Svensson, O.},
  {\sc and} {\sc Wajc, D.} 2019.
\newblock Online matching with general arrivals.
\newblock In {\em Proceedings of the 60th Symposium on Foundations of Computer
  Science (FOCS)}. 26--37.

\bibitem[\protect\citeauthoryear{Grable}{Grable}{1998}]{grable1998large}
{\sc Grable, D.~A.} 1998.
\newblock A large deviation inequality for functions of independent, multi-way
  choices.
\newblock {\em Combinatorics Probability and Computing\/}~{\em 7,\/}~1, 57--63.

\bibitem[\protect\citeauthoryear{Gu, Gupta, and Kumar}{Gu
  et~al\mbox{.}}{2013}]{GuGK13-steiner-tree}
{\sc Gu, A.}, {\sc Gupta, A.}, {\sc and} {\sc Kumar, A.} 2013.
\newblock The power of deferral: maintaining a constant-competitive steiner
  tree online.
\newblock In {\em ACM Symposium on Theory of Computing (STOC)}. 525--534.

\bibitem[\protect\citeauthoryear{Gupta, Krishnaswamy, Kumar, and
  Panigrahi}{Gupta et~al\mbox{.}}{2017}]{gupta2017online}
{\sc Gupta, A.}, {\sc Krishnaswamy, R.}, {\sc Kumar, A.}, {\sc and} {\sc
  Panigrahi, D.} 2017.
\newblock Online and dynamic algorithms for set cover.
\newblock In {\em Proceedings of the 49th Annual ACM Symposium on Theory of
  Computing (STOC)}. 537--550.

\bibitem[\protect\citeauthoryear{Gupta and Kumar}{Gupta and
  Kumar}{2014}]{GuptaK14}
{\sc Gupta, A.} {\sc and} {\sc Kumar, A.} 2014.
\newblock Online steiner tree with deletions.
\newblock In {\em Proceedings of the Twenty-Fifth Annual {ACM-SIAM} Symposium
  on Discrete Algorithms (SODA)}, {C.~Chekuri}, Ed. 455--467.

\bibitem[\protect\citeauthoryear{Gupta, Kumar, and Stein}{Gupta
  et~al\mbox{.}}{2014}]{GuptaKS14}
{\sc Gupta, A.}, {\sc Kumar, A.}, {\sc and} {\sc Stein, C.} 2014.
\newblock Maintaining assignments online: Matching, scheduling, and flows.
\newblock In {\em Proceedings of the Twenty-Fifth Annual {ACM-SIAM} Symposium
  on Discrete Algorithms, (SODA)}. 468--479.

\bibitem[\protect\citeauthoryear{Gupta and Singla}{Gupta and
  Singla}{2020}]{gupta2020random}
{\sc Gupta, A.} {\sc and} {\sc Singla, S.} 2020.
\newblock Random-order models.
\newblock {\em arXiv preprint arXiv:2002.12159\/}.

\bibitem[\protect\citeauthoryear{Holyer}{Holyer}{1981}]{holyer1981np}
{\sc Holyer, I.} 1981.
\newblock The np-completeness of edge-coloring.
\newblock {\em SIAM Journal on Computing (SICOMP)\/}~{\em 10,\/}~4, 718--720.

\bibitem[\protect\citeauthoryear{Joag-Dev and Proschan}{Joag-Dev and
  Proschan}{1983}]{joag1983negative}
{\sc Joag-Dev, K.} {\sc and} {\sc Proschan, F.} 1983.
\newblock Negative association of random variables with applications.
\newblock {\em The Annals of Statistics\/}, 286--295.

\bibitem[\protect\citeauthoryear{Karande, Mehta, and Tripathi}{Karande
  et~al\mbox{.}}{2011}]{karande2011online}
{\sc Karande, C.}, {\sc Mehta, A.}, {\sc and} {\sc Tripathi, P.} 2011.
\newblock Online bipartite matching with unknown distributions.
\newblock In {\em Proceedings of the 43rd Annual ACM Symposium on Theory of
  Computing (STOC)}. 587--596.

\bibitem[\protect\citeauthoryear{Karloff and Shmoys}{Karloff and
  Shmoys}{1987}]{karloff1987efficient}
{\sc Karloff, H.~J.} {\sc and} {\sc Shmoys, D.~B.} 1987.
\newblock Efficient parallel algorithms for edge coloring problems.
\newblock {\em J. Algorithms\/}~{\em 8,\/}~1, 39--52.

\bibitem[\protect\citeauthoryear{Kesselheim, T{\"o}nnis, Radke, and
  V{\"o}cking}{Kesselheim et~al\mbox{.}}{2014}]{kesselheim2014primal}
{\sc Kesselheim, T.}, {\sc T{\"o}nnis, A.}, {\sc Radke, K.}, {\sc and} {\sc
  V{\"o}cking, B.} 2014.
\newblock Primal beats dual on online packing lps in the random-order model.
\newblock In {\em Proceedings of the 46th Annual ACM Symposium on Theory of
  Computing (STOC)}. 303--312.

\bibitem[\protect\citeauthoryear{Khursheed and Lai~Saxena}{Khursheed and
  Lai~Saxena}{1981}]{khursheed1981positive}
{\sc Khursheed, A.} {\sc and} {\sc Lai~Saxena, K.} 1981.
\newblock Positive dependence in multivariate distributions.
\newblock {\em Communications in Statistics - Theory and Methods\/}~{\em
  10,\/}~12, 1183--1196.

\bibitem[\protect\citeauthoryear{K{\"o}nig}{K{\"o}nig}{1916}]{konig1916graphen}
{\sc K{\"o}nig, D.} 1916.
\newblock {\"U}ber graphen und ihre anwendung auf determinantentheorie und
  mengenlehre.
\newblock {\em Mathematische Annalen\/}~{\em 77,\/}~4, 453--465.

\bibitem[\protect\citeauthoryear{Korula, Mirrokni, and Zadimoghaddam}{Korula
  et~al\mbox{.}}{2018}]{korula2018online}
{\sc Korula, N.}, {\sc Mirrokni, V.}, {\sc and} {\sc Zadimoghaddam, M.} 2018.
\newblock Online submodular welfare maximization: Greedy beats 1/2 in random
  order.
\newblock {\em SIAM Journal on Computing (SICOMP)\/}~{\em 47,\/}~3, 1056--1086.

\bibitem[\protect\citeauthoryear{Mahdian and Yan}{Mahdian and
  Yan}{2011}]{mahdian2011online}
{\sc Mahdian, M.} {\sc and} {\sc Yan, Q.} 2011.
\newblock Online bipartite matching with random arrivals: an approach based on
  strongly factor-revealing lps.
\newblock In {\em Proceedings of the 43rd Annual ACM Symposium on Theory of
  Computing (STOC)}. 597--606.

\bibitem[\protect\citeauthoryear{Megow, Skutella, Verschae, and Wiese}{Megow
  et~al\mbox{.}}{}]{MegowSVW12}
{\sc Megow, N.}, {\sc Skutella, M.}, {\sc Verschae, J.}, {\sc and} {\sc Wiese,
  A.}
\newblock The power of recourse for online {MST} and {TSP}.
\newblock In {\em Proceedings of 39th International Colloquium on Automata,
  Languages, and Programming, (ICALP)}. Vol. 7391. 689--700.

\bibitem[\protect\citeauthoryear{Meyerson}{Meyerson}{2001}]{meyerson2001online}
{\sc Meyerson, A.} 2001.
\newblock Online facility location.
\newblock In {\em Proceedings of the 42nd Symposium on Foundations of Computer
  Science (FOCS)}. 426--431.

\bibitem[\protect\citeauthoryear{Motwani, Naor, and Naor}{Motwani
  et~al\mbox{.}}{1994}]{motwani1994probabilistic}
{\sc Motwani, R.}, {\sc Naor, J.~S.}, {\sc and} {\sc Naor, M.} 1994.
\newblock The probabilistic method yields deterministic parallel algorithms.
\newblock {\em Journal of Computer and System Sciences\/}~{\em 49,\/}~3,
  478--516.

\bibitem[\protect\citeauthoryear{Petersen}{Petersen}{1898}]{petersen1898theoreme}
{\sc Petersen, J.} 1898.
\newblock Sur le th{\'e}oreme de tait.
\newblock {\em L'interm{\'e}diaire des Math{\'e}maticiens\/}~{\em 5}, 225--227.

\bibitem[\protect\citeauthoryear{Su and Vu}{Su and Vu}{2019}]{su2019towards}
{\sc Su, H.} {\sc and} {\sc Vu, H.~T.} 2019.
\newblock Towards the locality of vizing's theorem.
\newblock In {\em Proceedings of the 51st Annual ACM Symposium on Theory of
  Computing (STOC)}. 355--364.

\bibitem[\protect\citeauthoryear{Tait}{Tait}{1880}]{tait1880remarks}
{\sc Tait, P.} 1880.
\newblock Remarks on the colourings of maps.
\newblock {\em Proc. R. Soc. Edinburgh\/}~{\em 10}, 729.

\bibitem[\protect\citeauthoryear{Vizing}{Vizing}{1964}]{vizing1964estimate}
{\sc Vizing, V.~G.} 1964.
\newblock On an estimate of the chromatic class of a p-graph.
\newblock {\em Diskret analiz\/}~{\em 3}, 25--30.

\bibitem[\protect\citeauthoryear{Wajc}{Wajc}{2020}]{wajc2020rounding}
{\sc Wajc, D.} 2020.
\newblock Rounding dynamic matchings against an adaptive adversary.
\newblock In {\em Proceedings of the 52nd Annual ACM Symposium on Theory of
  Computing (STOC)}. 194--207.

\end{thebibliography}

\end{document}